\definecolor{newcolor}{rgb}{.8,.349,.1}
\theoremstyle{plain}
\newcommand{\cqfd}{\hfill \ensuremath{\Box}}
\newcommand{\real}{\mathbb{R}}
\newcommand{\comp}{\mathbb{C}}
\newcommand{\HS}{{\cal H}}
\newcommand{\alf}{\scriptscriptstyle{1/2}}
\newcommand{\mbs}[1]{\ensuremath{\boldsymbol{#1}}}
\newcommand{\bcdot}{\bm{\cdot}}
\newcommand{\Mop}{{\cal M}}
\newcommand{\XX}{X}
\newcommand{\YY}{Y}
\newcommand{\dif}{{\mathrm{d}}}
\newcommand{\XXi}{\XX_t^{(i)}}
\newcommand{\XXj}{\XX_t^{(j)}}
\newcommand{\XXiz}{\XX_0^{(i)}}
\newcommand{\XXjz}{\XX_0^{(j)}}
\newcommand{\AM}[1]{\textcolor{black}{#1}}
\newcommand{\AD}[1]{\textcolor{black}{#1}}
\newcommand{\GT}[1]{\textcolor{blue}{#1}}
\begin{document}
\begin{frontmatter}
\title{Ensemble forecasts in reproducing kernel Hilbert space family }
\author[1]{Benjamin Duf\'ee}
\author[1]{B\'erenger Hug}
\author[1]{\'{E}tienne M\'emin}%
\cortext[cor1]{Corresponding author: 
 Etienne.Memin@inria.fr}
\author[1]{Gilles Tissot}
\address[1]{INRIA/IRMAR Campus universitaire de Beaulieu, Rennes, 35042 Cedex, France}
\date{\today}
\begin{abstract}
A methodological framework for ensemble-based estimation and simulation of high dimensional dynamical systems such as the oceanic or atmospheric flows is proposed. To that end, the dynamical system is embedded in a family of reproducing kernel Hilbert spaces \AD{(RKHS)} with kernel functions driven by the dynamics. 
In \AD{the RKHS family},  the Koopman and Perron-Frobenius operators are unitary and uniformly continuous. This property warrants they can be  expressed in exponential series of diagonalizable bounded \AD{evolution operators defined from their} infinitesimal generators. Access to Lyapunov exponents and to exact ensemble based expressions of the  tangent linear dynamics are directly available as well.  \AD{The RKHS family} enables us the devise of strikingly simple ensemble data assimilation methods for trajectory reconstructions in terms of  constant-in-time linear combinations of trajectory samples. Such an embarrassingly simple  strategy is made possible through a fully justified superposition principle ensuing from several fundamental theorems.  
\end{abstract}
\end{frontmatter}

\section{Introduction}
Forecasting the state of geophysical fluids has become of paramount importance for our daily life  either at short time scale, for weather prediction, or at long time scale, for climate study. There is in particular a strong need to provide reliable likely scenarios for probabilistic forecasts and uncertainty quantification. To that end, methods based on linear combinations of an ensemble of numerical simulations are becoming more and more ubiquitous.
Ensemble forecasting and data assimilation techniques, where the linear coefficients are further constrained through partial observations of the system and Gaussian stochastic filtering strategies, are good examples  developed in meteorological centers for weather forecast operational routine use. 
However, although efficient in practice, this linear superposition principle of solutions remains theoretically questionable for nonlinear dynamics.
Very recently, such techniques have been further coupled with machine learning to characterize surrogate dynamical models from long time-series of observations.
This includes neural networks, analog forecasting, reservoir computing, kernel methods  to name only a few of the works of this wide emerging research effort \citep{Bocquet-et-al-20, Brunton-Pnas-16,Fablet-JAMES-21,Gottwald-Reich-21,Hamzi-Phys-D-I-21,Owhadi-JCP-19,Pathak-17,Pathak-PRL-18,Raissi-19, Zhao-Giannakis-16}.
To characterize a surrogate model of the dynamics, these techniques are built upon principles such as delay coordinates embedding, empirical basis functions, or sparse linear representations.
The difficulty here is to get sufficiently long data series to represent the manifold associated to a given observable of the system as well as to exhibit an intrinsic representation of the whole system that does not depend on parameters such as the discrete time-stepping, the integration time window, or the initial conditions. Looking for such intrinsic representation often leads to focus (at least theoretically) on reversible ergodic systems.

Spectral representation of the Koopman operator \citep{Koopman-31} or of its infinitesimal generator constitutes one of the most versatile approaches to achieve this objective as they enable in theory to extract  intrinsic eigenfunctions of the system dynamics. Koopman operator or its dual, the Perron-Frobenious operator, have attracted a great deal of attention in the operator-theoretic branch of ergodic theory \citep{EFHN-15} as well as in data science for the study of data driven modeling of dynamical systems.  Several techniques, starting from \citep{Dellnitz-99} and \cite{Mezic-05}, have been proposed to estimate such spectral representation. Most of them rely either on longtime harmonic averages \cite{Mezic-05,budisic2012,Das-Giannakis-19, Das-Giannakis-20} or on finite-dimensional approximations of the Koopman operator such as the dynamic modes representation and its extensions \citep{Rowley09,Schmid10,tu2014,Williams-et-al-15,williams2016,Kutz-16,degennaro2019,buza2021,pan2021,colbrok2021,baddoo2022} or on  Galerkin approximation and delay embedding \citep{Brunton17,Giannakis-15,Giannakis17}. \AD{See also the link between singular spectrum analysis (SSA) method \cite{Vautard-89}, data-adaptive harmonic decomposition \cite{Kondrashov-et-al-20},  Hankel alternative \cite{Brunton17} and Koopman analysis \cite{Zhen-et-al-2022, Zhen-et-al-2023} for ergodic systems.}

These  methods enable performing spectral projection from long time series of measured observables of the system. The practical linear nature of the {\it compositional} Koopman operator, defined for any observable, $f$, of a dynamical system $\Phi_t(x)$ with initial condition $x$, as  $(U_t f) (x):= f\circ \Phi_t (x)$, is nevertheless hindered by its infinite dimension. Due to this, it often exhibits a spectrum with a continuous component \citep{mezic2005} and is not amenable to diagonalization nor to any sparsity assumption. Finite dimensional approximations and the direct use of matrix algebra numerical methods raises immediate questions on the pertinence of the associated estimation. To tackle these difficulties several techniques combining regularization, compactification of the generators, and reproducing kernel Hilbert space (RKHS) to work in spaces of smooth functions have been recently proposed \citep{Das-Giannakis-19, Das-Giannakis-20,Das-Giannakis-21}. They are possibly associated with forecast strategies \citep{gilani2021,alexander2020,burov2021} and transfer operator estimations \citep{klus2020,santitissadeekorn2020}. {For data driven specification of dynamical systems RKHS or kernel methods have been also explored in a more direct way in terms of advanced Gaussian process regression techniques \citep{Lee-Phys-D-23,Hamzi-Phys-D-I-21,Hamzi-Proc-RSA-21}. Kernel methods have been found particularly efficient in this context when coupled with a learning strategy for the kernel \citep{Owhadi-JCP-19}}.

Working with RKHS brings convenient regular basis functions together with approximation convergence guaranties. However, the dynamical systems need still to be ergodic, and invertible {to ensure that the Koopman operator has a simple point spectrum and an associated system of orthogonal basis eigenfunctions \citep{EFHN-15}}. Measure invariance and invertibility warrants that the associated Koopman/Perron-Frobenius operators are dual unitary operators  \citep{EFHN-15}. Note nevertheless that state-of-the-art estimation techniques  based on Dynamic Mode Decomposition (DMD) or Hankel matrix \cite{Arbabi-Mezic17,Brunton17} require very long time series to converge to actual Koopman eigenvalues \AD{\cite{Zhen-et-al-2022}}. 
This is all the more  detrimental for high-dimensional state-spaces. \AD{An extended dynamical modes decomposition has also been proposed to approximate the Koopman infinitesimal generator of deterministic or stochastic systems \cite{Klus-et-al-Physica-D-2021}. Like DMD, this approach corresponds to a direct discretization of the Koopman operator.} 

Recent reviews on modern Koopman analysis applications and their related algorithmic developments from  measurements time series can be found in \citep{Brunton-et-al-22,BEVANDA2021197,Otto-Rowley-21}.  {Another limitation is related to the definition of a kernel (and hence of a  RKHS)  independent of the dynamics to describe the system's observable, which { de facto} boils down to the strong assumption of an invariance of the RKHS under the Koopman groups. This condition requires to define the RKHS from the subspace spanned by the Koopman operator eigenfunctions, which excludes in general classical parametric kernels.} Note that in particular bias in the estimation of the Koopman eigenvalues have been observed for kernel-based methods based on finite dimensional approximation and sparsity assumptions \cite{Kostic-22-NIPS,Kostic-23}. \AD{A technique coupling RKHS and the extended  DMD approach of \cite{Klus-et-al-Physica-D-2021}, enabling to provide an approximation of the Koopman generator, was proposed in \cite{Klus-et-al-2020}. In the same way as the previous techniques, an invariance of the RKHS is also assumed.}

In this study we will explore the problem from a different angle. Instead of working with long time series of data, we will work with a set of realizations of a given dynamical system. The objective will be to learn this dynamics locally in the phase space to synthesize in a fast way new trajectories of the system through a theoretically  well justified superposition principle. {This superposition principle will in particular enable us to fully justify the linear combination assumption employed in many ensemble data assimilation filters even though  the dynamics is nonlinear.}
To perform an efficient estimation inherently bound to the dynamics, we propose to embed the ensemble members in a set of time-evolving RKHS defining a family of spaces. This setting is designed to deal with large-scale systems such as oceanic or meteorological flows, for which it is out of the question to explore the whole attractor (if any), neither to run very long time simulations calibrated on real-world data of a likely transient system due to climate change. The present work relies on the fact that the kernel functions, usually called feature maps, between the native space and the RKHS family are transported by the dynamical system. This creates, at any time, an isometry between the evolving RKHS  and the RKHS at a given initial time. Instead of being attached to a system state, the kernel is now fully associated with the dynamics and new ensemble members embedded in the RKHS family at a given time can be very simply forecast at a further time. 
For invertible dynamical systems, the Koopman and Perron-Frobenius operators in \AD{the RKHS family} are unitary, like in $L^2$. However, they are in addition uniformly continuous, with bounded generators, and diagonalizable. As such they can be rigorously expanded in exponential form and their eigenfunctions provide an intrinsic orthonormal basis system related to the dynamics.
This set of analytical properties immediately brings practical techniques to estimate Koopman eigenfunctions or essential features of the dynamics. Evaluations of these Koopman eigenfunctions at the ensemble members are directly available. Finite-time Lyapunov exponents associated with each Koopman eigenfunction are easily accessible on the RKHS family as well and can be used to get practical predictability time of the system. Expressions relating Koopman eigenfunctions to the tangent linear dynamics and its adjoint are also immediately available for data assimilation. These expressions, ensuing from the fundamental results listed previously, correspond in their simpler forms to the ensemble-based finite difference approximations intensively used in ensemble methods. Beyond that, \AD{the RKHS family} finally leads to a theoretically well grounded superposition principle (for nonlinear dynamics) enabling the devise of embarrassingly  simple methods for trajectory reconstructions in terms of  constant-in-time linear combinations of trajectory samples.

The embedding of the trajectories in a time-evolving RKHS which follows the dynamical system can be related to concepts of local kernels \citep{berry2016}, embedding in dynamical coordinates \citep{banish2017} or embedding of changing data over the parameter space as described in \citet{coifman2014}. In our case, we further exploit the relations with the Koopman operator together with an associated isometry, which allows us to define simple practical schemes for the setting of the time-evolving kernels and the modal information that can be extracted from them.

\AM{The main objective of this study is to propose a numerical framework for ensemble-based estimation and simulation of geophysical dynamics driven by systems of partial differential equations (PDE). In this study, we present theoretical elements for analyzing this framework. The mathematical assertions herein are based on idealized assumptions such as an infinite number of ensemble members, invertible dynamical systems and the availability of an ensemble of initial conditions. These theoretical statements will be further refined with more realistic assumptions in future works.}

This paper is organized as follows. We start briefly recalling the definition and some properties of RKHS in the next section. These definitions are complemented by \ref{RKHS} which details and adapts several known results to our functional setting. The embedding of the Koopman and Perron-Frobenius operators in \AD{the RKHS family} are then presented in sections \ref{sec-sys-dyn-rkhs} and \ref{Koopman-W}.
This is followed by the enunciation of our main results.  The main theorems and properties are demonstrated in section \ref{Koopman-W} and in the appendixes. Numerical results for a quasi-geostrophic barotropic ocean model in an idealistic north Atlantic configuration will be presented in section \ref{Sec-Numerical-Results} to support the pertinence of the proposed approach. \AM{This work corresponds to a presentation that has been given to the online seminar on Machine Learning for Dynamical Systems organized by the Turing institute (\href{https://www.youtube.com/channel/UCetvKhuAbnuU1tidgCz9g0g/videos}{link}).}

\section{Reproducing kernel Hilbert spaces}
A RKHS, $\mathcal{H}$, is a Hilbert space of \AM{complex functions} $f:E\mapsto\comp$ defined over a non empty abstract set $E$ on which a positive definite kernel and a kernel-based inner product, $\langle \bcdot \, ,\, \bcdot\rangle_{\HS}$ can be defined. Throughout this work, we will consider that \AM{$E$ is a compact subset of a function vector space as we will work with an integral compact operator from which a convenient functional description can be set. In addition to enable the RKHS functions to be Gateaux differentiable, we will assume as in \cite{Zhou-2008} that $E$ is defined as the closure of its nonempty interior, and we will work with RKHS defined from smooth enough kernels.} RKHS possess remarkable properties, which make their use very appealing in statistical machine learning applications and  interpolation problems \citep{Berlinet-Thomas-Agnan,Cucker-Smale-01}. The kernels from which they are defined have a so called \enquote{reproducing property}. 
\begin{Definition}[Reproducing kernel]
\label{def1}
Let $\HS$ be a Hilbert space of $\mathbb{C}$-valued functions defined on a non-empty compact topological space $E$. A map $k:E\times E \mapsto \mathbb{C}$ is called a reproducing kernel of $\HS$ if it satisfies the following principal properties: $\forall x \in E,$
\begin{itemize}  
\item[$\bullet$] membership of the evaluation function $ \;k(\bcdot, x) \in \HS$,
\item[$\bullet$] reproducing property  $ \forall f \in \HS, \; \bigl\langle f, k(\bcdot,x)\bigr\rangle_{\HS} = f(x)$.
\end{itemize}
\end{Definition}
The last property, provides an expression of the kernel as $k(x,y)= \overline{k(y,x)}=\bigl\langle k(\bcdot,y), k(\bcdot,x)\bigr\rangle_{\HS}$, which is Hermitian -- with $\overline\bullet$ denoting complex conjugate --, positive definite and associated with a continuous evaluation function $\delta_x f= \bigl\langle f,k(\bcdot,x)\bigr\rangle_{\HS}=f(x)$. The continuity of the Dirac evaluation operator is indeed sometimes taken as a definition of RKHS. By the Moore-Aronszajn theorem \citep{Aronszajn-50}, the kernel $k$ defines uniquely the RKHS, $\HS$, and vice versa. The set spanned by the feature maps ${\rm {Span}}\{k(\bcdot,x), x \in E \}$, is dense in $(\HS, \|\bcdot\|_{\HS})$.  We note also that useful kernel closure properties enable to define kernels through operations such as addition, Schur product, and function composition \citep{Berlinet-Thomas-Agnan}. 
Besides, RKHS can be meaningfully characterized through integral operators, leading to an isometry with $L^{2}_{\comp}(E,\nu)$ the space of square integrable functions defined on a compact metric space $E$ with finite measure  $\nu$  \cite{Cucker-Smale-01}.  
 \subsection*{Integral kernel operators}
 Let $k : E\times E \mapsto \comp$ be {$C^{(1,1)}(E \times E)$} \AD{(i.e. one time differentiable with respect to each argument)}, Hermitian, and positive definite,  and let the map ${\cal L}_{k}: L_{\comp}^{2}(E,\nu)\mapsto L_{\comp}^{2}(E,\nu)$ be defined as: 
\begin{equation}
\label{def-L_k}
\bigl( {\cal L}_{k} f\bigr)(x)= \int_E k(x,y) f(y) \,\nu({\mathrm{d}}y).
\end{equation}
This operator, which must be understood within the composition with the continuous inclusion $i : C^{0}(E, \comp ) \hookrightarrow L_{\comp}^2(E,\nu)$, can be shown to be well defined, {positive}, compact and self-adjoint \citep{Cucker-Smale-01}. \AD{We will always assume that the space $L_{\comp}^2(E,\nu)$ is  infinite dimensional.}The range of this operator is assumed to be dense in $L_{\comp}^2(E,\nu)$ \AD{and hence infinite dimensional}. 
 From Mercer's theorem \citep{Konig86}, the feature maps $k(\bcdot,x)$ span a RKHS defined through the eigenpairs $(\mu_i , \varphi_i)_{i\in\mathbb{N}}$ of the kernel $k$ :
\begin{equation} 
\HS : = \bigl\{ f\in L_{\comp}^{2}(E,\nu), \; f=\displaystyle\sum_{i=0}^{\infty} a_i \varphi_i \; : \; \displaystyle\sum_{i} \dfrac{\, |a_i|^{2}}{\mu_{i}} < \infty\bigr\},
\end{equation}
with no null eigenvalues \AD{and an inifinite sequence of eigenvalues} since we have assumed that the kernel range  is dense in $L_{\comp}^2(E,\nu)$. The rank of the kernel (number of -- non-zero -- eigenvalues)   corresponds to  the dimension of $\HS$, \AD{and will always be infinite in this work, which excludes the cases where $E$ is a finite set }.
 The RKHS $\HS$ is a space of smooth functions with decreasing {high frequency} coefficients. In fact, for all $u\in E$, there exists a constant $C>0$ for which $\| \partial_{u} f \|_{{L_{\comp}^2(E , \nu)}} \leq C \, \left\| f \right\|_{{\HS}}$ (Theorem \ref{ContKoopmTHEOREM} -- \ref{sec2-RKHS}), {where the derivative denotes the Gateau directional derivative of function $f$ in the direction $u$ defined as 
 \[
\forall x,u \in E, \partial_u f{(x)} = \lim_{\epsilon\to 0}  \frac{1}{\epsilon} \bigl(f(x + \epsilon u) -f(x)\bigr).
 \]}
 In order to properly define the Gateaux derivative,  a sufficient condition is to embed $E$ with a local vector space structure, which is for instance the case of differentiable manifolds. To enable such a differentiability property of the RKHS functions, we will in particular assume a weaker condition, namely \AM{that $E$ is the closure of its non empty interior \cite{Zhou-2008}.}
 We may also define uniquely a square-root symmetric isometric bijective operator ${\cal L}_{k}^{\alf}$ between $L_{\comp}^2(E,\nu)$ and $\HS$ (\ref{sec3-RKHS}). This operator enables to go from $L_{\comp}^2(E,\nu)$ to $\HS$ by increasing the functions regularity while its inverse lowers the function regularity by bringing them back to $L_{\comp}^2(E,\nu)$. Both operators are bounded. The injection $j : \HS \mapsto L^{2}_{\comp}(E,\nu)$ is continuous and compact (Theorem \ref{InjectionContinueCompacte} -- \ref{sec3-RKHS}) and  {$j(\HS)$ is assumed to be dense} in $L_{\comp}^2(E,\nu)$ (Prop. \ref{DensiteResultat}). These statements are precisely recalled in \ref{RKHS}.
 
 \begin{remark}\label{inf_sets}
 As mentioned above, the spaces $\mathcal{H}$ and $L_{\comp}^2(E,\nu)$ are  infinite dimensional spaces, preventing $E$ to be a finite set. Note  that the denseness and the infinite dimensionality assumptions can easily be relaxed -- see Remark  \ref{ValPropreNulles} -- \ref{sec3-RKHS}.
 \end{remark} 

\section{Dynamical systems on a RKHS family}
\label{sec-sys-dyn-rkhs}
We consider an invertible nonsingular  dynamical system $X(t)  = \Phi_t(X_0)$, defined from a continuous flow, $\Phi_t$ \AD{(meaning that, for any $X\in\Omega$, the mapping $t\mapsto\Phi_t(X)$ is continuous)}, on a compact invariant metric phase space \AD{differentiable} manifold, $\Omega$, (i.e. $\Phi^{-1}_t(\Omega)=\Omega$, $\forall t\in\real^+$) of time evolving vector functions over a spatial support $\Omega_x$. The functions $X:\real^{+}\times\Omega_x\mapsto\real^d$ with $X\in C^{p}$, $p\geq 1$, are solutions of the following $d$-dimensional differential system:
\begin{equation} 
\left\{
\begin{aligned}
\label{dyn-syst}
\partial_t X(t,\AD{\bcdot}) &=\Mop \bigl(\XX(t,\AD{\bcdot})\bigr), \text{ with } X(t,\AD{\bcdot}) \in \Omega,\forall t> 0,\\
\XX(0,\AD{\bcdot})&= \XX_0(\AD{\bcdot}).
\end{aligned}\right.
\end{equation}
The {nonlinear} differential operator  $\Mop{: \Omega\to \Omega}$ is assumed $C^1{(\Omega)}$, and in particular {its linear tangent expression defined as the Gateaux derivative:
\[
\partial_X \Mop(\XX)\delta \XX = \lim_{\beta\to 0} \frac{1}{\beta}\biggl(\Mop \bigl(\XX(t,\AD{\bcdot})+ \beta{\delta}\XX(t,\AD{\bcdot})\bigr) -\Mop \bigl(\XX(t,\AD{\bcdot})\bigr)\biggr)
\]
is such that} $\sup_{\XX\in\Omega} \partial_X \Mop(\XX) < \infty$ (since $\Omega$ is compact).   \AD{Let us note that such a differentiablity assumption is quite common in geophysical flows, as otherwise so-called 4DVar variational assimilation strategies \cite{Ledimet86} derived from optimal control theory \cite{Lions71} and that are routinely used in weather and climate centers would make no sense.} In the following, we consider the measure space $(\Omega_x , \mathrm{d})$ where $\mathrm{d}$ is the Lebesgue measure on $\Omega_x$. We note   $L^2(\Omega_x, \real^{d}) := \{ f = (f_1 , .. , f_d) : \Omega_x \mapsto \real^{d} \;  :  \; \text{for all} \; 1 \leq i \leq d \quad  f_i \in L^{2}(\Omega_x , \real) \}$ the vector space of square integrable functions on $\Omega_x$. We note $\|  \bcdot \|_{{L^2}}$  the norm associated with  $L^2(\Omega_x, \real^{d})$ and given by  $\|  f \|^{2}_{{L^2}} := \sum_{i=1}^{d} 
 \|f_i\|^{2}_{L^{2}(\Omega_x , \real)}$ for all $f \in L^2(\Omega_x , \real^{d})$. The set $\Omega$ is included in $L^{2}(\Omega_x , \real^{d})$. The system \eqref{dyn-syst} is assumed to admit a finite invariant measure $\nu$ on $\Omega$ (note that from invertibility the measure is also nonsingular with $\nu(\Phi^{-1}_t)(A) =0$, $\forall A{\subset} \Omega$ such that $\nu(A) =0$).
The system's observables are square integrable measurable complex functions with respect to measure $\nu$. They belong to the Hilbert space $L^{2}_{\comp}(\Omega , \nu)$   with the inner product $\langle \bcdot \, , \, \bcdot \rangle_{{L^{2}_{\mathbb{C}}(\Omega , \nu)}}$ given for $f$ and $g \in L^{2}_{\mathbb{C}}(\Omega , \nu)$ by 
\[
\langle f \, , \, g \rangle_{{L^{2}_{\mathbb{C}}(\Omega , \nu)}} := \int_{\Omega} f(y) \, \overline{\,g(y)\,} \; \nu(\mathrm{d}y). 
\]
 Depending on the context $X(t,\AD{\bcdot})$ or $X_t$ will denote either an element of $\Omega$ or the function $\AD{X(t,\bcdot)} \colon \Omega_x \mapsto \mathbb{R}^{d}$ such that $x \mapsto X(t,x)$.

 In this work, \AD{the set of different initial conditions $\Omega_0$ is an infinite compact subset of $L^2(\Omega_x, \mathbb{R}^{d})$}. For all time $t\geq0$, we denote by {$\Omega_t \subset\Omega \subset L^2(\Omega_x, \mathbb{R}^{d})$} the space defined as  $\Omega_t:= \Phi_t(\Omega_0)$. Furthermore, the set $\Omega_0$ of initial conditions will be assumed \AM{to be given -- as it is usual in ensemble data assimilation or forecasting applications --} and rich enough so that $\bigcup_{t\geq 0}( \Omega_t) =\Omega$. Hence, by this, each point of the manifold, $\Omega$, is  assumed to be uniquely characterized by a \AM{given} initial condition and the integration of the \AM{invertible} dynamical system over a given time $t$. In other words, for any $\XX\in\Omega$, there exists a unique initial condition $\XX_0\in\Omega_0$ and a unique time $t\in\mathbb{R}_+$ such that $\XX=\Phi_t(\XX_0)$. \AM{All the points of the phase space are tagged with a reaching time. A stationary point belongs to all sets $\Omega_t$ and a recurrent point belongs also to several sets.} All the sets $\Omega_t$ will be assumed to be \AM{compact subset of a functions vector space, defined as the closure of their non-empty interior. Compactness enables the use of a classical Mercer theorem while non-empty interior allows differentiability of the associated RKHS feature maps. Compactness could be relaxed with generalized form of the Mercer theorem \cite{Carmelli-et-al-2006, Steinward-Scovel-2012}. This is however outside of the scope of this paper and we will remain in the case of compactness assumption. Let us note that an assumption of finite dimensional vector spaces for the set $\Omega_t$ could have been done (which comes immediately by the Riesz theorem, if we additionally impose a vectorial space structure together with compactness and non-empty interior). While many geophysical systems are conjectured to have finite-dimensional attractors (like the 2D Navier-Stokes equations \cite{ladyzhenskaya1982finite, Constantin-Fois-Temam-85}), we prefer to adhere to a more general assumption here.} 
 
 \begin{remark}\label{dyn-sys-assumption}
\AM{For measure-preserving systems, the assumption $\Omega=\bigcup_{t\geq 0}\Omega_t$ is weaker than an ergodicity assumption, which is equivalent (together with Poincar\'{e} recurrence associated with measure preserving dynamical system) to the following statement \cite{EFHN-15} (Lemma 6.19): for any (non-empty) measurable set $A\subset \Omega$, then the set $\bigcup_{t\geq 0} \Phi^{-1}_t(A)=\Omega$.  Ergodicity is very commonly assumed in the literature on Koopman spectral analysis \cite{Brunton-et-al-22}. We have here a weaker assumption.}
 \end{remark}

 Defining at each time $t$, from ensemble $\Omega_t$, a positive Hermitian  kernel  $k_{\Omega_t}:\Omega_t\times\Omega_t$, there exists a unique associated RKHS $\HS_t$. In the following for the sake of concision the kernels $k_{\Omega_t}$, will be denoted by $k_{t}$ to refer to the dependence on the set $\Omega_t$. For all $t\geq 0$, we will use the notation $\XX_t  =\Phi_t(\XX_0)$ and $\left( \HS_t , \langle \, \bcdot \, , \, \bcdot \, \rangle_{\HS_t}  , \| \bcdot \|_{_{\HS_t}} \right)$ for the RKHS associated with the kernel $k_t$ defined on $\Omega_t \times \Omega_t$. The kernels are assumed to be $C^{\AD{(1,1)}}(\Omega_t \times \Omega_t)$ and as a consequence as shown in appendix A, the associated feature maps  have derivatives in $\HS_t$. The  RKHS $\HS_t$ for all time $t\geq 0$ forms a family of Hilbert spaces of complex functions, each of them  equipped with their own inner product $k_t(\YY_t,\XX_t)$, for all functions $\XX_t, \YY_t \in$ {$ \Omega_t$}. At any time, a measurable function of the system state, {usually} referred to as an observable, {$f$, belonging to the RKHS $\HS_t$}  can be  described  \AM{as the limit of} a linear combination of the feature maps \AD{ $\{k_t(\bcdot,\XX_t), \XX_t\in\Omega_t \}$}. 
 As it will be shown, the features maps of this RKHS family can be expressed on a time evolving orthonormal systems of basis functions, connected to each other through an exponential form and given by the eigenfunctions of the infinitesimal \AD{evolution operator} of a ``Koopman-like'' operator defined on the RKHS family. The RKHS family \AD{is defined by  \AM{${\mathcal W}= (\HS_t)_{t\geq 0}$}}. 
 In the following, we shall present a summary of the mathematical results associated to the Koopman operator in \AD{the RKHS family} .
 
 Numerically, in practice, the setup is based on $n$ realizations (called ensemble members) of this dynamical system, $\{\XX^{(i)}_t,i=1,\ldots,n \}$, generated from a finite set of different initial conditions $\{\XX^{(i)}_0,i=1,\ldots,n \}\subset\Omega_0$, are available up to time $T$.  Still, we underline that, in the following development,  the time horizon can be infinite, the sets $\Omega_t$ are infinite and the corresponding RKHS $\mathcal{H}_t$ are infinite dimensional.  This setting (both practically and theoretically) is quite common for ensemble methods applied to geophysical systems. The ensemble size is small in general, while the phase space is in theory infinite dimensional  (or at least of very high dimension). \AM{In the same way as it is usual in data assimilation or climate/weather forecasting, the dynamics on which we will experiment our setting will be assume to be invertible at least on small time range. Although often derived from Euler equation, geophysical dynamics introduce in one way or the other some viscosity and are not stricto-sensu invertible. Additionally, for most of these systems, only local-in-time solutions have been demonstrated thus far. As demonstrated in the numerical section, the proposed framework yields good results for a simple geophysical PDEs system within a short-time range. As already mentioned in the introduction, the primary goal of this work is to establish a numerical framework for ensemble-based estimation and simulation.  
We provide below some theoretical elements for analyzing this framework. These mathematical statements rely in particular 
on the assumptions of an infinite number of members in the ensemble and on invertible dynamical systems, which is obviously not the case in practice. These theoretical statements will be refined with more realistic assumptions in subsequent works.}

\section{Koopman operator in \AD{the RKHS family} }
\label{Koopman-W}
{So far{,} we did not give any precise definition of the kernels associated to the RKHS family $\HS_t$ {yet}. These kernels are defined {from} a known \textit{a priori} initial kernel, $k_0: \Omega_0\times\Omega_0$, as:
\begin{definition}[$\HS_t$ kernel]
\label{def-kt-direct}
The kernel $k_t$ associated to the RKHS $\HS_t$ are defined as
     \begin{equation}
    \forall X_t, Y_t\in \Omega_t,   \quad k_t \bigl(Y_t\,,\, X_t\bigr) =  k_0 \bigl(\Phi^{-1}_t(Y_t)\,,\, \Phi^{-1}_t(X_t)\bigr),
    \end{equation}
    where $k_0: \Omega_0\times\Omega_0$ is a given kernel.
\end{definition}
These kernels can also be equivalently defined introducing the operators, ${\cal U}_t$, operating on the feature maps. An isometric property of this operator on the RKHS family, enable{s} us to fully define the kernels along time in the same way as the previous definition. The operator ${\cal U}_t: \HS_0\mapsto \HS_t$ defined such that 
\begin{equation}
\label{def-kt}
    {\cal U}_t k_0 (\bcdot, X_0) = k_t \bigl(\bcdot, \Phi_t(X_0)\bigr),
\end{equation}
transports the kernel feature maps on the RKHS family by composition with the system's dynamics. This operator, and more specifically \AD{ an associated infinitesimal evolution operator}, will enable us to define the feature maps of $\HS_t$ from an initial {set of} feature maps on $\HS_0$. As will be fully detailed in section \ref{Proof-TH}, we will see that the operator ${\cal U}_t$  is indeed directly related to the restriction on $\HS_t$ of the {adjoint of the} Koopman operator $U_t$ on a bigger RKHS space, $\HS$, associated to a fixed kernel  defined on the whole phase space, $\Omega$.  {As  ${\cal U}_t$ propagates forward the second argument of the feature maps, it} is referred in the following {to} as the Koopman kernel operator in \AD{the RKHS family}.} Indeed, it will be pointed out that, for any $f\in\HS$, and any $X_0\in\Omega_0$, 
\begin{align*}
U_t f(X_0)=f(X_t) &= \bigl\langle R_t f\,,\, k_t(\bcdot\,,\,X_t) \bigr\rangle_{\HS_t}\\ 
&= \bigl\langle R_t f\,,\,{\cal U}_t k_0(\bcdot\,,\,X_0) \bigr\rangle_{\HS_t},
\end{align*}
where $U_t$ denotes the Koopman operator operating on $L^{2}_{\mathbb{C}}(\Omega , \nu)${,} and $R_t f$ denotes the restriction of $f$ to $\Omega_t\subset\Omega$. This expression exhibits a kernel expression of the Koopman operator definition and formally{,} at this point{, the} operator ${\cal U}_t$ can {be} thought as a kernel expression of the Koopman operator.

{The global kernel $k$ (respectively the associated RKHS) is tightly bound to the time evolving kernels $k_t$ (respectively $(\HS_t)_{t\geq 0}$).  
The restriction on $\HS$ of the Koopman operator $U_t$ {and} its adjoint the Perron-Frobenius $P_t$ {exhibit} some remarkable properties. As classically, the operators $U_t$  and  $P_t$ are unitary in $L^{2}_{\comp}(\Omega , \nu)$ (Prop. \ref{KoopmanUnitaire}), but they have the striking property to be uniformly continuous in $L^{2}_{\comp}(\Omega , \nu)$ (i.e. with bounded generators -- Theorem \ref{GenerateurKoopBorne}). As such, they can be {expanded} in an uniformly converging exponential series. 
Nevertheless, it must be outlined that the fixed kernel $k(x,y)$ and consequently $\HS$ are in practice only partially accessible as they are defined on the whole manifold of the dynamics and such expansion cannot be directly used. A local representation of the RKHS family  ${\mathcal W}$ is on the other hand much easier to infer in practice through the time evolution of ensembles $\Omega_t$ and kernels $k_t$. As we shall see{,} the operator ${\cal U}_t$ {on} ${\mathcal W}$ inherits {many} properties {from} $U_t$ and in particular, a related form of exponential series expansion (Theorem\;\ref{W-spectral-representation}).}

The Koopman {kernel} operator in \AD{the RKHS family}, {$ {\cal U}_t$}, defines an isometry from $\HS_0$ to $\HS_t$ (Theorem \ref{KoopVarieteIsometry})
\begin{align}
\label{isometry-kt}
\biggl\langle k_t\bigl(\bcdot,\!\Phi_t(\YY_0)\bigr),k_t\bigl(\bcdot,\!\Phi_t(\XX_0)\bigr) \biggr\rangle_{\!\!\HS_t}\!\!&=\! \biggl\langle k_0\bigl(\bcdot,\!\YY_0\bigr), k_0\bigl(\bcdot,\!\XX_0\bigr) \biggr\rangle_{\!\!\HS_0},\\
k_t\bigl( \Phi_t(\XX_0) , \Phi_t(\YY_0)\bigr) \!&=\! k_0(\XX_0,\YY_0).\nonumber
\end{align}
This isometry ensues obviously directly from definition \ref{def-kt-direct}. But, it can be also guessed from definition \eqref{def-kt} and the unitarity  of ${\cal U}_t$  (Theorem\;\ref{KoopVarieteIsometry}, Prop.\ref{Koopman-Perron-Frobenius-duallity}), inherited from the unitarity in $L^2_{\comp}(\Omega,\nu)$ of the Koopman operator and of its adjoint, the Perron-Frobenious operator.  This property is of major practical interest as it allows us to define the kernels of the  RKHS family from a given initial kernel fixed by the user. The kernels remain constant along the system trajectories. Alternatively, an explicit form of the feature maps can be obtained from an adjoint transport equation associated to  \AD{ an evolution operator in the RKHS family}.  Nevertheless, the isometry is far more straightforward to use to set the kernel evolution. This isometry will reveal also very useful for the  data assimilation and trajectories reconstruction problems investigated {in} section \ref{Sec-Numerical-Results}.
Strikingly, we have even more than this kernel isometry. An evolution operator $A_{U\!,\,t}: \HS_t  \to  L_{\comp}^2(\Omega_t,\nu)$, associated to the infinitesimal generator of Koopman operator $U_t$ can also be defined as
\begin{equation}
    A_{U\!,\,t}\;k_t(\bcdot,X_t) := \partial_{\Mop(\bcdot)}\, k_t (\bcdot, \XX_t),
    \label{eq:Au}
\end{equation}
where $\partial_u k_t(\bcdot, \XX_t)$ stands for the {Gateaux} directional derivative along $u\in \Omega$ of function $k(\bcdot,X_t)$.
This operator that will be shown to be bounded (Prop.\ref{A-U-t-Continuity}) and skew-symmetric (Prop.\ref{AntisymAUt}) for the inner product of $L^{2}_{\comp}(\Omega , \nu)$ plays the role of an infinitesimal generator on ${\cal W}$ and enables us expressing an exponential expansion of ${\cal U}_t$.
\begin{theorem}[\AD{The RKHS family} spectral representation]\label{W-spectral-representation}
\AD{Let $\Omega$ be a compact metric differentiable manifold that is invariant by an invertible  continuous flow, $\Phi_t$, defined from a dynamical system \eqref{dyn-syst} admitting  an invariant finite measure $\nu$ on $\Omega$. Let $\Omega_0\subset\Omega$ be an infinite compact subset of initial conditions and define $\Omega_t=\Phi_t(\Omega_0)$, with the assumption that $\Omega=\bigcup_{t\geq 0}\Omega_t$. Consider  a $C^{(1,1)}$ initial kernel $k_0:\Omega_0 \times \Omega_0 \rightarrow \mathbb{C}$ that uniquely defines an infinite dimensional initial RKHS $\mathcal{H}_0$, and, the time dependent kernels $k_t$ defined in (4) together with its associated RKHS family ${\mathcal W}= (\HS_t)_{t\geq 0}$. }Then, the evolution operator  $A_{U\!,\,t}: \HS_t \mapsto L_{\comp}^2(\Omega_t,\nu)$ given in \eqref{eq:Au}, and which \AD{is defined from the restriction of the infinitesimal generator $A_{U}$ of the Koopman operator in $L_{\comp}^2(\Omega,\nu)$},  can be diagonalized, at any time $t\geq 0$, by an orthonormal basis $(\psi^{t}_\ell)_\ell$ of $\HS_t$ such that 
\[
A_{U\!,\,t}\psi^{\,t}_\ell( X_t) = \lambda_\ell {{j\circ}} \psi^{\,t}_\ell( X_t),
\]
where $j$ is the injection $j : \HS_t \mapsto L^{2}_{\comp}(\Omega_t,\nu)$. We have in addition the following relation between the orthonormal basis systems along time:
\begin{equation}
\label{RelationVectpropreTemps}
\forall t\geq 0\;\;\psi^{\,t}_\ell( X_t) = \exp (t\; \lambda_\ell) \psi^{0}_\ell( X_0),
\end{equation}
with $X_t = \Phi_t (\XX_0)$. Furthermore, the purely imaginary eigenvalues $(\lambda_\ell)_\ell$ { do not depend on time}.
\end{theorem}
This theorem, which constitutes our main result, provides us a time-evolving system of orthonormal bases of the  RKHS family. It brings us the capability to express any observable of the system in terms of  bases that are intrinsically  linked to the dynamics and  related to each other by an exponential relation. The eigenvalues of $A_{U\!,\,t}$ are purely imaginary since this operator is skew-symmetric in $L^{2}_{\comp}(\Omega_t , \nu)$. Remarkably, the eigenvalues of each $A_{U\!,\,t}$  do not depend on time and are connected with the same covariant eigenfunctions (in the sense of \eqref{RelationVectpropreTemps}). These eigenfunctions correspond to restrictions of eigenfunctions of the infinitesimal generator of Koopman/Perron-Frobenius operators defined on $\HS$. The proofs of these results are fully detailed in the next section. 

{The frozen in time spectrum, also referred to as  iso-spectral property in the litterature, connects directly \AD{the RKHS family} representation with Lax pair theory of integrable system \cite{Lax-68}. Here the Lax pair is provided by the operator $A_{U\!,\,t}$ at a given time and the evolution equation \eqref{RelationVectpropreTemps}. The expression of the dynamical system in \AD{the RKHS family} (i.e. through the feature maps) provides a compatility condition for the Lax pair. As the existence of a Lax pair is directly related to integrable systems, it means that the \AD{the RKHS family} representation formally retains only the integrable part of a (non-necessary integrable) dynamical system. This relation between Koopman operator and integrable systems has been already put forward in several works \cite{Brunton-et-al-22, Li-PAMS-05}. Here an ensemble characterization of such a relation is exhibited.} 

The { kernel isometry \eqref{isometry-kt} (Theorem \ref{KoopVarieteIsometry})} and the Koopman spectral representation in \AD{the RKHS family} (Theorem \ref{W-spectral-representation}) constitute fundamental results enabling us to build very simple ensemble-based trajectory reconstructions for new initial conditions without the requirement of resimulating the dynamical system. Amazingly, the family of RKHS together with the Koopman isometry allows to define a system's trajectory as a constant-in-time linear combination of the time varying RKHS feature maps. Several of such reconstruction techniques, based on this fully justified superposition principle, will be shown in  section \ref{Sec-Numerical-Results}. We will in particular demonstrate the potential of this framework with a simple and very cheap data assimilation technique in the context  of very scarce data in space and time. In the next section we demonstrate  the different properties related to \AD{the RKHS family}.  
\subsection{Proofs on the properties of the Koopman operator in \AD{the RKHS family} and of the \AD{the RKHS family} spectral theorem}\label{Proof-TH}
As explained in the previous section the RKHS family, $\bigl(\HS_t\bigr)_{t\geq 0}$, does not have a good topology to work with. We need first to start defining a  ``big''  set with a better topology and which encompasses all the RKHS $\HS_t$. On this big encompassing set, we shall then define a Koopman operator enabling us to study properly the Koopman operator in \AD{the RKHS family}. 

\subsubsection{Construction of the \enquote{big} encompassing set $\HS$}
The phase space $\Omega$ corresponds to the set generated by the value of the dynamical system at a given time $t$. We note hence that $\Omega_t$ is a subset of $\Omega$. Each point of $\Omega$ designates a phase-space point $\XX=\Phi_t(\XX_0)$ uniquely defined from time $t$ and initial condition $\XX_0\in\Omega_0$. In order to define the RKHS $\HS$, let us 
 define from the kernel $k_t:\Omega_t\times\Omega_t$ a  symmetric positive definite  map $k : \Omega \times \Omega$. 
\begin{definition}[$\HS$ kernel]
\label{def_global_kernel}
    For all $X=\Phi_r(\XX_0),\, Y= \Phi_s(\YY_0) \in \Omega$, with $\XX_0, \, \YY_0\in \Omega_0$  we define
    \begin{equation}
\label{def-k-1}
k(X\,,\, Y) = k_0\bigl(X_0\,,\,  Y_0\bigr)\ell(r,s) =
        k_{t}\bigl(\Phi_{t}(X_0) \, , \, \Phi_{t}(Y_0)\bigr)\ell(r,s)\quad \forall t\geq 0,
\end{equation}
where $\ell:\mathbb{R}_+\times\mathbb{R}_+\to \mathbb{R}$ is a symmetric kernel defined, for all $r,s\in\mathbb{R}{^+}$ by
\begin{equation}
\label{def_time_kernel}
\ell(r,s)=\varphi(r-s),
\end{equation}
where $\varphi$ is a twice-\AD{differentiable} even \AM{positive definite} function such that $\varphi(0)=1$.
\end{definition}
The positivity and symmetry of kernel $k$ ensues from the property of { kernels $k_0$ and $\ell$, which are assumed to be valid kernels. Kernel $k$ inherits the regularity conditions of $k_0$ and $\ell$ and is $C^{\AD{(1,1)}}(\Omega\times\Omega)$ as well.  In the trivial case where $\varphi=1$, then comparing any pair of points on two trajectories would be equivalent to compare the initial conditions, which would result in a quite poor kernel{, and degeneracy issues}. In order to enrich the kernel structure, one can think of $\varphi$ as a regularized Dirac distribution, or a time Gaussian distribution, that will discriminate the points of the phase space that are reached at different times.}

By the Moore-Aronszajn theorem, there exists a unique RKHS called $\left( \HS , \langle \, \bcdot \, , \, \bcdot \, \rangle_{\HS}  , \| \bcdot \|_{_{\HS}} \right)$ with kernel $k$. We note that in practice the full knowledge of the phase-space {is} completely unreachable. Again, we therefore stress the fact that the setting of this encompassing RKHS $\HS$ has only a theoretical purpose. 
The RKHS $\HS$ can be connected to each RKHS of the family  ${\mathcal W}$ through extension and restriction operators denoted $E_{t}$ and $R_{t}$ respectively, and defined as follows. For all $t\geq 0$, let 
\begin{equation}\label{ExtensionDefRKHS}
 \begin{array}{ccccc}
E_{t}  : \; &  & \HS_t & \to & \HS \\
 & & k_t(\, \bcdot\, , \XX_t) \; & \mapsto & \; k(\, \bcdot \, , \, \XX_t).
\end{array}
\end{equation}
\AM{As shown in  \ref{EXT-REST} this operator is an isometry (hence continuous), and we} extend this definition by linearity on ${\rm Span}\{ k_t(\,\bcdot\, , \XX_t) \, : \, \XX_t\in \Omega_t \}$. Then, by density the function $E_t(f)$ is defined for all $f\in \HS_t$. The restriction,  {
\begin{equation}\label{RestrictDefRKHS}
 \begin{array}{ccccc}
R_{t}  : \; &  & \HS & \to &  \HS_t \\
 & & k(\, \bcdot\, ,\, \XX )=k\bigl(\, \bcdot\, ,\, \Phi_r(X_0) \bigr) \; & \mapsto & \;  k(\,\bcdot\, ,\, \XX)\restriction_{_{\Omega_t}}  = k_t\bigl(\, \bcdot \, , \, \Phi_t(\XX_0)\bigr)\ell(t,r),  \\
\end{array}
\end{equation}
\AM{which is also an isometry (\ref{EXT-REST}),} is defined similarly  for $g \in \HS^{\AD{\mathrm{sp}}} = {\rm Span}\{ k\bigl(\,\bcdot\, , \XX\bigr) \; : \; \XX\in \Omega \}$ } and extended by density in $\HS$ by the Moore-Aronszajn theorem \cite{Aronszajn-50}. The expression of $R_t(f)$ for $f\in \HS$ is further specified in remark \ref{ExpressionRestrictionFonctionRKHS} (\ref{EXT-REST}). The extension map is built in such a way that each RKHS $\HS_t$ of the family is included in the \enquote{big} encompassing RKHS $\HS$. 


In  \ref{EXT-REST}, we list several useful properties of the restriction and extension operators. Namely, $E_t$ and $R_t$  are both isometries (Prop.\ref{RestrProlongementIsometrie}); they form an adjoint pair (Prop.\ref{RestrictionProlongelementAdjoint}); the restriction is continuous in $L^{2}_{\comp}(\Omega_t, \nu)$ (Prop.\ref{ContRestrictionL2}). We define now the Koopman operator on the encompassing RKHS $\HS$.

\subsubsection{Koopman operators on $\HS$}
For all $t\geq 0$, we consider the  Koopman operator \AM{$U_t$} defined by
\begin{equation}
\label{def-Koppman-Gamma}
{U}_t(f)(X) := f \circ {\Phi}_t({\XX})= f\bigl(\Phi_t(\XX)\bigr),   \text{ for all } f \in \HS.
\end{equation}
Since $\HS$ is dense in $(L^{2}_{\comp}(\Omega , \nu) , \| \bcdot \|_{{L^{2}_{\comp}(\Omega , \nu)}})$ (by  proposition \ref{DensiteResultat} (\ref{sec3-RKHS}), the operator $U_t$ can be continuously extended on $L^{2}_{\comp}(\Omega , \nu)$ and to avoid notations inflation we keep noting  $U_t$ this extension. We first study $U_t :  L^{2}_{\comp}(\Omega , \nu) \mapsto L^{2}_{\comp}(\Omega , \nu) $ with the $L^{2}_{\comp}(\Omega, \nu)$ topology. The family $(U_t)_{t \geq0}$ is a strongly continuous semi-group on $(L^{2}_{\comp}(\Omega , \nu), \| \bcdot \|_{L^{2}_{\comp}(\Omega , \nu)})$ since $t\mapsto \Phi_t(\bcdot)$ is continuous on $\real^{+}$. 
As the feature maps are functions of $\HS$, it can be noticed that for all $\XX_r \in \Omega$
\begin{equation}\label{DefKoopman1}
U_t\, [ \, k(\, \bcdot \, , \, {\XX}_r) \, ] \; = \; k\bigl( \, {\Phi}_t(\bcdot) \, , \, {\XX}_r \bigr),
\end{equation}
\AD{which justifies the stability of $\mathcal{H}$ by the operator $U_t$.}
This corresponds to a natural expression of the Koopman operator for any function $g\in \HS^{\AD{\mathrm{sp}}}$, extended then by density to $\HS$. Yet another useful equivalent expression of the Koopman operator  is available for the feature maps.  {We have, for any points $\XX=\Phi_r(X_0), \YY=\Phi_s(Y_0) \in \Omega$
\begin{equation}
    U_t \left[ \, k(\, \bcdot \, , \, \XX) \, \right] (\YY) = k \big({\Phi}_t(Y)\,,\, \XX \big) =  k \big(\Phi_{t+s}(Y_0)\,,\, \Phi_r (X_0)\bigr)=k_0(Y_0,X_0)\ell(t+s,r).
\end{equation}
From the properties of the time kernel $\ell$, we get
\begin{equation}
\ell(t+s,r)=\varphi(t+s-r)=\varphi(s-(r-t))=\ell(s,r-t),
\end{equation}
which leads to 
\begin{equation}
    k \big({\Phi}_t(Y)\,,\, \XX \big)=k_0(Y_0,X_0)\ell(t+s,r)=k_0(Y_0,X_0)\ell(s,r-t)=k \big(\Phi_{s}(Y_0)\,,\, \Phi_{r-t} (X_0)\big)=k\big(Y,\Phi_t^{-1}(X)\big).
\end{equation}
and hence
\begin{equation}
U_t \left[ \, k(\, \bcdot \, , \, \XX) \, \right] (\YY) = k\big(Y,\Phi_t^{-1}(X)\big).
\end{equation}}

{As the previous equality is true for all $\YY\in \Omega$, this implies that for all $\XX \in \Omega$ 
\begin{equation}\label{DefKoopman2}
U_t\, [ \, k(\, \bcdot \, , \, \XX) \, ] \; = \; k\bigl(\,\bcdot \, , \, {\Phi}_t^{-1}({\XX}) \bigr).
\end{equation}
This dual formulation of the kernel expression of the Koopman operator, \AM{which further highlights the stability of $\HS$ by $U_t$} is intrinsically linked to the definition of the kernel $k$. }

This dual expression will be of central interest in the following as it enables us to formulate the time evolution of the feature maps in terms of the Koopman operator $U_t$ and its adjoint at any time $t\geq 0$. 
\begin{remark}[Transport of the kernel $k$]\label{TransportKernel}
For all $\XX=\Phi_r(X_0)\in \Omega$ and $t \geq 0$, by definition of the Koopman operator on the feature maps and \eqref{DefKoopman2}, $U_t \left[ \, k(\,\bcdot\, , \,{\XX})\right]$ has two expressions and we obtain
\[
U_t \left[ \, k(\,\bcdot\, , \,{\XX})\right] = k\bigl(\, {\Phi}_t(\bcdot) \, , \, {\XX}\,\bigr) \; = \; k\bigl(\, \bcdot \, , \, {\Phi}_t^{-1}({\XX}) \bigr) \; . 
\]
\end{remark}

The next remark provides a useful commutation property {between $U_t$ and the kernel integral operator $\mathcal{L}_{k}$ defined in equation~\eqref{def-L_k} or of its unique symmetric square-root $\mathcal{L}_{k}^{\alf}$ defined from the square-root of the kernel eigenvalues (see \ref{sec3-RKHS} for a precise definition in the general case). Note that in the case of $\HS$, the kernel integral operator is indeed a complex object which hides a time dependency.} 

\begin{remark}[Commutation between $\mathcal{L}_k^{\alf}$ (or $\mathcal{L}_k$) and $U_t$ ]\label{CommutationL_k-1/2KoopmanU_t}
For all ${\XX}\in \Omega$, we have 
\[ \mathcal{L}_{k}^{\alf} \circ U_t \, [k(\, \bcdot \, , \, {\XX})] \, = \, U_t \circ \mathcal{L}_{k}^{\alf} \, [k(\, \bcdot \, , \, {\XX})] .\]
\end{remark}
\begin{proof}
By \eqref{DefKoopman1}, we have $\mathcal{L}_{k}^{\alf} \circ U_t \, [k(\, \bcdot \, , \, {\XX})] \, = \, \mathcal{L}_{k}^{\alf} \, [k(\, {\Phi}_t(\bcdot) \, , \, {\XX})]$. {Then we obtain,  for all ${\YY} \in \Omega$
\[
\left(\mathcal{L}_{k}^{\alf}  \circ  U_t \, [k(\, \bcdot \, , \,{\XX})]\,\right)({\YY})  = \left(\, \mathcal{L}_{k}^{\alf} \, [k(\, {\Phi}_t(\bcdot) \, , \, {\XX})]\right)(Y)=  \, U_t \, [ \mathcal{L}_{k}^{\alf} \,  k(\, \bcdot \, , \, {\XX})]({\YY}) \,,
\]}
by definition of the Koopman operator since $\mathcal{L}_{k}^{\alf} \,  k(\, \bcdot \, , \, {\XX}) \in \HS$. 
\cqfd
\end{proof}
This commutation property that ensues directly from the compositional nature of the Koopman operator allows us to write immediately the equality 
\begin{equation}
\label{Commutation-Lk}
\mathcal{L}_{k}\circ U_t \, [k(\, \bcdot \, , \, \bm{\XX})] \, = \, U_t \circ \mathcal{L}_{k} \, [k(\, \bcdot \, , \, \bm{\XX})].
\end{equation}
By linearity these properties extend to all functions of $\HS$. Let us note that  the function $\mathcal{L}_{k}^{-\alf} \, k(\, \bcdot \, \, ,{\XX})$, for all ${\XX} \in \Omega$, does not necessarily belong  to $\HS$, therefore this proof cannot be applied to $\mathcal{L}_{k}^{-\alf}$.

The next proposition shows the Koopman operator defined on RKHS $\HS$ is unitary in $L^{2}_{\comp}(\Omega , \nu)$, which is a classical property of the Koopman operator in $L^2$ for measure preserving invertible systems. \AD{The proof we give for this result relies on the transport kernel expression given in Remark \ref{TransportKernel}. } 

\begin{proposition}[Unitarity of the Koopman operator in $L^{2}_{\comp}(\Omega , \nu)$]\label{KoopmanUnitaire}
The map $U_t$ : $(L^{2}_{\comp}(\Omega , \nu) ,  \| \bcdot \|_{L^{2}_{\comp}(\Omega, \nu)} )$  $\mapsto$\\ $(L^{2}_{\comp}(\Omega , \nu) , \| \bcdot \|_{L^{2}_{\comp}(\Omega , \nu)} )$ is unitary for all $t\geq 0$.
\end{proposition}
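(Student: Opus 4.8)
The plan is to verify the two defining properties of a unitary operator on the Hilbert space $(L^{2}_{\comp}(\Omega , \nu) , \| \bcdot \|_{L^{2}_{\comp}(\Omega , \nu)})$: namely that $U_t$ is an isometry (norm-preserving), and that it is surjective. Since the dynamical system \eqref{dyn-syst} is invertible and admits the finite invariant measure $\nu$, the natural candidate for the inverse is $U_{-t}$, the Koopman operator associated with the inverse flow $\Phi_t^{-1}$; and the invariance of $\nu$ under $\Phi_t$ is exactly what makes the change-of-variables computation go through.

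First I would establish the isometry on the dense subspace $\HS^{0} = {\rm Span}\{k(\bcdot,\XX) : \XX\in\Omega\}$, or more simply on bounded continuous functions, using the change of variables $y = \Phi_t(z)$. For $f\in \HS$,
\[
\| U_t f \|^{2}_{L^{2}_{\comp}(\Omega , \nu)} = \int_\Omega |f(\Phi_t(y))|^{2}\,\nu(\dif y) = \int_\Omega |f(z)|^{2}\,\nu(\dif z) = \| f \|^{2}_{L^{2}_{\comp}(\Omega , \nu)},
\]
where the middle equality is precisely the push-forward/invariance relation $(\Phi_t)_{\#}\nu = \nu$ (equivalently $\nu(\Phi_t^{-1}(A)) = \nu(A)$ for all measurable $A$), which holds because $\nu$ is an invariant measure for the flow and $\Phi_t$ is invertible and nonsingular. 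By density of $\HS$ in $L^{2}_{\comp}(\Omega , \nu)$ (Proposition~\ref{DensiteResultat}) this extends to all of $L^{2}_{\comp}(\Omega , \nu)$, so the extension $U_t$ is an isometry; in particular it is injective with closed range. Then I would show surjectivity: since $\Phi_t$ is invertible, for any $g\in L^{2}_{\comp}(\Omega , \nu)$ the function $f := g\circ \Phi_t^{-1}$ is again in $L^{2}_{\comp}(\Omega , \nu)$ (same change-of-variables argument, using nonsingularity of $\Phi_t^{-1}$ so that $f$ is well defined $\nu$-a.e.\ and measurable), and $U_t f = g$. Hence $U_t$ is a surjective isometry, i.e.\ unitary, with $U_t^{-1} = U_t^{*} = U_{-t}$.

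The only delicate point — and the main obstacle — is purely measure-theoretic rather than algebraic: one must be careful that $U_t$ is genuinely well defined as a map on equivalence classes of $L^{2}_{\comp}(\Omega , \nu)$, i.e.\ that composition with $\Phi_t$ sends $\nu$-null sets to $\nu$-null sets, which is exactly the nonsingularity hypothesis already imposed on the system (stated right after \eqref{dyn-syst}); and that the change-of-variables formula applies without smoothness of $\Phi_t$ relative to $\nu$ — but this is just the abstract push-forward identity $\int h\,\dif((\Phi_t)_{\#}\nu) = \int h\circ\Phi_t\,\dif\nu$ for measurable $h\geq 0$, combined with $(\Phi_t)_{\#}\nu = \nu$. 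Once that is in place the proof is a one-line computation on each side, and the semigroup/continuity structure already recorded for $(U_t)_{t\geq 0}$ guarantees the extension is the strongly continuous one. I would therefore organize the write-up as: (i) $U_t$ descends to $L^{2}_{\comp}(\Omega , \nu)$ by nonsingularity; (ii) isometry on $\HS$ via invariance of $\nu$, extended by density; (iii) surjectivity via the explicit preimage $g\circ\Phi_t^{-1}$; (iv) conclude unitarity and identify the adjoint as $U_{-t}$.
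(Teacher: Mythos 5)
Your proposal is correct, and the isometry half coincides with the paper's (the paper simply invokes measure preservation and invariance of $\Omega$ where you write out the change of variables). Where you genuinely diverge is the surjectivity step. The paper does not construct preimages: it shows that $U_t(\HS^0)$, hence the range of $U_t$, is \emph{dense} in $L^{2}_{\comp}(\Omega,\nu)$ by an orthogonality argument — if $f$ is orthogonal to every $k\bigl(\,\bcdot\,,\Phi_t(\XX)\bigr)$ then $(\mathcal{L}_{k}f)\circ\Phi_t=0$, and injectivity of the kernel integral operator $\mathcal{L}_k$ together with injectivity of the isometry $U_t$ forces $f=0$ — and then concludes via Hahn--Banach plus the closed range of an isometry. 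Your route instead exhibits the explicit preimage $f=g\circ\Phi_t^{-1}$, which is legitimate here because the flow is invertible and nonsingular, so $g\circ\Phi_t^{-1}\in L^{2}_{\comp}(\Omega,\nu)$ and $U_t f=g$. Your argument is the more elementary and self-contained one (it is the standard proof of Koopman unitarity for invertible measure-preserving systems and does not need the RKHS machinery at all); the paper's version has the merit of staying inside the framework it has just built — it reuses the denseness of $\HS^0$ and the injectivity of $\mathcal{L}_k$, the same ingredients that recur throughout Section 4 — and of avoiding any explicit discussion of whether composition with $\Phi_t^{-1}$ is well defined on $L^2$ equivalence classes, the measure-theoretic point you rightly flag as the only delicate issue in your approach. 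Your identification $U_t^{*}=U_t^{-1}$ matches the paper's subsequent definition $P_t:=U_t^{*}$; just note that the paper only defines the semigroup for $t\geq 0$, so ``$U_{-t}$'' should be read as the Koopman operator of the inverse flow rather than as a member of the stated family.
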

\begin{proof}
 Since  the dynamical system   $(\Phi_t)_{t\geq0}$ is measure preserving, (i.e. $\forall A\in \Omega, \nu(A)= \nu\bigl(\Phi_t^{-1}(A)\bigr)$, with $\Phi_t^{-1}(A)$ denoting the pre-image of $\Phi_t$)  and left invariant (i.e. $\Phi_t(\Omega)=\Omega$, {with $Ran(\Phi_t)= \Omega$}), the Koopman operator $U_t$ is an isometry: {for all $f$ and $g\in L^2_\mathbb{C}(\Omega,\nu)$,} 
\[
\langle U_t(f) \, , \, U_t(g) \rangle_{{L^{2}_{\comp}(\Omega , \nu)}} = \langle f \, , \, g \rangle_{{L^{2}_{\comp}(\Omega , \nu )}} \, .
\]
Besides, it can be noticed that $U_t( \HS^{\AD{\mathrm{sp}}})$, with {$\HS^{\AD{\mathrm{sp}}} = {\rm Span} \{ \, k(\,\bcdot \, , \, \XX) \, : \, {\XX} \in \Omega \}$},  is a subset of the range of $U_t : L^{2}_{\comp}(\Omega , \nu) \mapsto L^{2}_{\comp}(\Omega , \nu) $, { $U_t( \HS^{\AD{\mathrm{sp}}})=  {\rm Span} \{ \, k\bigl(\,\bcdot \, , \, {\Phi}_t^{\AD{-1}}({\XX})\bigr) \, : \, {{\XX} \in \Omega \}}$} and we have the following inclusions
\[
 U_t(\HS^{\AD{\mathrm{sp}}}) \subset Ran(U_t) \subset L^{2}_{\comp}(\Omega , \nu)\, .
\]
We show now that $U_t(\HS^{\AD{\mathrm{sp}}}) $ is dense in $\bigl( L^{2}_{\comp}(\Omega , \nu) , \| \bcdot \|_{L^{2}_{\comp}(\Omega , \nu)}\bigr)$. Let $f\in L^{2}_{\comp}(\Omega , \nu)$ be such that { $\bigl\langle  f , k\bigl(\,\bcdot , {\Phi}_t^{\AD{-1}}({\XX})\bigr) \bigr\rangle_{L^{2}_{\comp}(\Omega , \nu)}\!=0$ for all ${\XX}\in \Omega$. Then, from the definition of $\mathcal{L}_k$, it means that  $({\cal L}_{k} f) [{\Phi}_t^{\AD{-1}}({\XX})] = 0$.  The injectivity of $\mathcal{L}_k$ ensures that $f\circ\Phi_t^{\AD{-1}}=0$, so $U_{\AD{-t}} f=0$. As $U_{\AD{-t}}$ is an isometry, it is in particular injective, which eventually leads to $f=0$. }
By a consequence of Hahn-Banach theorem this result shows that the range of $U_t$ is dense in $L^{2}_{\comp}(\Omega , \nu)$ and Proposition \ref{KoopmanUnitaire} is proved. \cqfd
\end{proof}
\noindent
Proposition \ref{KoopmanUnitaire}, shows that the Koopman operator $U_t$ is invertible and that its inverse in $L^{2}_{\comp}(\Omega , \nu )$ is $U^{*}_t$. Denoting by $P_t$ the operator defined by $P_t:=U^{*}_t$, for all $f$ and $g \in L^{2}_{\comp}(\Omega, \nu)$ we have $\langle \, U_t(f) \, , \, g \rangle_{{L^{2}_{\comp}(\Omega, \nu)}} = \langle f \, , \, P_t(g) \,  \rangle_{{L^{2}_{\comp}(\Omega, \nu)}}$. 
The family $(P_t)_t$ is a strongly continuous  semi-group on $\bigl(L^{2}_{\comp}(\Omega , \nu), \|\bcdot\|_{L^{2}_{\comp}(\Omega, \nu)}\bigr)$. This operator is referred to as the Perron-Frobenius operator. 
For all $t\geq 0$, the Perron-Frobenius operator $P_t$ verifies {for all $\XX \in \Omega$
\begin{equation}\label{PerronFrobeniusKernel}
P_t [ k\bigl( \Phi_t(\bcdot)  ,  \XX)\bigr ] = k(\, \bcdot \, , \XX) .
\end{equation}
From Remark 1, we can write a more explicit expression of $P_t$ on the featutre maps: for all $X\in\Omega$,
\begin{equation}\label{PerronFrobeniusRightflow}
P_t\big[k(\bcdot,X)\big]=k\big(\bcdot,\Phi_t(X)\big).
\end{equation}}
{Informally, if we see the function $k(\bcdot,X)$ as an atom of the measure, then its expression at a future time is provided  by \eqref{PerronFrobeniusRightflow}, which corresponds well to the idea that the   Perron-Frobenius operator advances in time the density.}
\noindent

As previously stated the Koopman operator $U_t$ is an isometry in $L^{2}_{\comp}(\Omega , \nu)$. The next result asserts that $U_t$ is also an isometry in $\HS$.  
\begin{proposition}[Isometric relation of the kernel]\label{RelationKernel}
{For all $\XX$ and $\YY \in \Omega$, we have $k\big({\Phi}_t(\XX) \, , {\Phi}_t(\YY)\big) = k( X\, , \, Y)$.}
\end{proposition}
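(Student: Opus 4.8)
The plan is to prove the identity directly from the explicit Definition~\ref{def_global_kernel} of the global kernel $k$, using only the flow property of $(\Phi_t)_{t\geq0}$ and the translation invariance of the time kernel $\ell$; no measure-theoretic ingredient is needed at this stage (measure preservation entered only in Proposition~\ref{KoopmanUnitaire}). First I would invoke the standing assumption that every point of $\Omega$ admits a \emph{unique} decomposition of the form $\Phi_r(X_0)$ with $X_0\in\Omega_0$ and $r\in\mathbb{R}_+$, and write $X=\Phi_r(X_0)$, $Y=\Phi_s(Y_0)$ accordingly.

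Then, using $\Phi_t\circ\Phi_r=\Phi_{t+r}$, I would rewrite $\Phi_t(X)=\Phi_{t+r}(X_0)$ and $\Phi_t(Y)=\Phi_{t+s}(Y_0)$. Since $\Omega$ is invariant, $\Phi_t(X),\Phi_t(Y)\in\Omega$, and by the uniqueness just mentioned these are their canonical decompositions (same initial conditions $X_0,Y_0$, shifted times $t+r,t+s$). Applying Definition~\ref{def_global_kernel} to the pair $(\Phi_t(X),\Phi_t(Y))$ then gives
\[
k\big(\Phi_t(X),\Phi_t(Y)\big)=k_0(X_0,Y_0)\,\ell(t+r,t+s).
\]

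Finally, since $\ell(r,s)=\varphi(r-s)$ depends only on the difference of its arguments, $\ell(t+r,t+s)=\varphi\big((t+r)-(t+s)\big)=\varphi(r-s)=\ell(r,s)$, so that $k(\Phi_t(X),\Phi_t(Y))=k_0(X_0,Y_0)\,\ell(r,s)=k(X,Y)$, which is the claim. I would close by remarking that, as the flow is invertible, the same argument gives $k(\Phi_t^{-1}(X),\Phi_t^{-1}(Y))=k(X,Y)$, the form that, together with the $\HS$-reproducing property and Remark~\ref{TransportKernel}, yields that $U_t$ is an isometry of $\HS$.

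The proof carries no real computational difficulty; the only point that needs care is the bookkeeping of the canonical $(\Omega_0,\text{time})$ parametrization — checking that precomposing with $\Phi_t$ leaves the $\Omega_0$-component fixed and shifts the time label by $t$, which is exactly where the uniqueness and invariance hypotheses on $\Omega$ are used. An alternative, slightly longer route would deduce the identity from Remark~\ref{TransportKernel}, i.e. $U_t[k(\bcdot,X)]=k(\bcdot,\Phi_t^{-1}(X))$, combined with the reproducing property in $\HS$, but the direct computation above is the most transparent.
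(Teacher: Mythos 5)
Your proof is correct and is essentially identical to the paper's: both write $X=\Phi_r(X_0)$, $Y=\Phi_s(Y_0)$, use the flow property to get $k(\Phi_t(X),\Phi_t(Y))=k_0(X_0,Y_0)\,\ell(t+r,t+s)$, and conclude by the translation invariance $\ell(t+r,t+s)=\varphi(r-s)=\ell(r,s)$. Your additional remarks on the uniqueness of the $(\Omega_0,\text{time})$ parametrization make explicit a point the paper leaves implicit, but the argument is the same.
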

\begin{proof}
{This results follows immediately from the kernel definition: for all $X=\Phi_r(X_0),Y=\Phi_s(Y_0)\in\Omega$, we have \begin{equation}
k\big({\Phi}_t(\XX) \, , {\Phi}_t(\YY)\big)= k\big(\Phi_{t+r}(X_0), \Phi_{t+s}(Y_0)\big) = k_0(X_0,Y_0)\ell(t+r,t+s) = k_0(X_0,Y_0)\ell(r,s)=k(X,Y).
\end{equation} } 
\end{proof}
Remark \ref{TransportKernel} on the application of the operator $U_t$ to  $k(\, \bcdot \, , \, \XX)$, with $\XX\in \Omega$, {shows that applying the flow, $\Phi_t$, on the first variable of $k(\, \bcdot \, , \, \bcdot \,)$. is equivalent to applying $\Phi_t^{-1}$ to the second variable and vice-versa.  Consequently, for all $f\in L^2_\mathbb{C}(\Omega,\nu)$ and $\XX \in \Omega$, we have $\langle f \, , \, k\bigl(\, \bcdot \, , \, \Phi_t(\XX)\bigr) \,  \rangle_{{L^{2}_{\comp}(\Omega, \nu)}} \; = \langle f \, , \, k\bigl(\, \Phi_t^{-1}(\bcdot) \, , \, \XX\bigr) \,  \rangle_{{L^{2}_{\comp}(\Omega, \nu)}}   \; = \; \langle f \; , \; P_t \,k(\, \bcdot \, , \, \XX) \rangle_{{L^{2}_{\comp}(\Omega, \nu)}}$ and by definition of the adjoint, we obtain 
\begin{equation}\label{PerronFrobeniusFlotEg}
\bigl\langle f \, , \, k\bigl(\, \bcdot \, , \, \Phi_t(\XX)\bigr) \,  \bigr\rangle_{{L^{2}_{\comp}(\Omega, \nu)}} = \; \bigl\langle U_t(f) \; , \;  \,k(\, \bcdot \, , \, \XX) \bigr\rangle_{{L^{2}_{\comp}(\Omega, \nu)}}.
\end{equation}}
{
We already knew that this equality was right for $f\in\mathcal{H}$ and for the inner product in $\mathcal{H}$, namely
\begin{equation}
\big\langle f,k\big(\bcdot,\Phi_t(X)\big)\big\rangle_\mathcal{H}=f\big(\Phi_t(X)\big)=U_t f(X)=\big\langle U_t(f),k(\bcdot,X)\big\rangle_\mathcal{H}.
\end{equation} 
Equation (\ref{PerronFrobeniusFlotEg}) provides a weak (in the sense that the $L^2_\mathbb{C}(\Omega,\nu)$-inner product against a feature map is no longer the evaluation function)  formulation of the transport of any observable in $L^2_\mathbb{C}(\Omega,\nu)$ by the flow.}

In order to further exhibit  several analytical results on the Koopman operator $(U_t)_{t\geq0}$ in $L^{2}_{\comp}(\Omega , \nu)$, we introduce in the following its infinitesimal generator.

\subsubsection{Koopman infinitesimal generator}\label{K-Generator}
We will note as $\bigl(A_{_U},  \mathcal{D}(A_{_U})\bigr)$ the infinitesimal generator  of the strongly continuous semigroup $(U_t)_{t\geq 0}$ on $L^{2}_{\comp}(\Omega , \nu)$ and its domain.
As the Koopman and Perron-Frobenius operators are adjoint  in $L^{2}_{\comp}(\Omega, \nu)$ their infinitesimal generators are also adjoint of each other {with possibly their own domain}. The following lemma characterizes first the Perron-Frobenius infinitesimal generator { $A_{_P}$ and its domain $\mathcal{D}\left(A_{_P}\right)$}.
\begin{lemma}[Perron-Frobenius infinitesimal generator]
The Perron-Frobenius infinitesimal generator is the un\-boun\-ded operator, $\left( A_{_P} , \mathcal{D}\left(A_{_P}\right)\right)$ defined by 
{\[
\mathcal{D}\left(A_{_P}\right) = \left\{ f\in L^{2}_{\comp}(\Omega , \nu) \; : \; x\mapsto \partial_{\Mop(x)} f(x)\; \in L^{2}_{\comp}(\Omega, \nu) \right\} \; \text{and }\; A_{_P} f := \, - \partial_{\Mop(\bcdot)}\, f(\bcdot) \quad \text{for} \quad  f\in \mathcal{D}\left(A_{_P}\right),
\]}
{where $\Mop$ denotes the  differential \AD{operator} of the system dynamics \eqref{dyn-syst} and $\partial_u f$ stands for the directional derivative of $f$ along $u\in \Omega$.}
\end{lemma}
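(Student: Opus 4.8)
The plan is to first make the Perron--Frobenius semigroup concrete. By Proposition~\ref{KoopmanUnitaire}, $U_t$ is unitary on $L^2_\comp(\Omega,\nu)$, so $P_t=U_t^{*}=U_t^{-1}$; since $\Phi_t$ is invertible and $U_t g=g\circ\Phi_t$, one gets $P_t f=f\circ\Phi_t^{-1}$ for all $f\in L^2_\comp(\Omega,\nu)$ (consistent with \eqref{PerronFrobeniusRightflow} on the feature maps). Writing $\Psi_s:=\Phi_s^{-1}$, the flow identity $\partial_t\Phi_t(x)=\Mop(\Phi_t(x))$ gives $\partial_s\Psi_s(x)=-\Mop(\Psi_s(x))$, hence by the chain rule, formally, $\partial_s\big|_{s=0}f(\Psi_s(x))=-\partial_{\Mop(x)}f(x)$. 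So the candidate is $Bf:=-\partial_{\Mop(\bcdot)}f$ on $\mathcal{D}(B):=\{f\in L^2_\comp(\Omega,\nu):\partial_{\Mop(\bcdot)}f\in L^2_\comp(\Omega,\nu)\}$, and it remains to prove $(A_{_P},\mathcal{D}(A_{_P}))=(B,\mathcal{D}(B))$ (equality of graphs).

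\textbf{Step 2: $\mathcal{D}(B)\subseteq\mathcal{D}(A_{_P})$ with $A_{_P}=B$ there.}
For $f\in\mathcal{D}(B)$ I would integrate along the flow lines of $\Psi$: since $\tfrac{d}{ds}f(\Psi_s(x))=-\partial_{\Mop(\Psi_s(x))}f(\Psi_s(x))=(P_s Bf)(x)$ pointwise a.e., one obtains, first a.e.\ and then as an identity in $L^2_\comp(\Omega,\nu)$,
\[
P_t f-f=\int_0^{t}P_s(Bf)\,\dif s .
\]
As $(P_s)_{s\ge0}$ is strongly continuous, $\tfrac1t\int_0^{t}P_s(Bf)\,\dif s\to P_0(Bf)=Bf$ in $L^2_\comp(\Omega,\nu)$ when $t\to0^{+}$, so $f\in\mathcal{D}(A_{_P})$ and $A_{_P}f=Bf$. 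Note that every feature map $k(\bcdot,\XX)$, $\XX\in\Omega$, lies in $\mathcal{D}(B)$: its directional derivative along $\Mop$ belongs to $\HS\subset L^2_\comp(\Omega,\nu)$ because $\Mop$ is bounded on the compact $\Omega$ and $k$ is $C^{(2,2)}$ (cf.\ Theorem~\ref{ContKoopmTHEOREM}).

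\textbf{Step 3: the reverse inclusion by a core argument.}
I would then show $\mathcal{D}(B)$ is a core for $A_{_P}$. It contains $\HS^{0}={\rm Span}\{k(\bcdot,\XX):\XX\in\Omega\}\subseteq\mathcal{D}(B)$ (Step~2), which is dense in $L^2_\comp(\Omega,\nu)$ (Prop.~\ref{DensiteResultat}); by Step~2 we also have $\mathcal{D}(B)\subseteq\mathcal{D}(A_{_P})$; and $\mathcal{D}(B)$ is $(P_t)$-invariant, since differentiating $P_tf=f\circ\Psi_t$ and using that $\Mop$ is invariant under its own flow, $D\Psi_t(x)\Mop(x)=\Mop(\Psi_t(x))$, gives $\partial_{\Mop(\bcdot)}(P_tf)=P_t(\partial_{\Mop(\bcdot)}f)$, which stays in $L^2_\comp(\Omega,\nu)$ by unitarity of $P_t$. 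Since a dense $(P_t)$-invariant subspace of $\mathcal{D}(A_{_P})$ is a core, Step~2 yields $A_{_P}=\overline{A_{_P}|_{\mathcal{D}(B)}}=\overline{B}$; reading $\partial_{\Mop(\bcdot)}f\in L^2_\comp(\Omega,\nu)$ in the (closed) weak $L^2$ sense, which on $\HS^0$ and on smooth functions coincides with the Gateaux derivative used throughout, makes $B$ closed, whence $A_{_P}=B$ with the announced domain, this operator being manifestly unbounded.

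\textbf{Main obstacle.}
The delicate point is the reverse inclusion $\mathcal{D}(A_{_P})\subseteq\mathcal{D}(B)$: converting the mere $L^2$-convergence of the difference quotients $\tfrac1t(P_tf-f)$ into the existence of the directional derivative $\partial_{\Mop(\bcdot)}f$ in $L^2_\comp(\Omega,\nu)$ is exactly where the core/closedness machinery, and a careful reading of the condition $\partial_{\Mop(x)}f(x)\in L^2$ (classical Gateaux versus weak derivative), are required. The structural inputs that make it go through are the explicit form $P_tf=f\circ\Phi_t^{-1}$, the self-invariance $D\Phi_t\,\Mop=\Mop\circ\Phi_t$ of the generating vector field, the unitarity of $P_t$ on $L^2_\comp(\Omega,\nu)$ (Prop.~\ref{KoopmanUnitaire}), and the strong continuity of $(P_t)_{t\ge0}$; given these, the remaining estimates are routine.
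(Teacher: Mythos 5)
Your approach is correct but genuinely different from the paper's, which argues entirely in weak form via the adjoint relation. The paper fixes $\psi\in\HS^0$ and passes to the limit in $\bigl\langle \tfrac{P_t f - f}{t},\psi\bigr\rangle = \bigl\langle f,\tfrac{U_t\psi-\psi}{t}\bigr\rangle$; for the inclusion $\mathcal{D}(A_{_P})\subseteq\mathcal{D}(B)$ it then uses that $\Mop$ is real-valued, divergence-free, vanishing on $\partial\Omega$, and measure-preserving to integrate by parts and identify $-\partial_{\Mop(\bcdot)}f$ as the $L^2$ function $A_{_P}f$; and for the converse inclusion it shows $\bigl\langle \tfrac{P_t f - f}{t}+\partial_{\Mop(\bcdot)}f,\psi\bigr\rangle\to0$ for all $\psi\in\HS^0$ and upgrades this to strong convergence by density (which implicitly relies on the fact that weak and strong generators of a $C_0$-semigroup coincide). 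You instead work with the explicit flow representation $P_t f = f\circ\Phi_t^{-1}$: the forward inclusion comes from the Dyson-type integral $P_t f - f = \int_0^t P_s(Bf)\,\dif s$ without any integration by parts, and the reverse inclusion is obtained through a dense-invariant-core argument together with closedness of $B$ (read in the weak sense). The ingredients are different: you use the flow self-invariance $D\Phi_t\,\Mop=\Mop\circ\Phi_t$ (not invoked by the paper) and the standard core theorem, whereas the paper leans on divergence-free integration by parts. Your version modularizes cleanly and avoids the paper's weak-to-strong convergence step, but it shifts the difficulty to showing (i) that $B$ is closed and (ii) that the classical Gateaux reading of $\partial_{\Mop(\bcdot)}f\in L^2_\comp(\Omega,\nu)$ coincides with the weak reading on a core — precisely the obstacle you flag. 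That gap is real and would need a mollification-along-the-flow argument to close rigorously; the paper's proof carries an analogous (and similarly unaddressed) subtlety at its weak-to-strong limit step, so the two routes are roughly on equal footing in rigor, while yours is more in the spirit of standard semigroup theory.
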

\begin{proof} See \ref{Ap-lemma1}. \end{proof}
For all $t\geq 0$, $P_{t}$ is the adjoint of $U_{t}$ in $L^{2}_{\comp}(\Omega, \nu)$. The Koopman infinitesimal generator in $L^{2}_{\comp}(\Omega , \nu)$ is consequently given by 
\begin{equation}\label{KoopPFL1}
\mathcal{D}(A_{_U}) = \mathcal{D}(A^{*}_{_P}) \quad \text{and} \quad  A_{_U} = A_{_P}^{*} .
\end{equation}
The Koopman and Frobenius-Perron operators are unitary in $L^{2}_{\comp}(\Omega, \nu)$, as $\mathcal{D}\left(A_{_U}\right)$ is dense in $L^{2}_{\comp}(\Omega ,\nu)$, by Stone's theorem, their infinitesimal generators are therefore skew symmetric for $\langle\,\bcdot\,,\,\bcdot\,\rangle_{{L^{2}_{\comp}(\Omega, \nu)}}$ and we have
\begin{equation}\label{KoopPFL2}
 \mathcal{D}\left(A^{*}_{_{P}}\right) = \mathcal{D}\left(A_{_{P}}\right)  \quad \text{and} \quad A_{_{P}}^* =  - A_{_{P}},
\end{equation} 
\begin{equation}\label{KoopPFL3}
 \mathcal{D}\left(A^{*}_{_{U}}\right) = \mathcal{D}\left(A_{_{U}}\right)  \quad \text{and} \quad A_{_{U}}^* =  - A_{_{U}}. 
\end{equation} 
The adjoint has to  be understood in the topology of $L^{2}_{\comp}(\Omega, \nu)$. The two next propositions {-- well known in $L^{2}_{\comp}(\Omega, \nu)$ for invertible measure preserving systems --}, summarize \eqref{KoopPFL1}, \eqref{KoopPFL2} and \eqref{KoopPFL3}.  

\begin{proposition}[Koopman infinitesimal generator]\label{GenerateurKoopman}
The Koopman infinitesimal generator of $(U_{t})_{t}$ in $L^{2}_{\comp}(\Omega, \nu)$ is the unbounded operator, {$\bigl(A_{_U}, \mathcal{D}(A_{_U})\bigr)$,} defined by 
{\[
\mathcal{D}\left(A_{_U}\right) \, = \, \left\{ f\in L^{2}_{\comp}(\Omega, \nu ) \; : \; x\mapsto \partial_{\Mop(x)} f(x)\; \in L^{2}_{\comp}(\Omega , \nu)  \, \right\} \; \text{and }\; A_{_U} f= \partial_{\Mop(\bcdot)} f(\bcdot)  \quad \text{for} \quad f\in \mathcal{D}\left(A_{_U}\right).
\]}
\end{proposition}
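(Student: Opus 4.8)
The plan is to obtain the statement as a direct consequence of the Perron--Frobenius Lemma above, the unitarity of $U_t$ and $P_t$ in $L^{2}_{\comp}(\Omega,\nu)$ (Proposition \ref{KoopmanUnitaire}), and Stone's theorem, i.e. from the relations \eqref{KoopPFL1}--\eqref{KoopPFL3}. The key point is that, in the $L^{2}_{\comp}(\Omega,\nu)$ topology, $A_{_U}$ is simultaneously the Hilbert-space adjoint of $A_{_P}$ and a skew-adjoint operator; reconciling these two facts identifies $A_{_U}$ with $-A_{_P}$ on $\mathcal{D}(A_{_P})$, and then the explicit formula for $A_{_P}$ furnished by the Lemma yields the claim.

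First I would record that, since $(\Phi_t)_t$ is invertible and measure preserving, Proposition \ref{KoopmanUnitaire} shows $(U_t)_{t\geq0}$ extends to a strongly continuous one-parameter unitary \emph{group} on $L^{2}_{\comp}(\Omega,\nu)$, with $U_{-t}=U_t^{*}=P_t$; in particular $(P_t)_{t\geq0}$ is the adjoint semigroup and likewise extends to a unitary group. Stone's theorem applied to $(U_t)_t$ and to $(P_t)_t$ then gives that $A_{_U}$ and $A_{_P}$ are skew-adjoint, that is $A_{_U}^{*}=-A_{_U}$ with $\mathcal{D}(A_{_U}^{*})=\mathcal{D}(A_{_U})$ and similarly for $A_{_P}$ --- these are \eqref{KoopPFL2}--\eqref{KoopPFL3}. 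Next, because $P_t=U_t^{*}$ for every $t\geq0$, the standard correspondence between a $C_0$-semigroup on a Hilbert space and its adjoint semigroup gives that the generator of $(P_t)_t$ equals the Hilbert-space adjoint of the generator of $(U_t)_t$, hence $A_{_P}=A_{_U}^{*}$, equivalently $A_{_U}=A_{_P}^{*}$ with $\mathcal{D}(A_{_U})=\mathcal{D}(A_{_P}^{*})$ --- this is \eqref{KoopPFL1}.

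Combining these, $\mathcal{D}(A_{_U})=\mathcal{D}(A_{_P}^{*})=\mathcal{D}(A_{_P})$ and $A_{_U}=A_{_P}^{*}=-A_{_P}$. It then only remains to substitute the explicit description of $(A_{_P},\mathcal{D}(A_{_P}))$ from the Perron--Frobenius Lemma, namely $\mathcal{D}(A_{_P})=\{f\in L^{2}_{\comp}(\Omega,\nu):x\mapsto\partial_{\Mop(x)}f(x)\in L^{2}_{\comp}(\Omega,\nu)\}$ and $A_{_P}f=-\partial_{\Mop(\bcdot)}f$, which immediately produces the stated domain of $A_{_U}$ and the formula $A_{_U}f=\partial_{\Mop(\bcdot)}f$; density of $\mathcal{D}(A_{_U})$ in $L^{2}_{\comp}(\Omega,\nu)$ is automatic, being the domain of a $C_0$-group generator. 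The only genuinely delicate point is the unbounded-operator bookkeeping in the step $A_{_P}=A_{_U}^{*}$: one must make sure that the pointwise identity of semigroups $P_t=U_t^{*}$ really transfers to an equality of the (a priori unbounded, densely defined) generators \emph{with matching domains}, not merely to an identity on a common core. This is a classical fact for $C_0$-semigroups on reflexive spaces, which I would simply cite; everything else is a mechanical assembly of the Lemma, Proposition \ref{KoopmanUnitaire}, and Stone's theorem.
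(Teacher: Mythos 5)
Your proof is correct and follows essentially the same route as the paper: the paper obtains Proposition~\ref{GenerateurKoopman} by combining \eqref{KoopPFL1} ($A_{_U}=A_{_P}^{*}$ from $P_t=U_t^{*}$) with \eqref{KoopPFL2} ($A_{_P}^{*}=-A_{_P}$, via Stone's theorem) and then substituting the explicit formula for $A_{_P}$ from the Perron--Frobenius Lemma, which is exactly your chain of deductions. Your added remark about the care needed to pass from the adjoint \emph{semigroup} identity $P_t=U_t^{*}$ to the adjoint \emph{generator} identity with matching domains is a fair flag of what the paper leaves implicit, but it does not alter the argument.
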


\begin{proposition}[Skew symmetry of the generators]\label{KoopAntisym}
The Koopman infinitesimal generator $A_U$ is skew-symmetric in $L^{2}_{\comp}(\Omega, \nu)$ . 
\end{proposition}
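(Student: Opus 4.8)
The plan is to obtain the skew-symmetry of $A_{_U}$ from the unitarity of the Koopman group established in Proposition~\ref{KoopmanUnitaire}, rather than from the explicit differential expression of Proposition~\ref{GenerateurKoopman}. First I would record that, since the dynamical system \eqref{dyn-syst} is invertible and measure preserving, the strongly continuous semigroup $(U_t)_{t\geq 0}$ on $L^{2}_{\comp}(\Omega , \nu)$ extends to a strongly continuous one-parameter \emph{group} $(U_t)_{t\in\real}$ with $U_{-t}=U_t^{-1}=U_t^{*}=P_t$ (Proposition~\ref{KoopmanUnitaire}). Each $U_t$ being unitary on $L^{2}_{\comp}(\Omega , \nu)$, Stone's theorem applies and gives at once that the generator $A_{_U}$ is skew-adjoint on its dense domain $\mathcal{D}(A_{_U})$, hence in particular skew-symmetric; this is precisely \eqref{KoopPFL3} restated as a proposition.

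For a self-contained argument I would differentiate the isometry identity. Fix $f,g\in\mathcal{D}(A_{_U})$. Standard semigroup theory gives $U_t f,U_t g\in\mathcal{D}(A_{_U})$ for all $t$ and $t\mapsto U_t f$ differentiable with $\tfrac{\dif}{\dif t}U_t f=U_t A_{_U}f=A_{_U}U_t f$, and likewise for $g$. On one hand, unitarity yields $\langle U_t f,U_t g\rangle_{L^{2}_{\comp}(\Omega , \nu)}=\langle f,g\rangle_{L^{2}_{\comp}(\Omega , \nu)}$, constant in $t$, so its derivative at $t=0$ vanishes. On the other hand, differentiating the left-hand side by the product rule for the inner product and evaluating at $t=0$ gives $\langle A_{_U}f,g\rangle_{L^{2}_{\comp}(\Omega , \nu)}+\langle f,A_{_U}g\rangle_{L^{2}_{\comp}(\Omega , \nu)}$. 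Equating the two shows $\langle A_{_U}f,g\rangle_{L^{2}_{\comp}(\Omega , \nu)}=-\langle f,A_{_U}g\rangle_{L^{2}_{\comp}(\Omega , \nu)}$, i.e.\ $A_{_U}$ is skew-symmetric.

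Alternatively, one can argue straight from Proposition~\ref{GenerateurKoopman}: for $f,g\in\mathcal{D}(A_{_U})$ write $\langle A_{_U}f,g\rangle_{L^{2}_{\comp}(\Omega , \nu)}=\int_{\Omega}\partial_{\Mop(y)}f(y)\,\overline{g(y)}\,\nu(\dif y)$, apply the Leibniz rule $\partial_{\Mop}(f\overline g)=(\partial_{\Mop}f)\,\overline g+f\,\overline{\partial_{\Mop}g}$ (valid since $\Mop$ is real-valued, so $\partial_{\Mop}\overline g=\overline{\partial_{\Mop}g}$), and invoke the identity $\int_{\Omega}\partial_{\Mop(y)}h(y)\,\nu(\dif y)=0$, which is the infinitesimal form of the invariance $\nu\circ\Phi_t^{-1}=\nu$ — obtained by differentiating $t\mapsto\int_{\Omega}U_t h\,\dif\nu=\int_{\Omega}h\,\dif\nu$ at $t=0$ — applied to $h=f\overline g$. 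This gives $\langle A_{_U}f,g\rangle_{L^{2}_{\comp}(\Omega , \nu)}+\langle f,A_{_U}g\rangle_{L^{2}_{\comp}(\Omega , \nu)}=0$ directly.

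The conceptual content is thus thin; the care required is bookkeeping. For the first two routes one needs only the density of $\mathcal{D}(A_{_U})$ and the invariance/differentiability of the semigroup on its domain, all standard. For the third one needs the Leibniz rule for Gateaux directional derivatives of products on $\Omega\subset L^{2}(\Omega_x,\real^{d})$ together with the justification that $\partial_{\Mop}f$, $\partial_{\Mop}g$ and $\partial_{\Mop}(f\overline g)$ genuinely lie in $L^{2}_{\comp}(\Omega , \nu)$, so the pairings are finite and differentiation under the integral sign is legitimate. The main obstacle I expect is exactly this integrability check, in tandem with the passage from the measure-preservation hypothesis $\nu(\Phi_t^{-1}(A))=\nu(A)$ to its differentiated form; once these are in place, skew-symmetry follows immediately.
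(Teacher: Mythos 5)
Your first route is exactly the paper's argument: the paper derives the skew-symmetry of $A_{_U}$ from the unitarity of $(U_t)_t$ in $L^{2}_{\comp}(\Omega,\nu)$ (Proposition~\ref{KoopmanUnitaire}), the density of $\mathcal{D}(A_{_U})$, and Stone's theorem, and states Proposition~\ref{KoopAntisym} as a summary of the resulting identity $A_{_U}^{*}=-A_{_U}$. Your two additional self-contained routes (differentiating the isometry identity, and the direct Leibniz-rule computation using measure invariance) are correct standard elaborations of the same fact, so the proposal matches the paper's approach.
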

\noindent
We set now a continuity property of the Koopman infinitesimal generator on a subspace of $\mathcal{D}(A_{_U})$. Let $f\in \HS$, by theorem \ref{RKHSC1} (\ref{sec2-RKHS}), ${x\mapsto \partial_{\Mop(x)} f(x)} \in L^{2}_{\comp}(\Omega , \nu)$, and through Cauchy-Schwarz inequality we get that $f\in \mathcal{D}(A_{_U})$ since the real valued function $\Mop$ belongs to $L^{2}_{\comp}(\Omega, \nu)$ and consequently $\HS \subset \mathcal{D}(A_{_U})$.
\begin{theorem}[Continuity of the Koopman  generator on $\HS$]\label{GenerateurKoopBorne}
The restriction of the Koopman infinitesimal generator 
\[
A_{_U} : \left( \HS \, , \, \| \bcdot \|_{_\HS} \right) \mapsto \left( L^{2}_{\comp}(\Omega, \nu) \, , \, \| \bcdot \|_{L^{2}_{\comp}(\Omega, \nu)} \right)
\]
is continuous. 
\end{theorem}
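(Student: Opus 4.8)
The plan is to reduce the statement to a pointwise estimate obtained from the reproducing property, and then to the boundedness of the diagonal mixed second derivatives of $k$ on the compact set $\Omega\times\Omega$. Recall that $(A_{_U}f)(x)=\partial_{\Mop(x)}f(x)$ and that, since $k$ is $C^{(2,2)}(\Omega\times\Omega)$, the feature maps have directional derivatives in $\HS$ (appendix A); more precisely the $\HS$-valued map $z\mapsto k(\bcdot,z)$ is Gateaux differentiable, and its derivative at $x$ along a direction $u\in\Omega$, denoted $\partial^{(2)}_{u}k(\bcdot,x)$ (the superscript indicating that the derivative is taken in the second slot of $k$), belongs to $\HS$. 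Writing $f(x)=\langle f,k(\bcdot,x)\rangle_{\HS}$, taking the Gateaux limit along $\Mop(x)$ and exchanging it with the continuous linear form $\langle f,\bcdot\rangle_{\HS}$ (legitimate because the difference quotients converge in $\HS$), I obtain for every $f\in\HS$ and every $x\in\Omega$
\[
\partial_{\Mop(x)}f(x)=\bigl\langle f,\ \partial^{(2)}_{\Mop(x)}k(\bcdot,x)\bigr\rangle_{\HS},
\qquad\text{hence}\qquad
\bigl|\partial_{\Mop(x)}f(x)\bigr|\le\|f\|_{\HS}\,\bigl\|\partial^{(2)}_{\Mop(x)}k(\bcdot,x)\bigr\|_{\HS}
\]
by Cauchy--Schwarz.

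The second step is to bound $\|\partial^{(2)}_{\Mop(x)}k(\bcdot,x)\|_{\HS}$ uniformly in $x$. Applying the same exchange argument once more to the feature-derivative map itself identifies its squared norm with a diagonal mixed second derivative of $k$,
\[
\bigl\|\partial^{(2)}_{\Mop(x)}k(\bcdot,x)\bigr\|_{\HS}^{2}
=\partial^{(1)}_{\Mop(x)}\partial^{(2)}_{\Mop(x)}k(x,x)=:g(x)\ \ge 0.
\]
Now $\Mop$ is $C^{1}(\Omega)$ and sends $\Omega$ into the compact set $\Omega$, and the second derivatives of $k$ are continuous on $\Omega\times\Omega$ since $k$ is $C^{(2,2)}$; hence $g$ is continuous on the compact set $\Omega$ and therefore bounded, say $g(x)\le B^{2}$ for all $x$. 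Combining with the first step yields the uniform pointwise bound $|\partial_{\Mop(x)}f(x)|\le B\|f\|_{\HS}$, and since $\nu$ is a finite measure on $\Omega$,
\[
\|A_{_U}f\|_{L^{2}_{\comp}(\Omega,\nu)}
=\Bigl(\int_{\Omega}\bigl|\partial_{\Mop(x)}f(x)\bigr|^{2}\,\nu(\dif x)\Bigr)^{1/2}
\le B\,\nu(\Omega)^{1/2}\,\|f\|_{\HS},
\]
which is exactly the asserted continuity of $A_{_U}:(\HS,\|\bcdot\|_{\HS})\to(L^{2}_{\comp}(\Omega,\nu),\|\bcdot\|_{L^{2}_{\comp}(\Omega,\nu)})$.

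The delicate point is the handling of directional derivatives on the infinite-dimensional phase space $\Omega$: one must be sure that the Gateaux derivative of $z\mapsto k(\bcdot,z)$ exists as an $\HS$-valued object and commutes with the $\HS$ inner product, and that the function $g$ is genuinely continuous even though the base point $x$ enters it both as the evaluation point and through the direction $\Mop(x)$ (this is where the $C^{(2,2)}$ regularity of $k$, the $C^{1}$ regularity of $\Mop$, and the compactness/local convexity of $\Omega$ from appendix A are really used). A softer alternative, avoiding the explicit constant, is the closed graph theorem: $A_{_U}|_{\HS}$ is well defined by Theorem~\ref{RKHSC1}, and if $f_{n}\to f$ in $\HS$ and $A_{_U}f_{n}\to h$ in $L^{2}_{\comp}(\Omega,\nu)$, then $\partial_{\Mop(x)}f_{n}(x)=\langle f_{n},\partial^{(2)}_{\Mop(x)}k(\bcdot,x)\rangle_{\HS}\to\partial_{\Mop(x)}f(x)$ pointwise while a subsequence of $A_{_U}f_{n}$ converges $\nu$-a.e. to $h$, forcing $h=A_{_U}f$ $\nu$-a.e.; the closed graph theorem then gives boundedness. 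In either approach the substantial work has already been done in appendix A, and the present theorem is essentially a bookkeeping consequence of it.
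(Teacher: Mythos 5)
Your proof is correct, and all of its building blocks are already present in the paper's Appendix A; the difference lies in how they are assembled. The paper disposes of this theorem in one line by citing Theorem~\ref{ContKoopmTHEOREM}, whose proof goes through the Mercer expansion: $f=\sum_i a_i\varphi_i$, term-by-term differentiation (Theorem~\ref{RKHSC1}), a Cauchy--Schwarz estimate on the coefficient double series, Fubini and monotone convergence, and finally Proposition~\ref{DerivNoyau} to identify the resulting sum with $\int_\Omega \partial^{(2)}_{u}\partial^{(1)}_{u}k(x,x)\,\nu(\dif x)$. You instead use the reproducing property for derivatives, $\partial_{\Mop(x)}f(x)=\bigl\langle f,\partial^{(2)}_{\Mop(x)}k(\bcdot,x)\bigr\rangle_{\HS}$ (this is exactly equation~\eqref{Deriv}, a consequence of Lemma~\ref{NoyauDerivRKHS}), a single pointwise Cauchy--Schwarz in $\HS$, and the identity $\|\partial^{(2)}_{u}k(\bcdot,x)\|_{\HS}^{2}=\partial^{(1)}_{u}\partial^{(2)}_{u}k(x,x)$ (established in the proofs of Lemma~\ref{NoyauDerivRKHS} and Lemma~\ref{Lemma1}). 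This buys you a genuinely stronger conclusion — a uniform pointwise bound $|\partial_{\Mop(x)}f(x)|\le B\|f\|_{\HS}$, of which the $L^2$ bound is an immediate corollary for the finite measure $\nu$ — and it avoids the series manipulations entirely. The one point where your argument and the paper's are equally informal is the uniform bound on the diagonal mixed derivative with an $x$-dependent direction: you assert continuity of $x\mapsto\partial^{(1)}_{\Mop(x)}\partial^{(2)}_{\Mop(x)}k(x,x)$ on the compact $\Omega$, while the paper majorizes the same quantity by $\nu(\Omega)\sup_x|k(x,x)|\sup_{u\in\Omega}|u|^2$; both implicitly rely on joint continuity and bilinearity in the directions of the second derivatives of the $C^{(2,2)}$ kernel, so this is not a gap relative to the paper's own standard of rigor. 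Your closed-graph alternative is also sound, since the pointwise evaluations $\langle f_n,\partial^{(2)}_{\Mop(x)}k(\bcdot,x)\rangle_{\HS}$ converge for each fixed $x$ once the derivative feature maps are known to lie in $\HS$.
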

\begin{proof}
For all $f\in \HS$ {using} the Cauchy-Schwarz inequality {as above} and theorem \ref{ContKoopmTHEOREM}  (\ref{sec2-RKHS}) we have
\[
{\| A_{_U} f \|_{{L^{2}_{\comp}(\Omega , \nu)}} = \|\partial_{\Mop(\bcdot)}\, f(\bcdot)\|_{{L^{2}_{\comp}(\Omega , \nu)}} \, \leq C \; \| f\|_{_{\HS}}.}
\]
\cqfd
\end{proof}
Through the dual expression of the Koopman operator (remark \ref{TransportKernel}) the infinitesimal generator $A_{_U}$ provides a dynamical system specifying  the time evolution of the feature maps as, for all $\XX \in \Omega$, the feature map $k(\, \bcdot \, , \Phi_t(\XX))$ verifies 
\begin{equation}\label{EquationFeatureMap}
\partial_t \,U_t\,  k(\, \bcdot \, , \, \XX)  ={ \partial_t \, k\bigl(\, \Phi_t(\bcdot )\, , \, \XX\bigr)  =  A_{_U} \, k\bigl(\,\Phi_t( \bcdot) \, , \, \XX\bigr) \,} .
\end{equation}

We have the following useful differentiation formulae.
\begin{proposition}[Differentiation formulae]\label{DerivativeRule}
For all $f\in \HS$ {(with $f = E_t g,\,g\in\HS_t$)}  and $\XX_t \in \Omega_t$ we have
\[
\langle {\cal L}_{k} \circ \partial_t f \; , \; k(\,\bcdot\, , \, \XX_t) \rangle_{_\HS} = - \langle f \; , \; {\cal L}_{k} \circ \partial_t \, k(\,\bcdot\, , \, \XX_t) \rangle_{_\HS}.
\]
\end{proposition}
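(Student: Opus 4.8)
The plan is to establish the claimed identity by a ``transfer of derivatives'' argument. The quantity on the left is, by the reproducing property applied in $\HS$, essentially $\bigl({\cal L}_k \circ \partial_t f\bigr)(\XX_t)$, and on the right it is $-\langle f, {\cal L}_k\circ\partial_t k(\bcdot,\XX_t)\rangle_\HS$; so what has to be shown is that the time-differentiation operator, once conjugated by the integral operator ${\cal L}_k$, is skew-symmetric against feature maps. The natural route is to rewrite $\partial_t$ acting on a function of $\H_t$ in terms of the Koopman generator $A_{_U}$, using the transport equation \eqref{EquationFeatureMap} and the extension operator: since $f = E_t g$ for some $g\in\HS_t$, the time-derivative $\partial_t f$ corresponds to the action of the infinitesimal generator along the dynamics, i.e.\ (up to the identifications built into $E_t$ and the definition \eqref{def-k-1} of $k$) $\partial_t$ on the family is implemented by $A_{_U}$ on $\HS$. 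Concretely, I would first reduce both sides to feature maps by density (both sides are continuous in $f\in\HS$, and ${\rm Span}\{k(\bcdot,\XX):\XX\in\Omega\}$ is dense), so it suffices to prove the formula for $f = k(\bcdot,\YY)$.

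Next, on feature maps I would use the two commutation facts already in hand: Remark~\ref{CommutationL_k-1/2KoopmanU_t} and its consequence \eqref{Commutation-Lk}, i.e.\ ${\cal L}_k\circ U_t = U_t\circ{\cal L}_k$ on $\HS$, and \eqref{EquationFeatureMap}, $\partial_t U_t k(\bcdot,\XX) = A_{_U} U_t k(\bcdot,\XX)$. Differentiating the commutation identity in $t$ gives ${\cal L}_k\circ\partial_t = \partial_t\circ{\cal L}_k$ on feature maps (and then, by linearity, on $\HS^0$), so the left-hand side becomes $\langle \partial_t\bigl({\cal L}_k f\bigr), k(\bcdot,\XX_t)\rangle_\HS = \langle A_{_U}({\cal L}_k f), k(\bcdot,\XX_t)\rangle_\HS$, where I have identified $\partial_t$ with $A_{_U}$ via \eqref{EquationFeatureMap}. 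Now I invoke the skew-symmetry of $A_{_U}$; but the subtlety is that $A_{_U}$ is skew-symmetric in the $L^2_\comp(\Omega,\nu)$ inner product (Proposition~\ref{KoopAntisym}), not in the $\HS$ inner product. The standard device to convert between the two is the isometry ${\cal L}_k^{\alf}: L^2_\comp(\Omega,\nu)\to\HS$, which gives $\langle u,v\rangle_\HS = \langle {\cal L}_k^{-\alf}u, {\cal L}_k^{-\alf}v\rangle_{L^2}$, equivalently $\langle {\cal L}_k u, v\rangle_\HS = \langle u, v\rangle_{L^2}$ for appropriate $u,v$. Applying this, the left-hand side $\langle A_{_U}({\cal L}_k f), k(\bcdot,\XX_t)\rangle_\HS$ should be rewritten as an $L^2$-pairing in which the factor ${\cal L}_k$ gets ``used up'' to move between the two topologies, leaving $\langle A_{_U} f, \text{(something)}\rangle_{L^2}$-type expression, then skew-symmetry moves $A_{_U}$ across at the cost of a sign, and finally one re-promotes the pairing back to $\HS$ with a second ${\cal L}_k$, yielding $-\langle f, {\cal L}_k\circ\partial_t k(\bcdot,\XX_t)\rangle_\HS$.

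The main obstacle, and the step that needs the most care, is precisely this bookkeeping of which inner product ($\HS$ vs.\ $L^2_\comp(\Omega,\nu)$) and which power of ${\cal L}_k$ is in play at each stage — including checking that all the objects involved ($\partial_t k(\bcdot,\XX_t)$, ${\cal L}_k\circ\partial_t k(\bcdot,\XX_t)$, ${\cal L}_k f$) actually lie in the spaces required for the skew-symmetry of $A_{_U}$ and for the isometry ${\cal L}_k^{\alf}$ to apply, and that $\XX_t\in\Omega_t$ is genuinely in the domain where \eqref{EquationFeatureMap} holds. The regularity hypotheses on the kernel ($C^{(2,2)}$, so that feature maps have derivatives in $\HS_t$ — appendix A) and Theorem~\ref{GenerateurKoopBorne} (which puts $\HS\subset\mathcal{D}(A_{_U})$ and makes $A_{_U}$ bounded on $\HS$) are what make all the manipulations legitimate; once those are marshalled, the identity is a one-line application of skew-symmetry sandwiched between the ${\cal L}_k$-intertwining. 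I would also double-check the sign: the minus sign on the right is exactly the signature of the skew-symmetry $A_{_U}^* = -A_{_U}$, so its appearance is the consistency check that the argument is correctly assembled.
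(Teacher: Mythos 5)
Your proposal is correct and follows essentially the same route as the paper: identify $\partial_t$ with the Koopman generator $A_{_U}$, pass from the $\HS$ inner product to the $L^2_{\comp}(\Omega,\nu)$ inner product via the isometry $\mathcal{L}_k^{\pm\alf}$, apply the skew-symmetry of $A_{_U}$ in $L^2_{\comp}(\Omega,\nu)$, and convert back. The preliminary reduction to feature maps by density and the detour through the $t$-derivative of the commutation $\mathcal{L}_k\circ U_t = U_t\circ\mathcal{L}_k$ are not needed -- the paper substitutes $A_{_U}$ for $\partial_t$ directly for general $f\in\HS$ and carries out the same short isometry/skew-symmetry chain -- but they are harmless and lead to the same identity.
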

\begin{proof}
As ${\cal L}_{k} \circ A_{_U}f$ belongs to $\HS$, we have ${\cal L}_{k} \circ A_{_U}f(\XX_t) = {\cal L}_{k} \circ \partial_t f(\XX_t)$ and 
\[
\langle {\cal L}_{k} \circ \partial_t f \; , \; k(\,\bcdot\, , \, \XX_t) \rangle_{_\HS}  = \langle {\cal L}_{k} \circ A_{_U} f \; , \; k(\,\bcdot\, , \, \XX_t) \rangle_{_\HS} = \;\langle\AD{j\left( {\cal L}^{\alf}_{k} \circ A_{_U} f\right)} \; , \; {\cal L}^{-\alf}_{k} \, k(\,\bcdot\, , \, \XX_t) \rangle_{L^{2}_{\comp}(\Omega , \nu)}.
\]
As ${\cal L}^{\alf}_{k}$ is self-adjoint in $L^{2}_{\comp}(\Omega , \nu)$ and $A_{_U}$ is skew-symmetric in $L^{2}_{\comp}(\Omega ,\nu)$, we obtain 
\[
\langle {\cal L}_{k} \circ \partial_t f \; , \; k(\,\bcdot\, , \, \XX_t) \rangle_{_\HS}  = \langle A_{_U}f \, , \, k(\,\bcdot\, , \, \XX_t) \rangle_{L^{2}_{\comp}(\Omega , \nu)} =  - \langle f \; , \; A_{_U}\, k(\,\bcdot\, , \, \XX_t) \rangle_{L^{2}_{\comp}(\Omega , \nu)} = - \langle f \; , \; \partial_{t}\, k(\,\bcdot\, , \, \XX_t) \rangle_{L^{2}_{\comp}(\Omega , \nu)}.
\]
As $\langle f \; , \; \partial_{t}\, k(\,\bcdot\, , \, \XX_t) \rangle_{L^{2}_{\comp}(\Omega , \nu)} = \langle f \; , \; \mathcal{L}_{k} \circ \partial_t \, k(\, \bcdot \, , \XX_t) \rangle_{_\HS}$, this complete the proof.
\cqfd
\end{proof}

As already outlined, the whole phase space, $\Omega$, and the global embedding RKHS, $\HS$,  defined on it are both completely inaccessible for high dimensional state spaces.  Instead of seeking to reconstruct this global RKHS, we will work in the following with time-varying ``local'' RKHS  spaces built from a small (w.r.t. the phase space dimension) ensemble of initial conditions. The set of these time-varying spaces forms the RKHS family.  To express the time evolution of the features maps associated to the RKHS family we now define an appropriate expression of the Koopman operator on this family.

\subsubsection{Derivation of Koopman operator expression in ${\mathcal W}$} 
To fully specify the Koopman operators in \AD{the RKHS family}, we rely on the family of extension, restriction mappings $(E_t)_{t\geq0}$ and $(R_t)_{t\geq0}$ (\ref{ExtensionDefRKHS}, \ref{RestrictDefRKHS}) relating the ``big'' encompassing RKHS $\HS$ to the family of time-evolving RKHS $\HS_t$  and use the Koopman operators $U_t : L^{2}_{\comp}(\Omega, \nu) \mapsto L^{2}_{\comp}(\Omega, \nu)$ for $t\geq 0$.  {The adjoints $P_t$ will also be very helpful as well as Remark 1} {on the dual relation of the application of the flow on the global kernel $k$. }

From now on, we note $\XX_t := \Phi_t(\XX_0)$ for all $\XX_0 \in \Omega_0$ with $\XX_t$ belonging to $\Omega_t$.
We define the Koopman operator in \AD{the RKHS family} by {${\cal U}_t  := R_t  \circ P_{t} \circ  E_0$ for all $t\geq 0$. For all $X_0 \in \Omega_0$, we have 
\begin{equation}
\label{Koopvariete}
\mathcal{U}_t \left[\, k_0(\, \bcdot \, ,\, X_0)\,\right] \, = \,  R_t \, \circ  P_{t}  \left[ \, k(\,\bcdot\,,X_0) \, \right]  \, = \, R_t \, k(\,\bcdot\,,X_t) \,  \, = \, k_t(\bcdot\, , \,X_t)\, ,
\end{equation}
where the second equality is due to (\ref{PerronFrobeniusRightflow}), and the third equality holds true from the definition of $R_t$ and the fact that $\ell(t,t)=1$.}
The Koopman operator in \AD{the RKHS family},  $\mathcal{U}_t : \HS_{0} \mapsto \HS_t$, transports the  kernel feature maps on the RKHS family by composition with the system's dynamics. It  inherits some of the nice properties of the {Perron-Frobenius} operator defined on the encompassing global RKHS. As shown by the following theorem,  proposition \ref{RelationKernel} remains valid for the family of kernels $(k_t)_{t\geq 0}$ and $\mathcal{U}_t$  is still  unitary in the sense of the following theorem.
\begin{theorem}[Koopman RKHS isometry]\label{KoopVarieteIsometry}
The Koopman operator on the RKHS family defines an isometry from $\HS_0$ to $\HS_t$: for all $\XX_0$ and $\YY_0 \in \Omega_0$
\begin{equation*}
\!\!\!\bigl\langle \mathcal{U}_t \, k_0(\, \bcdot \, , \, \!\YY_0) \; , \; \mathcal{U}_t \, k_0(\, \bcdot \, , \, \!\XX_0) \bigr\rangle_{\HS_t}\!\! \; = \; \! \bigl\langle k_0( \, \bcdot \, , \, \!\YY_0) \; , \; k_0(\, \bcdot \, , \, \!\XX_0) \bigr\rangle_{\HS_0}.
\end{equation*}
The range of $\mathcal{U}_t : \HS_0 \mapsto \HS_t$ is dense in $\HS_t$.
\end{theorem}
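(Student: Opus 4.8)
The plan is to collapse the claimed identity to the statement that the kernels $k_t$ agree with $k_0$ along trajectories, i.e.\ the analogue for the family $(k_t)_{t\ge 0}$ of Proposition \ref{RelationKernel}, and to treat the density of the range separately. First I would invoke the already-established identity \eqref{Koopvariete}, $\mathcal{U}_t\,k_0(\,\bcdot\,,\XX_0) = k_t(\,\bcdot\,,\XX_t)$ with $\XX_t = \Phi_t(\XX_0)$ (and similarly for $\YY_0$), so that the left-hand side of the theorem becomes $\langle k_t(\,\bcdot\,,\YY_t), k_t(\,\bcdot\,,\XX_t)\rangle_{\HS_t}$. The reproducing property in $\HS_t$ (Definition \ref{def1}) rewrites this as $k_t(\XX_t,\YY_t)$, while the same property in $\HS_0$ rewrites the right-hand side as $k_0(\XX_0,\YY_0)$. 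Thus the first assertion is equivalent to the kernel identity $k_t(\Phi_t(\XX_0),\Phi_t(\YY_0)) = k_0(\XX_0,\YY_0)$, which follows immediately from Definition \ref{def-kt-direct} by applying $\Phi_t^{-1}$ to both arguments and using invertibility of the flow on $\Omega$.

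Equivalently — and this is the line I would actually present, since it also makes transparent that $\mathcal{U}_t$ is well defined on all of $\HS_0$ — I would factor $\mathcal{U}_t = R_t\circ P_t\circ E_0$ and show it is a composition of three isometries. The maps $E_0 : \HS_0 \to \HS$ and $R_t : \HS \to \HS_t$ are isometries by Proposition \ref{RestrProlongementIsometrie}. For $P_t$ one argues on feature maps: by \eqref{PerronFrobeniusRightflow} it sends $k(\,\bcdot\,,\XX)$ to $k(\,\bcdot\,,\Phi_t(\XX))$, and by Proposition \ref{RelationKernel}, $k(\Phi_t(\XX),\Phi_t(\YY)) = k(\XX,\YY)$; hence the Gram matrix of any finite family $\{k(\,\bcdot\,,\XX_i)\}$ coincides with that of its image $\{k(\,\bcdot\,,\Phi_t(\XX_i))\}$, so $P_t$ preserves the $\HS$-inner product on $\HS^0 = \mathrm{Span}\{k(\,\bcdot\,,\XX):\XX\in\Omega\}$ and extends by density to an isometry $\HS\to\HS$. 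Composing the three isometries yields that $\mathcal{U}_t:\HS_0\to\HS_t$ is an isometry, which is the first assertion; unfolding it on the feature maps $k_0(\,\bcdot\,,\XX_0)$, $k_0(\,\bcdot\,,\YY_0)$ reproduces the displayed equality (here one uses $\ell(t,t)=\varphi(0)=1$ to identify $R_t\,k(\,\bcdot\,,\XX_t)$ with $k_t(\,\bcdot\,,\XX_t)$).

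For the density of the range, I would note that \eqref{Koopvariete} shows $\mathcal{U}_t$ maps $\mathrm{Span}\{k_0(\,\bcdot\,,\XX_0):\XX_0\in\Omega_0\}$ onto $\mathrm{Span}\{k_t(\,\bcdot\,,\XX_t):\XX_t\in\Omega_t\}$, and the latter is dense in $(\HS_t,\|\bcdot\|_{\HS_t})$ by the Moore--Aronszajn property of RKHS; hence the range of $\mathcal{U}_t$ contains a dense subspace and is therefore dense (in fact, being an isometry on the complete space $\HS_0$, $\mathcal{U}_t$ has closed range, so the range equals $\HS_t$). I do not expect a genuine obstacle: the proof is essentially bookkeeping, and the only points that require care are the Hermitian-symmetry convention in the reproducing property when swapping the two kernel arguments, and carrying the time-kernel factor $\ell(t,t)=1$ correctly through $R_t$. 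In essence the entire content reduces to Proposition \ref{RelationKernel}/Definition \ref{def-kt-direct} once \eqref{Koopvariete} is available.
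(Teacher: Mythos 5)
Your proposal is correct and follows essentially the same route as the paper: the second argument you present (isometry of $R_t$ via Proposition \ref{RestrProlongementIsometrie}, the action of $P_t$ on feature maps via \eqref{PerronFrobeniusRightflow}, the kernel invariance of Proposition \ref{RelationKernel}, and density of the image of the span of feature maps) is exactly the paper's chain of lemmas, merely repackaged as a composition of three isometries. Your first, shorter reduction to Definition \ref{def-kt-direct} via \eqref{Koopvariete} is also valid and is precisely the observation the paper itself makes in the discussion preceding the theorem.
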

\begin{proof}
{Upon applying equations (\ref{R_tIsometrie} of Prop. \ref{RestrProlongementIsometrie}, \ref{EXT-REST}), {which states} that $R_t:\mathcal{H}\rightarrow\mathcal{H}_t$ is an isometry, we have
\[
\bigl\langle \mathcal{U}_t \, k_0(\, \bcdot \, , \, \YY_0) \; , \; \mathcal{U}_t \, k_0(\, \bcdot \, , \, \XX_0) \bigr\rangle_{\HS_t} = \bigl\langle P_t k(\, \bcdot \, , \, \YY_0) \; , \; P_t \, k(\, \bcdot \, , \, \XX_0) \bigr\rangle_{_\HS} \; = \; \bigl\langle  k\big(\, \bcdot \, , \, \Phi_t(Y_0)\big) \; , \;  k\big(\, \bcdot \, , \, \Phi_t(X_0)\big) \bigr\rangle_{_\HS} = k(\Phi_t(X_0)\, , \, \Phi_t(Y_0)).
\]
With proposition \ref{RelationKernel} we obtain 
\[
\bigl\langle \mathcal{U}_t \, k_0(\, \bcdot \, , \, \YY_0) \; , \; \mathcal{U}_t \, k_0(\, \bcdot \, , \, \XX_0) \bigr\rangle_{\HS_t}
=k(\XX_0 \, , \, \YY_0)
= k_0(X_0,Y_0)
{=} \langle k_0(\,\bcdot\, , \, \YY_0) \, , \, k_0(\, \bcdot \, , \, \XX_0) \rangle_{\HS_0}.
\] }
Besides, the image of ${\rm Span} \{ k_0( \, \bcdot \, , \, \YY_0) \, : \, \YY_0 \in \Omega_0 \}$  under $\mathcal{U}_t$  is the set  ${\rm Span} \{ k_t( \, \bcdot \, , \, \YY_t) \, : \, \YY_t \in \Omega_t \}$ which is dense in $(\HS_t , \| \bcdot \|_{_{\HS_t}})$. The range of $\mathcal{U}_t : \HS_0 \mapsto \HS_t$ is thus dense in $\HS_t$ and the operator $\mathcal{U}_t$ is, in that sense, unitary. \cqfd
\end{proof}
\noindent
Let us now determine the adjoint of the Koopman operator in \AD{the RKHS family}. {Let $\mathcal{P}_t := \; R_0 \circ U_t \circ  E_t$ \; for all $t\geq0$ and let $\XX_t=\Phi_t(X_0) \in \Omega_t$. With the same arguments as for $\mathcal{U}_t$, we have
\begin{equation}\label{PFvariete}
\mathcal{P}_t \, \left[\,  k_t(\,\bcdot\, , X_t) \, \right] \, = \, k_0( \bcdot \, , \, X_0) \; .
\end{equation}}
The mapping $\mathcal{P}_t : \HS_t \mapsto \HS_0$ with $t\geq 0$ constitutes the Perron-Frobenius family of operators in \AD{the RKHS family}. The mapping $\mathcal{P}_t : \HS_t \mapsto \HS_0$ is unitary for the RKHS family topology (isometry from $\HS_t$ to $\HS_0$ and the range of $\mathcal{P}_t$ is dense in $\HS_0$). The next proposition justifies that $\mathcal{U}_t$ and $\mathcal{P}_t$ have inverse action on the feature maps. 
\begin{proposition}[Koopman Perron-Frobenius duality]\label{Koopman-Perron-Frobenius-duallity}
For all $\XX_0 \in \Omega_{0}$ and {$\YY_t=\Phi_t(Y_0) \in \Omega_t$}, we have 
\[
\left\langle \, \mathcal{U}_t \, k_{0}(\,\bcdot\, , \, \XX_0 ) \; , \; k_t(\, \bcdot \, , \YY_t) \, \right\rangle_{\HS_{t}} \;  = \; \left\langle \, k_{0}(\,\bcdot\, , \, \XX_{0}) \; , \; \mathcal{P}_t \left[ k_{t}(\, \bcdot \, , \YY_{t}) \right]\, \right\rangle_{\HS_{0}} \, .  
\]
\end{proposition}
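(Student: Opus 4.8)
The plan is to prove the identity directly on the feature maps — which is all the proposition claims — by evaluating each side with the explicit formulas already obtained in this subsection, and then, if one wants the statement in full generality, to extend it by sesquilinearity and density. The proof is short: the proposition is essentially a packaging of the kernel isometry, so the real content has already been done in Theorem \ref{KoopVarieteIsometry}.

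First I would rewrite the left-hand side. By the explicit action of the Koopman operator in Wonderland on feature maps, established in \eqref{Koopvariete}, we have $\mathcal{U}_t\bigl[k_0(\bcdot,\XX_0)\bigr] = k_t(\bcdot,\XX_t)$ with $\XX_t = \Phi_t(\XX_0)$, so that
\[
\bigl\langle \mathcal{U}_t\, k_0(\bcdot,\XX_0),\, k_t(\bcdot,\YY_t)\bigr\rangle_{\HS_t} \;=\; \bigl\langle k_t(\bcdot,\XX_t),\, k_t(\bcdot,\YY_t)\bigr\rangle_{\HS_t} \;=\; k_t(\YY_t,\XX_t),
\]
the last equality being the reproducing property in $\HS_t$ (Definition \ref{def1}). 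Symmetrically, for the right-hand side I would use the explicit action of the Perron--Frobenius operator in Wonderland on feature maps, equation \eqref{PFvariete}, namely $\mathcal{P}_t\bigl[k_t(\bcdot,\YY_t)\bigr] = k_0(\bcdot,\YY_0)$ with $\YY_t = \Phi_t(\YY_0)$, to get
\[
\bigl\langle k_0(\bcdot,\XX_0),\, \mathcal{P}_t[k_t(\bcdot,\YY_t)]\bigr\rangle_{\HS_0} \;=\; \bigl\langle k_0(\bcdot,\XX_0),\, k_0(\bcdot,\YY_0)\bigr\rangle_{\HS_0} \;=\; k_0(\YY_0,\XX_0),
\]
again by the reproducing property, this time in $\HS_0$.

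It then remains only to identify $k_t(\YY_t,\XX_t)$ with $k_0(\YY_0,\XX_0)$, which is exactly the kernel isometry: by Definition \ref{def-kt-direct} (equivalently the second line of \eqref{isometry-kt} in Theorem \ref{KoopVarieteIsometry}, or Proposition \ref{RelationKernel} for the global kernel $k$ followed by the restriction to $\Omega_t$), one has $k_t\bigl(\Phi_t(\YY_0),\Phi_t(\XX_0)\bigr) = k_0(\YY_0,\XX_0)$. Hence both sides equal $k_0(\YY_0,\XX_0)$ and the claimed equality holds. An alternative, more ``structural'' route would be to expand $\mathcal{U}_t = R_t\circ P_t\circ E_0$ and $\mathcal{P}_t = R_0\circ U_t\circ E_t$, move $R_t$ across using that $(E_t,R_t)$ is an adjoint pair of isometries for the RKHS inner products (Prop.\ \ref{RestrProlongementIsometrie} and the adjointness property listed just after \eqref{ContRestrictionL2}), and then use that $U_t$ and $P_t$ act inversely on feature maps via \eqref{DefKoopman2} and \eqref{PerronFrobeniusRightflow} together with Proposition \ref{RelationKernel}; this arrives at the same place.

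I do not anticipate a genuine obstacle. The only points demanding care are bookkeeping: keeping the conjugate-linearity convention in the second slot of each inner product straight (harmless, since the kernels $k_0$, $k_t$ are Hermitian), keeping track of the order of the kernel arguments, and making sure that the explicit feature-map identities \eqref{Koopvariete} and \eqref{PFvariete} — on which everything rests — are the ones established earlier in this subsection (they are). Finally, should the identity be wanted for arbitrary $f\in\HS_0$ and $g\in\HS_t$ rather than just feature maps, I would extend by (anti)linearity to the dense spans $\mathrm{Span}\{k_0(\bcdot,\XX_0):\XX_0\in\Omega_0\}$ and $\mathrm{Span}\{k_t(\bcdot,\YY_t):\YY_t\in\Omega_t\}$ and pass to the closure using continuity of $\mathcal{U}_t$ and $\mathcal{P}_t$ (they are isometries by Theorem \ref{KoopVarieteIsometry}).
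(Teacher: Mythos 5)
Your proof is correct and follows essentially the same route as the paper: evaluate both sides using \eqref{Koopvariete} and \eqref{PFvariete} together with the reproducing property, then conclude by the kernel isometry $k_t(\Phi_t(\YY_0),\Phi_t(\XX_0)) = k_0(\YY_0,\XX_0)$ from Definition \ref{def-kt-direct}. The additional remarks on an alternative structural route and on extension by density are sound but beyond what the paper's proof records.
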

\begin{proof}
By the equations \eqref{Koopvariete} and \eqref{PFvariete}, we have
\[
\left\langle \, \mathcal{U}_t \, k_{0}(\,\bcdot\, , \, \XX_0 ) \; , \; k_t(\, \bcdot \, , \YY_t) \, \right\rangle_{\HS_{t}} ={ k_t\big( \YY_t , \Phi_t(X_0)\big)} \; \text{and }\; \left\langle \, k_{0}(\,\bcdot\, , \, \XX_{0}) \; , \; \mathcal{P}_t \left[ k_{t}(\, \bcdot \, , \YY_{t}) \right]\, \right\rangle_{\HS_{0}} = k_0(\YY_0 , \XX_0)\, , 
\]
and { both terms are equal by construction because of Definition \ref{def-kt-direct}. } \cqfd
\end{proof}
In order to derive the Koopman and Perron-Frobenius operators' spectral representation in \AD{the RKHS family}, we exhibit now two propagation operators that will allow us to express the evolution of the feature maps \AD{in $\mathcal{W}$}. 
\subsubsection{\AD{The RKHS family} spectral representation} 
 We specify hereafter  a family of operator $A_{U\!,\,t}$, related to the Koopman infinitesimal generator $A_{_U}$,  and that give rise to an evolution equation on $\Omega_t$ akin to  \eqref{EquationFeatureMap}. They will play, in that sense, the role of infinitesimal generators on the RKHS family. 

For all $t\geq 0$, let $A_{U\!,\,t}$ be defined by $A_{U\!,\,t}:= R_t   \circ A_{_U} \circ E_t$ with $E_t : \HS_t \mapsto \HS$ and $R_t : L^{2}_{\comp}(\Omega , \nu) \mapsto L^{2}_{\comp}(\Omega_t , \nu)$ (\ref{RestrictDefL2} \ref{EXT-REST}). 

\begin{proposition}[Continuity of $A_{U\!,\,t}$]\label{A-U-t-Continuity}
The mapping $A_{U\!,\,t} : (\HS_t , \| \bcdot \|_{_{\HS_t}})  \mapsto (L^{2}_{\comp}(\Omega_t , \nu) , \| \bcdot \|_{{L^{2}_{\comp}(\Omega_t , \nu) }})$ is well-defined and continuous. 
\end{proposition}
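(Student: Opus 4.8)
The plan is to read $A_{U\!,\,t}$ exactly as it is defined, namely as the composition $A_{U\!,\,t} = R_t \circ A_{_U} \circ E_t$ of three maps each of which is already known to be well-behaved: the extension $E_t : \HS_t \to \HS$, the Koopman infinitesimal generator $A_{_U}$, and the $L^2$-restriction $R_t : L^{2}_{\comp}(\Omega , \nu) \to L^{2}_{\comp}(\Omega_t , \nu)$ (the one from \ref{RestrictDefL2}, \emph{not} the RKHS restriction $\HS \to \HS_t$). Then well-definedness and continuity both follow by tracking the relevant norm bounds through this composition; no new analysis is needed beyond what is already assembled in the previous subsections and in \ref{EXT-REST}.

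First I would check that the composition makes sense on all of $\HS_t$. The extension sends $\HS_t$ into $\HS$, and since $\HS \subset \mathcal{D}(A_{_U})$ — a fact obtained just before Theorem \ref{GenerateurKoopBorne} from Theorem \ref{ContKoopmTHEOREM}, the Cauchy--Schwarz inequality, and $\Mop \in L^{2}_{\comp}(\Omega , \nu)$ — the operator $A_{_U}$ is defined on $E_t g$ for every $g \in \HS_t$ and $A_{_U} E_t g \in L^{2}_{\comp}(\Omega , \nu)$; applying $R_t$ returns an element of $L^{2}_{\comp}(\Omega_t , \nu)$. For continuity I would chain the three bounds. By Proposition \ref{RestrProlongementIsometrie}, $E_t$ is an isometry, so $\| E_t g \|_{\HS} = \| g \|_{\HS_t}$. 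By Theorem \ref{GenerateurKoopBorne}, the restriction of $A_{_U}$ to $\HS$ is continuous, $\| A_{_U} f \|_{L^{2}_{\comp}(\Omega , \nu)} \leq C \, \| f \|_{\HS}$ for some $C>0$ and all $f \in \HS$. By Proposition \ref{ContRestrictionL2}, $R_t$ is continuous; in fact it is a contraction, since $\| R_t h \|_{L^{2}_{\comp}(\Omega_t , \nu)}^2 = \int_{\Omega_t} |h|^2 \, \nu(\mathrm{d}y) \leq \int_{\Omega} |h|^2 \, \nu(\mathrm{d}y) = \| h \|_{L^{2}_{\comp}(\Omega , \nu)}^2$. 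Composing, $\| A_{U\!,\,t} g \|_{L^{2}_{\comp}(\Omega_t , \nu)} \leq \| A_{_U} E_t g \|_{L^{2}_{\comp}(\Omega , \nu)} \leq C \, \| E_t g \|_{\HS} = C \, \| g \|_{\HS_t}$, giving continuity with the same constant $C$ as in Theorem \ref{GenerateurKoopBorne}.

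Finally I would record that this abstract definition is consistent with the concrete formula \eqref{eq:Au}: since $E_t k_t(\bcdot , X_t) = k(\bcdot , X_t)$ and $A_{_U}$ acts on $\HS$ as the directional derivative $\partial_{\Mop(\bcdot)}$, one gets $A_{U\!,\,t} k_t(\bcdot , X_t) = R_t\bigl[ \partial_{\Mop(\bcdot)} k(\bcdot , X_t) \bigr] = \partial_{\Mop(\bcdot)} k_t(\bcdot , X_t)$, the last step using that $k(\bcdot , X_t)$ and $k_t(\bcdot , X_t)$ agree on $\Omega_t$ because $\ell(t,t) = 1$. The argument has no real obstacle; the only delicate point is bookkeeping — one must be careful that the $R_t$ appearing here is the $L^2$-restriction to $\Omega_t$ and that the $A_{_U}$ being composed is its bounded restriction to $\HS$ from Theorem \ref{GenerateurKoopBorne} rather than the unbounded generator on $L^{2}_{\comp}(\Omega , \nu)$. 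Once those identifications are fixed, the boundedness estimate is immediate.
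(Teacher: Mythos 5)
Your proof is correct and follows exactly the paper's argument: the paper's own proof is a one-line composition of the isometry of $E_t$ (Prop.\ \ref{RestrProlongementIsometrie}), the continuity of $A_{_U}$ on $\HS$ (Theorem \ref{GenerateurKoopBorne}), and the continuity of the $L^2$-restriction $R_t$ (Prop.\ \ref{ContRestrictionL2}). Your version merely spells out the norm chaining and the well-definedness via $\HS \subset \mathcal{D}(A_{_U})$, which is a faithful elaboration of the same route.
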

\begin{proof}
Upon applying (\ref{E_tIsometrie} of Prop. \ref{RestrProlongementIsometrie}, \ref{EXT-REST}), the continuity of $A_{_U}$ (Theorem \ref{GenerateurKoopBorne}) and of $R_t$ (Proposition \ref{ContRestrictionL2}, \ref{EXT-REST}) we obtain the proof of the proposition. 
\cqfd 
\end{proof}
In a very similar way as the infinitesimal generator on $\HS$, the operator $A_{U\!,\,t}$ can be understood as an evolution equation of the feature maps defined on $\Omega_t$ and associated to $\HS_t$. For $\XX_t \in \Omega_t$, by proposition \ref{GenerateurKoopman} it can be noticed that 
\[
A_{U\!,\,t} \; k_t(\, \bcdot \, , \, \XX_t) = R_t \circ A_{_U} \left[ k(\, \bcdot \, , \, \XX_t) \right] = R_t \left[ \, {\partial_{\Mop(\bcdot)}}\, k(\, \bcdot \, , \, \XX_t)  \, \right]
\]
and in particular on $\Omega_t$
\begin{equation}\label{Evolution}
A_{U\!,\,t} \; k_t(\, \bcdot \, , \, \XX_t)  =  {\partial_{\Mop(\bcdot)}}\,  k_t ( \, \bcdot \, , \,  \XX_t).
\end{equation}
\begin{remark}\label{LienGenerateurs} Operator  $A_{U\!,\,t}$ is alike the Koopman infinitesimal generator $A_{_U}$.
As a matter of fact by proposition \ref{GenerateurKoopman} and theorem  \ref{GenerateurKoopBorne} we have for all $\XX\in \Omega$
\[
A_{_U} k(\, \bcdot \, , \, \XX) = {\partial_{\Mop(\bcdot)}}\, k (\, \bcdot \, , \,\XX),
\]
and for all $\XX_t \in \Omega_t$
\[
    A_{U\!,\,t}\;k_t( \,\bcdot \, , \, \XX_t) = {\partial_{\Mop(\bcdot)}}\, k_t ( \, \bcdot \, , \,  \XX_t).
\]
\AD{Notice that these two operators are different and act on different domains.}
\end{remark}

\begin{remark}\label{feature-maps-evolution} Operators $A_{U\!,\,t}$ can be understood as an evolution operator for the feature maps. As a matter of fact through \eqref{EquationFeatureMap} and the definition of $A_{U\!,\,t}$, we have, for all \AM{$\XX \in \Omega_{0}$}
\begin{equation}
 \partial_t \, k\bigl(\, \bcdot \, , \, \Phi_t(\XX)\bigr)_{{|_{\Omega_t}}}  =  A_{U\!,\,t} \, k_{t}\bigl(\, \bcdot \, , \, \Phi_t (\XX)\bigr) \, .   
\end{equation}
\end{remark}

Through the above remark the operator $A_{U\!,\,t}$ 
inherits the properties of operator $A_{_U}$ defined on the global encompassing RKHS $\HS$.  As shown in the next proposition it remains in particular skew-symmetric.
\begin{proposition}[Skew-symmetry of $A_{U\!,\, t}$]\label{AntisymAUt}
The operator $A_{U\!,\, t}$ is skew-symmetric in $L^{2}_{\comp}(\Omega_{t} , \nu)$: for all $f$ and $g\in \HS_t$
\[
\langle A_{U\!,\,t} \, f \; , \; \AM{j\circ g} \rangle_{{L^{2}_{\comp}(\Omega_t , \nu)}} \; = \; - \, \langle f \; , \; A_{U\!,\,t} \, g \rangle_{{L^{2}_{\comp}(\Omega_t , \nu)}} .
\]
\AM{where $j$ is the injection $j : \HS_t \mapsto L^{2}_{\comp}(\Omega_t,\nu)$.}
\end{proposition}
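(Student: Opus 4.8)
The plan is to reduce the identity to feature maps and then inherit skew-symmetry from the global Koopman generator $A_{_U}$ (Proposition~\ref{KoopAntisym}). First I would note that both sides are continuous sesquilinear forms on $\HS_t\times\HS_t$: $A_{U\!,\,t}:(\HS_t,\|\bcdot\|_{\HS_t})\to(L^{2}_{\comp}(\Omega_t,\nu),\|\bcdot\|_{L^{2}_{\comp}(\Omega_t,\nu)})$ is bounded (Proposition~\ref{A-U-t-Continuity}) and the injection $\HS_t\hookrightarrow L^{2}_{\comp}(\Omega_t,\nu)$ is bounded (Theorem~\ref{InjectionContinueCompacte}); since ${\rm Span}\{k_t(\bcdot,\XX_t):\XX_t\in\Omega_t\}$ is dense in $\HS_t$, it suffices to verify $\langle A_{U\!,\,t}f,g\rangle_{L^{2}_{\comp}(\Omega_t,\nu)}=-\langle f,A_{U\!,\,t}g\rangle_{L^{2}_{\comp}(\Omega_t,\nu)}$ for $f=k_t(\bcdot,\XX_t)$ and $g=k_t(\bcdot,\YY_t)$, where $\XX_t=\Phi_t(\XX_0)$, $\YY_t=\Phi_t(\YY_0)$.

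For such feature maps, equation~\eqref{Evolution} gives $A_{U\!,\,t}k_t(\bcdot,\XX_t)=\partial_{\Mop(\bcdot)}k_t(\bcdot,\XX_t)$; since $\Mop(y)$ is a real direction, $\overline{\partial_{\Mop(y)}k_t(y,\YY_t)}=\partial_{\Mop(y)}\overline{k_t(y,\YY_t)}$, so the Leibniz rule for the Gateaux derivative along $\Mop$ yields
\[
\langle A_{U\!,\,t}f,g\rangle_{L^{2}_{\comp}(\Omega_t,\nu)}+\langle f,A_{U\!,\,t}g\rangle_{L^{2}_{\comp}(\Omega_t,\nu)}=\int_{\Omega_t}\partial_{\Mop(y)}\!\left[k_t(y,\XX_t)\,\overline{k_t(y,\YY_t)}\right]\nu(\dif y),
\]
and it remains to show this integral vanishes.

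The integrand coincides on $\Omega_t$ with the restriction of $h:=k(\bcdot,\XX_t)\,\overline{k(\bcdot,\YY_t)}$, which belongs to $\mathcal D(A_{_U})$ (products of functions of $\HS$ are Gateaux differentiable with $L^2$ derivative, by Theorem~\ref{ContKoopmTHEOREM}, and $\HS\subset L^{\infty}(\Omega)$ since $\Omega$ is compact and $k$ continuous), so $\partial_{\Mop(\bcdot)}h=A_{_U}h$ and $\int_\Omega A_{_U}h\,\dif\nu=\langle A_{_U}h,1\rangle_{L^{2}_{\comp}(\Omega,\nu)}=-\langle h,A_{_U}1\rangle_{L^{2}_{\comp}(\Omega,\nu)}=0$ (using the skew-symmetry of $A_{_U}$ and $A_{_U}1=\partial_{\Mop(\bcdot)}1=0$). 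It then remains to identify $\int_{\Omega_t}$ with $\int_\Omega$; I would do this via the change of variables $y=\Phi_t(z)$, using that $\Phi_t$ transports $\nu$ from $\Omega_0$ onto $\Omega_t$, that the pushforward of the vector field $\Mop$ under $\Phi_t^{-1}$ at $\Phi_t(z)$ is again $\Mop(z)$, and the kernel isometry $k_t(\Phi_t(z),\Phi_t(\XX_0))=k_0(z,\XX_0)$ of Theorem~\ref{KoopVarieteIsometry} — which rewrites the integral as the corresponding one over $\Omega_0$, thereby also exhibiting that this sesquilinear form is the same at every time. Equivalently, and more compactly, one runs the chain $\langle A_{U\!,\,t}f,g\rangle_{L^{2}_{\comp}(\Omega_t,\nu)}=\langle A_{_U}E_t f,E_t g\rangle_{L^{2}_{\comp}(\Omega,\nu)}=-\langle E_t f,A_{_U}E_t g\rangle_{L^{2}_{\comp}(\Omega,\nu)}=-\langle f,A_{U\!,\,t}g\rangle_{L^{2}_{\comp}(\Omega_t,\nu)}$, invoking the skew-symmetry of $A_{_U}$ together with the $L^2$ adjoint-pair property of the extension/restriction operators $E_t,R_t$ (the $L^2$ counterpart of the statements collected in~\ref{EXT-REST}) and $E_t k_t(\bcdot,\XX_t)=k(\bcdot,\XX_t)\in\HS\subset\mathcal D(A_{_U})$.

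The step I expect to be the real obstacle is precisely this passage from the integral over the moving fibre $\Omega_t$ to the global invariant-measure identity $\int_\Omega A_{_U}h\,\dif\nu=0$, equivalently the $L^2$ adjointness of $E_t$ and $R_t$: this is \emph{not} a mere consequence of measure invariance (the fibres $\Omega_t$ are not flow-invariant), and it relies on the precise way $\nu$ is distributed along the family $(\Omega_t)_{t\ge0}$ and on the properties of $E_t,R_t$ established in~\ref{EXT-REST}. Everything else — the density reduction, the use of~\eqref{Evolution}, the Leibniz rule, and the skew-symmetry of $A_{_U}$ in $L^{2}_{\comp}(\Omega,\nu)$ — is routine.
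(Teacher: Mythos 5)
Your second, ``compact'' chain is precisely the paper's proof: the paper writes $\langle A_{U\!,\,t}f,g\rangle_{L^{2}_{\comp}(\Omega_t,\nu)}=\langle \widetilde{A_{_U}\circ E_t(f)},E_t(g)\rangle_{L^{2}_{\comp}(\Omega,\nu)}$ using an extension-by-zero from $\Omega_t$ to $\Omega$, invokes Proposition~\ref{KoopAntisym}, and restricts back. Your primary route (density reduction to feature maps, the Leibniz rule for $\partial_{\Mop(\bcdot)}$, and the identity $\int_\Omega A_{_U}h\,\dif\nu=0$) is an equivalent reformulation that funnels into the same single issue, so it buys nothing extra but loses nothing either. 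And you have put your finger on exactly the right point: Proposition~\ref{KoopAntisym} is an identity between integrals over all of $\Omega$, whereas $\langle \widetilde{A_{_U}\circ E_t(f)},E_t(g)\rangle_{L^{2}_{\comp}(\Omega,\nu)}$ and $-\langle E_t(f),\widetilde{A_{_U}\circ E_t(g)}\rangle_{L^{2}_{\comp}(\Omega,\nu)}$ are by construction integrals over $\Omega_t$ only; equating them requires the contribution from $\Omega\setminus\Omega_t$ to vanish, i.e.\ the $L^2$-adjointness of $E_t$ and the $L^2$ restriction $R_t$, which is \emph{not} among the properties proved in~\ref{EXT-REST} (Proposition~\ref{RestrictionProlongelementAdjoint} gives adjointness only for the RKHS inner products). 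The paper's own proof performs this step silently, so your proposal is not weaker than the published argument; but neither closes the gap, and doing so requires, as you anticipate, a statement about how $\nu$ is distributed over the fibres $(\Omega_t)_{t\geq 0}$ and transported by $\Phi_t$, not just the global invariance of $\nu$.
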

For the need of this proof only, we introduce the following notation. For a function $h$ defined on $\Omega_t$, we note  $\widetilde{h}$ the extension on $\Omega$ defined by 
$$
\widetilde{h} := \left\{
    \begin{array}{ll}
        h & \mbox{on } \Omega_t \\
        0 & \mbox{else.}
    \end{array}
\right.
$$
\begin{proof}
As $g=E_t(g)$ holds on $\Omega_t$, we have 
\[
\langle A_{U\!,\,t} \, f \; , \, \AM{j\circ g} \rangle_{{L^{2}_{\comp}(\Omega_t , \nu)}} \; = \; \langle A_{_U} \circ E_t (f)\restriction_{_{\Omega_t}} \, , \, E_t(g) \, \rangle_{{L^{2}_{\comp}(\Omega_t , \nu)}} \; =\; \langle \widetilde{A_{_U} \circ E_t(f)}\, , \, E_t(g) \, \rangle_{{L^{2}_{\comp}(\Omega , \nu)}}{.}
\]
By  proposition \ref{KoopAntisym}  and since $E_t(f) \restriction{_{\Omega_t}} = f$ on $\Omega_t$ we infer that 
\[
\langle A_{U\!,\,t} \, f \; , \, \AM{j\circ g} \, \rangle_{{L^{2}_{\comp}(\Omega_t , \nu)}} \; = \; - \, \langle  E_t (f) \, , \, \widetilde{A_{_U} \circ E_t(g)} \, \rangle_{{L^{2}_{\comp}(\Omega , \nu)}} \; = \; - \langle  f \, , \, A_{_U} \circ E_t(g) \, \rangle_{{L^{2}_{\comp}(\Omega_t , \nu)}} \; = \; - \,  \langle  f \; , \; A_{U\!,\,t} (g) \, \rangle_{{L^{2}_{\comp}(\Omega_t , \nu)}} . 
\]
\cqfd
\end{proof}

We are now ready to prove the \AD{RKHS family} spectral representation theorem, which states  that the bounded operator $A_{U\!,\,t} : \HS_t \mapsto L^{2}_{\comp}(\Omega_t , \nu)$ is diagonalizable for all $t\geq 0$.

\subsubsection{Proof of Theorem \ref{W-spectral-representation} [\AD{The RKHS family} spectral representation]}
The full proof, organized in two main steps,  is thoroughly detailed in \ref{Step-1-2}.  For this proof, we consider the \AD{restriction of the Koopman infinitesimal generator on $\mathcal{H}$}, $A_{_U} : \HS \mapsto L^{2}_{\comp}(\Omega , \nu)$, which is connected to each $A_{U\!,\,t}$ for $t\geq 0$ through the restriction operator $R_t$. In the first step, the diagonalization of operator $A_{_U} : \HS \mapsto L^{2}_{\comp}(\Omega , \nu)$ in $L^{2}_{\comp}(\Omega , \nu)$ is first performed. To that end we introduce  an intermediate (approximating) operator denoted $\widetilde{A}_{_U}$, directly related to $A_{_U}$ and whose inverse is shown to be compact and self-adjoint.  The second step of the proof consists in deducing the diagonalization of each $A_{U\!,\,t}$ from the diagonalization of $A_{_U}$ obtained at  step 1. These two steps are thoroughly presented in \ref{Step-1-2}.

\subsection{Tangent linear dynamics}
A  result of practical interest  concerns the establishment of a rigorous ensemble expression of the tangent linear dynamics operator. Recall that we note $\XX_t = \Phi_t(\XX_0)$ for all $t\geq 0$.  We define  $\delta \XX(t,\AD{\bcdot}) := \Phi_t[\XX_0(\AD{\bcdot}) + \delta \XX_0(\AD{\bcdot})] - \Phi_t[\XX_0(\AD{\bcdot})]$ for all $t\geq 0$, and where $\delta \XX_0(x)$ is a perturbation of the initial condition at point $x$. The function $\delta \XX(t , \bcdot) \in L^{2}(\Omega_x , \real^{d})$ is the perturbation of the flow at time $t$ with respect to the initial condition $\XX_0$. We have 
\[
\Phi_t(\XX_0 + \delta \XX_0) - \Phi_t(\XX_0)= \int_{0}^{t} \Mop[\Phi_s(\XX_0 + \delta \XX_0)] - \Mop[\Phi_s(\XX_0)] \, \mathrm{d}s + o(\|\delta \XX_0\|) = \int_{0}^{t} \dif(\Mop \circ \Phi_s(\XX_0)) \delta\XX_0 \; \mathrm{d}s + o(\|\delta \XX_0\|)
\]
and we obtain also that
\[
\Phi_t(\XX_0 + \delta \XX_0) - \Phi_t(\XX_0)= \int_{0}^{t} \dif\Mop[\Phi_s(\XX_0)] \,  \,  [\dif\Phi_s(\XX_0) \delta \XX_0] \, \mathrm{d}s  + o(\|\delta \XX_0\|) = \int_{0}^{t} \partial_{X}\Mop[\Phi_s(\XX_0)] \,  \, \delta\XX(s , \bcdot) \; \mathrm{d}s + o(\|\delta \XX_0\|) .
\]
The variation of the flow verifies $\delta \XX(t ,\bcdot)= \int_{0}^{t} \partial_{X}\Mop[\XX_s]  \delta\XX(s, \bcdot)  \; \mathrm{d}s$ \, almost everywhere on $\Omega_x$. Recall  that $\Mop$ is assumed $C^1$, in particular $\sup_{\XX\in\Omega} \partial_X \Mop(\XX) < \infty$ (since $\Omega$ is compact).  The function $\delta \XX(t,\bcdot)$ belongs to $L^{2}(\Omega_x , \real^{d})$ and verifies for all $t\geq 0$
\begin{equation}\label{EqTgtLin1}
    \partial_t \, \delta \XX(t,\AD{\bcdot})\; = \; \partial_X \Mop [\XX_t] \delta\XX(t,\AD{\bcdot}).
\end{equation}
Each component $[\delta\XX(t,\bcdot)]_i$ of $\delta\XX(t,\bcdot)$ belongs to $L^{2}(\Omega_x , \real)$. 
Let $[g^\epsilon_x (\delta \XX_t)]_i: \Omega\mapsto \comp$ be the representation of $[\delta\XX(t,\bcdot)]_i$ in $L^{2}_{\comp}(\Omega , \nu)$ defined for all $x \in \Omega_x $ by
\[
[g^\epsilon_x(\delta \XX_t)]_i := [\delta\XX_t(x)]_i.
\]
For any $x\in\Omega_x$, the family of functions $g^\epsilon_x(\XX_t)$ correspond to vector observables of the dynamical system and  we may immediately write for all $t\geq 0$ 
\begin{equation}\label{EqTgtLin2-1}
\partial_t \, g^\epsilon_x (\delta \XX_t) \; = \; \partial_X \Mop (\XX_t) g^\epsilon_x (\delta \XX_t).
\end{equation}
We are now ready to exhibit a kernel expression of the tangent linear operator $\partial_X \Mop (\XX_t)$. For all $1\leq i \leq d$,  the function $\mathcal{L}_k \, [g^\epsilon_x ]_i \in \HS$ verifies for all $\delta \XX_t \in \Omega$
\begin{equation}
\label{tang-lin-k-1}
\langle {\cal L}_{k} \circ \partial_t \, {\cal L}_{k} \, [g^\epsilon_x ]_i \; , \, k(\, \bcdot \, , \, \delta \XX_t ) \rangle_{_\HS}= {\cal L}_{k} \; (  \partial_t \, {\cal L}_{k} \, [ g^\epsilon_x ]_i) (\delta\XX_t).
\end{equation}
Upon applying the differentiation formulae of proposition \ref{DerivativeRule} and the evolution equation \eqref{EquationFeatureMap}, we have
\begin{align*}
\langle{\cal L}_{k} \circ \partial_t \, {\cal L}_{k} \, [g^\epsilon_x  ]_i \, , \,  k(\, \bcdot \, , \, \delta \XX_t) \, \rangle_{_\HS} 
&=  - \langle {\cal L}_{k} [g^\epsilon_x ]_i \, , \, {\cal L}_{k} \circ  \partial_t \, k(\, \bcdot \, , \, \delta \XX_t) \rangle_{_\HS}   \\
&=  - \langle {\cal L}_{k}[g^\epsilon_x]_i \, , \, {\cal L}_{k}  \, [ A_{_U} \, k(\, \bcdot \, , \, \delta \XX_t ] \,\rangle_{_\HS}  \\
&=  - \langle \, {\cal L}^{\alf}_{k} \,[g^\epsilon_x]_i \; , \; {\cal L}^{\alf}_{k}  \, [ A_{_U} \, k(\, \bcdot \, , \, \delta \XX_t ] \,\rangle_{L^{2}_{\comp}(\Omega , \nu)} . 
\end{align*}
The operator $\mathcal{L}_{k}^{\alf}$ being self-adjoint in $L^{2}_{\comp}(\Omega , \nu)$ and $A_{_U}$  skew-symmetric for the inner product of $L^{2}_{\comp}(\Omega , \nu)$, we have
\begin{align*}
\langle{\cal L}_{k} \circ \partial_t \, {\cal L}_{k} \, [g^\epsilon_x ]_i\; , \;  k(\, \bcdot \, , \, \delta \XX_t) \,\rangle_{_\HS} 
&=\langle A_{_U} \, {\cal L}_{k}[g^\epsilon_x]_i\; , \; k(\, \bcdot \, , \, \delta \XX_t ) \,\rangle_{L^{2}_{\comp}(\Omega , \nu)}  \\
&= \langle {\cal L}^{\alf}_{k} \, A_{_U} \, {\cal L}_{k}[g^\epsilon_x ]_i \; , \; {\cal L}^{-\alf}_{k} \, k(\, \bcdot \, , \,\delta \XX_t ) \,\rangle_{L^{2}_{\comp}(\Omega , \nu)}  \\
&= \langle {\cal L}_{k} \, A_{_U} \, {\cal L}_{k}[g^\epsilon_x]_i \; , \; k(\, \bcdot \, , \,\delta \XX_t ) \,\rangle_{_\HS} \\
&=  {\cal L}_{k} \, A_{_U} \, {\cal L}_{k} \, [g^\epsilon_x ]_i (\delta \XX_t ) . 
\end{align*}
Combining the right-hand side of the above expression  with \eqref{tang-lin-k-1},  we have
\begin{equation}\label{TgtLin4-1}
{\cal L}_{k} ( \, \partial_t \, {\cal L}_{k} [g^\epsilon_x ]_i \, )(\delta \XX_t ) = {\cal L}_{k}  \, (  A_{_U} \, {\cal L}_{k}[g^\epsilon_x ]_i \, ) (\delta \XX_t ) . 
\end{equation}

\noindent 
As ${\cal L}_{k}$ is injective, the kernel of the tangent linear operator in the RKHS $\HS$ reads: 
\[
\left(\partial_t \, {\cal L}_{k} [g^\epsilon_x ]_i\right)(\delta \XX_t)  = A_{_U} \, {\cal L}_{k} \, [g^\epsilon_x ]_i (\delta \XX_t).
\]
As $\Omega$ is bounded, the function $t\mapsto \delta \XX(t,\bcdot)$ belongs to $L^{\infty}([0,T])$ and therefore $\delta \XX_t\mapsto g^\epsilon_x (\delta \XX_t)$ belongs to $L^{\infty}\bigl([0,T] , L^{2}(\Omega, \nu)\bigr)$. By \eqref{EqTgtLin2-1}, we have also that $\delta \XX_t \mapsto \partial_t \, g^\epsilon_x (\delta \XX_t)$ belongs to $L^{\infty}\bigl([0,T] , L^{2}(\Omega, \nu)\bigr)$ since $\sup_{\YY \in \Omega} \partial_X \Mop(\YY) < \infty$. We have therefore on $\Omega$ 
\[
\partial_t \, \mathcal{L}_{k} [g^\epsilon_x ]_i (\delta \XX_t)\; = \; \int_{\Omega} k(\delta \XX_t \,, \, z)\; \partial_t [g^\epsilon_x]_i(z) \, \nu(\mathrm{d}z) \; = \mathcal{L}_{k} \, (\partial_t \, [g^\epsilon_x ]_i \,) (\delta \XX_t),
\]
and thus, 
\[
 \partial_t \, [g^\epsilon_x]_i(\delta \XX_t)  \; = \; {\cal L}^{-1}_{k}  \, A_{_U} \, {\cal L}_{k} \, [g^\epsilon_x]_i (\delta \XX_t).
\]
By the commutation property of remark \ref{CommutationL_k-1/2KoopmanU_t}, together with \eqref{EqTgtLin2-1} and as ${\cal L}^{\alf}_{k} \, [g^\epsilon_x]_i \in \HS$, the following equalities hold for all $\delta \XX_t \in \Omega$
\[
 \partial_t \, [g^\epsilon_x]_i(\delta \XX_t)  \; = \; {\cal L}^{-\alf}_{k}  \, A_{_U} \, {\cal L}^{\alf}_{k} \, [g^\epsilon_x]_i (\delta \XX_t),
\]
\begin{equation}\label{TgtLineaire4-1}
\partial_X \Mop (\XX_t) \, g^\epsilon_x \, (\delta \XX_t) = {\cal L}^{-\alf}_{k} \,  A_{_U} \, {\cal L}^{\alf}_{k} \, g^\epsilon_x(\delta \XX_t).
\end{equation}

Let's specify now the kernel expression of the tangent linear dynamics in $\Omega_s$ for all $s\geq 0$. The function $\mathcal{L}_{k}^{\alf} \, [g^\epsilon_x] \restriction_{\Omega_s}$ belongs to $\HS_s$. We have as well $\mathcal{L}_{k}^{\alf}(\partial_t g^\epsilon_x) = \mathcal{L}_{k_s}^{\alf}(\partial_t g^\epsilon_x)$ on $\Omega_s$. For all $\YY_s$ and $\XX'_s \in \Omega_s$ we have that $A_{_U} \, k(\,\bcdot \, , \, \XX'_s)(\YY_s) = A_{U,s} \, k_s(\,\bcdot \, , \, \XX'_s)(\YY_s)$. In particular, we get $A_{_U} \, \mathcal{L}_{k}^{\alf} \, [g^\epsilon_x] (\YY_s) = A_{U,s} \,\mathcal{L}_{k_s}^{\alf} \, [g^\epsilon_x](\YY_s)$. We obtain hence the kernel expression for all $\YY_s \in \Omega_s$
\begin{equation}\label{TgtLineaire3-1}
\partial_X \Mop (\XX_s) \, g^\epsilon_x \, (\YY_s) = {\cal L}^{-\alf}_{k_s} A_{U\!,\,s} \;{\cal L}^{\alf}_{k_s} \, g^\epsilon_x \, (\YY_s).
\end{equation}
 Note that the domains of the kernel expressions  of the tangent linear \eqref{TgtLineaire4-1} and \eqref{TgtLineaire3-1} are  different. The right-hand side of \eqref{TgtLineaire3-1}  provides a convenient kernel expression of the tangent linear operator, enabling us to evaluate  the tangent linear dynamics from an ensemble of feature maps. The adjoint of the tangent linear dynamics is straightforwardly given by:
\begin{equation}\label{AdjTgtLineaire3-1}
\partial_X \Mop^* (X_s)  g^\epsilon_x  \, (\YY_s) =  - {\cal L}^{-\alf}_{k_s} A_{U\!,\,s} \;{\cal L}^{\alf}_{k_s}  g^\epsilon_x  \, (\YY_s).
\end{equation}

\begin{remark}[Projection observables]\label{projection observable}
The point observable functions $g_x^\epsilon$ used above can be extended to other functions defined from basis $(\Psi_j)_{j\geq 0}$ of $L^2(\Omega_x, \mathbb{R}^d)$, with  
\begin{align*}
&g^\psi_x\XX_t := \sum_j^\infty \langle \XX_t\, , \Psi_j \rangle_{L^2(\Omega_x)} \Psi_j(x),\\
&g^\psi_x: \Omega \mapsto \sum_j^\infty \langle \bcdot \, , \Psi_j \rangle_{L^2(\Omega_x)} \Psi_j(x) \in \comp.
\end{align*}
\end{remark}

\begin{remark}[Dependence on $\XX_t$]
It should be noted that in \eqref{TgtLineaire3-1} and \eqref{AdjTgtLineaire3-1} the right-hand side does not depend on $\XX_s$ whereas the tangent linear operator or its adjoint on the left-hand side does. Function $\XX_s$ indicates around which function of $\Omega_s$ the nonlinear system is linearized. The equivalent of the infinitesimal generator of the Koopman operator, representing the dynamics' linear tangent operator on the RKHS family, depends necessarily also on this function. This dependence is here implicit and induced by the  considered sampled functions used to define the RKHS $\HS_t$. If the set of members $(\XX_t^{(i)})_{i\geq 1}$ are centered around a particular function $\XX_s$, the operators \AD{$A_{U\!,\,t}$}  can be interpreted as a representation of the tangent linear operator around function $\XX_s$. In ensemble methods, $\XX_s$, is in general taken as the ensemble mean, and $(\XX^{(i)}_s)_{i\geq 1}$ is an ensemble of of time-dependent perturbations  around this mean. 
\end{remark}
\begin{remark}[More regularity on $\delta \XX_t$]\label{MoreRegularity}
For all $1\leq i \leq d$,  if we suppose that the function $[ g^\epsilon_x ]_i$ belongs to $\HS \subset L^{2}_{\comp}(\Omega , \nu)$, the proof can be simplified and \eqref{TgtLineaire3-1} is replaced by 
\begin{equation}
\partial_X \Mop (\XX_s) \, g^\epsilon_x  \, (\delta \XX_s) =  A_{U\!,\,s} \, g^\epsilon_x \, (\delta \XX_s).
\end{equation}
\end{remark}
It can be pointed out that the expression above corresponds to the approximation of the tangent linear dynamics  used in ensemble method if we work in a finite dimensional space such that $\Omega \subset  \real^n$ and assume  that $k_t(\bcdot, X_t)$ is  defined  as $(p-1)^{-\alf}\langle ( \XX_t -  \overline{\XX}_t ),\bcdot\, - \overline{X}_t\rangle_{\real^n}$,  with $\overline{\XX}_t$ the empirical ensemble mean. With that definition, we have $A_{U\!,\,t} k_t(\bcdot , X^{(i)}) =  \sum_j {\partial_{X_j}} k_t(\bcdot ,X_t^{(i)})  \Mop(\bcdot)_j = \langle (\delta_{X^{(i)}_t} -  \frac{1}{p}\sum_{\ell} \delta_{X^{(\ell)}_t})\Mop(\bcdot),\bcdot\, - \overline{X}_t\rangle_{\real^n}$ that reads $(p-1)^{-\alf}\langle \Mop(X_t^{(i)}) -\frac{1}{p}\sum_j\Mop(X_t^{(j)}), \bcdot\, - \overline{X}_t\rangle_{\real^n} $,  for $i=1,\ldots,p$ and for which, when associated to the $\real^n$ Euclidean inner product on a resolution grid of size $n$, the left-hand side of this latter expression corresponds to the so-called $(p \times n)$ anomaly matrix built from $p$ ensemble members of the dynamical system. The tangent linear approximation provided by ensemble methods can be thus immediately interpreted as a particular instance of feature maps together with a given choice of specific inner product to define the reproducing kernel. Keeping a finite dimensional approximation but working without assuming that the functions  $[ g^\epsilon_x ]$ belong to $\HS$, and thus with now expression  \eqref{TgtLineaire4-1} for the ensemble tangent linear expression, corresponds to the case in which a localization procedure identified to the square root operator ${\cal L}^{\alf}_{k_t}$ has been considered. These two choices embed the problem within a particular RKHS family of functions.  The relation between the tangent linear dynamics and the anomaly matrix is in our case  exact and does not correspond to a finite difference approximation as classically presented in ensemble methods. \AD{The RKHS} family can be seen as a way of linearizing locally a nonlinear system in a convenient sequence of spaces  of smooth functions.   

\subsection{Finite time Lyapunov exponents}
\label{sec:Lyapunov}
The kernel of the Koopman operator  provides also  a direct access to the finite time Lyapunov exponents. Recalling from \eqref{EqTgtLin2-1} that for any punctual observable $g_x$, as defined previously, we have 
\begin{equation}\label{Lyapunov1}
g_x (\delta \XX_t) \; = \int_{0}^{t} \partial_X \Mop (\XX_s) g_x (\delta \XX_s)\dif s + o(\|\delta X_0 \|).
\end{equation}
With the expression of the tangent linear operator in terms of the \AD{evolution operator} \eqref{TgtLineaire4-1} on  $\Omega_s$, we have then
\begin{equation}
\label{delta-x}
 g_x(\delta \XX_t)  =     \int_0^t\!\!\!\; {\cal L}^{-\alf}_{k_s} A_{U,s} \;{\cal L}^{\alf}_{k_s}  g_x  (\delta \XX_s) \; \mathrm{d} s.
\end{equation}
For all $s\in [0,t]$, we consider a perturbation $g_x (\delta \XX_s) =  {\cal L}^{-\alf}_{k_s} \psi_\ell^{s}(\delta \XX_s)$ along a Koopman generator eigenfunction associated to the eigenvalue of maximal modulus $|\lambda_\ell|$. By {T}heorem \ref{W-spectral-representation}, we have 
\[
    \partial_t g_x  (\delta \XX_s)  = \lambda_{\ell} \, {\cal L}^{-\alf}_{k_s} \psi^s_{\ell} \\
    = \lambda_{\ell} \, g_x (\delta \XX_s).
\]
Therefore we get
\begin{equation}
\label{norm-delta-x}
 | g_x  (\delta \XX_t)| =  e^{|\lambda_{\ell}| t} \; | g_x(\delta \XX_0)|,
\end{equation}
and the finite time Lyapunov exponent is consequently defined as 
\begin{equation}
\sigma =  |\lambda_{\ell}|.
\label{eq:modal_Lyap}
\end{equation}
For regular perturbations $g_x  \in {\cal H}_s$ with unitary perturbation $g_x (\delta \XX_s)=  \psi_\ell^{s}(\delta \XX_s)$, the derivation is even simpler as we obtain from remark \ref{MoreRegularity}
\[
\partial _t g_x (\delta \XX_s) \; =  A_{U\!,\,t}\;  g_x (\delta \XX_s),
\]
which yields directly to expression \eqref{norm-delta-x} and to the same expression for the Lyapunov exponent.\\
\noindent
The modulus of the larger Koopman eigenvalue in \AD{the RKHS family} provides thus an estimate of the Lyapunov exponent.
It can be outlined that the computation of Lyapunov exponents for large scale systems is computationally very demanding as it requires the construction of the linear tangent dynamics operator and the solution of an eigenvalue problem of very big dimension.
The construction of the exact numerical tangent linear operator is in general a tedious task when expressed in $L^2(\Omega_x)$, as in equation~\eqref{EqTgtLin1}.
The ensemble based method provided by our formalism is on the contrary very simple by expressing the tangent linear operator in $L^2_\mathbb{C}(\Omega_t,\nu)$, as in equation~\eqref{EqTgtLin2-1}.
It can be noticed that, by this change of norm in the definition, the computed values are not the same.

Three distinct values can then be defined for practical computations. 
First, the Lyapunov spectrum expressed in $L^2_\mathbb{C}(\Omega_t,\nu)$ can be determined by computing the singular values of ${\cal L}^{-\alf}_{k_t} A_{U,t} \;{\cal L}^{\alf}_{k_t}$.
The time integral is dropped since the evaluations are constant along trajectories.
It can be viewed as an advantage of working in $L^2_\mathbb{C}(\Omega_t,\nu)$ instead of $L^2(\Omega_x)$ as performed classically.
The time independence is due to the fact that the Koopman operator is intrinsic to the dynamical system. However, from a numerical point of view, as the computation is performed in practice through an ensemble with a limited number of members, the learned spectrum is representative only of the local dynamics at the time ($t=t_0$) at which the kernel has been evaluated.
As an alternative, modal Lyapunov exponents can be defined by the square root of the first singular values of ${\cal L}^{-\alf}_{k_t}\psi_\ell |\lambda_\ell|^2 \psi_\ell^*K_t^{-1}{\cal L}^{\alf}_{k_t}={\cal L}^{-\alf}_{k_t}\psi_\ell |\lambda_\ell|^2 \psi_\ell^*{\cal L}^{-\alf}_{k_t}$
 with $K_t(i,j)= k_t(\XX_t^{(i)}\,,\, \XX_t^{(j)})$.
We call these singular values the \emph{Koopman modal Lyapunov exponents} (KMLE).
Finally, as a third option, equation~\eqref{eq:modal_Lyap} can simply be considered to evaluate modal exponents.
We can notice that the two modal Lyapunov exponents definitions are very similar; the former being expressed in  $L^2_\mathbb{C}(\Omega_t,\nu)$ and the latter in $\mathcal{H}_t$. An example of the estimation of the two types of modal exponents will be presented for a quasi-geostrophic dynamics in the numerical section.
 
\subsection{Practical considerations}\label{Practical-Considerations}
Let us stress again  that in practice, we only have access to the mappings $k_t$ and $A_{U\!,\,t}$ with $t\geq 0$. The mapping $k$ and $A_{_U}$ are completely inaccessible for high-dimensional systems as they require the complete knowledge of the phase space or at least of a long enough orbit with a density assumption in the whole phase space. This last assumption is associated to strong requirements of the dynamical system and is not necessarily valid for a given time series of a particular observable. 
Instead of working with an infinite (dense) trajectory, Theorem\;\ref{W-spectral-representation} enables us to estimate the eigenvalues and eigenfunctions of the Koopman operator locally in the RKHS family, which can locally conveniently be accessed from an ensemble of finite time trajectories. As it will be described in the following,  operators $A_{U\!,\,t}$, can be discretized as an ensemble  matrix -- itself related, as we saw it previously, to the tangent linear dynamics operator. This matrix is then diagonalized  to get access to Koopman eigenvalues and their  associated eigenfunctions.  In theory, the diagonalization of $A_{U\!,\,t}$ needs to be performed only once, at a given time, to access the Koopman eigenpairs  $(\psi_{\ell}^{t})_{\ell}$ and $(\lambda_{\ell})_{\ell}$. However, the exponential relation between distinct instants allows us also to consider averaging strategies to eventually robustify the estimation in practice. This capability will be exploited in the numerical section  for the data assimilation of time series. 
\subsubsection*{Diagonalization in practice}
For all $t\geq 0$, let $m_t$ be the kernel expression of the operator $A_{U\!,\,t}$ given by $m_{t}(\XX_{t} \, , \, \YY_{t}) := A_{U\!,\,t}^{*}\bigl[ \, k_{t}( \, \bcdot \, , \, \YY_t  ) \, \bigr](\XX_{t})$ for all $\XX_{t}$ and $\YY_{t} \in \Omega_{t}$. By {P}roposition \ref{AntisymAUt}, we have $m_{t}(\XX_{t} \, , \, \YY_{t}) = - A_{U\!,\,t}\bigl[ \, k_{t}( \, \bcdot\, , \,  \YY_t ) \, \bigr](\XX_{t})$. 

Let us denote by $\{ X_t^{(i)} \, : \, 1\leq i \leq n \}$ an ensemble of members generated by the dynamical system and by $\{ k_{t}(\, \bcdot \, , \, \XX^{(i)}_{t}) \; : \; 1\leq i \leq n \}$ the $n$ associated feature maps. For all time $t\geq0$, these $n$ feature maps enable us to build a kernel  expression of operator $A_{{U}, t}$ as the $n \times  n$ matrix  ${\mathbb M}_t = \left(\, {m_t}(\XX^{(i)}_t \, , \, \XX^{(j)}_t) \, \right)_{1\leq i , j \leq n}$ with:
\[
{\left(\mathbb{M}_t\right)_{ij}} \; := \; - A_{{U}\!,\,t} \left[ \, k_{t}( \bcdot \, , \, \XX^{(j)}_t ) \, \right] (\XX_{t}^{(i)}) = 
\left[-  \partial_{\Mop(\bcdot)}\, k_t (\, \bcdot \, , \, \XX^{(j)}_{t})\right]\,(\XX^{(i)}_{t}).
\]
As shown in the following, this matrix enables us to access to the Koopman generator eigenvalues and to the evaluation of the eigenfunctions at the ensemble members.

By definition of $A_{U\!,\,t}$, we have $A_{U\!,\,t}[\, k_{t}( \bcdot \, , \, \XX^{(j)}_t ) \, ]  = A_{_U} [ \, k( \bcdot \, ,\, \XX^{(j)}_t) \, ] $. 
Otherwise, as $ -\mathcal{L}_{k}^{\alf} \circ A_{_{U}} [ \, k( \, \bcdot \, , \, \XX^{(j)}_t  ) \, ]$ belongs to $\HS$, by the proof (Step 1) (\ref{Step-1-2}) of Theorem \ref{W-spectral-representation} and denoting $(\psi_{\ell})_{\ell}$ an orthonormal basis of $\HS$ sets from the eigenfunctions of $A_{_U}$, we get
\begin{align*}
- \mathcal{L}_{k}^{\alf} \circ A_{_U} \bigl[ \, k( \bcdot \, , \, \XX^{(j)}_t  ) \, \bigr]
&=  - \sum_{\ell=0}^{\AM{\infty}} \langle \mathcal{L}_{k}^{\alf}  \circ A_{_U} \, \bigl[ \, k( \bcdot \, , \, \XX^{(j)}_t) \, \bigr] \, , \, \psi_{\ell} \, \rangle_{_{\HS}} \; \psi_{\ell}, \\
&= - \sum_{\ell=0}^{\AM{\infty}} \langle A_{_U} \, k( \, \bcdot \, , \, \XX^{(j)}_t) \; , \; \mathcal{L}_{k}^{-\alf} \, \psi_{\ell} \, \rangle_{L^{2}_{\comp}(\Omega , \nu)} \; \;  \psi_{\ell}.
\end{align*}
Noting $\beta_{\ell}$ the eigenvalues of $\mathcal{L}_{k}$, we have also (step 1 \ref{Step-1-2} theorem \ref{W-spectral-representation})  $\mathcal{L}_{k}\AD{j(\psi_{\ell})} = \beta_{\ell} \, \AD{j(\psi_{\ell})}$ and 
\[
- \mathcal{L}_{k}^{\alf} \circ A_{_U} \bigl[ \, k( \bcdot \, , \, \XX^{(j)}_t  ) \, \bigr]  = - \sum_{\ell=0}^{\AM{\infty}} \, \beta_{\ell}^{-\alf} \, \langle A_{_U} \, k( \, \bcdot \, , \, \XX^{(j)}_t) \; , \; \AD{j(\psi_{\ell})} \, \rangle_{L^{2}_{\comp}(\Omega , \nu)} \; \;  \psi_{\ell} .
\]
By the skew symmetry of the generators (proposition \ref{KoopAntisym}) we have that
\[
- \mathcal{L}_{k}^{\alf} \circ A_{_{U}} \bigl[ \, k( \bcdot \, , \,  \XX^{(j)}_t ) \, \bigr] = \\\sum_{\ell=0}^{\AM{\infty}} \, \lambda_{\ell} \; \beta_{\ell}^{-\alf} \;  \langle  k( \bcdot \, , \, \XX^{(j)}_t) \, , \, \AD{j(\psi_{\ell})} \, \rangle_{L^{2}_{\comp}(\Omega , \nu)} \; \;  \psi_{\ell} =\sum_{\ell=0}^{\AM{\infty}} \, \lambda_{\ell} \; \beta_{\ell}^{-\alf} \; \overline{\mathcal{L}_{k}\AD{j(\psi_{\ell})}(\XX_{t}^{(j)})} \;  \psi_{\ell},
\]
which leads to
\[
- \mathcal{L}_{k}^{\alf} \circ A_{_{U}} \bigl[ \, k( \bcdot \, , \,  \XX^{(j)}_t ) \, \bigr] = \sum_{\ell=0}^{\AM{\infty}} \, \lambda_{\ell} \; \beta_{\ell}^{\alf} \overline{\psi_{\ell}(\XX_{t}^{(j)})} \; \psi_{\ell},
\]
and, applying $\mathcal{L}_{k}^{-\alf}$ on both sides, we get
\[
- A_{_{U}} \bigl[ \, k( \bcdot \, , \,  \XX^{(j)}_t ) \, \bigr] = \sum_{\ell=0}^{\AM{\infty}} \, \lambda_{\ell} \; \beta_{\ell}^{\alf} \; \overline{\psi_{\ell}(\XX_{t}^{(j)})} \; \; \mathcal{L}_{k}^{-\alf}\psi_{\ell} = \sum_{\ell=0}^{\AM{\infty}} \, \lambda_{\ell} \; \overline{\psi_{\ell}(\XX_{t}^{(j)})} \;  \AD{j(\psi_{\ell})}.
\]
By the restriction expression (remark \ref{ExpressionRestrictionFonctionRKHS}, \ref{EXT-REST}), \AM{and keeping only $n$ eigen functions to represent the values at the $n$ members}, we finally obtain the following equality for all $t\geq 0$
\[
{m_t}(\XX^{(i)}_t \, , \, \XX^{(j)}_t) = \sum_{\ell=0}^{\AD{n}} \, \lambda_{\ell}  \;  \psi^{t}_{\ell}(\XX_t^{(i)}) \; \overline{\psi^{t}_{\ell}(\XX_{t}^{(j)})},
\]
which shows that the diagonalization of ${\mathbb M}_t$ provides a set of the Koopman generator eigenpairs in \AD{the RKHS family}. 

In practice, the skew-symmetric matrix $\mathbb{M}_t := \bigl( {m_t}(\XX^{(i)}_t \, ,\, \XX^{(j)}_t)\,\bigr)_{_{1\leq i , j \leq n}}$ is assembled from the definition of $A_{U\!,\,t}$ and a given choice of the kernel. As explained in the previous section this matrix corresponds to a kernel expression of $A_{U\!,\,t}$ with 
\[
{\mathbb M}_t(\XX^{(i)}_t \, , \, \XX^{(j)}_t) \; := \; - A_{U\!,\,t} \bigl[ \, k_{t}( \bcdot \, , \, \XX^{(j)}_t ) \, \bigr] (\XX_{t}^{(i)}) = 
[- {\partial_{\Mop(\bcdot)}}\, k_t (\, \bcdot \, , \, \XX^{(j)}_{t})]\,(\XX^{(i)}_{t}).
\]
This matrix can be interpreted in the RKHS setting as resulting from the matrix multiplication:
\[
{\mathbb M}_t(\XX^{(i)}_t \, , \, \XX^{(j)}_t) \; := \;  \sum_\ell{\mathbb F}_t(j,\ell) K_t(\ell,i{)},
\]
with $ {\mathbb F}_t(i,\ell) = - A_{U\!,\,t} \bigl[ \, k_{t}(  \AM{X}_t^{(\ell)}\, , \, \XX^{(j)}_t )\bigr] $. 
This indeed corresponds to a discretization of the kernel expression of operator $A_{U\!,\,t}$ through the empirical Dirac measure. 
Numerically, instead of working with matrix ${\mathbb M}_t$ (i.e. the evaluation of $A_{U\!,\,t} k_t (\bcdot, \XX)$ at several discrete points), we will work directly with matrix ${\mathbb F}_t= {\mathbb M}_t K_t^{-1}$.  This has  the advantage of  directly working with an implicit discretization of operator $A_{U\!,\,t}$, and to relax somewhat its dependency on the kernel choice. The  skew-symetric matrix ${\mathbb M}_t K_{{t}}^{-1}$ is then diagonalized through a direct numerical procedure (using LAPACK library and working numerically on the anti-symetric part of ${\mathbb M}_t K_{{t}}^{-1}$) and can be written $\mathbb{M}_t K_{{t}}^{-1} =  V_{t}\Lambda V^{T}_{t}$, with $V_{t}$ a unitary matrix  and $\Lambda$ a diagonal matrix. 
The matrix $V_{t}$ gathers eigenvectors $V_j^{t}$ of ${\mathbb F}$, which {is} a discretization of $A_{U\!,\,t} k_t (\bcdot, \XX)$ {giving} access to the values of the Koopman eigenfunctions $\psi^{t}_j={\cal L}_k^{\alf}V_j^{t}$ at the $n$ ensemble members points $X^{(i)}_t$. The matrix $\Lambda$ is composed of Koopman eigenvalues with conjugate pairs of pure imaginary eigenvalues.

As previously mentioned this diagonalization can be performed at a single time or at several instants accompanied with an averaging procedure. Theorem  \ref{W-spectral-representation} and \eqref{RelationVectpropreTemps}, give access to the eigenvectors evaluation along a trajectory for all time instants. 
\newline 

We provide below as examples, expressions of the evaluation of $\mathbb{M}_t(\XXi,\XXj)=\left(-\partial_{\Mop(\bcdot)}\,k_t(\bcdot,\XXj)\right)\left(\XXi\right)$   for the empirical covariance kernel ($k_E$) and the Gaussian kernel ($k_G$).
The empirical covariance kernel is defined through the kernel isometry property  as
{$${k_t}(\XXi,\XXj)={k_E(\XXiz,\XXjz)}=\left\langle\XXiz,\XXjz\right\rangle_{\Omega_x},$$
with $(\bcdot,\bcdot)_{\Omega_x}$ the inner product of $L^{2}(\Omega_x , \real^{d})$ and where $\Omega_x$ denotes the physical domain  of the considered dynamics.
We obtain 
\begin{equation}
\label{Der-Emp-K}
\mathbb{M}_t(\XXi,\XXj)=\left\langle \frac{\partial}{\partial t} \XXjz,  \XXiz\right\rangle_{\Omega_x}.
\end{equation}
In this expression, we see that the time derivative of the ensemble members at the initial time is required.
Similarly, the Gaussian kernel is defined as
$$k_t(\XXi,\XXj)={k_G(\XXiz,\XXjz)}=\exp{\left(-\frac{1}{{\ell_G}^2}\left\|\XXiz-\XXjz\right\|_{\Omega_x}^2\right)},$$
This leads to
\begin{equation}
\label{Der-Gaussian-K}
\mathbb{M}_t(\XXi,\XXj)= -\frac{2}{{\ell_G}^2}\left\langle \frac{\partial }{\partial t}\XXjz, \left(\XXiz-\XXjz\right)\right\rangle_{\Omega_x} \exp{\left(-\frac{1}{{\ell_G}^2}\left\|\XXiz-\XXjz\right\|_{\Omega_x}^2\right)}.
\end{equation}
Thanks to the isometry property, this matrix needs to be evaluated only at a single time.

\section{Application to ensemble methods}
\label{Sec-Numerical-Results}
\subsection{Quasi-geostrophic model}
Based on a set of $p$ ensemble members, the ability of the proposed method to predict a trajectory associated with a new initial condition is demonstrated on a quasi-geostrophic barotropic model of a double-gyre \citep{greatbatch2000} in a \enquote{rectangular-box ocean}.
This idealistic model of North-Atlantic large-scale circulation   considers the transport of the ageostrophic state variable components by a velocity in geostrophic balance.
The \AD{adimensioned} barotropic vorticity equation written for the potential vorticity $q$ and the stream function $\psi$ in $\Omega_x=\left([0:L]\times[-L:L]\right)$, with $L=1$, is
\begin{equation}
   \left\{
   \begin{aligned}
      &\frac{\partial q}{\partial t}+J(\psi, q)= f - \left(\delta/L\right)^5\nabla^4\omega\\
      & \Delta\psi=\omega \quad ; \quad q=R_o \omega + y_c,
   \end{aligned}
   \right.
   \label{eq:QG}
\end{equation}
where $\omega$ is the vorticity and $\mbs{x}_c=(x_c,y_c)$ are the latitude and the longitude  increasing eastward and northward, respectively, while $J(f,g)=\partial_x f \partial_y g - \partial_x g \partial_y f$ stands for the Jacobian (with $x$ and $y$ denoting horizontal and vertical coordinates of $\real^2$. \AD{In the model \eqref{eq:QG}, the state of the system can be uniquely determined from the potential vorticity $q$. The set of all possible states of the system is then $\Omega\subset L^{2}(\Omega_x,\mathbb{R})$ (i.e. $d=1$).}
The dimensionless Rossby number $R_o=0.0036$ measures the ratio between inertial forces and the Coriolis force, and a steady wind-forcing is defined as $f=\sin(\pi y_c)$.
The parameter $\delta/L=0.032$ controls the hyper-viscosity model, with $\delta$ the Munk boundary layer width.
Scales smaller than $\delta$ will not be resolved.
Numerical details and links with dimensional variables associated with a mid-latitude ocean basins, such as North Atlantic, \AD{as well as the way the initial conditions of the ensemble have been generated} are given in \ref{A-Numerical-details}. \AD{Let us note that this system is ergodic when associated to an uncorrelated random forcing \cite{Brannan-Duan-98,Duan-2001,Yang-Pu-17}. Here the forcing is stationary, and the system is not ergodic. As a consequence, such a system cannot be used for all the techniques relying on an ergodic assumption to extract a spectral representation of the Koopman operator.}

\subsection{Reconstructions using RKHS}
In this study we focus only on two specific kernels: the empirical covariance kernel ($k_E$), and the Gaussian Kernel ($k_G$).
They are defined respectively by 

\[k_{E}(\XXi,\,\XXj)=\left\langle {{q}}^{(i)}(\bcdot,t_0)\,,\,{{q}}^{(j)}(\bcdot,t_0)\right\rangle_{\Omega_x},
\]
 and 
\[k_{G}(\XXi,\,\XXj)=\exp{\frac{1}{{\ell_G}^2}\left\| {q}^{(i)}(\bcdot,t_0)-{q}^{(j)}(\bcdot,t_0)\right\|^2_{\Omega_x}},
\]
  where $\left\langle\bcdot,\bcdot\right\rangle_{\Omega_x}$ stands for the standard $L^2$ inner product in $\Omega_x$ {and $q^{(i)}(x,t)$ is the solution of equation~\eqref{eq:QG} of the $i^{\text{th}}$ ensemble member at time $t$ at the space location $x$.}
The former has the advantage to be simple, and to be directly related to standard ensemble methods employed in data assimilation.
However, it is strongly rank deficient and introduces spurious correlations in the physical domain $\Omega_x$ between two far apart locations.
Localization techniques \citep{hamill2001} have been introduced to tackle this problem and may be interpreted in our setting as embeddings in specific RKHS.
This interpretation may open new computationally efficient generalisations related for instance to localization techniques in the model space as explained after remark \ref{MoreRegularity}. The parallel done between operators ${\cal L}^{\alf}_{k_t}$ and localization procedure \AD{ should in particular enable to define time-varying and flow dependant localization techniques that are notably difficult to built in practice}. Gaussian kernels are widely used in machine learning for its high regularity and non-compact support in $L^2(\Omega,\nu)$ \citep{minh2010}.
It can be also interpreted as a way to perform localization.
In this study, compared to other tested -- but not presented -- kernels (polynomial and exponential for instance), it led to better performances.  

Once the kernel matrix $K_{ij}=k_t(\XXi\,,\,\XXj)$ for $(i,j)\in [1\dots p]\times[1\dots p]$ (equal for all $t$ -- kernels isometry,  Theorem~\ref{KoopVarieteIsometry}) is evaluated, a very simple reconstruction of the whole trajectory can be performed.
Denoting $\XX'_0$ a new ensemble member initial condition,  the linear combination vector $\bm{\beta}$ associated to the reconstructed trajectory $\hat{\XX}'_t=\sum_{i=1}^p\beta_i\XXi$ can be estimated by kernel regression from the evaluation $k(\XX'_0,\XX_0^{j})$, for $j=[1\dots p]$, and the reproducing property as
\begin{equation}
    \bm{\beta}=K^{-1}k(\XX'_0\,,\,\XX_0^{[1\dots p]}).
    \label{eq:kernel-inversion}
\end{equation}
Strikingly, Koopman isometry implies that these coefficients are constant along time, and brings hence a direct reconstruction of  the whole trajectory.

\subsection{Evaluation of the Koopman eigenfunctions}
Performing the $\mathcal{H}_t$ inner product of equation~\eqref{eq:Au} with the ensemble members' feature maps leads to  the matrix
 \begin{equation}
      \mathbb{M}(\XXi,\,\XXj)=\left(-{\partial_{\Mop(\bcdot)}}\, k_t (\bcdot, \XXj)\right)\left(\XXi\right).
      \label{eq:M}
 \end{equation}
As explained previously eigen-elements of $\mathbb{M}K^{-1}$, noted $(\lambda_\ell,V_\ell(\XXi))$, give access to the exact evaluation of Koopman eigenfunctions on the RKHS family at the ensemble members $\psi_\ell^t(\XXi)= {\cal L}_{k}^{\alf}V_\ell(\XXi)$, and to the associated eigenvalues $\lambda_\ell$.
 We note that $\partial_{\Mop(\bcdot)} k_t (\bcdot\,,\, \XXj)$ is obtained by differentiating the kernel expression (see equations \ref{Der-Emp-K}, \ref{Der-Gaussian-K} for the two selected kernels).
Skew-symmetry of $\mathbb{M}K^{-1}$, inherited from the skew-symmetry of $A_{U\!,\,t}$, is only approached numerically.
Evaluations are thus enhanced by considering only the skew-symmetric part of the matrix $\mathbb{M}K^{-1}$, which is dominant in the numerical evaluation of~\eqref{eq:M}.

\subsection{Lyapunov times}
The Lyapunov times evaluated using the two kernels $k_E$ and $k_G$ are compared.
Three methods to compute the Lyapunov exponents $\sigma$ are detailed in section~\ref{sec:Lyapunov}.
The associated Lyapunov times are defined by $T=\sigma^{-1}\ln C$, with, $C=10^3$.

The Lyapunov times associated with exponents expressed in $L^2_\mathbb{C}(\Omega_t,\nu)$ are shown in figure~\ref{fig:Lyapunov_global}.
The spectrum associated with all singular values, truncated such that $\sigma>10^{-5}$, is displayed.
Due to this truncation, very slow modes are not well captured.
We can see that the Gaussian kernel $k_G$ estimates in average larger Lyapunov times.

\begin{figure}
    \centering
    \subfigure[Lyapunov times spectrum expressed in $L^2_\mathbb{C}(\Omega_t,\nu)$ for the Gaussian kernel $k_G$ and the empirical covariance $k_E$.]{\includegraphics[width=0.45\textwidth]{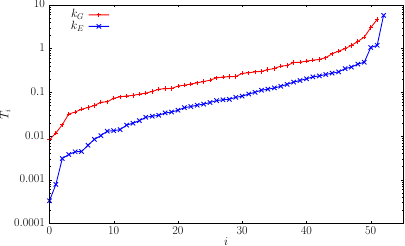}\label{fig:Lyapunov_global}}
    \hfill
    \subfigure[Modal Lyapunov times associated with individual Koopman modes. Comparison with expression in $L^2_\mathbb{C}(\Omega_t,\nu)$ and in $\mathcal{H}_t$.]{\includegraphics[width=0.45\textwidth]{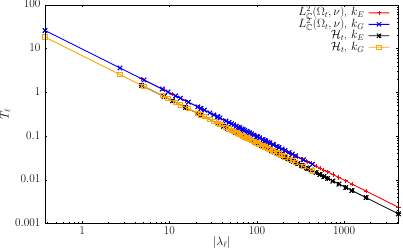}\label{fig:Lyapunov_modal}}
    \caption{Lyapunov times computed using the Gaussian kernel $k_G$ and the empirical covariance kernel $k_E$.}
    \label{fig:Lyapunov}
\end{figure}

In figure~\ref{fig:Lyapunov_modal}, the Koopman modal Lyapunov times are shown as a function of the absolute value of the Koopman eigenvalue.
The Lyapunov times expressed with the RKHS norm equation~\eqref{eq:modal_Lyap} is directly $T_\ell=1/|\lambda_\ell|\ln C$, which explains the perfect decaying slope in logarithmic scale.
We observe that the computation of the modal exponents in $L^2_\mathbb{C}(\Omega_t,\nu)$ leads simply to a vertical shift of the curves due to the scaling factor induced by the change of norm.
We can notice that with the Gaussian kernel $k_G$, slower Koopman eigenfunctions are evaluated compared to $k_E$, which leads to longer Lyapunov times.
This explains longer predictions, which will be presented in section~\ref{sec:reconstruction_filtering}. As it will be seen, the modal exponents and their associated Lyapunov times will reveal very useful, as they will allow us to filter out sequentially the contributions of the Koopman eigenfunctions beyond their predictability time.  

\subsection{Reconstructions}
\label{sec:reconstruction_filtering}
The ability of trajectories reconstruction is illustrated in figure~\ref{fig:reconstruction_Gaussian}, where a full trajectory is estimated from the embedding of a new initial condition in the RKHS family.
We compare the anomaly (figure~\ref{fig:reconstr_anomaly}) of a given new ensemble member at an advanced time (3.7 days), with reconstructions using the Gaussian kernel {$k_G$} (figure~\ref{fig:reconstr_Gaussian_1}).
Figure~\ref{fig:best_reconstruction} shows the best possible reconstruction as a linear combination of ensemble members, with regards to the $L^2(\Omega_x)$-norm of the projection error.
\begin{figure}
    \centering
    \subfigure[Test ensemble member]{
        \includegraphics[height=0.3\textwidth]{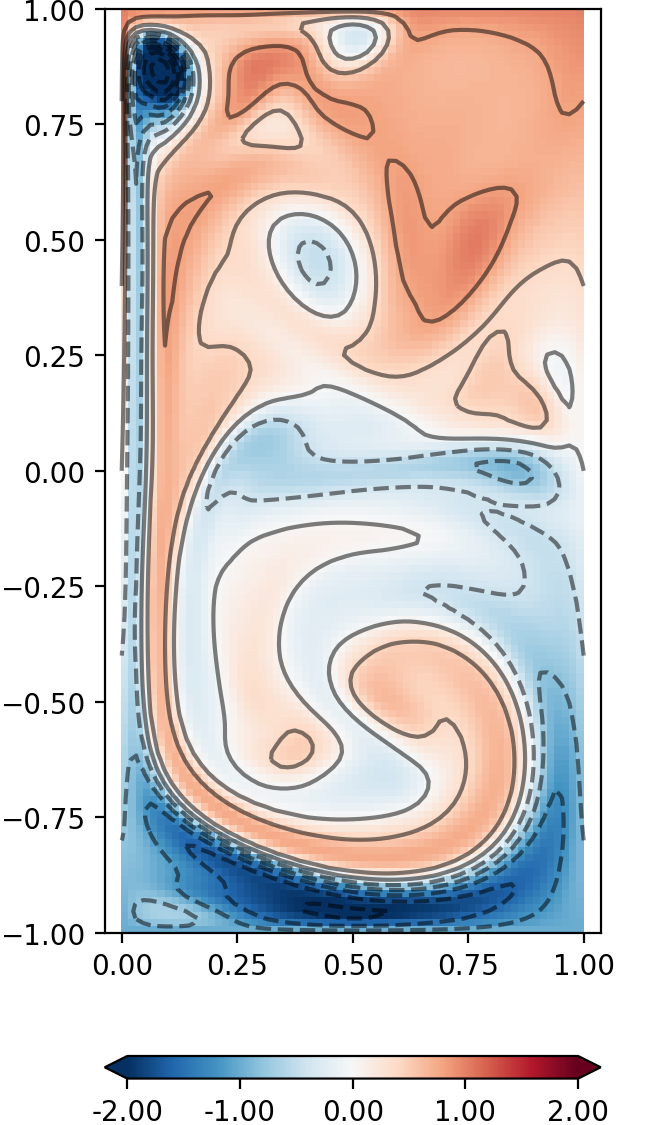}
        \label{fig:reconstr_anomaly}
    }
    \subfigure[$L^2$norm projection]{
        \includegraphics[height=0.3\textwidth]{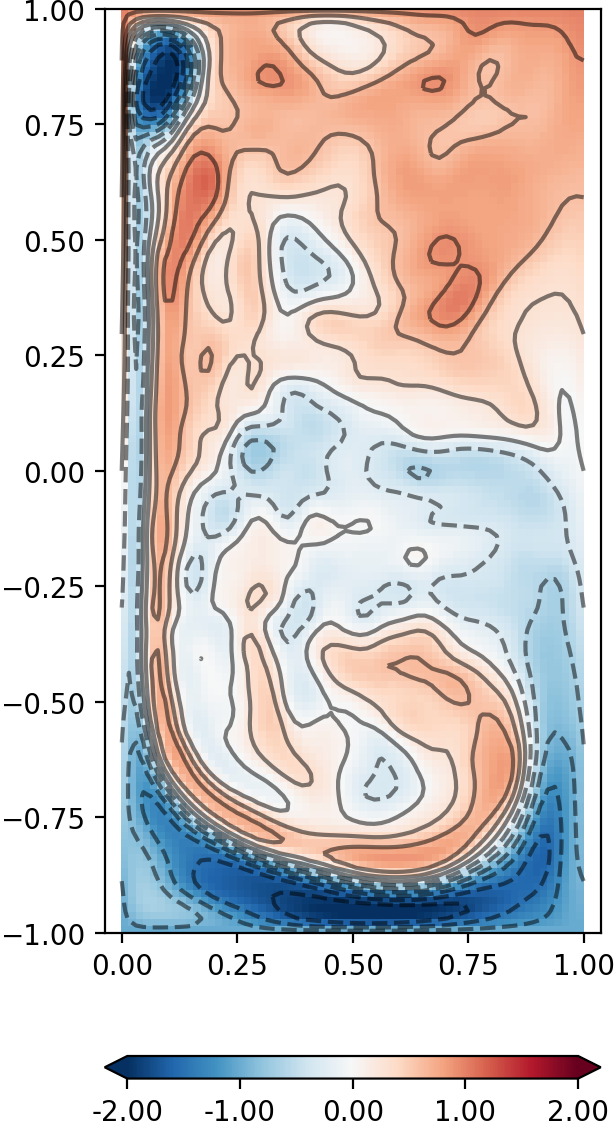}
        \label{fig:best_reconstruction}
    }
    \subfigure[Gaussian kernel]{
        \includegraphics[height=0.3\textwidth]{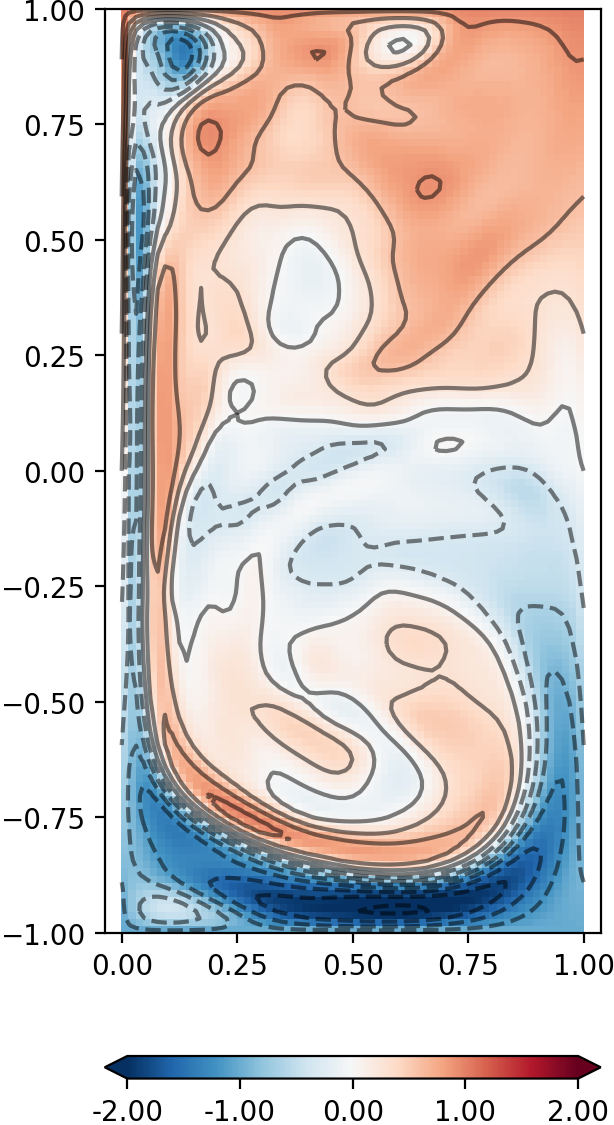}
        \label{fig:reconstr_Gaussian_1}
    }
  \caption{Comparison between the reference potential vorticity of a test ensemble member at $t=0.07$
             (corresponding to $3.7$ days of prediction)
              (a) and the predicted reconstruction using the Gaussian kernel (${\ell_G}=1$) (c). The best possible reconstruction using a linear combination of ensemble members with respect to the $L^2$ norm is shown in (b).}
    \label{fig:reconstruction_Gaussian}
\end{figure}

Reconstruction errors are quantified by averaging the square of the $L^2(\Omega_x)$-norm over the whole test ensemble composed of $100$ new members (i.e. not used in the training phase).
The standard deviations of the error are also displayed on this figure.
The time evolution of this error is presented in figure~\ref{fig:errors-1}. It is compared 
to the projection error defined as the minimal linear combinations of ensemble members \AM{with a constant-in-time coefficient} w.r.t. $L^2(\Omega_x)$-norm. The reconstruction error performed by the ensemble average is also shown. This error is obviously high for small times, but estimations are difficult to be more precise than the ensemble average estimation once the predictability time of the system has passed. 
The reconstructions performed with $k_E$, grow and overtake the ensemble average reconstruction around $t=0.2$ (i.e. $18.4$ days).
On this basis, the use of $k_G$ (${\ell_G}=1.0$) leads to better results.
Indeed, despite a slight reconstruction penalty at the initial time due to a stronger smoothing (see figure~\ref{fig:reconstruction_Gaussian}), the reconstruction errors become quickly significantly lower, and the error never reaches the ensemble anomaly error.
This small loss of performance in $L^2(\Omega_x)$-norm at the initial time allows to have a gain in robustness in terms of long-time prediction.  \AD{Let us note that the lengthscale parameter $\ell_G$ is sensitive. In this work, we fixed it a priori. However, techniques that aim at learning such a parameter have been proposed in \cite{Hamzi-Phys-D-I-21,Owhadi-JCP-19}.}
\begin{figure}
    \centering
    \includegraphics[width=0.5\textwidth]{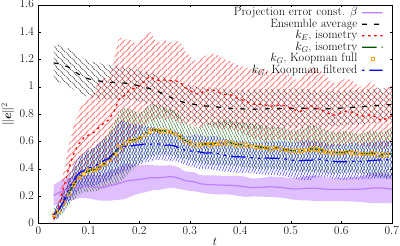}
    \caption{Mean estimation error using an empirical covariance kernel (short red dashes), a Gaussian kernel (green dash-dots), the reconstructed Gaussian kernel using all Koopman eigenfunctions (orange squares) and the Lyapunov filtered kernel (blue dot-dashes). ``Ensemble average'' (long black dashes) is the estimation error of the ensemble average. The projection error (purple solid line) corresponds to  the optimal \GT{constant-in-time} linear combination of ensemble members w.r.t. the  $L^2(\Omega_x)$-norm. The vertical width of the stripes regions over the mean error is equal to the standard deviation of the error. One dimensionless time $t$ corresponds to $92$ days. Errors are averaged over the $100$ members of the test ensemble.}
    \label{fig:errors-1}
\end{figure}

The reconstructions based on \eqref{eq:kernel-inversion} are valid for all times.
Koopman isometry warrants the use of the same linear combination for the whole trajectory. Nevertheless, in practice the ensemble are  of limited (small) size, and the eigenpairs of the infinitesimal generator close to zero (representing almost stationary functions) are difficult to estimate accurately. Modal predictability time associated to Lyapunov exponents \eqref{norm-delta-x} can be used to robustify the forecast of new initial conditions. The kernel can be expanded in the Koopman eigenfunctions basis as
\begin{equation}
        k(\XXi,\,\XXj)=\sum_{\AD{\ell}=1}^\infty \psi_\ell^t(\XXi)\overline{\psi_\ell^t}(\XXj)
                    =\sum_{\AD{\ell}=1}^\infty \psi_\ell^0(\XX_0^{(i)})\overline{\psi_\ell^0}(\XX_0^{(j)}).
    \label{eq:kernel-expansion}
\end{equation}
With the numerical procedure proposed here, the $p$ evaluated Koopman eigenfunctions allow to perform an exact reconstruction of the kernel.
An appealing potential of equation~\eqref{eq:kernel-expansion} for faithful predictions consists in filtering the kernel by switching-off components that have passed their predictability time.
We set this time from the finite-time Lyapunov exponents \eqref{norm-delta-x}  as: $T_{\ell}=|\lambda_{\ell}|^{-1}\ln C$, with, $C=10^3$.
The filtered kernel $\widetilde{k}(\XXi,\,\XXj)$ and its associated $p\times p$ kernel matrix $\widetilde{K}$ allows us to define  reconstructions, with the modified linear coefficients $\widetilde{\bm\beta}=\widetilde{K}^{+}\widetilde{K}\bm \beta $.
The superscript $\bullet^{+}$ stands for the Moore--Penrose pseudo-inverse.
It can be remarked that the operator $\widetilde{K}^+\widetilde{K}$ is a projector performing filtering directly on the vector of coefficients $\bm{\beta}$.

In figure~\ref{fig:errors-1}, it is assessed that the kernel is exactly reconstructed by the Koopman eigenfunctions 
(superimposition of the green dash-dots curve and the orange squares).
When the kernel is Lyapunov filtered, the reconstruction almost fits at short time the previous predictions (superimposition of the blue dot-dashes curve, the green dash-dots curve and the orange squares).
For longer time it significantly  reduces the error and clearly separates the spread of the ensemble from the ensemble average estimation. 
For very long times (not shown in the figure), filtering acts on all modes, which reduces to the ensemble average estimation.
This strategy takes advantage of the modal predictability time. Slow eigenfunctions are used for long time predictions, and fast ones only for short time forecasts. As a final remark, we note that very fast modes (with large Koopman eigenvalues modulus) does not seem to contribute significantly to the reconstruction, since at short times the filtered and full reconstructions match. 

\subsection{Data assimilation}
A possible application of the proposed procedure is ensemble-based data assimilation.
In the context of satellite data, ensemble optimal interpolation (EnOI) is widely used to reconstruct spatial fields \citep{traon1998,ducet2000}. It is for instance employed in operational gridded altimetry reconstructions \citep{aviso}.
EnOI consists in expressing at a given time the solution as a linear combination of ensemble members, which  minimises an objective cost function. As shown in \cite{ubelmann2016,leguillou2021}, incorporating information from other times together with an adequate dynamical model significantly improves the estimations.
This is obviously at the price of forecasting the whole ensemble by the dynamics.

Koopman isometry provided by \AD{the RKHS family} allows us to incorporate time-series of observations without any new simulation of the dynamics.
The spatial reconstruction is embarrassingly simply given as the constant-in-time best linear combination of ensemble members trajectories.   

Based on the same ensemble as in the previous sections, we propose two demonstration tests considering \textit{i)} a time-series of extremely sparse (two satellite swaths like) observations \textit{ii)} an even sparser case with the same swaths but at a single time only. To that end, we consider a synthetic observation mimicking two along tracks measurements of an orbiting satellite, idealized by two vertical lines. In the first case these lines are assumed to be known every $0.9$ days $\Delta t=0.01$, while in the second case only day $5.5$ is available  $(t=0.06)$.
The observations are perturbed by a centered Gaussian white noise with the value of the selected standard deviation equal to 10\% of the signal root-mean-square. These 2 benchmarks are carried out for each of the 100 members of the test ensemble. As previously, the  members of the test ensemble have not been used in the learning phase. 

In this numerical experiment, we search for the constant in time coefficients $\bm{\beta}$ (thanks to isometry), which minimises
\begin{equation}
    \mathcal{J}=\frac{1}{2T}\int_0^T\|\mathbb{H}(\hat{X}_t)-\mathcal{Y}\|_{R^{-1}}^2\,\mathrm{d}t +\frac{\alpha^2}{2}\|\bm{\beta}{-\bar{\bm{\beta}}}\|^2
\text{ s.t. }\mathbb{H}(\hat{X}_t)=\sum_{i=1}^p\beta_i\mathbb{H}(\XXi),
\end{equation}
where $\mathbb{H}$ is the observation operator associated with the satellite measurements $\mathcal{Y}$, $\hat{X}_t$ is the estimated state{, and $\bar{\beta}=\frac{1}{p}(1,\dots,1)^T$}.
The observation operator is assumed to have its range in the RKHS family and hence provides observables belonging to the RKHS family, which can be interpolated using the \AD{RKHS family} kernels.
It can be noticed that the expression of the constraint is due to kernel interpolation, and not to some linear property of $\mathbb{H}$.
This treatment is similar to the one performed for instance in discrete empirical interpolation (DEIM) \citep{chaturantabut2010}.
In the case of a single observation, the time integral is dropped, and the cost function corresponds to a standard EnOI cost function. The matrix $R$ is the observation covariance matrix, defined by the time average of the empirical covariance, localized by performing a Schur product with matrix $\exp{(-r_{ij}^2/\ell_{loc}^2)}$, where $r_{ij}$ is the distance between measurements $i$ and $j$, and $\ell_{loc}=0.05$.
The regularization parameter has been set to $\alpha=10$.
For this problem we selected the simple empirical kernel, defined from a non localized potential vorticity matrix as feature maps of the ensemble. Working with the isometry property justifies the estimation of constant in time coefficients for the linear combination. This simplistic method could be easily extended by introducing more elaborated kernels, Koopman eigenfunctions representation to project any system's observable and forecast them, as well as the Lyapunov filtering schme presented in section~\eqref{sec:reconstruction_filtering}. 

Figure~\ref{fig:sat} shows the potential vorticity trajectory (panel~a), the single-time observation (panel b), the trajectory reconstruction associated to this single observation (panel c) and the one obtained from the swaths time-series (panel d).
Three consecutive times spaced of $0.02$ (1.8 days) are displayed.
\begin{figure}[htbp]
    \centering
    \includegraphics[width=0.7\textwidth]{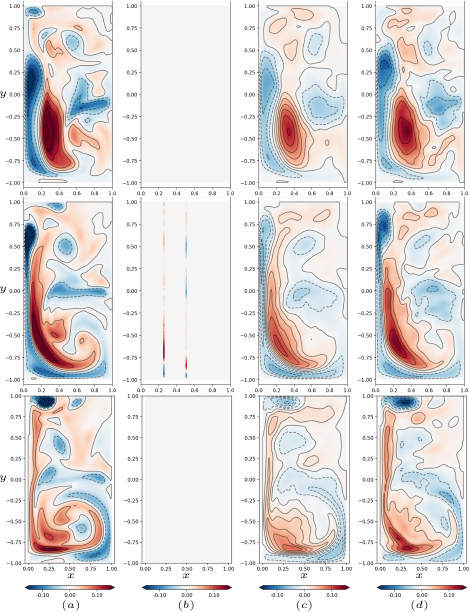}
    \caption{Estimation of potential vorticity at times $t=0.04$ (top), $t=0.06$ (middle) and $t=0.08$ (bottom).
    Panel (a): True anomaly.
    Panel (b): Observation associated with the single-time swath observation.
    Panel (c): Estimation knowing the two satellite swaths only at day 5.5 $(t=0.06)$.
    Panel (d): Estimation knowing the two swaths every 0.9 days $(\Delta t=0.01)$.
    }
    \label{fig:sat}
\end{figure}
We can see that despite extremely sparse measurements, a single observation already provides a fairly good estimation, together with forward and backward in time forecasts based only on the estimated linear combination of the ensemble members (i.e. without any simulation). As expected, a time-series of swaths improves further the performances.

In figure~\ref{fig:errors-2}, the average of the  errors obtained over all $100$ members of the test ensemble are plotted.
\begin{figure}[htbp]
    \centering
    \includegraphics[width=0.5\textwidth]{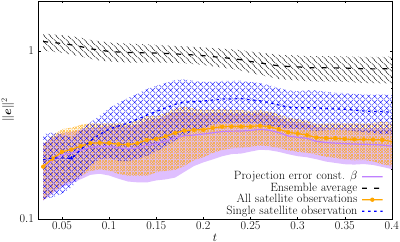}
    \caption{Estimation errors averaged over the whole $100$ members of the test ensemble. The best reconstruction error with constant coefficients (\GT{purple}), estimation by the ensemble average (black), estimation with the time-series of satellite swaths observations (yellow) and only at day $5.5$ $(t=0.06)$ (blue).}
    \label{fig:errors-2}
\end{figure}
As a reference, the best possible reconstruction with respect to the $L^2$ norm with constant in time linear coefficients $\bm{\beta}$ is shown, as well as the estimation by the ensemble average.
The error performed  with the time-series of satellite observations is almost optimal in terms of average and standard deviation.
We note that for the second test-case, near the observation time, the error is lower than  the one obtained for the whole time-series. Compared to the time series of observations it may be interpreted as an over-fitting at the observation time since only a single observation is used to infer the ensemble members' coefficients. As expected, this error grows, but reasonably, as the estimation is propagated forward and backward in time. \AD{As a final remark, let us emphasize that considering time series of the sparse observables (in space and time) such as the satellite swaths considered in this study, which constitute typical altimetric observations of the ocean surface, would require an extremely long series to reconstruct the Koopman eigenpairs (see \cite{Zhen-et-al-2022}) at the scale of a large basin such as the North Atlantic basin. For geophysical flows this can be highly problematic in the nowadays state of the fast climate change we are observing. The proposed technique, based on an ensemble of realizations instead of time series, constitutes a promising  alternative to face these difficulties. We assessed here the technique with an idealistic dynamics model and more realistic models need to be tested. This will be the subject of future works.}

\section{Concluding remarks}
We proposed a theoretical setting to estimate the Koopman eigen-pairs of compact dynamical systems in embedding the system in a RKHS family. Beyond the Koopman operator spectral representation, we showed that \AD{the RKHS family} provides access to finite time  Lyapunov exponents, as well as to the tangent linear dynamics. These features enabled us to propose very simple strategies for data assimilation of very sparse observations in time and space. Obviously more efficient data assimilation techniques could be used instead of the simple regression strategy explored here. As  \AD{the RKHS family} boils down to a linear setting it should be perfectly adapted to ensemble Kalman filters. Extension of classical ensemble data assimilation filters on \AD{the RKHS family} are currently in development and will be the subject of future publications.  The technique lies on the sequence of time-evolving RKHS to represent the ensemble dynamics and on the superposition principle attached to \AD{the RKHS family}. This study provides a theorem-based setting to estimate the Koopman eigenvalues and eigenfunctions, as well as exact ensemble-based representation of the tangent linear dynamics and of  the flow Lyapunov exponents. 
This work relies on the assumption of an invertible deterministic dynamical system defined on a compact manifold.
In future studies, we will focus  on the generalisation to stochastic fluid flow dynamics with transport noise models \citep{Memin14,Bauer-et-al-JPO-20, Chapron-18, Debussche-2023} and the introduction of more sophisticated ensemble data assimilation techniques \cite{Dufee-STUOD-22,Dufee-QJRMS-22}.

\section*{Acknowledgments}
The authors acknowledge the support of the ERC project 856408-STUOD and the French National program LEFE (Les Enveloppes Fluides et l'Environnement).

\appendix
\section{Reproducing kernel Hilbert spaces}
\label{RKHS}

In the following appendices we provide complementary elements for the detailed proofs of the results listed in the paper. We first recall some classical theorems on RKHS and present their adaptations to our functional setting. In particular we present useful results on differentiability of RKHS functions and reproducing kernels.
In the following $L^2(\Omega_x)$ stands for the Hilbert space of Lebesgue measurable $\mathbb{R}^d$-valued of square-integrable functions on the set $\Omega_x\subset \mathbb{R}^{p}$ $p=2,3$ endowed at a fixed time, $t$, with the inner product
\[
\langle f \, , g \rangle_{L^{2}(\Omega_x)} = \int_{\Omega_x} f(y) \, \,g(y) \;\mathrm{d}y.
\]
\noindent

Let $E\subset L^2(\Omega_x)$ denote a  compact subset equipped with the norm of  $L^2(\Omega_x)$, by the Aronszajn Theorem, \cite{Aronszajn-50}, the reproducing kernel Hilbert spaces (RKHS) of $\mathbb{C}$-valued functions defined on  $E$ are uniquely characterized through a reproducing kernel $k : E\times E \mapsto \comp$. This kernel is a Hermitian, positive definite  function. In the following we will {assume} furthermore it is {$C^{{(}1,1{)}}(E\times E)$}, which means here it is twice differentiable with respect to its two variables. The {(Gateaux)} derivative with respect to the first  and second variable  in the direction $v\in E$ are denoted $\partial_v^{\tiny{(1)}} k$, and  $\partial_v^{\tiny{(2)}} k$, respectively.  The reproducing kernel defines an evaluation function $k(\bcdot,x) \in \HS$ with a reproducing property (Def. \ref{def1}): 
\[\forall x \in E, \forall f \in \HS, \; \bigl\langle f ,  k(\, \bcdot\, , \, x) \bigr\rangle_{\HS} = f(x).\]

The RKHS $\HS$ corresponds to a space \AM{of functions} with a strong uniform convergence property. The elements of $\HS$ are in addition continuous functions and we have for all $f\in \HS$ that
\begin{equation}\label{NormeRKHSnormeUniforme}
    \forall \, x \in E \qquad \left| f(x) \right| \, \leq \, \sup_{y\in E} k(y,y)^{1/2} \; \| f \|_{_\HS}. 
\end{equation}

\subsection{Functional description of $\HS$}
\label{sec1-RKHS}
We recall in the following the Mercer's theorem which gives a functional description of the RKHS $\HS$ and of its associated reproducing kernel $k$.\\
\noindent
 Let $L^{2}_{\mathbb{C}}(E , \nu)$ be the space of measurable complex valued functions defined on the compact set E for which the square modulus is integrable with respect to measure $\nu$, $L^{2}_{\mathbb{C}}(E , \nu)$  is a Hilbert space equipped with the inner product $\langle \bcdot \, , \, \bcdot \rangle_{{L^{2}_{\mathbb{C}}(E , \nu)}}$ given for $f$ and $g \in L^{2}_{\mathbb{C}}(E , \nu)$ by 
\[
\langle f \, , \, g \rangle_{{L^{2}_{\mathbb{C}}(E , \nu)}} := \int_{E} f(y) \, \overline{\,g(y)\,} \; \nu(\mathrm{d}y),
\]
so that it is linear in its first argument and antilinear in its second. Let us first define ${\cal L}_{k}: L_{\mathbb{C}}^{2}(E,\nu)\mapsto L_{\mathbb{C}}^{2}(E,\nu)$ the integral operator with kernel ${ k\in {C^{{(}1,1{)}}}(E\times E)}$ as: 
\begin{equation}
\bigl( {\cal L}_{k} f\bigr)(x)= \int_E k(x,y) f(y) \,\nu(\mathrm{d}y).
\end{equation}
We give below a useful remark relating the integral operator $\mathcal{L}_k$ and the inner product $\langle \bcdot \, , \, \bcdot \rangle_{L^{2}_{\comp}(E , \nu)}$.
\begin{remark}[Links between $\mathcal{L}_k$ and $\langle \, \bcdot \, , \, \bcdot \, \rangle_{L^{2}_{\comp}(E , \nu)}$]\label{ReecritureOpIntegral} As $k$ is Hermitian, it can be noticed that for all $x\in E$,  $\bigl( {\cal L}_{k} f\bigr)(x)= \int_E f(y) \, \overline{k(y , x)} \,\nu(\mathrm{d}y)$, and we have thus $\bigl( {\cal L}_{k} f\bigr)(x) = \langle f \, , \, k(\,\bcdot \, , \,  x) \rangle_{L^{2}_{\comp}(E , \nu)}$.
\end{remark}
This kernel integral operator can be shown to be well-defined and compact (by Arzel\`{a} Ascoli theorem). It is immediate to see that as $k$ is positive definite  and Hermitian, ${\cal L}_{k}$ is positive and self-adjoint on $L_{\mathbb{C}}^{2}(E,\nu)$. By the spectral theorem of compact self-adjoint operators, there exists an orthonormal basis $(\varphi_i)_i$ of $L_{\mathbb{C}}^{2}(E,\nu)$ consisting of eigenvectors of ${\cal L}_{k}$ with  corresponding decreasing  real eigenvalues $(\mu_i)_i$ : $\mu_i \geq \mu_{i+1}\geq 0$ for all integers $i$. These results are proved in \cite{Cucker-Smale-01}. We consider in the following that  $\mu_i \neq 0$ for all $i$ and $\mathcal{L}_k$ is injective. This denseness assumption could be relaxed (see remark \ref{ValPropreNulles}) but simplifies {significantly} the presentation. The eigenpairs $(\mu_i , \varphi_i)_{i\in\mathbb{N}}$ enables to formulate the following theorems (with proofs available for instance in  \cite{Cucker-Smale-01} and \cite{Konig86}). Elaborating on remark \ref{ReecritureOpIntegral} and the kernel eigenfunctions (often referred to as Mercer's eigenfunctions due to theorem \ref{MercerDecomposition})  we get immediately relations between the eigenfunction evaluation and the $L^2(E,\nu)$ inner product.
\begin{remark}[Links between $(\varphi_i)_{i\geq 0}$ and $\langle \bcdot \, , \, \bcdot \rangle_{L^{2}_\comp(E , \nu)}$]\label{CalculOpIntegral1}
For all $x\in E$, by {R}emark \ref{ReecritureOpIntegral}, we note that 
\begin{align*}
&\langle \varphi_i\, , \, k(\, \bcdot \, , \, x )  \rangle_{L^{2}_{\comp}(E , \nu)} \; = \; \mathcal{L}_k(\varphi_i)(x) \; =  \; \mu_i \, \varphi_i(x) \qquad \text{and}\\&\langle k(\, \bcdot \, , \, x ) \, , \, \varphi_i \rangle_{L^{2}_{\comp}(E , \nu)} \; = \; \overline{\langle \varphi_i\, , \, k(\, \bcdot \, , \, x )  \rangle}_{L^{2}_{\comp}(E , \nu)} \; = \; \mu_i \, \overline{\varphi_i(x)}. 
\end{align*}
\end{remark}

\begin{remark} As $k\in {C^{{(}1,1{)}}}(E\times E)$, the range of $\mathcal{L}_k$ is included in $C^{1}(E)$ and the eigenfunctions of $\mathcal{L}_k$ are $C^{1}(E)$.
\end{remark}

\begin{theorem}[Mercer decomposition]\label{MercerDecomposition}
For all $x$ and $y\in E$, $\displaystyle k(x,y) = \sum_{i=0}^{\infty} \mu_i \, \varphi_{i}(x) \; \overline{\varphi_{i}(y)}$ where the convergence is absolute (for each $x , y \in E \times E$) and uniform on $E \times E$. 
\end{theorem}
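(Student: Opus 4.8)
The plan is to run the classical argument for Mercer's theorem, using only that $k$ is continuous and Hermitian on the compact set $E$ (the hypothesis $k\in C^{(1,1)}(E\times E)$ is invoked merely to know, via the preceding remark, that each eigenfunction $\varphi_i$ is continuous) together with the facts already recalled: $\mathcal{L}_{k}$ is compact, positive, self-adjoint and injective, and $(\varphi_i)_i$ is an orthonormal basis of $L^{2}_{\comp}(E,\nu)$ with eigenvalues $\mu_i>0$. Throughout I write $M:=\sup_{z\in E}k(z,z)$, which is finite since $z\mapsto k(z,z)$ is continuous on the compact $E$ (this is the constant already appearing in \eqref{NormeRKHSnormeUniforme}).

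First I would introduce the residual kernels
\[
r_n(x,y):=k(x,y)-\sum_{i=0}^{n}\mu_i\,\varphi_i(x)\,\overline{\varphi_i(y)},
\]
which are continuous and Hermitian and whose integral operator is $\mathcal{L}_{r_n}=\mathcal{L}_{k}\circ P_n^{\perp}$, with $P_n^{\perp}$ the orthogonal projector of $L^{2}_{\comp}(E,\nu)$ onto $\overline{\mathrm{Span}}\{\varphi_i:i>n\}$; since the $\varphi_i$ with $i>n$ are eigenfunctions with $\mu_i\ge 0$, this operator is positive. By the standard fact that a continuous Hermitian kernel whose integral operator against the (full‑support) measure $\nu$ is positive must itself be positive semi‑definite --- tested against normalised indicators of shrinking balls, which is where the density/injectivity hypothesis on $\nu$ enters --- each $r_n$ is a positive semi‑definite kernel. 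In particular $r_n(x,x)\ge 0$, i.e.
\[
\sum_{i=0}^{n}\mu_i\,|\varphi_i(x)|^{2}\ \le\ k(x,x)\ \le\ M\qquad\text{for all }x\in E,\ n\ge 0 .
\]
Letting $n\to\infty$ and using Cauchy--Schwarz, $\sum_{i\ge 0}\mu_i\,|\varphi_i(x)|\,|\varphi_i(y)|\le M$ for every $(x,y)$, so the series $\sum_i\mu_i\varphi_i(x)\overline{\varphi_i(y)}$ converges absolutely at each point of $E\times E$; call its sum $\kappa(x,y)$. Identifying $\kappa$ with $k$ is then done in $L^{2}$: by Remark~\ref{CalculOpIntegral1} and the Hermitian symmetry of $k$, the $L^{2}_{\comp}(E,\nu)$‑Fourier coefficients of the slice $k(\cdot,x)$ are precisely $\mu_i\,\overline{\varphi_i(x)}$; since $\sum_i\mu_i^{2}|\varphi_i(x)|^{2}\le M\sup_i\mu_i<\infty$, the partial sums $\sum_{i\le n}\mu_i\overline{\varphi_i(x)}\varphi_i$ converge in $L^{2}_{\comp}(E,\nu)$ to $k(\cdot,x)$ while converging pointwise to $\kappa(\cdot,x)$, whence $\kappa(\cdot,x)=k(\cdot,x)$ $\nu$‑a.e., and so $\kappa=k$ $\nu\otimes\nu$‑almost everywhere on $E\times E$.

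It remains to upgrade this to an everywhere, uniform statement. The sequence $s_n(x):=\sum_{i\le n}\mu_i|\varphi_i(x)|^{2}$ is continuous, non‑decreasing in $n$, and bounded above by $k(x,x)$; the delicate point --- the real content of Mercer's theorem, and the step I expect to be the main obstacle --- is that its pointwise limit equals $k(x,x)$ for \emph{every} $x\in E$, not merely $\nu$‑almost everywhere. I would obtain this from the trace identity $\int_E k(x,x)\,\nu(\mathrm{d}x)=\mathrm{Tr}\,\mathcal{L}_{k}=\sum_i\mu_i$ (valid for continuous positive kernels), which gives $\int_E\bigl(k(x,x)-\lim_n s_n(x)\bigr)\nu(\mathrm{d}x)=0$ and hence equality $\nu$‑a.e., and then from the semicontinuity of $\lim_n s_n$, the continuity of $k$ on the diagonal and the full support of $\nu$ to promote it to all of $E$ --- exactly the step carried out carefully in \citep{Cucker-Smale-01,Konig86}, which cannot be reduced to a routine computation. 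Granting it, $s_n$ is a monotone sequence of continuous functions on the compact $E$ converging pointwise to the continuous function $x\mapsto k(x,x)$, so Dini's theorem gives $\sup_{x\in E}r_n(x,x)\to 0$. Finally, the positive semi‑definiteness of $r_n$ yields $|r_n(x,y)|^{2}\le r_n(x,x)\,r_n(y,y)\le M\sup_{z\in E}r_n(z,z)$, so $\sup_{E\times E}|r_n|\to 0$: the series converges uniformly on $E\times E$, its sum $\kappa$ is therefore continuous, and being equal to the continuous function $k$ off a $\nu\otimes\nu$‑null set (with $\nu$ of full support) we conclude $\kappa\equiv k$, which is the asserted decomposition with both pointwise‑absolute and uniform convergence.
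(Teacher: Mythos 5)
The paper does not prove this result: it cites \cite{Cucker-Smale-01} and \cite{Konig86} and moves on, so there is no in-text proof to compare against and your sketch must stand on its own. The skeleton you lay out (positivity of the residual kernels $r_n$, the diagonal bound $\sum_{i\le n}\mu_i|\varphi_i(x)|^2\le k(x,x)$, Cauchy--Schwarz for absolute convergence, $L^2$ identification of the slices, Dini for uniformity) is the classical one and is sound as far as it goes, but the step you yourself flag as the main obstacle — promoting $\sum_i\mu_i|\varphi_i(x)|^2 = k(x,x)$ from $\nu$-a.e.\ to every $x$ — is handled by a route that does not work. First, the trace identity $\int_E k(x,x)\,\nu(\mathrm{d}x)=\operatorname{Tr}\mathcal{L}_k=\sum_i\mu_i$ is not available independently of Mercer's theorem for continuous positive kernels: it is ordinarily \emph{deduced} from the Mercer expansion by integrating the diagonal series, and for merely continuous trace-class kernels without positivity it can fail outright. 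Invoking it here is therefore circular unless you supply an independent proof, which you do not. Second, even granting the trace identity (so that $S(x):=\sum_i\mu_i|\varphi_i(x)|^2$ equals $k(x,x)$ $\nu$-a.e.), the semicontinuity argument goes the wrong way: $S$, as a supremum of the continuous partial sums $s_n$, is \emph{lower} semicontinuous, and a lower semicontinuous function bounded above by the continuous $k(x,x)$ and equal to it $\nu$-a.e.\ need not equal it everywhere — lower semicontinuity permits precisely the downward dips on $\nu$-null sets that you must rule out.

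The repair, and what the cited references actually do, uses a fact you already have in hand. For each \emph{fixed} $x$ the tail of the convergent positive series $\sum_i\mu_i|\varphi_i(x)|^2$ goes to zero, so by Cauchy--Schwarz together with the uniform bound $\sum_i\mu_i|\varphi_i(y)|^2\le M$ the partial sums $\sum_{i\le n}\mu_i\varphi_i(x)\overline{\varphi_i(y)}$ are uniformly Cauchy in $y$. Hence $\kappa(x,\cdot)$ is \emph{continuous} on $E$. Two continuous functions equal $\nu$-a.e.\ on $E$ with $\nu$ of full support are equal everywhere, so $\kappa(x,y)=k(x,y)$ for every $y$, in particular at $y=x$. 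From there your Dini argument and the final Cauchy--Schwarz step go through exactly as you wrote them. Replacing the trace-identity/semicontinuity paragraph by this argument closes the gap; you should also state explicitly that $\nu$ is assumed to have full support in $E$, since injectivity of $\mathcal{L}_k$ alone does not guarantee it and it is used both in deducing pointwise positivity of $r_n$ and in the a.e.-to-everywhere step.
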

In particular, the series $\sum \mu_{i}$ is convergent as 
\begin{equation}
\label{SommeMui}
\displaystyle
\int_E k(y,y)\nu(\dif y) = \int_E \sum_{i=0}^{\infty} \mu_{i} \varphi_{i}(y) \; \overline{\varphi_{i}(y)} \nu(\dif y)= \sum_{i=0}^{\infty} \mu_{i}  < \infty .
\end{equation}

\begin{theorem}[Mercer representation of RKHS]\label{CaracterisationRKHS}
The RKHS $\left(\HS, \langle \, \bcdot \, , \, \bcdot \, \rangle_{_\HS} , \|\bcdot\|_{\HS} \right) $ is the Hilbert space characterized by 
\[
\displaystyle \HS = \bigl\{ f\in L_{\comp}^{2}(E,\nu), \; f=\sum_{i=0}^{\infty} a_i \varphi_i \; : \; \sum_{i} \dfrac{\, |a_i|^{2}}{\mu_{i}} < \infty\bigr\},  
\]
\[
\bigl\langle f,g \bigr\rangle_{\HS} = \sum_{i=0}^\infty \frac{1}{\mu_i} \,\langle f \, , \, \varphi_i\rangle_{L^2_{\comp}(E , \nu)} \,  \overline{\langle g \, , \, \varphi_i\rangle}_{L^2_{\comp}(E , \nu)} \quad
\] \[\text{and}\quad \|f\|^2_{_\HS} = \sum_{i=0}^{\infty} \frac{|\langle f \, , \, \varphi_i\rangle_{L^2_{\comp}(E , \nu)}|^2}{\mu_i} .
\]
\end{theorem}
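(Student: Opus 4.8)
The plan is to prove the theorem by exhibiting the candidate space explicitly, checking it is a Hilbert space of genuine functions on $E$ with continuous point evaluations and reproducing kernel $k$, and then invoking the uniqueness part of the Moore--Aronszajn theorem. Concretely I would set
\[
\mathcal{G} := \Bigl\{ f = \sum_{i=0}^{\infty} a_i \varphi_i \; : \; (a_i)_i \subset \comp,\ \sum_i \tfrac{|a_i|^2}{\mu_i} < \infty \Bigr\},
\qquad
\langle f , g\rangle_{\mathcal{G}} := \sum_i \tfrac{1}{\mu_i}\,\langle f,\varphi_i\rangle_{L^2_\comp(E,\nu)}\,\overline{\langle g,\varphi_i\rangle}_{L^2_\comp(E,\nu)} .
\]
Since the $\mu_i$ are nonincreasing, $\sum_i |a_i|^2 \le \mu_0 \sum_i |a_i|^2/\mu_i < \infty$, so each such series converges in $L^2_\comp(E,\nu)$ and $\mathcal{G}\subset L^2_\comp(E,\nu)$; moreover $\langle f,\varphi_i\rangle_{L^2_\comp(E,\nu)} = a_i$. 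The map $f\mapsto (a_i/\sqrt{\mu_i})_i$ is an isometric isomorphism of $\mathcal{G}$ onto $\ell^2$, which gives completeness, and $\{\sqrt{\mu_i}\,\varphi_i\}_i$ is then an orthonormal basis of $\mathcal{G}$.

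Next I would upgrade the $L^2$-classes in $\mathcal{G}$ to honest functions on $E$. By the remark preceding Theorem \ref{MercerDecomposition} each $\varphi_i$ has a continuous representative (the range of $\mathcal{L}_k$ lies in $C^1(E)$), and for $f=\sum_i a_i\varphi_i\in\mathcal{G}$ the Cauchy--Schwarz inequality gives, for every $x\in E$,
\[
\sum_i |a_i|\,|\varphi_i(x)| \;\le\; \Bigl(\sum_i \tfrac{|a_i|^2}{\mu_i}\Bigr)^{1/2}\Bigl(\sum_i \mu_i\,|\varphi_i(x)|^2\Bigr)^{1/2} \;=\; \|f\|_{\mathcal{G}}\; k(x,x)^{1/2},
\]
where the diagonal identity $\sum_i \mu_i|\varphi_i(x)|^2 = k(x,x)$ is Mercer's theorem (Theorem \ref{MercerDecomposition}). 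Hence $\sum_i a_i\varphi_i(x)$ converges absolutely and defines the canonical continuous representative of $f$, and the bound $|f(x)|\le \sup_{y\in E} k(y,y)^{1/2}\,\|f\|_{\mathcal{G}}$ shows point evaluations are continuous on $\mathcal{G}$ (this is also formula \eqref{NormeRKHSnormeUniforme}).

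Then I would identify $k$ as the reproducing kernel of $\mathcal{G}$. By the Mercer decomposition, $k(\bcdot,x) = \sum_i \mu_i\overline{\varphi_i(x)}\,\varphi_i$ (its $\varphi_i$-coefficient being $\langle k(\bcdot,x),\varphi_i\rangle_{L^2_\comp(E,\nu)} = \mu_i\overline{\varphi_i(x)}$ by Remark \ref{CalculOpIntegral1}), and $\sum_i |\mu_i\overline{\varphi_i(x)}|^2/\mu_i = \sum_i \mu_i|\varphi_i(x)|^2 = k(x,x) < \infty$ by \eqref{SommeMui}, so $k(\bcdot,x)\in\mathcal{G}$. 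For $f=\sum_i a_i\varphi_i\in\mathcal{G}$,
\[
\langle f, k(\bcdot,x)\rangle_{\mathcal{G}} = \sum_i \tfrac{1}{\mu_i}\,a_i\,\overline{\mu_i\overline{\varphi_i(x)}} = \sum_i a_i\,\varphi_i(x) = f(x),
\]
the last equality being the pointwise representative found above. Thus $\mathcal{G}$ is a Hilbert space of functions on $E$ with continuous evaluations and reproducing kernel $k$, so by the uniqueness in the Moore--Aronszajn theorem \cite{Aronszajn-50} one has $\mathcal{G}=\HS$ with identical inner products, and the stated norm formula follows from $\langle f,f\rangle_\HS$ with $\langle f,\varphi_i\rangle_{L^2_\comp(E,\nu)}=a_i$. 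If one prefers to avoid an external appeal, it suffices to show $\mathrm{Span}\{k(\bcdot,x):x\in E\}$ is dense in $\mathcal{G}$ --- if $f\perp_{\mathcal{G}} k(\bcdot,x)$ for all $x$ then $f(x)=\langle f,k(\bcdot,x)\rangle_{\mathcal{G}}=0$, so $f=0$ --- and that the $\HS$-norm and the $\mathcal{G}$-norm coincide on this span, whence the two completions agree.

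The main obstacle is not any single deep estimate but the bookkeeping between $L^2$-equivalence classes and honest functions: one must be certain that the series $\sum_i a_i\varphi_i(x)$ converges pointwise to precisely the value used in the reproducing identity, which is exactly where Mercer's theorem on the diagonal, $k(x,x)=\sum_i\mu_i|\varphi_i(x)|^2$, and the continuity of the $\varphi_i$ are needed. Once this is secured, the $\ell^2$-isometry, completeness, and the reproducing identity are routine verifications.
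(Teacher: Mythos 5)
Your proof is correct. The paper itself does not prove Theorem \ref{CaracterisationRKHS}; it states it as a classical result and defers the proof to \cite{Cucker-Smale-01} and \cite{Konig86}, and your argument — build the candidate space $\mathcal{G}$, verify completeness via the $\ell^2$ isometry, use Mercer's diagonal identity to get absolute (indeed uniform) pointwise convergence and continuous evaluations, check the reproducing property of $k$, and conclude by Moore--Aronszajn uniqueness — is precisely the standard proof found in those references. One trivial slip: the finiteness of $k(x,x)=\sum_i\mu_i|\varphi_i(x)|^2$ at a fixed point follows from Theorem \ref{MercerDecomposition} itself (absolute convergence on the diagonal), not from \eqref{SommeMui}, which only gives the integrated identity $\sum_i\mu_i<\infty$.
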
 
The inner product $\langle \, \bcdot \, , \, \bcdot \, \rangle_{_\HS}$ is sesqui-linear (linear in its first argument and antilinear in its second). The proof of Mercer's theorem warrants that the linear space spanned by $\{ k(\,\bcdot\, , x) : x\in E\}$ is dense in $L^{2}_{\mathbb{C}}(E, \nu)$.
If $f= \sum_{i=0}^{\infty} a_i \varphi_i  \in \HS$ then the series converges absolutely and uniformly to $f$.\\

\noindent
The following remark shows that the feature map's conjugate as well as its real and imaginary parts belong to $\HS$.
\begin{remark}[Consequence of the Mercer representation]\label{ConseqMercer}
For all $y\in E$, the functions $\overline{\, k(\, \bcdot \, , \, y)}$, \, $Re[k(\, \bcdot \, , \, y)]$ and $Im[k(\, \bcdot \, , \, y)]$ belong to $\HS$. 
By {T}heorems \ref{CaracterisationRKHS} and \ref{MercerDecomposition}, we have indeed that 
\begin{equation}
\sum_{i=0}^{\infty} \dfrac{1}{\mu_i} \, | \langle \overline{\, k(\, \bcdot \, , \, y) \,} \, , \, \varphi_i \rangle_{L^2_{\comp}(E , \nu)}|^2 \; = \;\\ \sum_{i=0}^{\infty} \dfrac{1}{\mu_i}  \, | \, \mathcal{L}_{k} \varphi_i(y) |^2 = \sum_{i=0}^{\infty} \mu_i  \, | \varphi_i(y) |^2 <\infty,
\end{equation}
and $\overline{\, k(\, \bcdot \, , \, y)} \in \HS$. By operation, $Re[k(\, \bcdot \, , \, y)]$ and $Im[k(\, \bcdot \, , \, y)]$ belong thus to $\HS$.
\end{remark}

\begin{remark}[Comparison of the norms $\HS$ and $L_{\comp}^{2}(E , \nu)$]\label{ComparisonNormL2HS} For all $f\in \HS$, by {T}heorem \ref{CaracterisationRKHS} we have for all $x\in E$, the inequality
\[
\displaystyle  f(x)\leq|f(x)| =\left|\sum_{i=0}^{\infty} a_i \varphi_{i}(x)\right| \leq \left(\sum_{i=0}^{\infty} \dfrac{\, |a_i |^{2} \, }{\mu_i} \right)^{\alf} \; \left( \sum_{i=0}^{\infty} \mu_i |\varphi_{i}(x)|^{2} \right)^{\alf} ,
\]
with theorem \ref{MercerDecomposition}  we obtain
\begin{equation}\label{eq-ComparisonNormL2HS}
 \displaystyle  \| f \|_{{L_\comp^{2}(E, \nu)}} \; \leq \; \left( \int_{E} | k(x,x)|^{2} \, \nu(\mathrm{d}x) \right)^{\alf} \, \| f \|_{_\HS} .
\end{equation}
\end{remark}
\begin{remark}\label{ContLalfH}
For all $f\in L^{2}_{\mathbb{C}}(E,\nu)$ , $\mathcal{L}_k(f)$ belongs to $\HS$, in fact we have 
\[
\sum_{i=0}^{\infty} \frac{1}{\mu_i} \left| \langle \mathcal{L}_k(f) \, ,\, \varphi_i \rangle_{L^{2}_{\mathbb{C}}(E,\nu)} \right|^{2} = \sum_{i=0}^{\infty} \mu_i \left| \langle f \, ,\, \varphi_i \rangle_{L^{2}_{\mathbb{C}}(E,\nu)} \right|^{2} \leq \left(\sum_{i=0}^{\infty} \mu_i\, \right) \, \|f\|^{2}_{L^{2}_{\mathbb{C}}(E,\nu)} < \infty. 
\]
Moreover, by  {R}emark $\ref{ComparisonNormL2HS}$ the mapping $\mathcal{L}_{k} : \HS \to \HS$ is continuous : for all $f\in \HS$ we have  
\[
\| \mathcal{L}_{k}(f) \|^{2}_{\HS}  \leq \left(\sum_{i=0}^{\infty} \mu_i\, \right) \; \|f\|^{2}_{L^{2}_{\mathbb{C}}(E,\nu)} \leq C \, \| f \|^{2}_{\HS} .
\]
\end{remark}
 With a $C^{\AD{(1,1)}}(E\times E)$ regularity condition on the kernel, we will prove that the RKHS $\HS$ is a subspace of smoother functions than $C^{0}(E , \comp)$ (theorem \ref{RKHSC1}). In the following appendix we discuss differentiability property of the feature maps.

\subsection{Differentiability of feature maps and RKHS functions $\HS$}
\label{sec2-RKHS}
In this appendix we extend results given in \cite{Simon-Gabriel-Scholkopf-18,Steinward-Christmann, Zhou-2008} for $E\subset \real^d$ to the case where $E$ is a \AM{compact subset of a function vector space}. In the following, \AM{aligned with the assumptions enunciated in the paper's preamble,} $E$ designates a compact set included in $L^2(\Omega_x, \mathbb{R}^{d})$, the space of measurable functions that are square integrable on $\Omega_x$, \AD{and like \cite{Zhou-2008}, to enable differentiability of the RKHS function, we will assume that $E$ is the closure of its nonempty interior $\overset{\circ}{E}$}. \AD{We assume also that  $\nu$ is a finite measure on $E$, (i.e. $\nu(E)<\infty$). }
Denoting by $\partial^{(2)}_{u}\, k(\, \bcdot \, ,y)$ the kernel functional  derivative in the direction $u$ with respect to the second variable $y$ -- while the upper-script ${\partial_u^{(1)}}$ indicates a directional derivative in the first variable --, we have first the following lemma on the feature maps derivative. \\

\begin{lemma}[Feature maps differentiability]\label{NoyauDerivRKHS}
The kernel derivatives belong to $\HS$. More precisely, let $k\in {C^{(1,1)}}(E\times E)$, then for all $y\in E$ and ${u}\in E$, we have $\partial^{(2)}_{{u}}\, k(\, \bcdot \, ,y)\in \HS$.
\end{lemma}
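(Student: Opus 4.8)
The plan is to realise $\partial^{(2)}_u k(\,\bcdot\,,y)$ as an $\HS$-limit of difference quotients of feature maps, using completeness of $\HS$ together with the reproducing property. Fix $y,u\in E$. Since $E$ is locally convex, $y+\varepsilon u\in E$ for all real $\varepsilon$ in a neighbourhood of $0$, so for such $\varepsilon\neq 0$ the function
\[
f_\varepsilon:=\frac{1}{\varepsilon}\bigl(k(\,\bcdot\,,y+\varepsilon u)-k(\,\bcdot\,,y)\bigr)
\]
is well defined and belongs to $\HS$: each feature map lies in $\HS$ by Definition \ref{def1} and $\HS$ is a vector space. For every fixed $x\in E$, the $C^{(1,1)}$ regularity of $k$ gives $f_\varepsilon(x)\to\partial^{(2)}_u k(x,y)$ as $\varepsilon\to 0$, by the very definition of the Gateaux derivative of $z\mapsto k(x,z)$ at $z=y$. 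Hence it suffices to prove that $(f_\varepsilon)_\varepsilon$ converges in $(\HS,\|\bcdot\|_\HS)$: the limit $g\in\HS$ then automatically satisfies $g(x)=\partial^{(2)}_u k(x,y)$ for every $x$, because $\HS$-convergence implies pointwise convergence through the bound $|f_\varepsilon(x)-g(x)|\le \sup_{z\in E}k(z,z)^{1/2}\,\|f_\varepsilon-g\|_\HS$ of \eqref{NormeRKHSnormeUniforme}, and therefore $\partial^{(2)}_u k(\,\bcdot\,,y)=g\in\HS$.

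To establish convergence of $(f_\varepsilon)_\varepsilon$ I would compute $\HS$-inner products via the reproducing property $\langle k(\,\bcdot\,,a),k(\,\bcdot\,,b)\rangle_\HS=k(b,a)$, which yields
\[
\langle f_\varepsilon,f_{\varepsilon'}\rangle_\HS=\frac{1}{\varepsilon\varepsilon'}\Bigl(k(y+\varepsilon' u,\,y+\varepsilon u)-k(y,\,y+\varepsilon u)-k(y+\varepsilon' u,\,y)+k(y,y)\Bigr).
\]
Introducing the real two-variable function $G(s,t):=k(y+su,\,y+tu)$ on a neighbourhood of $(0,0)$, the chain rule and the regularity of $k$ make $G$ once continuously differentiable in each variable with a continuous mixed partial $\partial_t\partial_s G(s,t)=\bigl(\partial^{(2)}_u\partial^{(1)}_u k\bigr)(y+su,y+tu)$ (equal to $\partial_s\partial_t G$ by Clairaut's theorem). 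Applying the mean value theorem first to $s\mapsto G(s,\varepsilon)-G(s,0)$ on $[0,\varepsilon']$ and then to $t\mapsto\partial_s G(\xi,t)$ on $[0,\varepsilon]$ shows that the right-hand side above equals $\partial_t\partial_s G(\xi,\eta)$ for some $\xi$ between $0$ and $\varepsilon'$ and $\eta$ between $0$ and $\varepsilon$; letting $\varepsilon,\varepsilon'\to 0$ and using continuity of the mixed partial gives $\langle f_\varepsilon,f_{\varepsilon'}\rangle_\HS\to L:=\bigl(\partial^{(1)}_u\partial^{(2)}_u k\bigr)(y,y)=\bigl(\partial^{(2)}_u\partial^{(1)}_u k\bigr)(y,y)$.

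In particular $\|f_\varepsilon\|^2_\HS\to L$ (so $L\ge 0$ is real), hence $\|f_\varepsilon-f_{\varepsilon'}\|^2_\HS=\|f_\varepsilon\|^2_\HS+\|f_{\varepsilon'}\|^2_\HS-2\,\mathrm{Re}\,\langle f_\varepsilon,f_{\varepsilon'}\rangle_\HS\to L+L-2L=0$ as $\varepsilon,\varepsilon'\to 0$. Thus $(f_\varepsilon)_\varepsilon$ is a Cauchy net in the Hilbert space $\HS$, therefore convergent, and the reduction in the first paragraph concludes the proof; as a by-product one gets $\|\partial^{(2)}_u k(\,\bcdot\,,y)\|_\HS^2=\bigl(\partial^{(1)}_u\partial^{(2)}_u k\bigr)(y,y)$ and, passing to the limit in the reproducing property, $\langle f,\partial^{(2)}_u k(\,\bcdot\,,y)\rangle_\HS=\partial_u f(y)$ for all $f\in\HS$, which is the ingredient needed later for Theorem \ref{RKHSC1}.

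The delicate point is the passage carried out in the second paragraph: one must extract from the ``$C^{(1,1)}$'' hypothesis a genuine, continuous mixed second directional derivative of $k$ and control the double difference quotient by the iterated mean value argument — this is the only place where the finite-dimensional proofs of \cite{Simon-Gabriel-Scholkopf-18,Steinward-Christmann} require adaptation, and it goes through unchanged because everything happens along the fixed one-dimensional segments $\{y+su\}$ and $\{y+tu\}$. I would avoid trying to read membership off the Mercer series $k(x,y)=\sum_i\mu_i\varphi_i(x)\overline{\varphi_i(y)}$ of Theorem \ref{MercerDecomposition} directly: termwise differentiation there would need $\sum_i\mu_i|\partial_u\varphi_i(y)|^2<\infty$, whose proof is essentially equivalent to the statement itself, whereas the difference-quotient route needs no such a priori estimate.
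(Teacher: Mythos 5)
Your proposal is correct and follows essentially the same route as the paper's proof: form the difference quotients of feature maps, expand their mutual $\HS$-inner products via the reproducing property into a double difference quotient of $k$, show this converges to the mixed second directional derivative so the family is Cauchy in $\HS$, and identify the $\HS$-limit with the pointwise Gateaux derivative through the uniform bound \eqref{NormeRKHSnormeUniforme}. Your iterated mean-value-theorem treatment of the double difference quotient is in fact a cleaner justification of the convergence step that the paper handles by a slightly looser $o(h_q)$ expansion, but it is the same argument in substance.
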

\begin{proof}
Let $y \in \AD{\overset{\circ}{E}}$ and ${u}\in E$. Let $(h_n)_{n\geq 0}$ be a sequence of $\mathbb{C}\backslash\lbrace{0}\rbrace$ which converges to $0$ and such that for all $n$, $y+ h_n {u}\in \AD{\overset{\circ}{E}}$ (since $\AD{\overset{\circ}{E}}$ is an open set). For all integers $n$, let 
\[
d_n(\, \bcdot \, , \, y ) := k( \, \bcdot \, , \, y+h_n {u}) - k( \, \bcdot \, , \, y )\text{ and } \;
t_n(\, \bcdot \, , \, y \,) := \dfrac{\, d_n( \, \bcdot \, , \, y ) \, }{ h_n}.
\]
We first notice that  $(t_n(
\bcdot\, , \, y ))_{n\geq0}$ converges pointwise to $ {\partial^{(2)}_{{u}}}\, k (\, \bcdot\, , \, y)$ since $k$ is ${C^{(1,1)}}(E \times E )$. Then, we now show that $(t_n(\, \bcdot \, , \, y ))_n$ is a Cauchy sequence of $(\HS , \| \bcdot \|_{_\HS})$.  For all integers $p$ and $q$, we have 
\begin{equation}\label{IntermCauchy}
\| t_p( \, \bcdot \, , \, y) - t_q( \, \bcdot \, , \, y  ) \|^{2}_{_\HS} = \| t_p( \, \bcdot\, , \, y  ) \|^{2}_{_\HS} +  \, \|  t_q( \, \bcdot \, , \, y  ) \|^{2}_{_\HS} - 2 Re[\langle \, t_p( \,  \bcdot \, , \, y ) \; , \; t_q( \, \bcdot \,  , \, y ) \, \rangle_{_\HS}] .
\end{equation}
By the reproducing property of $\HS$ and as $k$ is ${C^{(1,1)}}(E \times E)$, we have for all $x\in E$ and for some $p$ and $q$ large enough, 
\begin{align*}
\langle \, t_p( \, \bcdot \, , \, y ) \; , \; t_q( \, \bcdot \, , \, x) \, \rangle_{_\HS} 
&= \dfrac{1}{h_p\, h_q}\langle \, k(\,\bcdot\, , \, y+h_{p} u) - k(\,\bcdot\, , \, y)\,  \; , \; k( \, \bcdot \, , \, x+h_{q} u) -k(\,\bcdot\, , x) \, \rangle_{_\HS}\\
&=\dfrac{1}{h_p\, h_q} \, \left(\,
k(x+h_{q}u\, , \, y+h_{p} u) - k(x\,,\,y+h_{p} u) -  k(x+h_{q}u\, , \, y) +k(x,y)\,  \right)\\
&=\dfrac{1}{h_p\, h_q} \, \left(\,
\partial^{(1)}_{u} k(x\, , \, y+h_{p} u)\, \times \, h_q - \partial^{(1)}_u k(x\,,\,y) \times h_q  + o(h_q)\,  \right)\\
&=\partial^{(2)}_{u}\, \partial^{(1)}_{u} k(x\, , \, y) + o_{p,q}(1)
\end{align*}
We have also for some $p$ and $q$ large enough (with $y=x$) and $\varepsilon >0$,
\[
 \left| \bigl\langle \, t_p( \, \bcdot \, , \, y ) \; , \; t_q( \, \bcdot \, , \, y) \, \bigr\rangle_{_\HS} - \partial^{(2)}_{{u}} \partial^{(1)}_{{u}} k(y \, , \, y) \right| \leq \varepsilon,
\]
and by \eqref{IntermCauchy} we obtain hence $\| t_p( \, \bcdot \, , \, y) - t_q( \, \bcdot \, , \, y  ) \|^{2}_{_\HS} \leq 4 \varepsilon$, which means $\bigl(t_n( \, \bcdot \, , \, y)\bigr)_{\!n}$ is a Cauchy sequence in $\HS$. The sequence $\bigl(t_n( \, \bcdot \, , \, y)\bigr)_{\!n}$ converges to a function $t(\, \bcdot \, , \, y )$ in $\HS$ and by \eqref{NormeRKHSnormeUniforme} the convergence is pointwise. Consequently, we obtain also $ \partial^{(2)}_{{u}}\, k (\, \bcdot \, , \, y) = t(\, \bcdot\, , \, y ) \in \HS$ \AD{ for $y \in \overset{\circ}{E}$}.

\AD{Let $y \in E \backslash \overset{\circ}{E}$. There exists a sequence $(y^{i})$ of $\overset{\circ}{E}$ converging to $y$. For all integers $i$ and $j$, we have, }
\[
\| \partial^{(2)}_{{u}}\, k (\, \bcdot \, , \, y^{i}) - \partial^{(2)}_{{u}}\, k (\, \bcdot \, , \, y^j) \|_{\HS} = \partial^{(1)}_{{u}}\partial^{(2)}_{{u}}\, k (\, y^{i} \, , \, y^{i}) + \partial^{(1)}_{{u}} \partial^{(2)}_{{u}}\, k (\, y^{j} \, , \, y^{j}) - 2 Re( \partial^{(1)}_{{u}} \partial^{(2)}_{{u}}\, k (\, y^{i} \, , \, y^{j}) .
\]
\AD{As $(y^{i})$ is a Cauchy sequence of $E$ and $\partial^{(1)}_{{u}} \partial^{(2)}_{{u}}\, k$ is continuous on $E\times E$, we infer that $(\partial^{(2)}_{{u}}\, k (\, \bcdot \, , \, y^{i}))_i$ is a Cauchy sequence of $\HS$. In particular, $(\partial^{(2)}_{{u}}\, k (\, \bcdot \, , \, y^{i}))_i$ converge to a function $h$. For all $x\in E$, we have }
\[
h(x)= \langle h \, , \, k(\,\bcdot\, , x)\rangle_\HS = \lim_{i\to \infty} \langle \partial^{(2)}_{{u}}\, k (\, \bcdot \, , \, y^{i}) \, , \, k(\,\bcdot\, , x)\rangle_\HS = \partial^{(2)}_{{u}}\, k (\, x \, , \, y^{i}).
\]
\AD{We obtain thus that $h=\partial^{(2)}_{{u}}\, k (\bcdot \, , \, y) \in \HS$ for all $y\in E \backslash \overset{\circ}{E}$, and consequently, $\partial^{(2)}_{{u}}\, k (\bcdot \, , \, y) \in \HS$ for all $y\in E$.} 
\cqfd
\end{proof}

As  the sequence $\bigl(t_n(\cdot\,, x )\bigr)_n$ converges to ${\partial^{(2)}_{{u}}}\, k(\,\cdot\,, x)$ in $\HS$, we obtain also
\begin{equation}\label{Deriv}
\left\langle f \, , \, {\partial^{(2)}_{{u}}}\, k(\,\cdot\, , \, x) \right\rangle_{\HS} = \lim_{p\to \infty} \left\langle f \, , \, \dfrac{k(\,\cdot\, , \, x +h_p) - k(\,\cdot\, , \, x)}{h_p}\, \right\rangle_{\HS} = \lim_{p\to\infty} \dfrac{f(x+h_pu) - f(x)}{h_p} = \partial_u f(x) .
\end{equation}

We provide in theorem \ref{RKHSC1} an expression of the directional derivative along direction $u \in E$ of an RKHS function $f$. To that end we first introduce useful remarks together with an intermediate lemma.

As $k$ is ${C^{(1,1)}}(E\times E)$,  the Mercer eigenfunctions $\varphi_{i}$ are $C^{1}(E,\comp)$. Similarly to remark \ref{CalculOpIntegral1}, we get a computational rule relating the kernel operator derivatives and the inner product $\langle \bcdot \, , \, \bcdot \rangle_{L^{2}_{\comp}(E , \nu)}$. For all $y\in E$ and ${u}\in E$, we note that
\[
{\langle \partial^{(2)}_{{u}}\, k(\cdot\,,\,y) , \varphi_i \rangle_{{L_{\comp}^{2}(E, \nu)}} = \int_E \partial^{(2)}_{{u}}\, k(x\,,\,y) \, \overline{\varphi_i(x)}\, \, \nu(\mathrm{d}x) = \partial_u \int_E  \overline{k(y\,,\,x)} \,\, \overline{\varphi_i(x)}\, \, \nu(\mathrm{d}x) = \partial_u \overline{{\cal L}_{k} (\varphi_i) (y)} = \mu_i \, \partial_u \overline{\varphi_i(y)}}
{.}
\]
We obtain for all $y\in E$ and ${u}\in E$ that
\begin{equation}\label{AA}
\langle \partial^{(2)}_{{u}}\, k( \,\bcdot\, , \, y ) \, , \, \varphi_{i} \rangle_{{L_{\comp}^{2}(E, \nu)}} = \mu_i \; \partial_{{u}}  \overline{\varphi_i(y)} \qquad \text{and}\qquad  \langle \varphi_i  \, , \, \partial^{(2)}_{{u}}\, k( \,\bcdot\, , \, y ) \rangle_{L_{\comp}^{2}(E, \nu)} = \mu_i \; \partial_{{u}} \varphi_i(y).
\end{equation}
These remarks allow us to prove the next lemma. 

\begin{lemma}\label{Lemma1}
The series $\displaystyle\sum \mu_i \, |\partial_{{u}} \varphi_i(y)|^{2}$ converges uniformly on $E$.
\end{lemma}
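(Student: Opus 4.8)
Fix $u\in E$; the goal is to show that the partial sums $S_N(y):=\sum_{i=0}^N\mu_i\,|\partial_u\varphi_i(y)|^2$ converge uniformly on $E$. The plan is to recognise the full sum as the $\HS$-norm of a feature-map derivative, show this sum is a \emph{continuous} function of $y$, and then conclude by Dini's theorem using compactness of $E$. First I would fix $y\in E$: by Lemma \ref{NoyauDerivRKHS} the function $\partial^{(2)}_u k(\cdot,y)$ belongs to $\HS$, so the Mercer representation of the RKHS (Theorem \ref{CaracterisationRKHS}) applies and gives
\[
\bigl\|\partial^{(2)}_u k(\cdot,y)\bigr\|_\HS^2=\sum_{i=0}^\infty\frac{1}{\mu_i}\,\bigl|\langle\partial^{(2)}_u k(\cdot,y)\,,\,\varphi_i\rangle_{L^2_\comp(E,\nu)}\bigr|^2 .
\]
By \eqref{AA}, $\langle\partial^{(2)}_u k(\cdot,y)\,,\,\varphi_i\rangle_{L^2_\comp(E,\nu)}=\mu_i\,\partial_u\overline{\varphi_i(y)}$, so the right-hand side collapses to $\sum_i\mu_i\,|\partial_u\varphi_i(y)|^2$. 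Hence the series converges for every $y\in E$ and its sum $S(y)$ equals $\|\partial^{(2)}_u k(\cdot,y)\|_\HS^2$.

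Next I would establish continuity of $S$. Passing to the limit in the inner-product computation of the proof of Lemma \ref{NoyauDerivRKHS} — where $t_n(\cdot,y)\to\partial^{(2)}_u k(\cdot,y)$ in $\HS$ and $\langle t_p(\cdot,y),t_q(\cdot,x)\rangle_\HS\to\partial^{(2)}_u\partial^{(1)}_u k(x,y)$ — and taking $x=y$ yields
\[
S(y)=\bigl\|\partial^{(2)}_u k(\cdot,y)\bigr\|_\HS^2=\partial^{(2)}_u\partial^{(1)}_u k(y,y).
\]
By the regularity of the kernel (the mixed directional derivative $\partial^{(2)}_u\partial^{(1)}_u k$ being continuous on $E\times E$), its restriction to the diagonal, i.e. $S$, is continuous on $E$.

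Finally I would invoke Dini's theorem. Each $S_N$ is continuous on $E$: since $k\in C^{(1,1)}(E\times E)$, the range of $\mathcal{L}_k$ lies in $C^1(E)$ and each $\varphi_i=\mu_i^{-1}\mathcal{L}_k\varphi_i$ is $C^1(E,\comp)$, so $y\mapsto|\partial_u\varphi_i(y)|^2$ is continuous; moreover each term is nonnegative, so $(S_N)_N$ is a nondecreasing sequence of continuous functions converging pointwise to the continuous function $S$. As $E$ is compact, Dini's theorem forces uniform convergence on $E$, which is the claim. The main obstacle is precisely the continuity of the limit $S$: lower semicontinuity of $S$ is automatic (a sum of nonnegative continuous terms), but upper semicontinuity is not, and without continuity of $S$ Dini is inapplicable; the fix is to reuse the Cauchy-sequence estimate from the proof of Lemma \ref{NoyauDerivRKHS} to identify $S(y)$ with the diagonal value $\partial^{(2)}_u\partial^{(1)}_u k(y,y)$ of the continuous mixed derivative of $k$, rather than merely quoting Mercer's expansion. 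The remaining steps — the termwise identification through \eqref{AA} and the Dini argument — are routine.
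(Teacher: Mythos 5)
Your proof is correct and follows the same route as the paper's: identify $\sum_i \mu_i |\partial_u\varphi_i(y)|^2$ with $\|\partial^{(2)}_u k(\cdot,y)\|^2_{\HS}$ via the Mercer representation and equation \eqref{AA}, then invoke Dini's theorem on the compact set $E$. Where you improve on the paper's argument is in justifying the continuity of the limit: the paper asserts without elaboration that $y\mapsto\|\partial^{(2)}_u k(\cdot,y)\|^2_{\HS}$ is continuous, whereas you pin this down by letting $p,q\to\infty$ in the Cauchy-sequence estimate of Lemma \ref{NoyauDerivRKHS} (valid since $t_n(\cdot,y)\to\partial^{(2)}_u k(\cdot,y)$ in $\HS$), obtaining $\|\partial^{(2)}_u k(\cdot,y)\|^2_{\HS}=\partial^{(2)}_u\partial^{(1)}_u k(y,y)$, which is continuous by the $C^{(1,1)}$ hypothesis on the kernel. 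Your remark that lower semicontinuity of the sum is free but upper semicontinuity is the genuine content needed for Dini is exactly the point the paper glosses over; the rest of the argument coincides.
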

\begin{proof}
The functions $\mu_i \, \varphi_i$ are $C^{1}(E, \comp)$.  By definition of $\|\bcdot\|_{_\HS}$ and the previous remark, we have
\[
\|{\partial^{(2)}_{{u}}}\, k( \, \bcdot \, , y) \|^2_{_\HS} 
= \sum_{i=0}^{\infty} \frac{|\langle {\partial^{(2)}_{{u}}}\, k( \, \bcdot \, , y) \, , \, \varphi_i \rangle_{_{L_{\comp}^{2}(E, \nu)}} |^2}{\mu_i} 
= \sum_{i=0}^{\infty} \mu_i \, |\partial_{{u}} \varphi_i(y)|^{2}. 
\]
The series $\sum \mu_i \, |\partial_{{u}} \varphi_i(y)|^{2}$ converges pointwise to the continuous function $\|{\partial^{(2)}_{{u}}}\, k( \, \bcdot \, , y) \|^2_{_\HS}$. By  Dini's theorem (the sequence of partial sums is monotonically increasing as $\mu_i$ is positive), the convergence is also uniform.\cqfd
\end{proof}

\begin{theorem}[Mercer's decomposition of derivatives]\label{RKHSC1}
Let $u\in E$. For all $\displaystyle f=\sum_{i=0}^{\infty} a_i \varphi_i \in \HS$ (with $a_i = \langle f \, , \, \varphi_i \rangle_{L^2_{\mathbb{C}}(E, \nu)}$), the  directional derivative of the function $f$ in the direction $u$ exists and we have
\[
\partial_{{u}} f = \sum_{i=0}^{\infty} a_i \, \partial_{{u}}(\varphi_i),
\]
where the convergence is uniform on $E$.
\end{theorem}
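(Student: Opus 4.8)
The plan is to derive the statement from the reproducing identity for directional derivatives already obtained in \eqref{Deriv}, combined with the Mercer expansion of the feature-map derivative. First I would observe that, by Lemma \ref{NoyauDerivRKHS}, $\partial^{(2)}_{u}\, k(\,\bcdot\,,x)\in\HS$ for every $x\in E$, so that \eqref{Deriv} applies and shows that the directional derivative $\partial_{u} f(x)$ exists and equals $\langle f\,,\,\partial^{(2)}_{u}\, k(\,\bcdot\,,x)\rangle_{\HS}$ for each $x\in E$; this already settles the existence claim.

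Next I would expand this $\HS$-inner product by the Mercer series formula of Theorem \ref{CaracterisationRKHS}. Writing $a_i=\langle f\,,\,\varphi_i\rangle_{L^2_{\comp}(E,\nu)}$ and inserting the computational rule \eqref{AA}, namely $\langle \partial^{(2)}_{u}\, k(\,\bcdot\,,x)\,,\,\varphi_i\rangle_{L^2_{\comp}(E,\nu)}=\mu_i\,\partial_{u}\overline{\varphi_i(x)}$, the generic term $\tfrac{1}{\mu_i}\,a_i\,\overline{\mu_i\,\partial_{u}\overline{\varphi_i(x)}}$ of the series collapses --- the factors $\mu_i$ cancel and the double conjugation disappears, since the $\mu_i$ are real --- to $a_i\,\partial_{u}\varphi_i(x)$. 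This yields the claimed identity $\partial_{u} f(x)=\sum_{i}a_i\,\partial_{u}\varphi_i(x)$ pointwise on $E$.

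It remains to upgrade this to uniform convergence. I would split each term as $a_i\,\partial_{u}\varphi_i(x)=\bigl(a_i/\mu_i^{\alf}\bigr)\bigl(\mu_i^{\alf}\,\partial_{u}\varphi_i(x)\bigr)$ and bound the tails by Cauchy--Schwarz,
\[
\left|\sum_{i>N} a_i\,\partial_{u}\varphi_i(x)\right|\;\leq\;\left(\sum_{i>N}\frac{|a_i|^{2}}{\mu_i}\right)^{\alf}\left(\sum_{i>N}\mu_i\,|\partial_{u}\varphi_i(x)|^{2}\right)^{\alf}.
\]
The first factor is a tail of $\|f\|_{\HS}^{2}=\sum_{i}|a_i|^{2}/\mu_i<\infty$ and hence tends to $0$ independently of $x$; the second factor is bounded uniformly in $x\in E$ by a constant, because by Lemma \ref{Lemma1} the series $\sum_{i}\mu_i\,|\partial_{u}\varphi_i(x)|^{2}$ converges uniformly on $E$, so its partial sums are uniformly bounded. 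Consequently the partial sums of $\sum_{i}a_i\,\partial_{u}\varphi_i$ converge uniformly on $E$, completing the argument.

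I do not anticipate a real obstacle. The only points needing care are that $\partial^{(2)}_{u}\, k(\,\bcdot\,,x)$ genuinely lies in $\HS$, so that the $\HS$-series representation of the inner product is legitimate --- which is exactly Lemma \ref{NoyauDerivRKHS} --- and the bookkeeping of the complex conjugates in the sesquilinear form $\langle\,\bcdot\,,\,\bcdot\,\rangle_{\HS}$ when substituting \eqref{AA}. The one substantive analytic input is Lemma \ref{Lemma1}, which is what carries the uniform-convergence step.
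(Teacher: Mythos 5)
Your proof is correct, but it reaches the conclusion by a route genuinely different from the paper's. The paper first shows, by the same Cauchy--Schwarz estimate plus Lemma~\ref{Lemma1}, that the series $\sum_i a_i\,\partial_u\varphi_i$ converges uniformly on $E$, notes that $\sum_i a_i\varphi_i$ converges pointwise to $f$, and then invokes the classical term-by-term differentiation theorem to conclude that $\partial_u f$ exists and equals the sum. You instead establish the pointwise identity directly: since $\partial^{(2)}_u k(\,\bcdot\,,x)\in\HS$ by Lemma~\ref{NoyauDerivRKHS}, the identity \eqref{Deriv} gives $\partial_u f(x)=\langle f,\partial^{(2)}_u k(\,\bcdot\,,x)\rangle_{\HS}$, which you expand through the Mercer inner-product formula of Theorem~\ref{CaracterisationRKHS} and simplify via \eqref{AA}. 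Your conjugation bookkeeping is right: $\mu_i$ is real and $\overline{\partial_u\overline{\varphi_i}}=\partial_u\varphi_i$ for the real direction $u$, so the generic term collapses to $a_i\,\partial_u\varphi_i(x)$. The uniform-convergence claim is then a separate add-on, proved exactly as in the paper via Cauchy--Schwarz and Lemma~\ref{Lemma1}. What your route buys is that you never need the term-by-term differentiation theorem, which for Gateaux derivatives requires restricting to a line segment in $E$ and which the paper leaves implicit; your identity is exact at each $x$ with no limit interchange. What the paper's route buys is self-containment: it stays within elementary calculus and does not lean on \eqref{Deriv} or on the Mercer expansion of the $\HS$-inner product. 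Both arguments rest on the same substantive input, Lemma~\ref{Lemma1} (itself resting on Lemma~\ref{NoyauDerivRKHS}), which carries the uniform bound.
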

\begin{proof}[Theorem \ref{RKHSC1}]
Let $\displaystyle f=\sum_{i=0}^{\infty} a_i \varphi_i \in \HS$. For all $i$, the function $a_i\, \varphi_i$ belongs to $C^{1}(E, \comp)$ and the series $\sum_i a_i \varphi_i$ converges pointwise to $f$ on $E$. As for the convergence of the derivative series, we have, by Cauchy-Schwarz inequality, that for all  $n \in \mathbb{N}$ and $y \in E$,
\[ 
\left| \displaystyle\sum_{i=n}^{\infty} a_i \, \partial_{{u}} (\varphi_i)(y) \right| \leq  \left( \, \displaystyle\sum_{i=n}^{\infty} \dfrac{\, |a_i|^{2} \, }{\mu_i} \right)^{1/2} \, \left( \, \displaystyle \sum_{i=n}^{\infty}  \mu_i  \left|\, \partial_{{u}} (\varphi_i)(y) \right|^{2} \right)^{1/2}\, .
\]
By {L}emma \ref{Lemma1}, the term $\sum_{i=n}^{\infty}  \mu_i  \left|\, \partial_{{u}} (\varphi_i)(y) \right|^{2}$ converges uniformly on $E$ to $0$. Since $f\in \HS$, we obtain also that the series $\sum a_i \, \partial_{{u}} (\varphi_i)$ converges uniformly on $E$. Then, through term by term differentiation the result is proved.
\cqfd
\end{proof}
We remark immediately that as functions of the RKHS the kernel feature maps $k(\bcdot,x)$ can be differentiated to obtain a term by term  Mercer's type decomposition of the kernel derivatives.
\begin{proposition}[Mercer Kernel derivatives decomposition]\label{DerivNoyau}
For all $x , y\in E$ and for all ${u}, w\in  E$, we have
\begin{equation}
\partial^{(1)}_{{u}} \, \partial^{(2)}_{w} k(x,y) \; = \sum_{i=0}^{\infty} \mu_i \, \partial_{{u}} \varphi_i(x) \, \partial_{w} \overline{\varphi_i(y)} 
\end{equation}
where the convergence is uniform on $E \times E$.
\end{proposition}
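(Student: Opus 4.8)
The plan is to obtain the double series by invoking the term-by-term differentiation machinery already set up (Lemma~\ref{NoyauDerivRKHS}, relation~\eqref{AA}, and Theorem~\ref{RKHSC1}) \emph{twice} — once in the second variable, once in the first — and then to upgrade the resulting pointwise identity to uniform convergence on $E\times E$ by a Cauchy--Schwarz estimate fed by Lemma~\ref{Lemma1}.

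\textbf{Step 1: differentiate in the second variable.} Fix $y,w\in E$ and set $g:=\partial^{(2)}_{w}k(\,\bcdot\,,y)$. By Lemma~\ref{NoyauDerivRKHS} we have $g\in\HS$, so Theorem~\ref{CaracterisationRKHS} gives $g=\sum_i a_i\varphi_i$ with $a_i=\langle g,\varphi_i\rangle_{L^{2}_{\comp}(E,\nu)}$, the series converging to $g$ in $\HS$ and hence, by~\eqref{NormeRKHSnormeUniforme}, uniformly in $x\in E$. Relation~\eqref{AA} identifies the coefficients as $a_i=\mu_i\,\partial_w\overline{\varphi_i(y)}$, so
\[
\partial^{(2)}_{w}k(x,y)=\sum_{i=0}^{\infty}\mu_i\,\varphi_i(x)\,\partial_w\overline{\varphi_i(y)}
\]
with uniform convergence in $x$ for fixed $y,w$.

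\textbf{Step 2: differentiate in the first variable.} Since $g\in\HS$ with Mercer coefficients $a_i=\mu_i\,\partial_w\overline{\varphi_i(y)}$, Theorem~\ref{RKHSC1} applies and yields that $\partial_u g$ exists with
\[
\partial^{(1)}_{u}\partial^{(2)}_{w}k(x,y)=\partial_u g(x)=\sum_{i=0}^{\infty}\mu_i\,\partial_u\varphi_i(x)\,\partial_w\overline{\varphi_i(y)},
\]
the convergence being uniform in $x$ for each fixed $y,w$. This already establishes the claimed identity pointwise on $E\times E$.

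\textbf{Step 3: joint uniformity — the delicate point.} Steps~1--2 only deliver uniformity in each variable \emph{separately}; to get uniformity on $E\times E$ I would split the tail by Cauchy--Schwarz,
\[
\Bigl|\sum_{i\ge n}\mu_i\,\partial_u\varphi_i(x)\,\partial_w\overline{\varphi_i(y)}\Bigr|\le\Bigl(\sum_{i\ge n}\mu_i\,|\partial_u\varphi_i(x)|^2\Bigr)^{1/2}\Bigl(\sum_{i\ge n}\mu_i\,|\partial_w\varphi_i(y)|^2\Bigr)^{1/2},
\]
and then invoke Lemma~\ref{Lemma1} applied once with direction $u$ and once with direction $w$: each of $\sum_i\mu_i|\partial_u\varphi_i(x)|^2$ and $\sum_i\mu_i|\partial_w\varphi_i(y)|^2$ converges uniformly on $E$ to a continuous function, so their tails tend to $0$ uniformly in $x$, resp.\ in $y$; hence the right-hand side above tends to $0$ uniformly on $E\times E$. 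I expect this last step to be the main obstacle in the sense that it is where care is required — promoting separate uniformity to joint uniformity is exactly what the quantitative control of Lemma~\ref{Lemma1} plus the Cauchy--Schwarz splitting is for — while the legitimacy of iterating the two directional derivatives is handled for free, since at each stage one differentiates an honest element of $\HS$ (guaranteed by Lemma~\ref{NoyauDerivRKHS}) rather than a bare Mercer series.
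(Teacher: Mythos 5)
Your proposal is correct and follows essentially the same route as the paper's proof: apply Theorem~\ref{RKHSC1} to the function $x\mapsto\partial^{(2)}_{w}k(x,y)\in\HS$ (guaranteed by Lemma~\ref{NoyauDerivRKHS}), identify the Mercer coefficients via~\eqref{AA}, and then obtain joint uniform convergence on $E\times E$ by the Cauchy--Schwarz tail estimate combined with Lemma~\ref{Lemma1}. The only difference is presentational — you make the intermediate Mercer expansion of $\partial^{(2)}_{w}k(\cdot,y)$ explicit before differentiating, whereas the paper folds Steps 1 and 2 into a single application of Theorem~\ref{RKHSC1}.
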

\begin{proof}
The functions $(x,y) \mapsto \mu_i \, \varphi_i(x) \, \overline{\varphi_i}(y) $ are ${C^{(1,1)}}(E\times E)$. For all ${u}$ and $w\in E$, we apply theorem \ref{RKHSC1}: the series $\sum \mu_i \, \partial_{{u}} \varphi_i(x) \overline{\varphi_i(y)}$ converges uniformly to ${\partial^{(1)}_{u}} k(x,y)$ and $\sum \mu_i \, \varphi_i(x) \, \partial_{w} \overline{\varphi_i(y)}$ converges uniformly to ${\partial^{(2)}_{w}}  k(x,y)$ on $E \times E$. For all $w\in E$, theorem \ref{RKHSC1} can be applied to the function $x \mapsto \partial^{(2)}_{w} k(x\,,\, y) \in \HS$ and for all ${u}\in E$ and all $x,y \in E$, 
\[
\partial^{(1)}_{{u}} \, \partial^{(2)}_{w} \, k( x\, , \, y ) = \displaystyle \sum_{i=0}^{\infty} a_{i, w}(y) \; \partial_{{u}} \varphi_i(x),
\]
with $a_{i,w}(y):= \langle \partial^{(2)}_{w} k( \,\bcdot\, , \, y ) \, , \, \varphi_{i} \rangle_{{L_{\comp}^{2}(E, \nu)}}$. By $(\ref{AA})$, we obtain that $a_{i,w}(y) = \mu_i \, \partial_{w}\, \overline{\varphi_{i}(y)}$ and 
\[
\partial^{(1)}_{{u}} \, \partial^{(2)}_{w} \; k(x,y) = \sum_{i=0}^{\infty} \mu_i \, \partial_{{u}}\varphi_i(x) \;  \partial_{w} \overline{\varphi_i(y)} .
\]
Finally, by Cauchy-Schwarz, we have that for all $N\in\mathbb{N}$
\[
\left| \sum_{i=0}^{N} \mu_i \, \partial_{{u}} \varphi_i(x) \, \partial_{w} \overline{\varphi_i(y)} \right|^{2} \leq \left( \sum_{i=0}^{N} \mu_i \, |\partial_{{u}} \varphi_i(x)|^{2} \right) \left( \sum_{i=0}^{N} \mu_i \, |\partial_{w} \varphi_i(y) \, |^{2} \right),
\]
and by {L}emma \ref{Lemma1} the convergence of $\sum \mu_i \, \partial_{{u}} \varphi_i(x) \, \partial_{w} \overline{\varphi_i(y)}$ to $\partial^{(1)}_{{u}} \partial^{(2)}_{w} k(x,y)$  is uniform on $E \times E$.\cqfd
\end{proof}

\noindent

These results can  be easily extended to higher order derivatives provided a kernel with enough regularity. They correspond, in  slightly different forms, to results given in \cite{Simon-Gabriel-Scholkopf-18,Steinward-Christmann} for $\real^d$. Note that the existence of such higher order derivatives pertains to differentiability conditions of the Mercer eigenfunctions, $(\varphi_i)_{i \in \mathbb{N}}$, associated with kernel $k$.

By {T}heorem \ref{RKHSC1}, the functions of $\HS$ admit a directional derivative  along any vector of $E$. The $L^{2}_{\comp}(E, \nu)$ norm of the directional derivatives of functions of $\HS$  can then be controlled by the RKHS norm.
\begin{theorem}[Continuity in $L_{\comp}^2(E, \nu)$ of RKHS  derivatives]\label{ContKoopmTHEOREM}
There exists a constant $C>0$ for which: \[ \| \partial_{{u}} f \|_{{L_{\comp}^2(E, \nu)}}\leq C \, \left\| f \right\|_{_{\HS}} \text{ for all } f\in \HS \; \text{and} \; u\in E .
\] 
\end{theorem}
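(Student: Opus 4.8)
The plan is to read off the directional derivative of an RKHS element from its Mercer expansion, estimate it pointwise by a weighted Cauchy--Schwarz inequality (the weights being the Mercer eigenvalues $\mu_i$ that define $\|\bcdot\|_\HS$), and then integrate this pointwise bound against the finite measure $\nu$, using Lemma~\ref{Lemma1} and the regularity of $k$ to keep the resulting constant under control.

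Concretely, fix $u\in E$ and $f=\sum_{i=0}^\infty a_i\varphi_i\in\HS$ with $a_i=\langle f,\varphi_i\rangle_{L^2_\comp(E,\nu)}$. By Theorem~\ref{RKHSC1} the derivative exists and $\partial_u f(y)=\sum_{i=0}^\infty a_i\,\partial_u\varphi_i(y)$ for every $y\in E$, with uniform convergence; in particular $\partial_u f$ is continuous, hence in $L^2_\comp(E,\nu)$ since $\nu$ is finite, so the left-hand side of the claim is meaningful. Splitting $a_i\,\partial_u\varphi_i(y)=(a_i\mu_i^{-1/2})\,(\mu_i^{1/2}\partial_u\varphi_i(y))$ and applying Cauchy--Schwarz to the numerical series gives, for each $y\in E$,
\[
|\partial_u f(y)|^2\;\le\;\Bigl(\sum_{i=0}^\infty\frac{|a_i|^2}{\mu_i}\Bigr)\,\Bigl(\sum_{i=0}^\infty\mu_i\,|\partial_u\varphi_i(y)|^2\Bigr)\;=\;\|f\|_\HS^2\;\bigl\|\partial^{(2)}_u k(\,\bcdot\,,y)\bigr\|_\HS^2 ,
\]
where the last equality is exactly the identity $\|\partial^{(2)}_u k(\,\bcdot\,,y)\|_\HS^2=\sum_i\mu_i|\partial_u\varphi_i(y)|^2$ proved inside Lemma~\ref{Lemma1}. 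Integrating over $E$ against $\nu$ then yields
\[
\|\partial_u f\|_{L^2_\comp(E,\nu)}^2\;\le\;\|f\|_\HS^2\int_E\bigl\|\partial^{(2)}_u k(\,\bcdot\,,y)\bigr\|_\HS^2\,\nu(\dif y).
\]
It remains to bound the integral. By Lemma~\ref{Lemma1} the map $y\mapsto\|\partial^{(2)}_u k(\,\bcdot\,,y)\|_\HS^2$ is a uniform limit of continuous functions on the compact set $E$, hence continuous and bounded; equivalently, by Proposition~\ref{DerivNoyau} with $x=y$ and $w=u$ it equals $\partial^{(1)}_u\partial^{(2)}_u k(y,y)$, continuous since $k\in C^{(2,2)}(E\times E)$. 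Writing $B_u:=\sup_{y\in E}\partial^{(1)}_u\partial^{(2)}_u k(y,y)<\infty$ and using $\nu(E)<\infty$, one gets $\|\partial_u f\|_{L^2_\comp(E,\nu)}\le (B_u\,\nu(E))^{1/2}\,\|f\|_\HS$, which is the assertion with $C=(B_u\,\nu(E))^{1/2}$.

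The one point that needs care is the uniformity of $C$ in the direction $u$, which the statement literally requires. I would obtain it from the same continuity fact applied jointly: $(u,y)\mapsto\partial^{(1)}_u\partial^{(2)}_u k(y,y)$ is continuous on the compact product $E\times E$ --- this is precisely what the $C^{(2,2)}$ regularity of $k$, together with the locally convex and compact structure of $E$, is there to guarantee --- hence it attains a finite maximum $C_0$, and one may take $C=(C_0\,\nu(E))^{1/2}$, independent of both $f$ and $u$. In the places where this theorem is used (Theorem~\ref{GenerateurKoopBorne}, and the tangent-linear and Lyapunov sections) the direction is the fixed tendency $\Mop(\bcdot)$, so only the $u$-dependent bound is actually exercised. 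I expect this joint-continuity/boundedness step to be the main obstacle, since the Cauchy--Schwarz estimate itself is routine: it is the only place where one must argue carefully in the infinite-dimensional function-space setting rather than just invoking the Mercer machinery already set up.
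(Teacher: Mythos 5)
Your proof is correct and takes essentially the same route as the paper's: term-by-term differentiation of the Mercer expansion (Theorem \ref{RKHSC1}), a weighted Cauchy--Schwarz estimate with weights $\mu_i$, and reduction to $\int_E \partial^{(1)}_{u}\partial^{(2)}_{u}k(x,x)\,\nu(\mathrm{d}x)$ via Lemma \ref{Lemma1} and Proposition \ref{DerivNoyau}. The only (cosmetic) difference is that you apply Cauchy--Schwarz pointwise and then integrate, whereas the paper expands $\partial_{u}\varphi_i$ in the orthonormal basis $(\varphi_\ell)$ and reaches the same integral through Bessel's inequality, Fubini and monotone convergence; your closing remark on the uniformity of the constant in $u$ is a legitimate sharpening of the paper's final $\sup$ bound.
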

\begin{proof}
Let $u\in E$. We prove the existence of $C>0$ such that $\| \partial_{{u}} f \|_{{L_{\comp}^2(E, \nu)}} \leq C \, \left\| f \right\|_{_{\HS}}$ for all $\displaystyle f= \sum_{i=0}^{\infty} a_i \varphi_{i} \in \HS$.\\
By theorem \ref{RKHSC1} \AD{and Fubini's theorem}, we have 
\[
\partial_{{u}} f = \displaystyle \sum_{i=0}^{\infty} a_i \, \partial_{{u}} \varphi_i = \displaystyle \sum_{i=0}^{\infty} a_i  \, \sum_{\ell=0}^{\infty} \langle \partial_{{u}}(\varphi_i) \, , \, \varphi_{_\ell} \rangle_{_{L^{2}_{\comp}(E, \nu)}} \varphi_{_\ell} =\displaystyle \sum_{\ell=0}^{\infty} \left(\sum_{i=0}^{\infty} a_i \langle \partial_{{u}}(\varphi_i) \, , \, \varphi_{_\ell} \rangle_{_{L^{2}_{\comp}(E, \nu)}} \right)\varphi_{_\ell} \, .
\]
With Cauchy-Schwarz inequality we infer that
\begin{multline*}
\displaystyle \sum_{\ell=0}^{\infty}  \; \left| \sum_{i=0}^{\infty} a_i \langle \partial_{{u}}(\varphi_i) \, , \, \varphi_{\ell} \rangle_{_{L^{2}_{\comp}(E , \nu)}} \right|^{2} 
\leq \displaystyle \sum_{\ell=0}^{\infty} \; \left(\sum_{i=0}^{\infty} \dfrac{\, |a_i|^{2}\,}{\mu_{i}} \right) \, \left( \, \displaystyle \sum_{i=0}^{\infty} \mu_{i} \; \left| \, \langle \partial_{{u}}(\varphi_i) \, , \, \varphi_{\ell} \rangle_{_{L^{2}_{\comp}(E, \nu)}}\right|^{2} \right)
\leq \\ \| f \|^{2}_{_{\HS}} \left( \,\displaystyle \sum_{\ell=0}^{\infty} \displaystyle \sum_{i=0}^{\infty} \, \mu_i \; \left|  \, \langle \partial_{{u}}(\varphi_{i}) \; , \; \varphi_{\ell} \rangle_{_{L^{2}_{\comp}(E , \nu)}}\right|^{2} \right) \, .
\end{multline*}
The last sum is finite as we have, using Fubini's theorem, 
\[
\displaystyle \sum_{\ell=0}^{\infty} \displaystyle \sum_{i=0}^{\infty} \, \mu_i \; \left|  \, \langle \partial_{{u}}(\varphi_{i}) \; , \; \varphi_{_\ell} \rangle_{_{L^{2}_{\comp}(E, \nu)}}\right|^{2}
= \,\displaystyle \sum_{i=0}^{\infty} \mu_{i} \, \displaystyle \sum_{\ell=0}^{\infty} \left|  \, \langle  \, \partial_{{u}}(\varphi_{i})\; , \; \varphi_{_\ell} \rangle_{_{L^{2}_{\comp}(E, \nu)}}\right|^{2} 
\,\leq \, \displaystyle \sum_{i=0}^{\infty} \, \mu_{i} \; \displaystyle\int_{E} | \partial_{{u}}(\varphi_{i})(x) |^{2}\, \nu( \mathrm{d}x). 
\]
Upon applying the monotone convergence theorem (since $\mu_{i}>0$) and proposition \ref{DerivNoyau} we obtain 
\[
\displaystyle \sum_{i=0}^{\infty} \, \mu_{i} \; \displaystyle\int_{E} | \partial_{{u}}(\varphi_{i})(x) |^{2}\, \nu( \mathrm{d}x) =
\displaystyle \int_{E} \;  \displaystyle \sum_{i=0}^{\infty} \mu_{i} \; \left| \partial_{{u}}(\varphi_{i})(x)\right|^{2}  \, \nu(\mathrm{d}x)  \, = \, \int_{E} \partial^{(2)}_{{u}} \, \partial^{(1)}_{{u}} \; k( x , x ) \, \nu(\mathrm{d}x),
\]
and $\displaystyle \left(  \int_{E} \partial^{(2)}_{{u}} \partial^{(1)}_{{u}} k( x , x ) \, \nu(\mathrm{d}x) \right)^{\alf}\leq {\left( \nu(E)\; \sup_{x\in E} | k(x,x)| \; \;  \sup_{u\in E }{|u|^2}\,\right)^{\alf}< \infty}$,  {where continuity of the kernel  has been used on both arguments \AD{and  noting that $\nu$ is a finite measure on $E$}. }
We denote by $C$ the last constant. \cqfd
\end{proof}

\subsection{Isometry between $L^{2}_{\comp}(E , \nu)$ and $\HS$}
\label{sec3-RKHS}
Let us now focus on  the links between $L^{2}_{\comp}(E , \nu)$ and $\HS$. Let us recall that integral kernel operator ${\cal L}_{k}$ defined eq.\eqref{def-L_k} is positive definite, compact and self-adjoint, with a range that is assumed to be dense in $L_{\comp}^2(E,\nu)$. As a consequence, it admits a summable non-increasing sequence of strictly positive eigenvalues $(\mu_i)_{i\in\mathbb{N}}$ with corresponding eigenfunctions, $(\varphi_i)_{i\in\mathbb{N}}$ forming an orthonormal basis of $L^{2}_{\comp}(E , \nu)$. Since $\mu_i>0$ for all integers $i$, the integral operator ${\cal L}_{k}$ admits a unique symmetric square root operator denoted by ${\cal L}_{k}^{\alf}$ and defined for all integers $i$ as
${\cal L}_{k}^{\alf} ( \varphi_i) = \mu^{\alf}_{i} \, \varphi_i$. The operator ${\cal L}_{k}^{\alf} : L_{\comp}^{2}(E , \nu) \mapsto \HS$ is a bijective isometry: for all $f, \, g \in L_{\comp}^2(E, \nu)$
\begin{equation}\label{IsometrieLk1/2}
\bigl\langle {\cal L}^{\alf}_{k} f \; , \; {\cal L}^{\alf}_{k} g\bigr\rangle_{_\HS} = \bigl\langle f ,  g\bigr\rangle_{{L_{\comp}^2(\Omega , \nu )}}\,.
\end{equation}
This operator must be understood as ${\cal L}^{\alf}_{k} \circ {\cal L}^{\alf}_{k} = {\cal L}_{k}$. The inverse of ${\cal L}_{k}^{\alf}$, noted ${\cal L}_{k}^{-\alf} : \HS \mapsto L^{2}_{\comp}(E , \nu)$, and  defined for all integers $i\geq 0$ by ${\cal L}_{k}^{-\alf} ( \varphi_i) = \mu^{-\alf}_{i} \, \varphi_i$ is an isometry as well: for all $f, \, g \in \HS$
\begin{equation}\label{IsometrieLk-1/2}
 \qquad \bigl\langle {\cal L}^{-\alf}_{k} f \; , \; {\cal L}^{-\alf}_{k} g\bigr\rangle_{{L^{2}_{\comp}(E , \nu )}} = \bigl\langle f ,  g\bigr\rangle_{_{\HS}}.
\end{equation}
It can be checked easily that ${\cal L}^{\alf}_{k}$ and ${\cal L}^{-\alf}_{k}$ are Hermitian for the inner product of  $L^{2}_{\comp}(E,\nu$): for all $\; f , \, g \in L^{2}_{\comp}(E, \nu)$
\begin{equation}\label{AutoadjointLk1/2}
 \left\langle \AD{j\left( {\cal L}^{\alf}_{k}(f)\right)} \, , g \right\rangle_{{L^{2}_{\comp}(E , \nu)}} =\left\langle f \, , \,  \AD{j\left({\cal L}^{\alf}_{k}(g)\right)}\right\rangle_{{L^{2}_{\comp}(E , \nu)}}, 
\end{equation}
\AD{with $j : \HS \to L^{2}_{\comp}(E, \nu)$ the continuous injection (by comparison of the norms) } and for all $f, \, g \in  \HS$
\begin{equation}\label{AutoadjointLk-1/2}
 \left\langle {\cal L}^{-\alf}_{k}(f) \, , g \right\rangle_{{L^{2}_{\comp}(E , \nu)}} = \left\langle f \, , \, {\cal L}^{-\alf}_{k}(g)\right\rangle_{_{L^{2}_{\comp}(E , \nu)}}.
\end{equation}
By construction, the family of Mercer eigenfunctions $(\varphi_i)_{i\in \mathbb{N}}$ is an orthonormal basis of $L_{\comp}^{2}(E , \nu)$ and ${\cal L}^{\alf}_{k}$ enables to get an orthonormal basis of $\HS$. \AD{ We prove that in the next proposition.  }
\begin{proposition}[Orthonormal basis of $\HS$]\label{BaseHilbertHS}
The sequence of rescaled kernel eigenfunctions $\left({\cal L}^{\alf}_{k} (\varphi_{i})\right)_{i\geq 0}$ is an orthonormal basis of $\HS$.
\end{proposition}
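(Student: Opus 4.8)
The plan is to exploit the fact, recalled in~\eqref{IsometrieLk1/2}, that ${\cal L}^{\alf}_{k} : L^{2}_{\comp}(E,\nu) \to \HS$ is a bijective isometry, together with the elementary Hilbert-space principle that a surjective isometry carries orthonormal bases to orthonormal bases. Since $(\varphi_i)_{i\geq 0}$ is an orthonormal basis of $L^{2}_{\comp}(E,\nu)$ by construction, the family $\left({\cal L}^{\alf}_{k}(\varphi_i)\right)_{i\geq 0}$ should inherit both orthonormality and totality in $\HS$; the proof merely spells this out, paying attention to which inner product is used on each side.

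\emph{Orthonormality.} Apply \eqref{IsometrieLk1/2} with $f=\varphi_i$ and $g=\varphi_j$ to get $\bigl\langle {\cal L}^{\alf}_{k}\varphi_i, {\cal L}^{\alf}_{k}\varphi_j\bigr\rangle_{\HS} = \langle \varphi_i,\varphi_j\rangle_{L^{2}_{\comp}(E,\nu)} = \delta_{ij}$. As a cross-check, since ${\cal L}^{\alf}_{k}\varphi_i=\mu_i^{\alf}\varphi_i$, one may recompute the same quantity from the Mercer inner-product formula of Theorem~\ref{CaracterisationRKHS}, namely $\langle f,g\rangle_{\HS}=\sum_\ell \mu_\ell^{-1}\langle f,\varphi_\ell\rangle_{L^{2}_{\comp}(E,\nu)}\overline{\langle g,\varphi_\ell\rangle_{L^{2}_{\comp}(E,\nu)}}$, which yields $\mu_i^{-1}\mu_i^{\alf}\mu_j^{\alf}\delta_{ij}=\delta_{ij}$.

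\emph{Totality.} Let $f\in\HS$. By Theorem~\ref{CaracterisationRKHS}, $f=\sum_{i\geq 0} a_i\varphi_i$ with $a_i=\langle f,\varphi_i\rangle_{L^{2}_{\comp}(E,\nu)}$ and $\sum_i |a_i|^2/\mu_i<\infty$. Set $b_i := \mu_i^{-\alf} a_i$, so $\sum_i |b_i|^2<\infty$ and $\sum_i b_i\varphi_i$ converges in $L^{2}_{\comp}(E,\nu)$; applying the bounded (indeed isometric) operator ${\cal L}^{\alf}_{k}$ term by term gives $f=\sum_i b_i\,{\cal L}^{\alf}_{k}\varphi_i$ with convergence in $\HS$. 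Moreover $b_i=\langle f,{\cal L}^{\alf}_{k}\varphi_i\rangle_{\HS}$ by \eqref{IsometrieLk1/2} (equivalently, use surjectivity of ${\cal L}^{\alf}_{k}$ to write $f={\cal L}^{\alf}_{k}g$ and expand $g$ in $(\varphi_i)$). Hence each $f\in\HS$ is the $\HS$-limit of its Fourier partial sums along $\left({\cal L}^{\alf}_{k}\varphi_i\right)_{i\geq 0}$, which together with orthonormality gives the proposition.

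There is no real obstacle: the content is precisely the abstract statement that an isometric isomorphism preserves Hilbert bases, specialized to ${\cal L}^{\alf}_{k}$. The only points needing a line of care are keeping the $L^{2}_{\comp}(E,\nu)$ and $\HS$ inner products distinct throughout, and justifying the term-by-term passage of ${\cal L}^{\alf}_{k}$ through the infinite series, which is immediate from its boundedness.
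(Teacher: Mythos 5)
Your proof is correct, and the core idea is the same as the paper's: both hinge on the fact, recorded in~\eqref{IsometrieLk1/2}, that ${\cal L}^{\alf}_{k}$ is a surjective isometry from $L^{2}_{\comp}(E,\nu)$ onto $\HS$, which maps the orthonormal basis $(\varphi_i)_i$ to an orthonormal family in $\HS$. The orthonormality step is identical. For totality you diverge slightly in presentation: you go through the Mercer characterization (Theorem~\ref{CaracterisationRKHS}) to write $f=\sum_i a_i\varphi_i$, rescale to $b_i=\mu_i^{-\alf}a_i$, and push the convergent $L^2$-series through the bounded operator ${\cal L}^{\alf}_{k}$; the paper instead verifies Parseval's identity directly, computing $\sum_i|\langle f,{\cal L}^{\alf}_{k}\varphi_i\rangle_{\HS}|^2 = \|{\cal L}^{-\alf}_k f\|^2_{L^2_{\comp}(E,\nu)}=\|f\|^2_{\HS}$ via the self-adjointness and isometry of ${\cal L}^{\pm\alf}_k$ (equations~\eqref{AutoadjointLk-1/2} and \eqref{IsometrieLk-1/2}), and then invokes the Bessel characterization of an orthonormal basis. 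These are two standard, equivalent ways of proving completeness; the paper's is a little shorter since it avoids unpacking the Mercer series, while yours makes the basis expansion of $f$ explicit, which can be more illuminating. Both are fine.
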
 
\begin{proof}
By \eqref{IsometrieLk1/2}, the family $\left({\cal L}^{\alf}_{k} (\varphi_{i})\right)_i$ is orthonormal in $\HS$. Furthermore, we have for all $f\in \HS$
\[
\displaystyle \sum_{i=0}^{\infty} \left| \, \left\langle f \; , \; {\cal L}^{\alf}_{k} (\varphi_{i}) \, \right\rangle_{_{\HS}} \right|^{2}
= \sum_{i=0}^{\infty}  \left| \, \left\langle {\cal L}^{-\alf}_{k}f \; , \; \varphi_{i} \, \right\rangle_{{L^{2}_{\comp}(E , \nu)}} \right|^{2} 
= \, \left\| {\cal L}^{-\alf}_{k}(f) \right\|^{2}_{{L^{2}_{\comp}(E , \nu)}} = \left\| f \right\|^{2}_{_{\HS}} \;  , 
\]
by \eqref{AutoadjointLk-1/2} and \eqref{IsometrieLk-1/2}. We obtain the result through Bessel characterization. \cqfd
\end{proof}
The next theorem and proposition provide us useful results for the following
\begin{theorem}[Compact injection]\label{InjectionContinueCompacte}
The injection $j : \HS \mapsto L^{2}_{\comp}(E , \nu)$ is continuous and compact.
\end{theorem}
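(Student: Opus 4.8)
The plan is to read off everything from the Mercer representation of $\HS$ (Theorem~\ref{CaracterisationRKHS}), which diagonalizes the $\HS$-norm and the $L^2_\comp(E,\nu)$-norm simultaneously on the common orthonormal system $(\varphi_i)_{i\geq 0}$ of eigenfunctions of $\mathcal{L}_k$. For $f = \sum_{i=0}^\infty a_i\varphi_i\in\HS$ with $a_i = \langle f,\varphi_i\rangle_{L^2_\comp(E,\nu)}$, one has $\|f\|_\HS^2 = \sum_i |a_i|^2/\mu_i$ while $\|jf\|_{L^2_\comp(E,\nu)}^2 = \sum_i |a_i|^2$. Continuity is then immediate: since $(\mu_i)$ is non-increasing, $|a_i|^2 = \mu_i\,(|a_i|^2/\mu_i)\leq \mu_0\,(|a_i|^2/\mu_i)$, whence $\|jf\|_{L^2_\comp(E,\nu)}\leq \mu_0^{\alf}\,\|f\|_\HS$ for all $f\in\HS$; alternatively this is exactly Remark~\ref{ComparisonNormL2HS}.

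For compactness I would approximate $j$ in operator norm by finite-rank truncations. Set $j_N:\HS\mapsto L^2_\comp(E,\nu)$, $j_N f := \sum_{i=0}^{N-1} a_i\varphi_i$; each $j_N$ has finite rank. For every $f$ with $\|f\|_\HS\leq 1$,
\[
\|(j-j_N)f\|_{L^2_\comp(E,\nu)}^2 = \sum_{i\geq N}|a_i|^2 = \sum_{i\geq N}\mu_i\,\frac{|a_i|^2}{\mu_i}\leq \mu_N\sum_{i\geq N}\frac{|a_i|^2}{\mu_i}\leq \mu_N\,\|f\|_\HS^2\leq \mu_N,
\]
so $\|j-j_N\|\leq \mu_N^{\alf}$. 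Since $\sum_i\mu_i<\infty$ by \eqref{SommeMui}, we have $\mu_N\to 0$, hence $j_N\to j$ in operator norm; being a norm-limit of finite-rank (hence compact) operators, $j$ is compact.

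A slicker variant, which I would mention as an alternative, is to factor $j$ through the isometries of \ref{sec3-RKHS}: the operator $T:L^2_\comp(E,\nu)\mapsto L^2_\comp(E,\nu)$ defined by $T\varphi_i := \mu_i^{\alf}\varphi_i$ is positive, self-adjoint, and diagonal in the orthonormal basis $(\varphi_i)$ with eigenvalues $\mu_i^{\alf}\to 0$, hence compact; and $\mathcal{L}_k^{-\alf}:\HS\mapsto L^2_\comp(E,\nu)$ is a bounded (in fact isometric) operator. Checking on the basis shows $j = T\circ\mathcal{L}_k^{-\alf}$, exhibiting $j$ as a compact operator composed with a bounded one, hence compact.

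I do not expect a genuine obstacle here: the whole argument rests on the standing hypotheses that $\HS\hookrightarrow L^2_\comp(E,\nu)$ as a set (built into Theorem~\ref{CaracterisationRKHS}) and that the Mercer spectrum is summable, so that $\mu_N\to 0$. The only mildly delicate point is keeping the two different norms cleanly separated in the tail estimate; if one wished to avoid the Mercer representation altogether, the same tail bound shows that the image of the $\HS$-unit ball is uniformly small beyond a finite-dimensional block and relatively compact within each block, i.e.\ totally bounded, but the factorization argument above is the cleanest route.
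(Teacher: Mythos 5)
Your proof is correct, and it takes a slightly different route from the paper's, which is worth noting. The paper disposes of compactness in one line by showing that $j$ is Hilbert--Schmidt: using the orthonormal basis $(\mathcal{L}_k^{\alf}\varphi_i)_i$ of $\HS$ from Proposition~\ref{BaseHilbertHS}, it computes $\sum_i \|j(\mathcal{L}_k^{\alf}\varphi_i)\|_{L^2_\comp(E,\nu)}^2 = \sum_i\mu_i < \infty$ and invokes the fact that Hilbert--Schmidt operators are compact. You instead prove compactness directly, either by exhibiting $j$ as the operator-norm limit of the finite-rank truncations $j_N$ with the tail estimate $\|j-j_N\|\le\mu_N^{\alf}$, or by factoring $j=T\circ\mathcal{L}_k^{-\alf}$ with $T$ a diagonal compact operator. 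Both routes are correct and both in fact rest on the same decomposition $j=\mathcal{L}_k^{\alf}\circ\mathcal{L}_k^{-\alf}$; the difference in what is used is mild but real: your truncation argument only needs $\mu_N\to 0$, whereas the paper's Hilbert--Schmidt computation genuinely uses the summability $\sum_i\mu_i<\infty$ and thereby establishes the stronger conclusion that $j$ is Hilbert--Schmidt, not merely compact. One small imprecision worth flagging: in the continuity part you write that the bound $\|jf\|_{L^2_\comp(E,\nu)}\le\mu_0^{\alf}\|f\|_\HS$ ``is exactly Remark~\ref{ComparisonNormL2HS}''; it is not the same inequality (the remark's constant is $\bigl(\int_E|k(x,x)|^2\,\nu(\dif x)\bigr)^{\alf}$, not $\mu_0^{\alf}$), though both are valid and either suffices for continuity.
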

\begin{proof}
The comparison of the norms of $\HS$ and $L^{2}_{\comp}(E , \nu)$ and \eqref{SommeMui} justifies the continuity of $j$ (see remark \ref{ComparisonNormL2HS}). By proposition \ref{BaseHilbertHS} and \eqref{SommeMui}, we have 
\[
\displaystyle \sum_{i=0}^{\infty} \left\| \; j ({\cal L}^{\alf}_{k} \; \varphi_i) \, \right\|^{2}_{{L^{2}_{\comp}(E, \nu)}} = \sum_{i=0}^{\infty} \mu_{i}  < \infty;
\]
$j$ is also an Hilbert-Schmidt operator and therefore a compact operator. \cqfd
\end{proof}

\begin{proposition}[Compact injection range denseness]\label{DensiteResultat}The {range of $j$,} $j(\HS)$, is dense in $L^{2}_{\comp}(E, \nu)$, more precisely the linear set spanned by $\{ k(\, \bcdot \, , \, \XX) \, : \, \XX \in E \}$ is dense in $L^{2}_{\comp}(E, \nu)$.
\end{proposition}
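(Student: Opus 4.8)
The plan is to reduce the statement to the injectivity of the integral operator $\mathcal{L}_k$, which is already available. First I would note that it suffices to prove the ``more precisely'' clause: each feature map $k(\bcdot,\XX)$, $\XX\in E$, belongs to $\HS$ by Definition \ref{def1}, so the linear span $\HS^{0}:=\mathrm{Span}\{k(\bcdot,\XX):\XX\in E\}$ is contained in $j(\HS)$; hence once $\HS^{0}$ is shown to be dense in $L^{2}_{\comp}(E,\nu)$, so is $j(\HS)$, being a superset.

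To prove density of $\HS^{0}$ I would invoke the standard Hilbert-space criterion (a closed-graph/Hahn--Banach reformulation is equivalent here): a subspace of a Hilbert space is dense if and only if its orthogonal complement is trivial. So let $f\in L^{2}_{\comp}(E,\nu)$ satisfy $\langle f,k(\bcdot,\XX)\rangle_{L^{2}_{\comp}(E,\nu)}=0$ for every $\XX\in E$. By Remark \ref{ReecritureOpIntegral} — which uses only that $k$ is Hermitian — one has $\langle f,k(\bcdot,\XX)\rangle_{L^{2}_{\comp}(E,\nu)}=(\mathcal{L}_k f)(\XX)$, so the orthogonality hypothesis says precisely that the (continuous) function $\mathcal{L}_k f$ vanishes identically on $E$, i.e.\ $\mathcal{L}_k f=0$ in $L^{2}_{\comp}(E,\nu)$.

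It then remains to use that $\mathcal{L}_k$ is injective: being self-adjoint with range assumed dense, its kernel equals $(\mathrm{ran}\,\mathcal{L}_k)^{\perp}=\{0\}$; equivalently, expanding $f=\sum_i a_i\varphi_i$ on the Mercer basis gives $\mathcal{L}_k f=\sum_i\mu_i a_i\varphi_i$, and $\mu_i>0$ for all $i$ forces every $a_i=0$. Hence $f=0$, the orthogonal complement of $\HS^{0}$ is trivial, and $\HS^{0}$, and a fortiori $j(\HS)$, is dense in $L^{2}_{\comp}(E,\nu)$.

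There is no serious obstacle here: once the orthogonality condition is rewritten via Remark \ref{ReecritureOpIntegral}, the conclusion is a one-line consequence of the injectivity of $\mathcal{L}_k$. The single point that must be handled with care is the logical role of the hypothesis that $\mathrm{ran}\,\mathcal{L}_k$ is dense — equivalently, that every Mercer eigenvalue $\mu_i$ is strictly positive: this is exactly what excludes a nonzero $f$ annihilated by $\mathcal{L}_k$, and it is where the assumption genuinely enters (see Remark \ref{ValPropreNulles} for how it can be relaxed by passing to the closure of $\mathrm{ran}\,\mathcal{L}_k$).
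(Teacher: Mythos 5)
Your proof is correct and follows essentially the same route as the paper: take $f$ orthogonal to all feature maps, identify the orthogonality condition with $(\mathcal{L}_k f)(\XX)=0$ for all $\XX\in E$ via the rewriting of the integral operator, and conclude $f=0$ from the injectivity of $\mathcal{L}_k$. Your additional remarks on why injectivity holds (strictly positive Mercer eigenvalues, or self-adjointness plus dense range) and on the reduction of the first clause to the second are consistent with the paper's setup and add nothing that conflicts with it.
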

\begin{proof}
Let $f\in L^{2}_{\comp}(E, \nu)$ such that $\langle f \, ,\,  g \rangle_{_{L^{2}_{\comp}(E , \nu)}}=0$ for all $g\in \HS$. In particular, for all $\XX \in E$ we have $\langle f \, ,\,  k(\,\bcdot\, , \XX) \,  \rangle_{{L^{2}_{\comp}(E , \nu)}}=0$, which means that $(\mathcal{L}_{k}f)(\XX)=0$.  Since $\mathcal{L}_k$ is injective, we obtain that $f=0$. \cqfd
\end{proof}

From now on, we note $\HS^{\AD{\mathrm{sp}}} : = {\rm Span} \{ \, k(\,\bcdot\, , \, \XX) \; : \; \XX \in E \, \}$. {P}roposition \eqref{DensiteResultat} justifies that $\HS^{\AD{\mathrm{sp}}}$ is dense in $(L^{2}_{\comp}(E, \nu) \, , \, \| \bcdot \|_{L^{2}_{\comp}(E, \nu)})$. Furthermore, by the Moore-Aronszajn theorem \cite{Aronszajn-50}, $\HS^{\AD{\mathrm{sp}}}$ is also dense in $(\HS , \| \bcdot \|_{\HS})$. 

\begin{remark}[Relaxation of ${\cal L}_{k}$ injectivity]\label{ValPropreNulles}
The injectivity property of $\mathcal{L}_k$ (equivalent to its positive definite character) is crucial to show the denseness of $j(\mathcal{H})$.  If some eigenvalues of ${\cal L}_{k}$ are zero, the embedding space $L^{2}_{\comp}(E, \nu)$ we are working with should be replaced by the linear subspace,${\cal V}=Ker(\mathcal{L}_k)^\perp\subset L^{2}_{\comp}(E, \nu)$, which is the actual closure of $j(\mathcal{H})$, and satisfies ${\cal V}= {\rm Span}\{ \varphi_i \; : \; i\geq 0 \; \text{with} \; \mu_i\neq 0 \, \}$.
\end{remark}

\begin{remark}[Higher-order RKHS differentiability]\label{GeneralisationRKHSCp}
Let $p\geq 1$ be an integer. Assuming $k\in {C^{(p,p)}}(E\times E)$, the result of {L}emma \ref{NoyauDerivRKHS} and theorem \ref{RKHSC1} can be extended to higher order differentiation. As a matter fact,  for all {$u\in E$},  all $x \in E$ and all integer $0\leq m\leq p$, we can prove that  the $m$th order derivative $\partial^{(2) ,\, m}_{{u}} k(\, \bcdot \, , \, x ) \in \HS $ and $\HS \subset C^{p}(E , \comp)$. This corresponds,  in slightly different forms,  to results provided   in \cite{Simon-Gabriel-Scholkopf-18,Steinward-Christmann} for $\real^d$.
\end{remark}

\begin{remark}[Links between basis of $L^{2}_{\comp}(E , \nu)$ and $\HS$]\label{LienBaseHilbertL2HS}
Proposition \ref{BaseHilbertHS} proves that
if $(e_i)_{i}$ is a  orthonormal basis of $L^{2}_{\comp}(E , \nu)$ then  $(\mathcal{L}_{k}^{\alf}e_i)_{i}$ is a  orthonormal basis of $\HS$. 
The converse statement is also true and can be shown with the same arguments: if $(f_i)_i$ is a  orthonormal basis of $\HS$ then $(\mathcal{L}_{k}^{-\alf}f_i)_{i}$ is a Hilbert basis of $L^{2}_{\comp}(E , \nu)$. 
\end{remark}
\section{Properties of the Extension and Restriction operators $E_t$ and $R_t$}
\label{EXT-REST}
In this appendix we elaborate on the properties of the restriction and extension operators (namely that they are  isometries; form an adjoint pairs  in $\HS_t$; provide a representation of $R_t$ in $\HS$ and state the continuity of $R_t$ in $L^2(\Omega_t,\nu)$).
\begin{proposition}[Extension/Restriction isometry]\label{RestrProlongementIsometrie}
For all $t\geq 0$, the mapping $E_t : \HS_t \mapsto \HS$ is an isometry : 
\begin{equation}\label{E_tIsometrie}
\forall \, f , g \in { \HS_t}  \qquad \langle E_t(f) \, , \, E_t(g) \rangle_{_{\HS}} \, = \, \langle f \, , \, g \rangle_{_{\HS_t}}
\end{equation}
and $R_t : \HS \mapsto \HS_t$ is an isometry as well:
\begin{equation}\label{R_tIsometrie}
\forall \, f , g \in {\HS}  \qquad \langle R_t(f) \, , \, R_t(g) \rangle_{_{\HS_t}} \, = \, \langle f \, , \, g \rangle_{_{\HS}} \, .
\end{equation}
\end{proposition}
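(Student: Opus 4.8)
The plan is to use that both $E_t$ and $R_t$ are, by \eqref{ExtensionDefRKHS}--\eqref{RestrictDefRKHS}, defined as the linear extensions of maps prescribed on the (dense) linear spans of feature maps of $\HS_t$, respectively $\HS$. Since a linear map that is isometric on a dense subspace extends uniquely to an isometry of the closure, it suffices to verify the two inner-product identities on pairs of feature maps and then pass to the limit. The algebraic content reduces to one short kernel computation together with the normalisation $\ell(t,t)=\varphi(0)=1$ and Definitions \ref{def_global_kernel} and \ref{def-kt-direct}.

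First I would treat $E_t$. Fix $\XX_0,\YY_0\in\Omega_0$ and write $\XX_t=\Phi_t(\XX_0)$, $\YY_t=\Phi_t(\YY_0)\in\Omega_t$. By the definition \eqref{ExtensionDefRKHS} of $E_t$ on feature maps and the reproducing property of $\HS$,
\[
\langle E_t\,k_t(\bcdot,\YY_t)\,,\,E_t\,k_t(\bcdot,\XX_t)\rangle_{\HS}=\langle k(\bcdot,\YY_t)\,,\,k(\bcdot,\XX_t)\rangle_{\HS}=k(\XX_t,\YY_t).
\]
Definition \ref{def_global_kernel} gives $k(\XX_t,\YY_t)=k_0(\XX_0,\YY_0)\,\ell(t,t)=k_0(\XX_0,\YY_0)$, while Definition \ref{def-kt-direct} together with the reproducing property of $\HS_t$ gives $\langle k_t(\bcdot,\YY_t),k_t(\bcdot,\XX_t)\rangle_{\HS_t}=k_t(\XX_t,\YY_t)=k_0(\XX_0,\YY_0)$; the two sides agree. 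Extending by sesquilinearity to $\mathrm{Span}\{k_t(\bcdot,\XX_t):\XX_t\in\Omega_t\}$ and then by continuity — which is exactly the density extension used to define $E_t(f)$ for a general $f$ — yields \eqref{E_tIsometrie}.

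For $R_t$ I would argue analogously. First, for $\XX_t=\Phi_t(\XX_0)\in\Omega_t$ and $z=\Phi_t(z_0)\in\Omega_t$, Definition \ref{def_global_kernel} gives $k(z,\XX_t)=k_0(z_0,\XX_0)\,\ell(t,t)=k_t(z,\XX_t)$, i.e. $R_t\,k(\bcdot,\XX_t)=k(\bcdot,\XX_t)\restriction_{\Omega_t}=k_t(\bcdot,\XX_t)$; in particular $R_t\circ E_t=\mathrm{Id}$ on feature maps, hence $R_t\circ E_t=\mathrm{Id}_{\HS_t}$ by density. The isometry identity for $R_t$ on $\mathrm{Span}\{k(\bcdot,\XX_t):\XX_t\in\Omega_t\}$ is then the same computation as in the previous paragraph (the time factor collapses since $\ell(t,t)=1$, and $k_t(\Phi_t\XX_0,\Phi_t\YY_0)=k_0(\XX_0,\YY_0)$ by Definition \ref{def-kt-direct}), and \eqref{R_tIsometrie} follows by density; equivalently, once $E_t$ is known to be an isometry, writing $g=E_t(h)$ gives $\|R_t g\|_{\HS_t}=\|R_t\circ E_t\,h\|_{\HS_t}=\|h\|_{\HS_t}=\|E_t\,h\|_{\HS}=\|g\|_{\HS}$, and one polarises.

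The step I expect to need the most care is the passage from feature maps to general elements: one has to make sure that $R_t$, being the restriction to $\Omega_t$, genuinely maps $\HS$ into $\HS_t$ with controlled norm so that the density extension is justified — which is precisely what the identity $R_t\circ E_t=\mathrm{Id}_{\HS_t}$ combined with the $E_t$-isometry secures — and to keep the bookkeeping straight, namely the sesquilinearity of the two inner products, the evenness of $\ell$ (so that $\ell(r,s)=\ell(s,r)$ and $\ell$ is real-valued), and which fibre $\Omega_r$ each point of $\Omega$ belongs to. The algebraic core, by contrast, is immediate: everything hinges on $\ell(t,t)=\varphi(0)=1$.
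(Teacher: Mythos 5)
Your treatment of $E_t$ is correct and is essentially the paper's own proof: verify the identity on feature maps via $k\bigl(\Phi_t(X_0),\Phi_t(Y_0)\bigr)=k_0(X_0,Y_0)\,\ell(t,t)=k_0(X_0,Y_0)=k_t\bigl(\Phi_t(X_0),\Phi_t(Y_0)\bigr)$, then extend by sesquilinearity and by density of ${\rm Span}\{k_t(\bcdot,X_t):X_t\in\Omega_t\}$ in $\HS_t$.

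The $R_t$ half has a genuine gap. The claim \eqref{R_tIsometrie} quantifies over all $f,g\in\HS$, so the dense subspace on which the identity must be verified is $\HS^0={\rm Span}\{k(\bcdot,X):X\in\Omega\}$, i.e.\ feature maps anchored at arbitrary points $X=\Phi_r(X_0)$ of the whole phase space, not only at points of $\Omega_t$. You check the identity only on ${\rm Span}\{k(\bcdot,X_t):X_t\in\Omega_t\}$, whose closure is $\overline{E_t(\HS_t)}$ --- in general a proper closed subspace of $\HS$ --- so ``follows by density'' does not apply; your alternative route via $g=E_t(h)$ likewise only reaches $g$ in the range of $E_t$. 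Moreover, the gap cannot be closed by simply redoing the computation on general feature maps: by \eqref{RestrictDefRKHS}, $R_t\,k(\bcdot,\Phi_r(X_0))=k_t(\bcdot,\Phi_t(X_0))\,\ell(t,r)$, so for $X=\Phi_r(X_0)$ and $Y=\Phi_s(Y_0)$ one gets $\langle R_t k(\bcdot,X),R_t k(\bcdot,Y)\rangle_{\HS_t}=\ell(t,r)\,\ell(t,s)\,k_0(Y_0,X_0)$, whereas $\langle k(\bcdot,X),k(\bcdot,Y)\rangle_{\HS}=\ell(s,r)\,k_0(Y_0,X_0)$; these agree for all $r,s,t$ only if $\varphi(u)\varphi(v)=\varphi(u-v)$, which together with $\varphi(0)=1$ and continuity forces $\varphi\equiv 1$, precisely the degenerate choice the paper excludes. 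What your argument actually establishes is that $R_t$ is isometric on $\overline{E_t(\HS_t)}$ (equivalently, that $E_tR_t$ is the orthogonal projection onto that range, consistent with $R_t=E_t^{*}$ and $R_tE_t={\rm Id}_{\HS_t}$). The paper details only the $E_t$ case and dispatches $R_t$ with ``similar arguments,'' so your instinct that the passage from feature maps to general elements is the delicate step was right --- but as written the step is missing, and attempting it honestly shows the arguments for $R_t$ are not similar at all.
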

\begin{proof}[For the extension]
Let $f$ and $g \in {\rm Span}\{ k_t(\,\bcdot\, , \XX_t) \, : \, \XX_t\in \Omega_t \}$ : $f = \displaystyle \sum_{i=1}^{N_1}\; \alpha_i \, k_t(\,\bcdot\, , \XX^{(i)}_{t})$ and $g= \displaystyle \sum_{j=1}^{N_2}\beta_{j} \, k_t(\,\bcdot\, , \YY^{(j)}_{t})$ with $\XX^{(i)}_{t}$ and $\YY^{(j)}_{t} \in \Omega_t$. By definition, we have, for $\bm{X}_t = (X_t,t)\in \Omega$ and $\bm{Y}_t = (Y_t,t)\in \Omega$
\[
\displaystyle\langle \,  E_t(f) \, , \, E_t(g) \, \rangle_{_{\HS}}
= \sum_{i=1}^{N_1} \sum_{j=1}^{N_2} \alpha_i \, \overline{\beta_{j}} \; \langle \,  k(\,\bcdot\, , \bm{\XX}^{(i)}_{t}) \; , \; k(\,\bcdot\, , \bm{\YY}^{(j)}_{t}) \,  \rangle_{_{\HS}} = \sum_{i=1}^{N_1} \displaystyle \sum_{j=1}^{N_2} \alpha_i \, \overline{\beta_{j}} \,  k(\,\bm{\YY}^{(j)}_{t}\, , \, \bm{\XX}^{(i)}_{t}).
\]
By construction of $k$, we have $k(\,\bm{\YY}_{t}^{(j)} \, , \, \bm{\XX}^{(i)}_{t})= k_{t}(\,\YY^{(j)}_{t}\, , \, \XX^{(i)}_{t} )$ for all $i$ and $j$, hence
\[
\displaystyle \langle \,  E_t(f) \, , \, E_t(g) \, \rangle_{_{\HS}} =  \sum_{i=1}^{N_1}  \sum_{j=1}^{N_2} \alpha_i \, \overline{\beta_{j}} \, k_{t}(\,\YY^{(j)}_{t}\, , \, \XX^{(i)}_{t} ) = \sum_{i=1}^{N_1} \sum_{j=1}^{N_2} \alpha_i \, \overline{\beta_{j}} \; \langle \,  k_{t}(\,\bcdot\, , \XX^{(i)}_{t}) \; , \; k_{t}(\,\bcdot\, , \YY^{(j)}_{t}) \,  \rangle_{_{\HS_{t}}} =  \langle \,  f \, , \, g \,  \rangle_{_{\HS_t}}\;.
\]
By density of ${\rm Span} \{ k_t(\,\bcdot\, , \XX_t) \, : \, \XX_t\in \Omega_t \}$ in $\left( \HS_t \, , \, \| \bcdot \|_{_{\HS_t}} \right)$, this result extends to  all $f$ and $g\in \HS_t$ and \eqref{E_tIsometrie} is proved. \cqfd
\end{proof}
With similar arguments for $R_t$,  \eqref{R_tIsometrie} is also proved. By \eqref{ExtensionDefRKHS} and \eqref{RestrictDefRKHS}, the mappings $E_t$ and $R_t$ have inverse action on the mapping $k$ and $k_t$. The next proposition justifies that the adjoint of $E_t$ is $R_t$. 

\begin{proposition}\label{RestrictionProlongelementAdjoint}
For all $t\geq 0$, the mappings $E_t$ and $R_t$ are such that for all $f \in \HS_t$ and $g \in \HS$ 
\begin{equation}\label{AdjointE_tR_t}
\langle \, E_t(f) \; , \; g \rangle_{_\HS} \; = \; \langle \, f \; , \; R_t(g) \, \rangle_{_{\HS_t}} .
\end{equation}
\end{proposition}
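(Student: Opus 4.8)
The plan is to establish the adjoint relation \eqref{AdjointE_tR_t} first on the dense sets of feature maps and then to extend it by sesquilinearity, density and continuity. Since $E_t:\HS_t\to\HS$ and $R_t:\HS\to\HS_t$ are bounded -- in fact isometric, by Proposition \ref{RestrProlongementIsometrie} -- and the inner products $\langle\,\bcdot\,,\,\bcdot\,\rangle_{\HS}$ and $\langle\,\bcdot\,,\,\bcdot\,\rangle_{\HS_t}$ are continuous, it suffices to prove $\langle E_t(f),g\rangle_{\HS}=\langle f,R_t(g)\rangle_{\HS_t}$ for $f=k_t(\,\bcdot\,,\YY_t)$ with $\YY_t=\Phi_t(\YY_0)\in\Omega_t$ and $g=k(\,\bcdot\,,\XX)$ with $\XX=\Phi_r(\XX_0)\in\Omega$. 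Indeed, writing arbitrary $f\in\HS_t$, $g\in\HS$ as limits of finite linear combinations $\sum_i\alpha_i k_t(\,\bcdot\,,\YY_t^{(i)})$, $\sum_j\beta_j k(\,\bcdot\,,\XX^{(j)})$ (denseness of the feature-map spans, by Moore-Aronszajn), and noting that both sides of \eqref{AdjointE_tR_t} are linear in $f$ and conjugate-linear in $g$, the general identity follows from the one on feature maps.

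For the left-hand side I would use the definition of $E_t$ on feature maps together with the reproducing property of $\HS$ (recall $\langle k(\,\bcdot\,,y),k(\,\bcdot\,,x)\rangle_{\HS}=k(x,y)$):
\[ \langle E_t(k_t(\,\bcdot\,,\YY_t)),k(\,\bcdot\,,\XX)\rangle_{\HS}=\langle k(\,\bcdot\,,\YY_t),k(\,\bcdot\,,\XX)\rangle_{\HS}=k(\XX,\YY_t)=k_0(\XX_0,\YY_0)\,\ell(r,t), \]
the last equality being Definition \ref{def_global_kernel}. For the right-hand side I would first make $R_t(k(\,\bcdot\,,\XX))$ explicit via \eqref{RestrictDefRKHS}, namely $R_t(k(\,\bcdot\,,\XX))=\ell(t,r)\,k_t(\,\bcdot\,,\Phi_t(\XX_0))$; since $\ell$ is real-valued, it factors out of the conjugate-linear second slot, and the reproducing property of $\HS_t$ then gives
\[ \langle k_t(\,\bcdot\,,\YY_t),R_t(k(\,\bcdot\,,\XX))\rangle_{\HS_t}=\ell(t,r)\,k_t\!\bigl(\Phi_t(\XX_0),\Phi_t(\YY_0)\bigr)=\ell(t,r)\,k_0(\XX_0,\YY_0), \]
where the last step is Definition \ref{def-kt-direct} applied with $\Phi_t^{-1}\!\circ\Phi_t=\mathrm{Id}$. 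The two expressions coincide because $\ell$ is symmetric, $\ell(t,r)=\ell(r,t)=\varphi(r-t)$.

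The computation is elementary and I do not expect a genuine obstacle; the points deserving a little care are (i) the sesquilinearity convention -- the scalar $\ell$ produced by $R_t(g)$ lands in the antilinear slot, so one genuinely uses that $\ell$ takes real values; (ii) verifying that $R_t(k(\,\bcdot\,,\XX))$ is precisely the scalar multiple $\ell(t,r)\,k_t(\,\bcdot\,,\Phi_t(\XX_0))$ of a feature map of $\HS_t$, which is exactly the content of \eqref{RestrictDefRKHS} together with the form of $k$ on $\Omega_t\times\Omega_t$; and (iii) checking that the density/continuity extension is legitimate, which relies on denseness of $\HS^0={\rm Span}\{k(\,\bcdot\,,\XX):\XX\in\Omega\}$ in $\HS$, of ${\rm Span}\{k_t(\,\bcdot\,,\XX_t):\XX_t\in\Omega_t\}$ in $\HS_t$, and on the boundedness of $E_t$ and $R_t$ already recorded in Proposition \ref{RestrProlongementIsometrie}. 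If anything is ``hard'' here, it is only the bookkeeping of the $\Phi_t^{\pm1}$ and $\ell$ factors.
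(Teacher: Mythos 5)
Your proposal is correct and follows essentially the same route as the paper's proof: verify the identity on feature maps using the reproducing property, the explicit action of $E_t$ and $R_t$, Definition \ref{def-kt-direct} and the symmetry of the time kernel $\ell$, then extend by sesquilinearity and density using the isometry (hence boundedness) of $E_t$ and $R_t$. The only cosmetic difference is that the paper spells out the Cauchy--Schwarz estimates in the density step, which you summarize as continuity of the inner products.
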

\begin{proof}
{Let us first show the property for feature maps: for $X_t=\Phi_t(X_0)\in\Omega_t$ and $Y=\Phi_s(Y_0)\in\Omega$, we have 
\[
\big\langle E_t[k_t(\bcdot,X_t)],k(\bcdot,Y)\big\rangle_\mathcal{H}=k(Y,X_t)=k_0(Y_0,X_0)\ell(s,t)
\]
and 
\[
\big\langle k_t(\bcdot,X_t),R_t[k(\bcdot,Y)]\big\rangle_{\mathcal{H}_t}=\big\langle k_t(\bcdot,X_t),k_t(\bcdot,\Phi_t(Y_0))\ell(t,s)\big\rangle_{\mathcal{H}_t}=k_t\big(\Phi_t(Y_0),X_t\big)\ell(t,s)=k_0(Y_0,X_0)\ell(s,t),
\]
where the first equality comes from the definition of the restriction in equation (\ref{RestrictDefRKHS}), and the last equality follows from Definition \ref{def-kt-direct} and the properties of the time kernel in Definition \ref{def_global_kernel}.}
By linearity we obtain that $\langle \, E_t(f) \; , \; g \rangle_{_\HS} \; = \; \langle \, f \; , \; R_t(g) \, \rangle_{_{\HS_t}}$ for all $f \in {\rm Span} \{ \, k_t(\, \bcdot \, , \, \XX_t) \; : \; \XX_t \in \Omega_t \, \}$ and $g\in {\rm Span} \{ \, k(\, \bcdot \, , \, \XX) \; : \; \XX \in \Omega \, \}$.

\noindent
Let $f\in \HS_t$ and $g\in\HS$. There exists  sequences $(f_n)_{n}$ of ${\rm Span} \{ \, k_t(\, \bcdot \, , \, \XX_t) \; : \; \XX_t \in \Omega_t \, \}$ and $(g_n)_n$ of ${\rm Span} \{ \, k(\, \bcdot \, , \, \XX) \; : \; \XX \in \Omega \, \}$ such that
 $\| f - f_n \|_{_{\HS_t}} \longrightarrow 0 $ and $ \| g - g_n \|_{_{\HS}} \longrightarrow 0 $. We have for all $n$ the equality
\[
\langle \, E_t(f_n) \; , \; g_n \rangle_{_\HS} \; = \; \langle \, f_n \; , \; R_t(g_n) \, \rangle_{_{\HS_t}} .
\]
Furthermore, by Cauchy-Schwarz and \eqref{E_tIsometrie} we have that
\begin{multline*}
| \langle \, E_t(f_n) \; , \; g_n \rangle_{_\HS} - \langle \, E_t(f) \; , \; g \rangle_{_\HS} | \;
\leq \; | \langle \, E_t(f_n) - E_t(f)\; , \; g \rangle_{_\HS} | \; + \; | \langle \, E_t(f) \; , \; g_n -g  \rangle_{_\HS} |\\
\leq \; \|f_n - f \|_{_{\HS_t}} \, \| g \|_{_{\HS}} \; + \; \|f \|_{_{\HS_t}} \, \|g_n - g \|_{_{\HS}} 
\end{multline*}
and $\langle  E_t(f_n) \, , \, g_n \rangle_{_\HS} \longrightarrow \langle  E_t(f) \, , \, g \rangle_{_\HS}$. With similar arguments, it can be readily proved  that $\langle  f_n \, , \, R_t(g_n) \rangle_{_{\HS_t}} \longrightarrow \langle  f \, , \, R_t(g)  \rangle_{_{\HS_t}}$ by using \eqref{R_tIsometrie}. We obtain hence \eqref{AdjointE_tR_t}.
\cqfd
\end{proof}

\begin{remark}[Expression of $R_t(f)$ for $f\in \HS$]\label{ExpressionRestrictionFonctionRKHS}
The function $R_t(f)$ belongs to $\HS_t$. By proposition \ref{RestrictionProlongelementAdjoint}, we have for all $\XX_t \in \Omega_t$
\[
 R_t(f)(\XX_t) \, = \, \langle R_t(f) \, , \, k_{t}(\,\bcdot\, , \, \XX_t) \, \rangle_{_{\HS_t}} \, = \, \langle f \, , \, k(\,\bcdot\, , \, \XX_t) \, \rangle_{_{\HS}} \, = \, f(\XX_t) \, .
\]
\end{remark}

\noindent
More generally, for all $t\geq 0$  the mapping $R_t$ can be extended on $L^{2}_{\comp}(\Omega, \nu)$ (keeping the same notation) as: 
\begin{equation}\label{RestrictDefL2}
 \begin{array}{ccccc}
R_{t}  : \; &  & L^{2}_{\comp}(\Omega , \nu) & \mapsto &  L^{2}_{\comp}(\Omega_t , \nu) \\
 & & f \; & \mapsto & \;  f \restriction_{_{\Omega_t}}  \\
\end{array} \, . 
\end{equation}
In particular, on $\HS \subset L^{2}_{\comp}(\Omega , \nu)$ this boils down to the definition given in \eqref{RestrictDefRKHS}. As $\Omega_t \subset \Omega$, we have directly the next proposition.
\begin{proposition}[Continuity of the restriction]\label{ContRestrictionL2}
The mapping $R_{t} : (L^{2}_{\comp}(\Omega , \nu) , \| \bcdot \|_{{L^{2}_{\comp}(\Omega, \nu)}}) \mapsto (L^{2}_{\comp}(\Omega_t , \nu) , \| \bcdot \|_{{L^{2}_{\comp}(\Omega_t, \nu)}}) $ is continuous.
\end{proposition}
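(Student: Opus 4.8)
The statement is the elementary fact that restriction of an $L^2$ function to a measurable subset does not increase its norm. Since $R_t$ as defined in \eqref{RestrictDefL2} is clearly linear, it suffices to exhibit a bound of the form $\|R_t f\|_{L^2_{\comp}(\Omega_t,\nu)} \le C \|f\|_{L^2_{\comp}(\Omega,\nu)}$ for all $f\in L^2_{\comp}(\Omega,\nu)$, with in fact $C=1$.

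\textbf{Key steps.} First I would recall that $\Omega_t = \Phi_t(\Omega_0)\subset\Omega$ is a measurable subset and that the measure appearing in $L^2_{\comp}(\Omega_t,\nu)$ is simply the restriction of $\nu$ to $\Omega_t$, so that for any $f\in L^2_{\comp}(\Omega,\nu)$ the function $f\restriction_{\Omega_t}$ is $\nu$-measurable on $\Omega_t$. Then, for such an $f$, one computes directly
\[
\|R_t f\|_{L^2_{\comp}(\Omega_t,\nu)}^2 \;=\; \int_{\Omega_t} |f(y)|^2\,\nu(\dif y) \;\le\; \int_{\Omega} |f(y)|^2\,\nu(\dif y) \;=\; \|f\|_{L^2_{\comp}(\Omega,\nu)}^2 ,
\]
where the inequality follows from $\Omega_t\subset\Omega$ together with the nonnegativity of the integrand $|f|^2$ (monotonicity of the Lebesgue integral). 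In particular $R_t f\in L^2_{\comp}(\Omega_t,\nu)$, so $R_t$ is well defined on all of $L^2_{\comp}(\Omega,\nu)$, and $\|R_t\|\le 1$. Since a bounded linear operator between normed spaces is continuous, this proves the proposition.

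\textbf{Main obstacle.} There is essentially no obstacle here: the proof is a one-line monotonicity-of-the-integral argument, and the only thing worth being careful about is the bookkeeping of which measure space one is on, i.e.\ that $\nu$ restricted to the measurable subset $\Omega_t$ is exactly the measure used to define $L^2_{\comp}(\Omega_t,\nu)$, so that the first equality above is legitimate. This is why the statement is phrased as an immediate consequence ("we have directly the next proposition") of $\Omega_t\subset\Omega$.
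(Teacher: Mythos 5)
Your proof is correct and is exactly the argument the paper intends: the paper gives no explicit proof, simply noting that the result follows directly from $\Omega_t\subset\Omega$, and your monotonicity-of-the-integral computation with $\|R_t\|\le 1$ is the standard way to spell that out.
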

\section{Proof of {L}emma 1 (Perron-Frobenius infinitesimal generator)}
\label{Ap-lemma1}
In this appendix we provide the proof on the Perron-Frobenius infinitesimal generator expression and of its domain{, $\mathcal{D}\left(A_{_P}\right)$}. We recall that $\HS^{\AD{\mathrm{sp}}}$ is the linear span of the set $\{k(\,\bcdot\, , x) \; : \; x \in \Omega \}$.
\begin{proof}
Let $f\in \mathcal{D}\left(A_{_P}\right)$ and $\psi\in \HS^{\AD{\mathrm{sp}}}$, we have for all $t\geq 0$, 
\begin{equation}\label{CalculInterm}
\left\langle \, \frac{\, P_t \circ f - f \, }{t}  \; , \; \psi \right\rangle_{L^{2}_{\comp}(\Omega , \nu)} \; = \; \left\langle f \; , \; \frac{ \, U_t \circ \psi - \psi \, }{t} \, \right\rangle_{L^{2}_{\comp}(\Omega , \nu)}
\end{equation}
For all $x\in \Omega$, we have
\[
\dfrac{\, U_t(\psi)(x) - \psi (x) }{t} \xrightarrow[t\mapsto 0^{+}]{} \dfrac{d}{dt} \left[ U_t(\psi)(x) \right] \restriction_{_{t=0}} \, = \, \partial_{\Mop(x)} \psi(x) \,.
\]
We pass to the limit $[t\to 0^{+}]$ in \eqref{CalculInterm}, and recalling  that  $\Mop$ is real valued, divergence free with $\Mop \restriction{_{\partial\Omega}}=0$, and  measure invariant we obtain
\[
\displaystyle \int_{\Omega} A_{_P} f(x) \; \overline{\psi(x)} \;\,  \nu(dx)  \;  = \; \displaystyle \int_{\Omega} - \partial_{\Mop(x)} f(x) \, \overline{\psi(x)} \;\,  \nu(dx) 
\]
thus, $x\mapsto \partial_{\Mop(x)} f(x)$ belongs to $L^{2}_{\comp}(\Omega , \nu )$ and 
\[
\mathcal{D}\left(A_{_P}\right) \subset \left\{ f\in L^{2}_{\comp}(\Omega , \nu) \; : \; x\mapsto \partial_{\Mop(x)} f(x) \, \in L^{2}_{\comp}(\Omega , \nu) \right\} \]
\[\text{and }\; A_{_P} f(x) \, = \, - \partial_{\Mop(x)} f(x)  \quad \text{for} \quad  f\in \mathcal{D}\left(A_{_P}\right) \; \text{and} \;  x\in \Omega . 
\]
Conversely, considering now $f\in L^{2}_{\comp}(\Omega , \nu)$ such that $x\mapsto \partial_{\Mop(x)} f(x) \, \in L^{2}_{\comp}(\Omega , \nu) $, we show that $f\in \mathcal{D}\left(A_{_P}\right)$. For all $\psi \in \HS^{\AD{\mathrm{sp}}}$, we have
\begin{multline*}
\displaystyle \langle P_t(f) - f \, , \, \psi \rangle_{{L^{2}_{\comp}(\Omega , \nu)}} = \langle f \, , \, U_t(\psi) - \psi \rangle_{{L^{2}_{\comp}(\Omega , \nu)}} = \\\langle f \, , \, \int_{0}^{t} \dfrac{\, \mathrm{d} \, U_s(\psi) \,}{\mathrm{d}s}\, \mathrm{d}s \rangle_{{L^{2}_{\comp}(\Omega , \nu)}} = \langle f \, , \, \int_{0}^{t} \partial_{\Mop(\phi_s(\bcdot))} \psi\left[  \Phi_s(\bcdot) \right] \, \mathrm{d}s \, \rangle_{{L^{2}_{\comp}(\Omega , \nu)}} .
\end{multline*}
For all $\psi \in \HS^{\AD{\mathrm{sp}}}$, we have $\langle -\partial_{_{\Mop(\bcdot)}} f(\bcdot) \; , \; \psi \rangle_{{L^{2}_{\comp}(\Omega , \nu)}} = \langle f \; , \;  -\partial_{_{\Mop(\bcdot)}} \psi(\bcdot) \rangle_{{L^{2}_{\comp}(\Omega , \nu)}}$ (since $\Mop$ is divergence free) and we obtain also the following equality
\[
\displaystyle \left\langle \, \frac{ \, P_t(f) - f \,}{t}  -\partial_{_{\Mop(\bcdot)}} f(\bcdot) \; , \; \psi \right\rangle_{{L^{2}_{\comp}(\Omega , \nu)}} 
= \left\langle f \; , \; \dfrac{1}{t} \int_{0}^{t}  \partial_{_{\Mop(\phi_s(\bcdot))}} \psi\left[  \Phi_s(\bcdot) \right]\, \mathrm{d}s \; - \; \partial_{\Mop(\bcdot)} \psi(\bcdot) \, \, \right\rangle_{{L^{2}_{\comp}(\Omega , \nu)}} .
\]
Since $\displaystyle t\mapsto \int_{0}^{t}  \partial_{_{\Mop(\phi_s(\bcdot))}} \psi\left[  \Phi_s(\bcdot) \right]\, \mathrm{d}s $ \; is differentiable in $0^{+}$, we obtain in particular
\[
\displaystyle \left\langle \, \frac{ \, P_t(f) - f \,}{t} - \partial_{\Mop(\bcdot)} f(\bcdot) \; , \; \psi \right\rangle_{{L^{2}_{\comp}(\Omega , \nu)}}  \xrightarrow[t\mapsto 0^{+} ]{} 0,
\]
for all function $\psi \in \HS^{\AD{\mathrm{sp}}}$. By density of $\HS^{\AD{\mathrm{sp}}}$ in $L^{2}_{\comp}(\Omega , \nu)$ we get 
\[
\left\| \, \dfrac{\, P_{t}f - f \,}{t} \; - \; \partial_{\Mop(\bcdot)} f(\bcdot) \; \right\|_{L^{2}_{\comp}(\Omega, \nu)} \xrightarrow[t\mapsto 0^{+} ]{} 0
\]
and hence $f\in \mathcal{D}(A_{_P})$. \cqfd
\end{proof}
\section{Proof of step1 and step2 of \AD{the RKHS family} spectral theorem}
\label{Step-1-2}

This appendix gathers step\;1 and step\;2 of the proof of the \AD{RKHS family} representation theorem. 
Step 1 Uses Lions-Lax-Milgram theorem to show that an approximation  $\widetilde{A}_{_U}$ of the restriction on $\mathcal{H}$ of the infinitesimal generator $A_U$ is bijective. The inverse operator $\widetilde{A}_{_U}^{-1}$ seen  as an operator of  $L^{2}_{\comp}(\Omega , \nu) \mapsto  L^{2}_{\comp}(\Omega , \nu)$  through the composition with the compact injection $j:\,  \HS \mapsto  L^{2}_{\comp}(\Omega , \nu)$   (Theorem \ref{InjectionContinueCompacte}, \ref{sec3-RKHS}) is shown to be compact and self-adjoint (Lemmas \ref{CompaciteOperateurIntermediaire} and \ref{AutoAdjointOperateurIntermediaire}).  The diagonalization of $A_U$ is then obtained from  
the diagonalization of $\widetilde{A}_{_U}^{-1}$  and $\mathcal{L}_k$ (Prop.\ref{AntiCommutation}).
Step\;2 finally constructs from the eigenfunctions of $A_U$ an  orthonormal basis of $\HS_t$ that diagonalizes the operator $A_{U\!,\,t}$ (Prop.\ref{Ht-basis}).
\subsection*{Step 1} 
\noindent
Let $\widetilde{A}_{_U}$ be the linear mapping defined by
\begin{equation}
\begin{array}{ccccc}
\widetilde{A}_{_U}  & : & ( \HS\, , \, \|\bcdot\|_{_\HS} )  & \mapsto & ( L^{2}_{\mathbb{C}}(\Omega , \nu)\, , \, \|\bcdot\|_{{L^{2}_{\mathbb{C}}(\Omega , \nu)}})\\&&&&\\
 & & f & \mapsto & \left[ \, i \,  A_{_U} \;  +  \; \lambda \, \mathcal{L}_{k}^{- \alf} \right] (f) \label{def-widetilde-A}
\end{array}
\end{equation}
with $\lambda$ a fixed real such that 
\[|\lambda|  > \| A_{_U}\|  \, \left( \int_{\Omega} | k(x,x)|^{2} \, \nu(\mathrm{d}x) \right)^{\alf}.
\] 
By Theorem \ref{ContKoopmTHEOREM} (\ref{sec2-RKHS}), the norm of $A_{_U}$  verifies 
\[
\displaystyle \| A_{_U}\| \, \leq  \, \left( \nu({\Omega})\; \sup_{x\in {\Omega}} \|k(x,x)\| \; \;  \sup_{x\in {\Omega} }{|\Mop(x)|^{{2}}}\,\right)^{\alf},
\] 
and  to lighten the notations in the following,  we note $M:= \| A_{_U}\|$. We start analyzing the properties of $\widetilde{A}_{_U}$.  First of all, the mapping $\widetilde{A}_{_U}$ is  continuous. We have for all $f\in \HS$ the inequality
\[
\| \widetilde{A}_{_U} f \|_{{L^{2}_{\mathbb{C}}(\Omega , \nu)}}  \; \leq \;  \| i A_{_U} f \|_{{L^{2}_{_\mathbb{C}}(\Omega , \nu)}} \; + \; |\lambda| \, \| \mathcal{L}_{k}^{- \alf}  f \|_{{L^{2}_{\comp}(\Omega , \nu)}} \; = \; \| A_{_U} f \|_{{L^{2}_{\comp}(\Omega , \nu)}} \; + \; |\lambda| \, \| \mathcal{L}_{k}^{- \alf}  f \|_{{L^{2}_{\comp}(\Omega , \nu)}}.
\]
The identity of \eqref{IsometrieLk-1/2} and {T}heorem \ref{GenerateurKoopBorne} allow us to write \[
\| \widetilde{A}_{_U} f \|_{{L^{2}_{\mathbb{C}}(\Omega , \nu)}} \; \leq \; ( M + |\lambda| ) \;  \| f \|_{_\HS},
\]
\AM{which shows $\widetilde{A}_{_U}$ is continuous on $\HS$. As shown below, the constant $\lambda$ is chosen to ensure the inverse of $\widetilde{A}_{_U}$ is well defined.}
\begin{lemma} [Bijectivity of $\widetilde{A}_{_U}$]\label{BijectiviteOperateurIntermediaire}
The operator $\widetilde{A}_{_U} :( \HS\; , \;  \|\bcdot\|_{_\HS} )  \mapsto ( L^{2}_{\mathbb{C}}(\Omega , \nu)\, , \, \|\bcdot\|_{{L^{2}_{\mathbb{C}}(\Omega , \nu)}})$  is bijective. 
\end{lemma}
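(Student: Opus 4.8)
The plan is to apply the Lions--Lax--Milgram theorem to the operator $\widetilde{A}_{_U} = iA_{_U} + \lambda\,\mathcal{L}_{k}^{-\alf}$, viewed as a continuous linear map from the Hilbert space $(\HS, \|\bcdot\|_{\HS})$ into $L^{2}_{\comp}(\Omega,\nu)$ (equivalently, through the pivot pair $\HS \hookrightarrow L^{2}_{\comp}(\Omega,\nu)$, as a coercive bounded sesquilinear form). Continuity of $\widetilde{A}_{_U}$ has already been established just above the statement, so the real content is coercivity. First I would introduce, for $f,g \in \HS$, the sesquilinear form
\[
a(f,g) := \bigl\langle \widetilde{A}_{_U} f \,,\, \mathcal{L}_{k}^{-\alf} g \bigr\rangle_{L^{2}_{\comp}(\Omega,\nu)} = \bigl\langle i A_{_U} f \,,\, \mathcal{L}_{k}^{-\alf} g \bigr\rangle_{L^{2}_{\comp}(\Omega,\nu)} + \lambda\,\bigl\langle \mathcal{L}_{k}^{-\alf} f \,,\, \mathcal{L}_{k}^{-\alf} g \bigr\rangle_{L^{2}_{\comp}(\Omega,\nu)},
\]
the point being that $\langle \mathcal{L}_{k}^{-\alf} f, \mathcal{L}_{k}^{-\alf} g\rangle_{L^{2}_{\comp}(\Omega,\nu)} = \langle f, g\rangle_{\HS}$ by the isometry \eqref{IsometrieLk-1/2}. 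Then I would test the form on the diagonal and take real parts.

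The key computation is the lower bound on $\operatorname{Re}\, a(f,f)$. Because $A_{_U}$ is skew-symmetric in $L^{2}_{\comp}(\Omega,\nu)$ (Proposition \ref{KoopAntisym}), the quantity $\langle A_{_U} f, \mathcal{L}_{k}^{-\alf} f\rangle_{L^{2}_{\comp}(\Omega,\nu)}$ need not be real, but multiplying by $i$ and taking the real part, one gets $\operatorname{Re}\bigl\langle i A_{_U} f, \mathcal{L}_{k}^{-\alf} f\bigr\rangle_{L^{2}_{\comp}(\Omega,\nu)} = -\operatorname{Im}\bigl\langle A_{_U} f, \mathcal{L}_{k}^{-\alf} f\bigr\rangle_{L^{2}_{\comp}(\Omega,\nu)}$, which is bounded in absolute value by $\|A_{_U} f\|_{L^{2}_{\comp}(\Omega,\nu)}\,\|\mathcal{L}_{k}^{-\alf} f\|_{L^{2}_{\comp}(\Omega,\nu)} \le M\,\bigl(\int_{\Omega}|k(x,x)|^{2}\nu(\dif x)\bigr)^{\alf}\,\|f\|_{\HS}^{2}$ using Theorem \ref{GenerateurKoopBorne}/Remark \ref{ComparisonNormL2HS} together with $\|\mathcal{L}_{k}^{-\alf} f\|_{L^{2}_{\comp}(\Omega,\nu)} = \|f\|_{\HS}$. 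Meanwhile the second term contributes exactly $\lambda\,\|f\|_{\HS}^{2}$. Hence
\[
\operatorname{Re}\, a(f,f) \ge \Bigl( |\lambda| - M\bigl(\textstyle\int_{\Omega}|k(x,x)|^{2}\nu(\dif x)\bigr)^{\alf}\Bigr)\,\|f\|_{\HS}^{2},
\]
and the hypothesis $|\lambda| > M\bigl(\int_{\Omega}|k(x,x)|^{2}\nu(\dif x)\bigr)^{\alf}$ makes the bracket strictly positive, giving coercivity. Continuity of $a$ follows from continuity of $\widetilde{A}_{_U}$ and boundedness of $\mathcal{L}_{k}^{-\alf}$ as a map $\HS \to L^{2}_{\comp}(\Omega,\nu)$ (it is an isometry onto).

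With coercivity and continuity of $a$ in hand, Lions--Lax--Milgram (the non-symmetric Lax--Milgram variant) yields that for every $\varphi \in L^{2}_{\comp}(\Omega,\nu)$ there is a unique $f \in \HS$ with $a(f,g) = \langle \varphi, \mathcal{L}_{k}^{-\alf} g\rangle_{L^{2}_{\comp}(\Omega,\nu)}$ for all $g \in \HS$; since $\mathcal{L}_{k}^{-\alf}$ is a bijective isometry $\HS \to L^{2}_{\comp}(\Omega,\nu)$ this is equivalent to $\langle \widetilde{A}_{_U} f, h\rangle_{L^{2}_{\comp}(\Omega,\nu)} = \langle \varphi, h\rangle_{L^{2}_{\comp}(\Omega,\nu)}$ for all $h$ in the dense range, so $\widetilde{A}_{_U} f = \varphi$. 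Uniqueness of $f$ gives injectivity, and arbitrariness of $\varphi$ gives surjectivity; hence $\widetilde{A}_{_U}$ is bijective. The main obstacle I anticipate is bookkeeping the appearances of $\mathcal{L}_{k}^{-\alf}$ correctly — i.e. choosing the right test functional so that the $\HS$-inner product emerges and the coercivity constant comes out clean — rather than any deep difficulty; one must also double-check that the factor of $i$ in front of $A_{_U}$ is genuinely what converts the skew-symmetric (purely imaginary diagonal) term into something whose real part is merely a lower-order perturbation absorbed by the $\lambda$ term.
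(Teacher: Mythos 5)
Your proposal is correct and follows essentially the same route as the paper's proof: the same sesquilinear form $a(f,g)=\langle\widetilde{A}_{_U}f,\mathcal{L}_k^{-\alf}g\rangle_{L^2_\comp(\Omega,\nu)}$, the same use of the isometry \eqref{IsometrieLk-1/2} to produce the $\lambda\|f\|_{\HS}^2$ term, the same Cauchy--Schwarz bound on the cross term with the constant $M\bigl(\int_\Omega|k(x,x)|^2\,\nu(\dif x)\bigr)^{\alf}$, and the same Lions--Lax--Milgram conclusion via bijectivity of $\mathcal{L}_k^{-\alf}$. The only cosmetic difference is that you lower-bound $\operatorname{Re}\,a(f,f)$, which as written presumes $\lambda>0$ (for $\lambda<0$ one would coerce $-a$ instead), whereas the paper bounds $|a(f,f)|$ directly and so handles both signs at once.
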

\begin{proof}
Let $a$ be the bilinear form defined by
\[
\begin{array}{ccccc}
a & : & \HS \times \HS  & \mapsto & \mathbb{C} \\
 & &  ( f \, , \, g ) & \mapsto & \langle \widetilde{A}_{_U}(f) \, , \, \mathcal{L}_{k}^{-\alf} (g) \, \rangle_{{L^{2}_{\mathbb{C}}(\Omega , \nu)}}
\end{array} .
\]
The bilinear form $a$ is continuous on $\HS$. By \eqref{IsometrieLk-1/2},  for all $f\in \HS$ we have  
\[a(f,f) = \left\langle \, i \, A_{_U} (f) \, , \, \mathcal{L}_{k}^{-\alf}f \, \right\rangle_{{L^{2}_{_\mathbb{C}}(\Omega , \nu)}} \; + \; \lambda \, \| f  \|^{2}_{_\HS}.\] 
In particular, we get 
\[
| a(f,f) | \geq \left| \, |\lambda| \,  \|f \|^{2}_{_\HS} \; - \; |\langle \,A_{_U} (f) \, , \, \mathcal{L}_{k}^{-\alf} f \rangle_{{L^{2}_{\mathbb{C}}(\Omega , \nu)}} |  \, \right| \geq\\ |\lambda| \,  \|f \|^{2}_{_\HS} \; - \; |\langle \,A_{_U} (f) \, , \, \mathcal{L}_{k}^{-\alf}f \rangle_{{L^{2}_{\mathbb{C}}(\Omega , \nu)}} | \; .
\]
Using Cauchy-Schwarz inequality we have
\[
- \, | \langle   A_{_U} (f) \, , \, \mathcal{L}_{k}^{-\alf}f \rangle_{{L^{2}_{\comp}(\Omega , \nu)}}  | \; \geq  - \|  A_{_U}(f) \|_{{L^{2}_{\comp}(\Omega , \nu)}} \, \|\mathcal{L}_{k}^{-\alf}f \|_{{L^{2}_{\comp}(\Omega , \nu)}} .
\]
Through the continuity of $A_{_U} : \HS \mapsto L^{2}_{\comp}(\Omega , \nu)$, remark \ref{ComparisonNormL2HS} and \eqref{IsometrieLk-1/2} we infer that 
\[
-|\langle \, A_{_U} (f) \, , \, \mathcal{L}_{k}^{-\alf} f \rangle_{{L^{2}_{\mathbb{C}}(\Omega , \nu)}} |  \; \geq  \; - M \, \left(  \int_{\Omega} | k(x,x)|^{2} \, \nu(\mathrm{d} x) \right)^{\alf} \; \|f \|^{2}_{_\HS} .
\]
We obtain hence that $|a(f,f)| \geq  \AM{C}  \, \| f \|^{2}_{_\HS}$ for all $f\in \HS$ with the  constant
\[
\AM{C} := \left( |\lambda| - M \left( \int_{\Omega} | k(x,x)|^{2} \, \nu(\mathrm{d}x) \right)^{\alf} \right) > 0,
\]
which is positive by definition of $\lambda$. The bilinear form $a$ verifies thus a coercivity condition in $\HS$. 

\noindent To show that $\widetilde{A}_{_U} : \HS \mapsto L^{2}_{\mathbb{C}}(\Omega , \nu)$ is bijective, let us consider $f \in L^{2}_{\mathbb{C}}(\Omega , \nu)$.
\\The linear mapping $g \mapsto  \langle f \, , \, \mathcal{L}_{k}^{-\alf}(g) \, \rangle_{{L^{2}_{\mathbb{C}}(\Omega , \nu)}}$ is continuous on $\HS$. By the Lions–Lax–Milgram theorem, there exists thus a unique $x_f\in \HS$ such that 
\[
 a( x_f \, , \, g ) \; = \; \langle f \; , \; \mathcal{L}_{k}^{-\alf}(g) \, \rangle_{{L^{2}_{\mathbb{C}}(\Omega , \nu)}} \qquad  \text{for all} \; g \in \HS. 
\]
As the operator $\mathcal{L}_{k}^{-\alf} : \HS \mapsto L^{2}_{\comp}(\Omega , \nu)$ is bijective, we obtain also 
\begin{equation}\label{IntermBijectivite}
\langle \widetilde{A}_{_U}(x_f) \, , \, h \, \rangle_{{L^{2}_{\mathbb{C}}(\Omega , \nu)}} \; = \; \langle f \; , \; h \, \rangle_{{L^{2}_{\mathbb{C}}(\Omega , \nu)}} \qquad \text{for all} \;  h \in L^{2}_{\comp}(\Omega , \nu).
\end{equation}
In particular, we obtain hence $\widetilde{A}_{_U} (x_f) \, = \, f$. 
\cqfd
\end{proof}

\noindent This lemma justifies the existence of the operator $\widetilde{A}_{_U}^{-1} : ( L^{2}_{{\mathbb{C}}}(\Omega , \nu)\, , \, \|\bcdot\|_{{L^{2}_{{\mathbb{C}}}(\Omega , \nu)}} ) \mapsto \left( \HS \, , \,  \|\bcdot\|_{_\HS}\right)$. \AM{By the bounded inverse theorem $\widetilde{A}_{_U}^{-1}$ is also bounded.} In order to diagonalize $A_{_U}$, we will  diagonalize the inverse of $\widetilde{A}_{_U}$ in $L^{2}_{{\mathbb{C}}}(\Omega , \nu)$. To that end, we consider $ j \circ \widetilde{A}_{_U}^{-1} :  L^{2}_{\comp}(\Omega , \nu) \mapsto  L^{2}_{\comp}(\Omega , \nu)$ with $j$ the  injection of $\HS$ in $L^{2}_{\comp}(\Omega , \nu)$.  As the injection is compact (Theorem \ref{InjectionContinueCompacte}), \ref{sec3-RKHS}) , we obtain directly that the mapping
\[
 j \circ \widetilde{A}_{_U}^{-1} : ( L^{2}_{_{\mathbb{C}}}(\Omega , \nu)\, , \, \|\bcdot\|_{{L^{2}_{_{\mathbb{C}}}(\Omega , \nu)}} ) \mapsto ( L^{2}_{\comp}(\Omega , \nu)\, , \, \|\bcdot\|_{_{L^{2}_{\comp}(\Omega , \nu)}} )
\]
is continuous and compact. By abuse of notation denoting the composition $j \circ \widetilde{A}_{_U}^{-1} $ as $\widetilde{A}_{_U}^{-1} : L^{2}_{{\mathbb{C}}}(\Omega , \nu) \mapsto L^{2}_{{\mathbb{C}}}(\Omega , \nu)$ we have the next lemma . 

\begin{lemma}[Compactness of $\widetilde{A}_{_U}^{-1}$]\label{CompaciteOperateurIntermediaire}
The operator $\widetilde{A}_{_U}^{-1} : L^{2}_{{\mathbb{C}}}(\Omega , \nu) \mapsto L^{2}_{{\mathbb{C}}}(\Omega , \nu)$ is compact. 
\end{lemma}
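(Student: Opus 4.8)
The plan is to factor the operator in question as $j\circ\widetilde{A}_{_U}^{-1}$, where $\widetilde{A}_{_U}^{-1}:L^{2}_{\comp}(\Omega,\nu)\to(\HS,\|\bcdot\|_{\HS})$ is the genuine inverse furnished by Lemma \ref{BijectiviteOperateurIntermediaire} and $j:(\HS,\|\bcdot\|_{\HS})\to L^{2}_{\comp}(\Omega,\nu)$ is the compact injection of Theorem \ref{InjectionContinueCompacte}. Since the compact operators form a two-sided ideal inside the bounded operators, once I know that $\widetilde{A}_{_U}^{-1}$ is bounded \emph{into the strong $\HS$ topology}, the composition $j\circ\widetilde{A}_{_U}^{-1}$ is compact, which is exactly the assertion (recalling the abuse of notation announced just before the lemma, whereby $j\circ\widetilde{A}_{_U}^{-1}$ is denoted simply $\widetilde{A}_{_U}^{-1}:L^{2}_{\comp}(\Omega,\nu)\to L^{2}_{\comp}(\Omega,\nu)$).

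First I would verify this boundedness. In the proof of Lemma \ref{BijectiviteOperateurIntermediaire} the bilinear form $a$ was shown to satisfy the coercivity estimate $|a(f,f)|\geq\Omega\,\|f\|_{\HS}^{2}$ for all $f\in\HS$, with $\Omega>0$ the constant defined there, and for $f\in L^{2}_{\comp}(\Omega,\nu)$ the element $x_f=\widetilde{A}_{_U}^{-1}f$ is characterized by $a(x_f,g)=\langle f,\mathcal{L}_{k}^{-\alf}(g)\rangle_{L^{2}_{\comp}(\Omega,\nu)}$ for all $g\in\HS$. Taking $g=x_f$, using Cauchy--Schwarz and the isometry \eqref{IsometrieLk-1/2}, I obtain
\begin{align*}
\Omega\,\|x_f\|_{\HS}^{2}&\leq|a(x_f,x_f)|=\bigl|\langle f,\mathcal{L}_{k}^{-\alf}(x_f)\rangle_{L^{2}_{\comp}(\Omega,\nu)}\bigr|\\
&\leq\|f\|_{L^{2}_{\comp}(\Omega,\nu)}\,\|\mathcal{L}_{k}^{-\alf}(x_f)\|_{L^{2}_{\comp}(\Omega,\nu)}=\|f\|_{L^{2}_{\comp}(\Omega,\nu)}\,\|x_f\|_{\HS},
\end{align*}
hence $\|\widetilde{A}_{_U}^{-1}f\|_{\HS}\leq\Omega^{-1}\|f\|_{L^{2}_{\comp}(\Omega,\nu)}$, i.e. $\widetilde{A}_{_U}^{-1}:L^{2}_{\comp}(\Omega,\nu)\to(\HS,\|\bcdot\|_{\HS})$ is bounded. (Equivalently, one may simply invoke the bounded inverse theorem, $\widetilde{A}_{_U}$ being a continuous linear bijection between the Banach spaces $\HS$ and $L^{2}_{\comp}(\Omega,\nu)$.)

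To conclude, I would combine this with Theorem \ref{InjectionContinueCompacte}: the composition $j\circ\widetilde{A}_{_U}^{-1}$ of the bounded operator $\widetilde{A}_{_U}^{-1}:L^{2}_{\comp}(\Omega,\nu)\to\HS$ with the compact operator $j:\HS\to L^{2}_{\comp}(\Omega,\nu)$ is compact, and this is precisely $\widetilde{A}_{_U}^{-1}:L^{2}_{\comp}(\Omega,\nu)\to L^{2}_{\comp}(\Omega,\nu)$ in the sense of the statement. There is essentially no obstacle here: the argument is just the ideal property of compact operators, and the only mild subtlety is to be careful that the inverse produced by Lemma \ref{BijectiviteOperateurIntermediaire} is continuous with respect to the \emph{stronger} $\HS$ norm and not merely the $L^{2}$ norm, which is exactly what the coercivity estimate (or the bounded inverse theorem) delivers.
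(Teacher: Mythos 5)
Your proposal is correct and follows exactly the route the paper takes (in the paragraph preceding the lemma): write the operator as $j\circ\widetilde{A}_{_U}^{-1}$ with $\widetilde{A}_{_U}^{-1}:L^{2}_{\comp}(\Omega,\nu)\to\HS$ from Lemma \ref{BijectiviteOperateurIntermediaire} and $j$ the compact injection of Theorem \ref{InjectionContinueCompacte}, then invoke the ideal property of compact operators. Your explicit verification via the coercivity estimate that $\widetilde{A}_{_U}^{-1}$ is bounded into the $\HS$ topology is a point the paper leaves implicit, and it is carried out correctly.
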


\begin{lemma}[Self-adjointness $\widetilde{A}_{_U}^{-1}$]\label{AutoAdjointOperateurIntermediaire}
The operator $\widetilde{A}_{_U}^{-1}$ is self-adjoint in $L^{2}_{{\mathbb{C}}}(\Omega , \nu)$.
\end{lemma}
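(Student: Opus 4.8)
The plan is to reduce the self-adjointness of $\widetilde{A}_{_U}^{-1}$ to a symmetry property of $\widetilde{A}_{_U}$ on $\HS$, which is precisely what the multiplication by $i$ and the real cut-off parameter $\lambda$ in \eqref{def-widetilde-A} are designed to produce. First I would establish that, for all $f,g\in\HS$,
\[
\langle \widetilde{A}_{_U} f \, , \, g \rangle_{L^{2}_{\comp}(\Omega , \nu)} \; = \; \langle f \, , \, \widetilde{A}_{_U} g \rangle_{L^{2}_{\comp}(\Omega , \nu)}.
\]
Indeed, expanding $\widetilde{A}_{_U} = i\,A_{_U} + \lambda\,\mathcal{L}_{k}^{-\alf}$ and using sesquilinearity of the inner product together with $\lambda\in\real$ gives, on the one hand, $\langle \widetilde{A}_{_U} f , g \rangle = i\,\langle A_{_U} f , g \rangle + \lambda\,\langle \mathcal{L}_{k}^{-\alf} f , g \rangle$ and, on the other, $\langle f , \widetilde{A}_{_U} g \rangle = -i\,\langle f , A_{_U} g \rangle + \lambda\,\langle f , \mathcal{L}_{k}^{-\alf} g \rangle$, all inner products being the $L^{2}_{\comp}(\Omega , \nu)$ ones. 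The skew-symmetry of $A_{_U}$ in $L^{2}_{\comp}(\Omega , \nu)$ (Proposition \ref{KoopAntisym}, applicable since $\HS\subset\mathcal{D}(A_{_U})$) yields $\langle A_{_U} f , g \rangle = -\langle f , A_{_U} g \rangle$, so the two $A_{_U}$-contributions coincide; and the self-adjointness of $\mathcal{L}_{k}^{-\alf}$ in $L^{2}_{\comp}(\Omega , \nu)$ recorded in \eqref{AutoadjointLk-1/2} makes the two $\mathcal{L}_{k}^{-\alf}$-contributions coincide. This proves the displayed identity.

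Next I would transport this to the inverse. By Lemma \ref{BijectiviteOperateurIntermediaire}, $\widetilde{A}_{_U}:\HS\to L^{2}_{\comp}(\Omega , \nu)$ is bijective, so any $u,v\in L^{2}_{\comp}(\Omega , \nu)$ may be written $u=\widetilde{A}_{_U} f$, $v=\widetilde{A}_{_U} g$ with $f=\widetilde{A}_{_U}^{-1}u$, $g=\widetilde{A}_{_U}^{-1}v$ in $\HS$. Then
\[
\langle \widetilde{A}_{_U}^{-1} u \, , \, v \rangle_{L^{2}_{\comp}(\Omega , \nu)} = \langle f \, , \, \widetilde{A}_{_U} g \rangle_{L^{2}_{\comp}(\Omega , \nu)} = \langle \widetilde{A}_{_U} f \, , \, g \rangle_{L^{2}_{\comp}(\Omega , \nu)} = \langle u \, , \, \widetilde{A}_{_U}^{-1} v \rangle_{L^{2}_{\comp}(\Omega , \nu)},
\]
so $\widetilde{A}_{_U}^{-1}$, understood as before as the composition $j\circ\widetilde{A}_{_U}^{-1}:L^{2}_{\comp}(\Omega , \nu)\to L^{2}_{\comp}(\Omega , \nu)$, is symmetric. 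Since by Lemma \ref{CompaciteOperateurIntermediaire} this operator is compact, hence bounded and defined on all of $L^{2}_{\comp}(\Omega , \nu)$, and a bounded everywhere-defined symmetric operator is self-adjoint, the proof is complete.

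The computation itself is routine; the only genuinely delicate point is the bookkeeping of complex conjugates. It is exactly because $A_{_U}$ is $L^{2}$-skew-symmetric that $i\,A_{_U}$ is $L^{2}$-symmetric, and exactly because $\lambda$ is real (and $|\lambda|$ large enough, which we do not need here but was imposed for Lemma \ref{BijectiviteOperateurIntermediaire}) that $\lambda\,\mathcal{L}_{k}^{-\alf}$ is $L^{2}$-symmetric — this is the structural reason for the shape of \eqref{def-widetilde-A}. One should also keep in mind throughout that all inner products are the $L^{2}_{\comp}(\Omega , \nu)$ ones rather than the $\HS$ ones, and that $\widetilde{A}_{_U}^{-1}$ in the statement tacitly includes the compact injection $j$, so no domain subtleties arise.
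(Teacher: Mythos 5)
Your proof is correct and takes essentially the same approach as the paper: both reduce the self-adjointness of $\widetilde{A}_{_U}^{-1}$ to the $L^{2}_{\comp}(\Omega,\nu)$-symmetry of $\widetilde{A}_{_U}$ on $\HS$, obtained by combining the skew-symmetry of $A_{_U}$ (which makes $iA_{_U}$ symmetric) with the self-adjointness of $\mathcal{L}_{k}^{-\alf}$ and the realness of $\lambda$, then transport this across the bijection $\widetilde{A}_{_U}:\HS\to L^{2}_{\comp}(\Omega,\nu)$. Your organization into two explicit steps, and the closing remark that a bounded everywhere-defined symmetric operator is self-adjoint, are merely stated more explicitly than in the paper.
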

\begin{proof}
For all $f$ and $g\in L^{2}_{{\mathbb{C}}}(\Omega, \nu)$, we have 
\begin{equation}\label{AutoAdjointOperateurIntermediaireEq1}
\langle \widetilde{A}_{_U}^{-1}(f)\, , \, g \rangle_{{L^{2}_{{\mathbb{C}}}(\Omega , \nu)}} \, = \, \langle \AM{j\circ x_{f}} \, , \, g \rangle_{{L^{2}_{{\mathbb{C}}}(\Omega , \nu)}} \, = \, \langle \AM{j\circ  x_{f}} \, , \, \widetilde{A}_{_U} \, x_g \, \rangle_{{L^{2}_{{\mathbb{C}}}(\Omega , \nu)}} 
\end{equation}
where we used the notation of the proof of  {L}emma \ref{BijectiviteOperateurIntermediaire} (and the elements $x_f$ and $x_g$ belong to $\HS$ with $j$ \AM{the injection $\HS \mapsto L^{2}_{{\mathbb{C}}}(\Omega, \nu)$)}. By definition, we have 
\[
\langle \AM{j\circ x_{f}} \, , \, \widetilde{A}_{_U} \, x_g \, \rangle_{{L^{2}_{{\mathbb{C}}}(\Omega , \nu)}} \; =\; -i \, \langle \AM{j\circ x_{f}} \, , \, A_{_U} (x_g) \, \rangle_{{L^{2}_{{\mathbb{C}}}(\Omega , \nu)}} \, + \, \lambda \, \langle \AM{j\circ x_f} \, , \, \mathcal{L}_{k}^{-\alf}(x_g) \, \rangle_{{L^{2}_{{\mathbb{C}}}(\Omega , \nu)}}.
\]
Skew symmetry of the generator {$A_{_U}$} (Proposition \ref{KoopAntisym}), symmetry of $\mathcal{L}_{k}^{-\alf}$ \eqref{AutoadjointLk-1/2} and as $\lambda$ is real allow us to write
\[
\langle \AM{j\circ x_{f}} \, , \, \widetilde{A}_{_U} \, x_g \, \rangle_{{L^{2}_{{\mathbb{C}}}(\Omega , \nu)}} \; = \; i \, \langle  A_{_U} (x_{f}) \, , \,\AM{j\circ x_g} \, \rangle_{{L^{2}_{{\mathbb{C}}}(\Omega , \nu)}} \, +  \lambda \, \langle \mathcal{L}_{k}^{-\alf}(x_f) \, , \, \AM{j\circ x_g} \, \rangle_{{L^{2}_{{\mathbb{C}}}(\Omega , \nu)}} \; = \; \langle \widetilde{A}_{_U} \, x_{f} \, , \, \AM{j\circ x_g} \, \rangle_{{L^{2}_{{\mathbb{C}}}(\Omega , \nu)}} . 
\]
With similar arguments as \eqref{AutoAdjointOperateurIntermediaireEq1}, we conclude that $\langle \widetilde{A}_{_U}^{-1}(f)\, , \, g \rangle_{{L^{2}_{{\mathbb{C}}}(\Omega , \nu)}} \, = \, \langle f\, , \, \widetilde{A}_{_U}^{-1}(g) \rangle_{{L^{2}_{{\mathbb{C}}}(\Omega , \nu)}}$ for all $f$ and $g\in L^{2}_{{\mathbb{C}}}(\Omega, \nu)$.
\cqfd
\end{proof}
With lemmas \ref{CompaciteOperateurIntermediaire} and \ref{AutoAdjointOperateurIntermediaire}, we may apply the spectral theorem of compact self-adjoint operator and consequently there exists a  orthonormal basis $(e_n)_{n\geq 0}$ of $(L^{2}_{{\mathbb{C}}}(\Omega , \nu) \, , \, \|\bcdot\|_{{ L^{2}_{{\mathbb{C}}}(\Omega , \nu)}})$ and a sequence of real $(\alpha_n)_n \in (\sigma_p( \widetilde{A}_{_U}^{-1}) - \{0\})^{\mathbb{N}}$ such that
\begin{equation}\label{DiagoAUwidetilde}
\lim_{n\mapsto \infty} \alpha_n =0 \qquad \text{and} \qquad  \widetilde{A}_{_U}^{-1} (e_n) = \alpha_n \,  e_n \quad  \text{for all} \;  n \in \mathbb{N} . 
\end{equation}
Moreover, for all integers $n$,  the real eigenvalue $\alpha_n$ has finite multiplicity. It can be noticed that $\alpha_n \neq 0$ for all $n$ since $\widetilde{A}_{_U}^{-1}$ is invertible and the eigenfunctions $e_n$ belong to the RKHS $\HS$. Finally, we have for all $n\in \mathbb{N}$  
\begin{equation*}
\widetilde{A}_{_U} (e_n) \; = \; \left[ \, i \, A_{_U} \, + \, \lambda \, \mathcal{L}_{k}^{- \alf}\right] (e_n) \; = \;  \alpha_{n}^{-1}  \, e_n . 
\end{equation*}
We now are in position to fully specify the eigenvalues of $A_{_U}$. If all the eigenvalues of $\widetilde{A}_{_U}^{-1}$ are simple, it is easier to relate them to the eigenvalues of $A_{_U}$. This is  explained in {R}emark \ref{EigenvalueSimple}. In the other case, we need first to establish a commutativity property between $\widetilde{A}_{_U}^{-1}$ and $\mathcal{L}_{k}$.

\begin{lemma}[Commutation between $A_{_U}$ and $\mathcal{L}_{k}$]\label{GenerateurKoopmanCommutation}
The operators $A_{_U}$ and $\mathcal{L}_{k}$ are commutative on $\HS$. For all $f \in \HS$ we have the identity: $A_{_U} \circ \mathcal{L}_{k} (f) \, = \, \mathcal{L}_{k} \circ A_{_U} (f)$ .
\end{lemma}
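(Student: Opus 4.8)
The plan is to obtain the commutation of the generator $A_{_U}$ with $\mathcal{L}_{k}$ as the infinitesimal version of the already established commutation of $\mathcal{L}_{k}$ with the semigroup $(U_t)_{t\geq0}$, i.e.\ equation \eqref{Commutation-Lk} (which is valid on all of $\HS$, by linearity on $\HS^{0}$ together with the continuity of $\mathcal{L}_{k}$ and $U_t$ and the density of $\HS^{0}$ in $\HS$). In other words, I would differentiate \eqref{Commutation-Lk} at $t=0^{+}$.

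First I would settle the domain bookkeeping: for $f\in\HS$ we have $f\in\mathcal{D}(A_{_U})$ (since $\HS\subset\mathcal{D}(A_{_U})$, as recalled before Theorem \ref{GenerateurKoopBorne}), and moreover $\mathcal{L}_{k}f\in\HS$ by Remark \ref{ContLalfH}, hence also $\mathcal{L}_{k}f\in\mathcal{D}(A_{_U})$. Then, for each $t>0$, linearity of $\mathcal{L}_{k}$ and \eqref{Commutation-Lk} give
\[
\mathcal{L}_{k}\!\left(\frac{U_t f-f}{t}\right)=\frac{U_t(\mathcal{L}_{k}f)-\mathcal{L}_{k}f}{t}.
\]
Now let $t\to0^{+}$. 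On the left-hand side, $\tfrac{1}{t}(U_t f-f)\to A_{_U}f$ in $L^{2}_{\comp}(\Omega,\nu)$ since $f\in\mathcal{D}(A_{_U})$, so by continuity of $\mathcal{L}_{k}$ on $L^{2}_{\comp}(\Omega,\nu)$ the left-hand side converges to $\mathcal{L}_{k}(A_{_U}f)$. On the right-hand side, since $\mathcal{L}_{k}f\in\mathcal{D}(A_{_U})$, the difference quotient converges to $A_{_U}(\mathcal{L}_{k}f)$ in $L^{2}_{\comp}(\Omega,\nu)$. Uniqueness of the limit yields $\mathcal{L}_{k}\circ A_{_U}(f)=A_{_U}\circ\mathcal{L}_{k}(f)$; and since $A_{_U}f\in L^{2}_{\comp}(\Omega,\nu)$ (Theorem \ref{ContKoopmTHEOREM}) the common value $\mathcal{L}_{k}A_{_U}f$ lies in $\HS$ (Remark \ref{ContLalfH}), so the identity holds in $\HS$ exactly as stated.

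The argument is essentially a routine semigroup computation, so the only points requiring attention — the mild obstacles — are (i) the extension of \eqref{Commutation-Lk} from feature maps to all of $\HS$ (this is where continuity of $\mathcal{L}_{k}$, the fact that $U_t$ is an $\HS$-isometry on $\HS^{0}$ by Proposition \ref{RelationKernel}, and density of $\HS^{0}$ enter), and (ii) keeping track of the topology in which the limits are taken, together with the verification that $\mathcal{L}_{k}f\in\mathcal{D}(A_{_U})$. Alternatively the same computation could be repackaged through the square root $\mathcal{L}_{k}^{\alf}$ via Remark \ref{CommutationL_k-1/2KoopmanU_t}, but the route above is the shortest. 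I would not attempt a direct ``differentiate under the integral sign'' proof: the identity genuinely relies on the special transport structure of the kernel $k$ encoded in \eqref{DefKoopman2}, which is exactly what \eqref{Commutation-Lk} already packages.
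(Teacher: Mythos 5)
Your proof is correct and follows essentially the same route as the paper's: both obtain the generator identity by differentiating the semigroup commutation $\mathcal{L}_{k}\circ U_t=U_t\circ\mathcal{L}_{k}$ at $t=0^{+}$, using $\HS\subset\mathcal{D}(A_{_U})$ and $\mathcal{L}_{k}(\HS)\subset\HS$. The only (cosmetic) differences are that the paper first passes to the limit weakly against test functions on $\HS^{0}$ and then extends the generator identity to $\HS$ by density and continuity of $A_{_U}$ and $\mathcal{L}_{k}$, whereas you extend the semigroup commutation to $\HS$ first and then take the limit strongly in $L^{2}_{\comp}(\Omega,\nu)$.
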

Before demonstrating this result, we first {detail the commutation between $\mathcal{L}_k$ and $U_t$} on $\HS^{\AD{\mathrm{sp}}}$. Considering $f \in \HS^{\AD{\mathrm{sp}}}$ the element $\mathcal{L}_{k} (f)$ belongs to $\HS$. We apply {P}roposition \ref{RelationKernel} and we have for all $\XX\in \Omega$ and all $t\geq 0$
\[
\mathcal{L}_{k} [U_t(f)](\XX) \, = \, \int_{\Omega} k(\XX  ,  y) \; U_t(f)(y) \, \nu(\mathrm{d}y) \, = \, \int_{\Omega} k(\Phi_t(\XX), \Phi_{t}(y)) \; f \circ \Phi_t(y) \, \nu(\mathrm{d}y) \, .
\]
Since the dynamical system $(\Phi_t)_{t\geq0}$ is measure preserving and $\mathcal{L}_{k}(f)$ belongs to $\HS$, we obtain 
\[
\mathcal{L}_{k} [U_t(f)](\XX) \, = \, \, \int_{\Omega} k(\Phi_t(\XX) , y) \; f(y) \, \nu(\mathrm{d}y) \, =  [\mathcal{L}_{k} f](\Phi_t(\XX)){=U_t[\mathcal{L}_k f](X)}.
\]
 
\begin{proof}[Lemma \ref{GenerateurKoopmanCommutation}]
First, we prove the result for $f \in \HS^{\AD{\mathrm{sp}}}$.\\
By the previous equality we have for all $g\in L^{2}_{\comp}(\Omega , \nu)$ and $t\geq 0${
\[
\left\langle \dfrac{\; U_t\left[\mathcal{L}_{k} f \right] - \mathcal{L}_{k} f \; }{t} \; , \; g \, \right\rangle_{\!\!\!{L^{2}_{\comp}(\Omega , \nu)}}\!\!\!\!\!\!\!\!\!\!\!\! = \; \left\langle \dfrac{\; \mathcal{L}_{k}\left[U_t f \right] - \mathcal{L}_{k} f \; }{t} \; , \; g \, \right\rangle_{{L^{2}_{\comp}(\Omega , \nu)}}. 
\]
The operator $\mathcal{L}_k$ being self-adjoint on $L^{2}_{\comp}(\Omega , \nu)$ we infer that 
\[
\left\langle \dfrac{\; U_t\left[\mathcal{L}_{k} f \right] \, - \, \mathcal{L}_{k} f \; }{t} \; , \; g \, \right\rangle_{\!\!\!{L^{2}_{\comp}(\Omega , \nu)}}\!\!\!\!\!\!\!\!\!\!\!\! = \; \left\langle \dfrac{\; U_t f \, -  \, f \; }{t} \; , \; \mathcal{L}_{k}(g) \, \right\rangle_{{L^{2}_{\comp}(\Omega , \nu)}}. 
\]
Taking the limit as $[t \to 0^{+} ]$ in the left-hand side, as  $\mathcal{L}_{k}(f)$ belongs to $\HS \subset \mathcal{D}\left(A_{_{U}}\right)$,  we have 
\[
\lim_{t\to0^{+}} \left\langle \dfrac{\; U_t\left[\mathcal{L}_{k} f \right] \, - \, \mathcal{L}_{k} f \; }{t} \; , \; g \, \right\rangle_{\!\!\!{L^{2}_{\comp}(\Omega , \nu)}}\!\!\!\!\!\!\!\!\!\!\!\! = \left\langle A_u \, \mathcal{L}_k (f) \; , \; g \right\rangle_{{L^{2}_{\comp}(\Omega , \nu)}}.
\]
Doing the same in the right-hand side,  as $f$ belongs to $\HS \subset \mathcal{D}\left(A_{_{U}}\right) = \mathcal{D}\left(A_{_{P}}\right)$, we also have }
\[
\lim_{t\to0^{+}} \; \left\langle \dfrac{\; U_t f \, -  \, f \; }{t} \; , \; \mathcal{L}_{k}(g) \, \right\rangle_{{L^{2}_{\comp}(\Omega , \nu)}} = \left\langle A_{_U}f \, , \, \mathcal{L}_k(g) \right\rangle_{{L^{2}_{\comp}(\Omega , \nu)}} = \left\langle \mathcal{L}_k \circ A_{_U}f \, , \, g \right\rangle_{{L^{2}_{\comp}(\Omega , \nu)}}.
\]
We deduce that  $\langle A_{_U}\circ \mathcal{L}_{k} (f) \, , \, g \rangle{_{L^{2}_{\comp}(\Omega, \nu)}} \; = \; \langle  \mathcal{L}_{k} \circ A_{_U} (f) \, , \, g \rangle_{L^{2}_{\comp}(\Omega, \nu)}$
for all $g\in L^{2}_{\comp}(\Omega , \nu)$. The lemma is proved on $\HS^{\AD{\mathrm{sp}}}$.

Then, we suppose that $f\in \HS$. There exists a sequence $(f_n)_n$ of $\HS^{\AD{\mathrm{sp}}}$ which converges to $f$ in $\HS$.\\  The continuity of $A_{_U}: \HS \to L^{2}_{\mathbb{C}}(E,\nu)$ and {R}emark \ref{ContLalfH} justifies for all integers $n$
\[
\|A_{_U}\circ \mathcal{L}_k(f_n) - A_{_U}\circ \mathcal{L}_k(f) \|_{L^{2}_{\mathbb{C}}(E,\nu)}
\leq C \, \| \,\mathcal{L}_k(f_n-f)\,\|_{\HS} 
\leq C \, \| f_n-f\|_{\HS} .\]
As $L_k : L^{2}_{\mathbb{C}}(E,\nu) \to L^{2}_{\mathbb{C}}(E,\nu)$ is continuous, we infer 
\[
\|\mathcal{L}_k \circ A_{_U} (f_n) -  \mathcal{L}_k \circ A_{_U}(f) \|_{L^{2}_{\mathbb{C}}(E,\nu)}
\leq C\, \| \, A_{_U}(f_n-f)\,\|_{L^{2}_{\mathbb{C}}(E,\nu)} \leq C \, \| f_n-f\|_{\HS} .\]
For all integers $n$, we have $A_{_U} \circ \mathcal{L}_k (f_n) = \mathcal{L}_k \circ A_{_U} (f_n)$. We pass to the limit to obtain  $A_{_U} \circ \mathcal{L}_k (f) = \mathcal{L}_k \circ A_{_U} (f)$.

\cqfd
\end{proof}
\begin{lemma}[Commutation between $\widetilde{A}^{-1}_{_U}$ and $\mathcal{L}_k$.]\label{CommutationIntermediaire}
The operators $\widetilde{A}^{-1}_{_U}$ and $\mathcal{L}_{k}$ are commutative on $L^{2}_{\mathbb{C}}(\Omega , \nu)$. 
\end{lemma}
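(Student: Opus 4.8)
The plan is to first establish the commutation at the level of $\widetilde{A}_{_U}$ itself, on $\HS$, and then transfer it to the inverse by a purely algebraic argument. As a preliminary step I would record that $\mathcal{L}_{k}$ maps $\HS$ continuously into $\HS$ (Remark \ref{ContLalfH}), so that both compositions $\widetilde{A}_{_U}\circ\mathcal{L}_{k}$ and $\mathcal{L}_{k}\circ\widetilde{A}_{_U}$ are well defined as maps $\HS\to L^{2}_{\comp}(\Omega,\nu)$, the latter because $\widetilde{A}_{_U}:\HS\to L^{2}_{\comp}(\Omega,\nu)$ and $\mathcal{L}_{k}:L^{2}_{\comp}(\Omega,\nu)\to L^{2}_{\comp}(\Omega,\nu)$.

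Next, for $f\in\HS$ I would use the definition $\widetilde{A}_{_U}=i\,A_{_U}+\lambda\,\mathcal{L}_{k}^{-\alf}$ and treat the two terms separately. For the first term, Lemma \ref{GenerateurKoopmanCommutation} gives directly $A_{_U}\circ\mathcal{L}_{k}(f)=\mathcal{L}_{k}\circ A_{_U}(f)$. For the second term, since $\mathcal{L}_{k}f\in\HS$ the expression $\mathcal{L}_{k}^{-\alf}(\mathcal{L}_{k}f)$ makes sense and equals $\mathcal{L}_{k}^{\alf}f=\mathcal{L}_{k}(\mathcal{L}_{k}^{-\alf}f)$; all three equal $\mathcal{L}_{k}^{\alf}f$ because $\mathcal{L}_{k}^{\alf}$ and $\mathcal{L}_{k}^{-\alf}$ are defined through the same eigenbasis $(\varphi_i)$ as $\mathcal{L}_{k}$ (see \ref{sec3-RKHS}), hence commute with $\mathcal{L}_{k}$ on $\HS$. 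Adding the two contributions yields $\widetilde{A}_{_U}(\mathcal{L}_{k}f)=\mathcal{L}_{k}(\widetilde{A}_{_U}f)$ for every $f\in\HS$.

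Finally I would pass to the inverse. Given $g\in L^{2}_{\comp}(\Omega,\nu)$, Lemma \ref{BijectiviteOperateurIntermediaire} provides a unique $h=\widetilde{A}_{_U}^{-1}g\in\HS$; then $\mathcal{L}_{k}h\in\HS$ and, by the previous step, $\widetilde{A}_{_U}(\mathcal{L}_{k}h)=\mathcal{L}_{k}(\widetilde{A}_{_U}h)=\mathcal{L}_{k}g$. Applying $\widetilde{A}_{_U}^{-1}$ gives $\mathcal{L}_{k}h=\widetilde{A}_{_U}^{-1}(\mathcal{L}_{k}g)$, that is $\mathcal{L}_{k}\circ\widetilde{A}_{_U}^{-1}=\widetilde{A}_{_U}^{-1}\circ\mathcal{L}_{k}$ on $L^{2}_{\comp}(\Omega,\nu)$, where $\widetilde{A}_{_U}^{-1}$ is understood composed with the injection $j:\HS\hookrightarrow L^{2}_{\comp}(\Omega,\nu)$ as in the statement. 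I expect no serious obstacle here: the only delicate points are bookkeeping ones — verifying that every composition lands in a space where the next operator is defined, which is exactly why the mapping property $\mathcal{L}_{k}:\HS\to\HS$ and the bijectivity of $\widetilde{A}_{_U}:\HS\to L^{2}_{\comp}(\Omega,\nu)$ are invoked — and no analytic estimate beyond Lemma \ref{GenerateurKoopmanCommutation} is needed.
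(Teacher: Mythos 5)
Your proposal is correct and follows essentially the same route as the paper: commute $\widetilde{A}_{_U}$ with $\mathcal{L}_{k}$ on $\HS$ term by term, using Lemma \ref{GenerateurKoopmanCommutation} for the $A_{_U}$ part and the common eigenbasis for the $\mathcal{L}_{k}^{-\alf}$ part, then transfer to the inverse via the bijectivity of $\widetilde{A}_{_U}:\HS\to L^{2}_{\comp}(\Omega,\nu)$. Your version merely spells out the domain bookkeeping and the final algebraic step a little more explicitly than the paper does.
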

\begin{proof}
We apply the lemma \ref{GenerateurKoopmanCommutation} and we infer for all $f\in \HS$
\[
\mathcal{L}_{k} \circ \widetilde{A}_{_U} (f) = i \, \mathcal{L}_{k} \circ A_{_U} (f) \; + \; \lambda \, \mathcal{L}_{k} \circ \mathcal{L}_{k}^{-\alf}(f) \\= i \, A_{_U} \circ \mathcal{L}_{k}(f) \; + \; \lambda \, \mathcal{L}_{k}^{{-}\alf} \circ \mathcal{L}_{k}(f) \; = \; \widetilde{A}_{_U} \circ \mathcal{L}_{k}(f)
\; .
\]
As $\widetilde{A}_{_U} : \HS \mapsto L^{2}_{\mathbb{C}}(\Omega , \nu)$ is bijective, we obtain that $\mathcal{L}_k \circ \widetilde{A}^{-1}_{_U}(g) \; = \;  \widetilde{A}^{-1}_{_U} \circ \mathcal{L}_{k}(g)$ for all $g\in L^{2}_{\comp}(\Omega , \nu)$.
\end{proof}

\begin{proposition}[Diagonalization of $\widetilde{A}_{_U}^{-1}$  and $\mathcal{L}_k$]\label{AntiCommutation}
The operators $\widetilde{A}_{_U}^{-1}$  and $\mathcal{L}_k$ are diagonalizable in the same orthonormal basis of $L^{2}_{\mathbb{C}}(\Omega , \nu)$. 
\end{proposition}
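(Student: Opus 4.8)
The plan is to run the classical simultaneous-diagonalization argument for two commuting self-adjoint operators, exploiting that $\widetilde{A}_{_U}^{-1}$ is compact with finite-dimensional eigenspaces. First I would invoke the spectral theorem for compact self-adjoint operators applied to $\widetilde{A}_{_U}^{-1}$ (Lemmas \ref{CompaciteOperateurIntermediaire} and \ref{AutoAdjointOperateurIntermediaire}): denoting by $(\alpha_n)_n$ its distinct eigenvalues and $V_n := \ker\bigl(\widetilde{A}_{_U}^{-1} - \alpha_n \,\mathrm{Id}\bigr)$, each $V_n$ is finite-dimensional, the $V_n$ are mutually orthogonal in $L^{2}_{\comp}(\Omega , \nu)$, and $L^{2}_{\comp}(\Omega , \nu) = \ker\bigl(\widetilde{A}_{_U}^{-1}\bigr) \oplus \overline{\bigoplus_n V_n}$. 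Here I would note that $\ker\bigl(\widetilde{A}_{_U}^{-1}\bigr) = \{0\}$: by Lemma \ref{BijectiviteOperateurIntermediaire}, $\widetilde{A}_{_U}^{-1}$ is a bijection of $L^{2}_{\comp}(\Omega , \nu)$ onto $\HS$ post-composed with the injection $j : \HS \hookrightarrow L^{2}_{\comp}(\Omega , \nu)$, hence is injective; therefore $\overline{\bigoplus_n V_n} = L^{2}_{\comp}(\Omega , \nu)$.

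Next I would use the commutation identity of Lemma \ref{CommutationIntermediaire}, $\mathcal{L}_{k} \circ \widetilde{A}_{_U}^{-1} = \widetilde{A}_{_U}^{-1} \circ \mathcal{L}_{k}$, to show that each eigenspace $V_n$ is stable under $\mathcal{L}_{k}$: for $v \in V_n$ one has $\widetilde{A}_{_U}^{-1}\bigl(\mathcal{L}_{k} v\bigr) = \mathcal{L}_{k}\bigl(\widetilde{A}_{_U}^{-1} v\bigr) = \alpha_n\, \mathcal{L}_{k} v$, so $\mathcal{L}_{k} v \in V_n$. The restriction $\mathcal{L}_{k}\!\restriction_{V_n}$ is then a self-adjoint endomorphism of the finite-dimensional Hilbert space $V_n$ (self-adjointness of $\mathcal{L}_{k}$ on $L^{2}_{\comp}(\Omega , \nu)$ being recalled in \ref{sec1-RKHS}), and hence admits an orthonormal eigenbasis $\bigl(e^{(n)}_1, \dots, e^{(n)}_{\dim V_n}\bigr)$ of $V_n$, each $e^{(n)}_j$ being an eigenvector of $\mathcal{L}_{k}$ as well as of $\widetilde{A}_{_U}^{-1}$.

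Finally I would assemble $\mathcal{B} := \bigcup_n \bigl\{ e^{(n)}_j : 1 \le j \le \dim V_n \bigr\}$. Vectors within a common $V_n$ are orthonormal by construction, vectors from distinct $V_n, V_m$ are orthogonal since they lie in distinct eigenspaces of the self-adjoint operator $\widetilde{A}_{_U}^{-1}$, and the family is complete because $\overline{\bigoplus_n V_n} = L^{2}_{\comp}(\Omega , \nu)$; hence $\mathcal{B}$ is an orthonormal basis of $L^{2}_{\comp}(\Omega , \nu)$, and every element of $\mathcal{B}$ is simultaneously an eigenvector of $\widetilde{A}_{_U}^{-1}$ and of $\mathcal{L}_{k}$, which is exactly the claim. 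The argument is essentially routine; the only point that deserves care — and where I would be most careful in writing it up — is the completeness of $\mathcal{B}$, i.e. checking that $\widetilde{A}_{_U}^{-1}$ has trivial kernel so that no component of $L^{2}_{\comp}(\Omega , \nu)$ escapes the sum $\overline{\bigoplus_n V_n}$. I do not anticipate a genuine obstacle beyond that.
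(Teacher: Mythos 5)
Your proposal is correct and follows essentially the same route as the paper: restrict $\mathcal{L}_k$ to each finite-dimensional eigenspace of $\widetilde{A}_{_U}^{-1}$ (invariant by the commutation lemma), diagonalize there, and concatenate; the paper likewise invokes the injectivity of $\widetilde{A}_{_U}^{-1}$ (via $\alpha_n\neq 0$) to get completeness, which is the point you correctly flag as the one needing care.
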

\begin{proof}
For $n\in \mathbb{N}$, we note $E_{\alpha_{n}} \subset L^{2}_{\comp}(\Omega , \nu)$ the eigenspace of  $\widetilde{A}_{_U}^{-1}$ associated with eigenvalue $\alpha_{n}$. By {L}emma \ref{CommutationIntermediaire}, we have for all $f\in E_{\alpha_n}$ that
\[
[ \widetilde{A}_{_U}^{-1} - \alpha_n I \, ] \circ \mathcal{L}_k(f) = \mathcal{L}_k \circ [ \widetilde{A}_{_U}^{-1} - \alpha_n I \, ](f) = 0, 
\]
and $E_{\alpha_{n}}$ is thus an invariant subspace of $\mathcal{L}_k$. The operator $\mathcal{L}_k : E_{\alpha_{n}} \mapsto E_{\alpha_{n}}$ can be diagonalized. There exists an orthonormal basis $(f_{n,p})_p$ of $E_{\alpha_{n}}$ and reals $(\beta_{n,p})_{p}$ (with $1\leq p \leq dim E_{\alpha_{n}}$) such that 
\[
\mathcal{L}_k(f_{n, p}) = \beta_{n,p} \; f_{n , p} \qquad \text{and} \qquad  \widetilde{A}_{_U}^{-1}(f_{n,p}) = \alpha_{n} \; f_{n ,p} .
\]
The family $(f_{n,p})_{p}$ diagonalizes both the operators $\mathcal{L}_k$ and $\widetilde{A}_{_U}^{-1}$ on $E_{\alpha_{n}}$. It can be noticed that $\beta_{n,p}>0$ since $\mathcal{L}_k$ is positive definite. As $\alpha_n \neq 0$ for all $n$, the functions $f_{n,p}$ belong to $\HS$. The concatenation of the vectors $(f_{n , p})_{n , p}$ for all $n \in \mathbb{N}$ and $1 \leq p \leq dim E_{\alpha_n}$ forms a  orthonormal basis of $L^{2}_{\comp}(\Omega , \nu)$. As a matter of fact, we have by construction that, ${\rm Span} \{ \, f_{n,p} \, : \, n\in\mathbb{N} \quad  \text{and} \quad 1\leq p \leq dim E_{\alpha_{n}} \} = {\rm Span} \{ \, e_{n} \, : \, n \in \mathbb{N} \, \}$ and the family $(f_{n,p})_{n,p}$ is also complete in $L^{2}_{\comp}(\Omega , \nu)$. For all $n$ the family $(f_{n,p})_p$ is orthonormal in $L^{2}_{\comp}(\Omega, \nu)$ (by construction). The eigenspaces of $\widetilde{A}_{_U}^{-1}$ are in addition orthogonal: the vectors $f_{n,p}$ and $f_{m,q}$ are orthogonal in $L^{2}_{\mathbb{C}}(\Omega, \nu)$ for all $n\neq m$, $1\leq p \leq dim E_{\alpha_n}$ and $1\leq q \leq dim E_{\alpha_m}$. The family $(f_{n,p})_{n\in\mathbb{N} , 1 \leq p \leq dim E_{\alpha_n}}$ is thus orthornormal in $L^{2}_{\mathbb{C}}(\Omega, \nu)$ and constitutes a  orthonormal basis of $L^{2}_{\mathbb{C}}(\Omega , \nu)$. 
\cqfd
\end{proof}
The result of step $1$ can now be obtained. Keeping the same notations for the eigenfunctions of $\widetilde{A}_{_U}$,  for all $n\in \mathbb{N}$ and $1 \leq p \leq dim E_{\alpha_n}$, we have $\widetilde{A}_{_U}(f_{n,p})  = \alpha_{n}^{-1} \, f_{n,p}$ \; and \;  $\mathcal{L}_k^{-\alf}(f_{n,p}) = \beta_{n,p}^{-1/2} \, f_{n,p}$. With the expression of $A_{_U}$ in \eqref{def-widetilde-A} we obtain 
\begin{equation}\label{DiagonalisationRKHS2}
A_{_U} (f_{n,p}) \; = \; i \, ( \, \lambda \, \beta_{n,p}^{-1/2} \; - \; \alpha_{n}^{-1} \, ) \, f_{n,p} .
\end{equation}
Denoting for the sake of clarity,  $(V_{\ell})_{\ell}$ the  orthonormal basis $( f_{n,p})_{n,p}$ of $L^{2}_{\mathbb{C}(\Omega, \nu)}$ and $(\lambda_{\ell})_{\ell}$ the associated eigenvalues of $A_{_U}$ (given \eqref{DiagonalisationRKHS2} with $\lambda_{\ell} \in i \,\mathbb{R}$) and  
upon applying  $\mathcal{L}_k^{\alf}$ on  the  orthonormal basis $(V_{\ell})_{\ell\in\mathbb{N}}$ of $L^{2}_{\comp}(\Omega , \nu )$, we get also a  orthonormal basis of $\HS$  ({R}emark \ref{LienBaseHilbertL2HS}) which diagonalizes $A_{_U}$:
\begin{equation}\label{DiagonalisationRKHS}
A_{_U} (\mathcal{L}_k^{\alf} V_{\ell} ) \;  = \; \lambda_{\ell} \; \mathcal{L}_k^{\alf} V_{\ell} \, .
\end{equation}
This completes the proof of step $1$.
\newline

\noindent
It can be noticed that for all $g\in \HS$ we have
\[
A_{_U} (g)=  \displaystyle \sum_{\ell=0}^{\infty}  \, \lambda_{\ell} \; \langle g \; , \; V_{\ell} \rangle_{{L^{2}_{\comp}(\Omega , \nu)}} \, V_{\ell}
\quad \text{and}\quad
A_{_U} (g) = \displaystyle \sum_{\ell=0}^{\infty} \, \lambda_{\ell}  \; \langle g \; , \;  \mathcal{L}_k^{\alf}(V_{\ell}) \rangle_{_{\HS}} \,  \mathcal{L}_k^{\alf}(V_{\ell}). 
\]
\begin{remark}[Particular case:  $\widetilde{A}_{_U}$ has simple eigenvalues]\label{EigenvalueSimple} A simpler case arises when the eigenvalues of $\widetilde{A}_U$ are simple. This corresponds to the case of ergodic dynamical systems \citep{EFHN-15}. 
For all $n$, by \eqref{DiagoAUwidetilde} and {L}emma \ref{GenerateurKoopmanCommutation}, the eigenspace $E_{\alpha_n}$ of $\widetilde{A}_{_U}$ is an invariant space of $\mathcal{L}_{k}^{-\alf}$. As $dim E_{\alpha_n} =1$ and $E_{\alpha_n}$ is an invariant subspace of $\mathcal{L}_{k}^{-\alf}$, there exists reals $\beta_n$ such that $\mathcal{L}_{k}^{-\alf} e_n = \beta_n \, e_n$ and we obtain  
\[
A_{_U} (e_n) \; = \; i \, (\lambda \, \beta_{n} \; - \; \alpha_{n}^{-1} ) \, e_{n},
\]
with the  orthonormal basis $(e_n)_n$ of $L^{2}_{\mathbb{C}}(\Omega , \nu)$ obtained from the diagonalization of $\widetilde{A}^{-1}_{_U}$. We have as well for the associated  orthonormal basis of $\HS$
\[
A_{_U} (\mathcal{L}_k^{\alf} e_n ) \; = \; i \, (\lambda \, \beta_{n} \; - \; \alpha_{n}^{-1} ) \, (\mathcal{L}_k^{\alf}e_n).
\]
\end{remark}

\begin{remark}[Exponential form of the Koopman semi-group]\label{Exponential form}
The range of $\widetilde{A}^{-1}_{_U}$ is included in $\HS$. It can be also noticed that $V_{\ell}$ belongs to $\HS$ and the following inclusions hold 
\[
{\rm Span}\{\, V_{\ell} \, : \, \ell \in\mathbb{N} \, \}\subset \HS \subset L^{2}_{\comp}(\Omega , \nu).
\]
The set ${\rm Span} \{\, V_{\ell} \, : \, \ell\in\mathbb{N} \, \}$ is dense in $(L^{2}_{\mathbb{C}}(\Omega , \nu) , \|\bcdot\|_{{L^{2}_{\mathbb{C}}(\Omega , \nu)}})$ and is  an invariant space of the Koopman infinitesimal generator $A_{_U}$. For all $f$ belonging to this set and all $t\geq 0$, as $A_{_U}$  is bounded, the Koopman operator is uniformly continuous and can be written in terms of an exponential series as
\[
U_{t} (f) \; = \; \exp\left(t\, A_{_U}\right) (f),
\]
with convergence in the operator norm.
Its adjoint, the Perron-Frobenius operator, can be written in the same way as 
\[
P_{t} (f) \; = \; \exp\left(-t\, A_{_U}\right) (f).
\]
Denoting for the sake of simplicity,  $\psi_{\ell} = \mathcal{L}_k^{\alf}(V_{\ell})$ for all integers $\ell$. The family  $(\psi_\ell)_{\ell}$ is a  orthonormal basis of $\HS$. We have $U_t( \psi_{\ell} ) =  \exp(t \, \,\lambda_{\ell}) \, \psi_{\ell} $  and  we obtain hence
\begin{equation}\label{EvolutionVectPropre}
\psi_{\ell}(\XX_t) =  \exp(t \, \lambda_{\ell}) \, \psi_{\ell} (\XX_0) .
\end{equation}
\end{remark}

\subsection*{Step 2}
 From this  orthonormal basis $(\psi_{\ell})_{\ell}$ of $\HS$, we now build a  orthonormal basis of $\HS_t$ that diagonalizes the operator $A_{U\!,\,t}$ for all time $t\geq 0$. 
\begin{proposition} [Hilbert basis of $\HS_t$]\label{Ht-basis}
For all $t\geq 0$, the family $( R_t(\psi_{\ell}))_{\ell}$ is a  orthonormal basis of $(\HS_t \, , \, \|\bcdot\|_{_{\HS_t}} )$. 
\end{proposition}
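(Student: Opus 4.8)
The plan is to exploit the two properties of the pair $(E_t,R_t)$ already established in \ref{EXT-REST}: that both $E_t:\HS_t\mapsto\HS$ and $R_t:\HS\mapsto\HS_t$ are isometries (Proposition \ref{RestrProlongementIsometrie}, equations \eqref{E_tIsometrie}--\eqref{R_tIsometrie}), and that they form an adjoint pair (Proposition \ref{RestrictionProlongelementAdjoint}, equation \eqref{AdjointE_tR_t}). Since $(\psi_\ell)_\ell$ is an orthonormal basis of $\HS$ by Step 1, pushing it through $R_t$ will produce an orthonormal basis of $\HS_t$ with essentially no extra work.

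First I would verify orthonormality. By the isometry identity \eqref{R_tIsometrie}, for all indices $\ell,m$ one has $\langle R_t(\psi_\ell),R_t(\psi_m)\rangle_{\HS_t}=\langle \psi_\ell,\psi_m\rangle_{\HS}=\delta_{\ell m}$, so $\bigl(R_t(\psi_\ell)\bigr)_\ell$ is an orthonormal system in $\HS_t$. Then I would establish completeness via Parseval. For any $g\in\HS_t$ and any $\ell$, the adjoint relation \eqref{AdjointE_tR_t} gives $\langle g,R_t(\psi_\ell)\rangle_{\HS_t}=\langle E_t(g),\psi_\ell\rangle_{\HS}$. Since $(\psi_\ell)_\ell$ is a Hilbert basis of $\HS$, Parseval's identity in $\HS$ yields
\[
\sum_\ell \bigl|\langle g,R_t(\psi_\ell)\rangle_{\HS_t}\bigr|^2=\sum_\ell \bigl|\langle E_t(g),\psi_\ell\rangle_{\HS}\bigr|^2=\|E_t(g)\|_{\HS}^2,
\]
and because $E_t$ is an isometry (equation \eqref{E_tIsometrie}) the right-hand side equals $\|g\|_{\HS_t}^2$. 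Thus $\sum_\ell |\langle g,R_t(\psi_\ell)\rangle_{\HS_t}|^2=\|g\|_{\HS_t}^2$ for every $g\in\HS_t$, which is exactly the Parseval characterization of an orthonormal basis; hence $\bigl(R_t(\psi_\ell)\bigr)_\ell$ is an orthonormal basis of $(\HS_t,\|\bcdot\|_{\HS_t})$.

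There is no genuine obstacle here; the argument is pure Hilbert-space bookkeeping once the isometry and adjointness of $E_t,R_t$ are in hand. An equivalent, more constructive route I might mention as a remark is to observe that $R_t\circ E_t=\mathrm{Id}_{\HS_t}$ — this holds on feature maps since $R_t\bigl(E_t(k_t(\bcdot,X_t))\bigr)=R_t\bigl(k(\bcdot,X_t)\bigr)=k_t(\bcdot,X_t)$ using $\ell(t,t)=1$, and extends by linearity and density — so that expanding $E_t(g)=\sum_\ell\langle E_t(g),\psi_\ell\rangle_{\HS}\,\psi_\ell$ in $\HS$ and applying the continuous operator $R_t$ exhibits $g=\sum_\ell\langle g,R_t(\psi_\ell)\rangle_{\HS_t}\,R_t(\psi_\ell)$ directly. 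The only point deserving a word of care is that the injectivity of $E_t$ needed in the ``null vector'' phrasing of completeness is automatic from $E_t$ being an isometry; the Parseval version above avoids even invoking it.
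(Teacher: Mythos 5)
Your proof is correct and uses exactly the same ingredients as the paper — the isometry of $R_t$ for orthonormality, and the adjoint pair $(E_t,R_t)$ together with the isometry of $E_t$ for completeness. The only difference is cosmetic: you verify the Parseval identity for every $g\in\HS_t$, whereas the paper uses the equivalent null-vector criterion (if $h\perp R_t(\psi_\ell)$ for all $\ell$ then $\langle E_t(h),\psi_\ell\rangle_{\HS}=0$, so $E_t(h)=0$, so $h=0$ by injectivity of the isometry $E_t$); these two completeness characterizations are textbook-equivalent, so the argument is essentially the same.
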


\begin{proof}
For all $\ell$ and $\ell'$, upon applying {P}roposition \ref{RestrProlongementIsometrie} we have $\langle R_t(\psi_{\ell}) \, , \, R_t(\psi_{\ell'}) \rangle_{_{\HS_{t}}} \, = \, \langle \psi_{\ell} \, , \, \psi_{\ell'} \rangle_{_\HS} = \delta_{\ell \, , \, \ell'}$,
consequently the family $(R_t(\psi_{\ell}))_{\ell}$ is orthonormal in $\HS_{t}$. Besides, the family ${\rm Span} \{ \, R_t(\psi_{\ell}) \; : \; \ell \in \mathbb{N}  \}$ is dense in $\HS_t$. Let $h \in \HS_t$ {be} such that $\langle h \, , \, R_t(\psi_{\ell}) \rangle_{\HS_t}=0$ for all integers $\ell$. By {P}roposition \ref{RestrictionProlongelementAdjoint}, we have $\langle h \, , \, R_t(\psi_{\ell}) \rangle_{_{\HS_{t}}} \, = \, \langle E_t(h) \, , \, \psi_{\ell} \rangle_{_\HS} = 0$. As $(\psi_{\ell})_{\ell}$ is an  orthonormal basis of $\HS$, we conclude that $E_t(h)=0$ and thus $h=0$ on $\HS_t$. 
\cqfd
\end{proof}

\noindent The  orthonormal basis $(R_t(\psi_{\ell}))_{\ell}$ of $\HS_t$ diagonalizes the operator $A_{U\!,\,t}$. As a matter of fact, for all $\ell$, 
\[
A_{U\!,\,t} \left[ R_t \psi_{\ell} \right] = R_t \circ A_{_U} \circ E_t \circ R_t (\psi_{\ell} ) \; = \;  \lambda_{\ell} \, \left[ R_t \psi_{\ell} \right] \; . 
\]
For conciseness purpose, we note in the following $\psi_{\ell}^{t} := R_t(\psi_{\ell})$. Through {R}emark \ref{ExpressionRestrictionFonctionRKHS} and \eqref{EvolutionVectPropre}, it can be noticed that  for all $\XX_0 \in \Omega_0$ 
\[
\psi_{\ell}^{t}(\XX_t) = \exp(t\, \lambda_{\ell}) \, \psi_{\ell}^{0}(\XX_0),
\]
which ends the proof of step 2, and completes the proof of the  spectral representation of Koopman operator in \AD{the RKHS family}. 

\begin{remark}
The set ${\rm Span} \{ \psi_{\ell}^{t} \, : \, \ell \in \mathbb{N} \}$ is an invariant space of $A_{U\!,\,t}$. As a matter of fact, considering remark \ref{Exponential form} and proposition~\ref{RestrProlongementIsometrie},  for all $f \in {\rm Span} \{ \psi_{\ell}^{t} \; : \; \ell \in \mathbb{N} \}$ we have $\exp(t A_{U\!,\, t})(f) \, = \, R_t \, \circ \exp(t \,A_{U}) \, \circ E_t (f)$.
\end{remark}
\section{Numerical details of the QG implementation}\label{A-Numerical-details}
%
The QG model is discretized on a staggered grid of size \AD{$N_x \times N_y$, where $N_x=64 $ and $N_y=128$}, with order-2 finite differences.
Advection term is computed using the Arakawa discretization \citep{arakawa1966} and centered differences are considered for the other terms. 
The Poisson equation is solved using spectral method by projection on a sine basis.
Time integration is performed with a fourth-order  Runge-Kutta scheme.
An ensemble of $p=100$ members is considered for the training, and another test ensemble of $p_2=100$ members is generated for predictions assessment.

\noindent
Starting from an established state of a particular trajectory, a set of initial conditions is generated by perturbing this state by the potential vorticity of $N_{POD}=50$ proper orthogonal decomposition (POD) modes \citep{lumley1967} weighted by a Gaussian random perturbation $\mathcal{N}(0,10\lambda_i/\lambda_1)$, where $\lambda_i$ is the $i^{\text{th}}$ POD eigenvalue. Then, a spin-up time integration of $t=0.03$ is performed to evacuate eventual nonphysical flow features. 
The training and testing sets are generated independently using the same rule.

\noindent
Following the dimensioning given in \cite{san2011}, $R_o=\frac{U_r}{\beta L_r^2}=0.0036$ with $U_r$ and $L_r$ the reference velocity and length scale, respectively.
We give as well the gradient of the Coriolis parameter $\beta=1.75\cdot 10^{-11}$, typical of mid-latitides at the center of the basin.
Setting $L_r=2000\,\mathrm{km}$, leads to $U_r=0.252\,\mathrm{m.s^{-1}}$, and then to a reference time $T_r=\frac{L_r}{U_r}=92\,\text{days}$.

\end{document}